\newtheorem{theorem}{Theorem}
\newtheorem{claim}{Claim}
\newcommand{\erf}{\mathrm{erf}}
\begin{document}

\title{Simplifying plasma balls and black holes with nonlinear diffusion}
\author{Connor Behan}
\affiliation{Department of Physics and Astronomy, University of British Columbia, Vancouver BC, V6T 1Z1, Canada}
\date{\today}

\newpage
\begin{abstract}
In the Master's thesis of the author, we investigate certain aspects of gravitational physics that emerge from
stochastic toy models of holographic gauge theories. We begin by reviewing field theory thermodynamics, black hole
thermodynamics and how the AdS / CFT correspondence provides a link between the two. We then study a nonlinear evolution
equation for the energy density that was derived last year from a random walk governed by the density of states.
When one dimension is non-compact, a variety of field theories produce long lived plasma balls that are dual to black
holes. This is due to a trapping phenomenon associated with the Hagedorn density of states. With the help of numerical
and mathematical results, we show that problems arise when two or more dimensions are non-compact. A natural extension
of our model involves a system of partial differential equations for both energy and momentum. Our second model is shown
to have some desired, but also some undesired properties, such as a potential disagreement with hydrodynamics.
\end{abstract}
\maketitle

\newpage
\begin{acknowledgments}
Firstly, I would like to thank my supervisor, Mark Van Raamsdonk, for involving me in his research and lending his
strong intuition whenever it was needed. Apart from proofreading, he works hard to ensure that students have projects
matching their interests. Gordon Semenoff agreed to proofread this thesis as well. I thank him, not only for that, but
for delivering the lectures that first taught me string theory. I wish to thank my collaborators Klaus Larjo,
Nima Lashkari and Brian Swingle who discussed many of the problems that came up in our work, and possessed the skill
and motivation to solve them. I am grateful to many fellow students, especially Michael McDermott, Fernando Nogueira and
Jared Stang, for sharing their progress and taking an interest in mine. In the first year of this work, I was supported
by the Nation Science and Engineering Research Council of Canada.

Lastly, it is a pleasure to thank my parents for all of their love and support. This
thesis is dedicated to my cousin Greg, whose wedding I missed while pursuing this degree.
\end{acknowledgments}
\newpage
\tableofcontents
\newpage

\section{Introduction}
This thesis deals with a recently proposed toy model for the dynamics of energy distributions in thermal field
theories. These include the conformal field theories and deformations of them that have gravity duals according to the
AdS / CFT correspondence \cite{maldacena}. As argued in \cite{mvr}, our model suggests that certain important aspects of
gravitational physics emerge for thermodynamic reasons. From this perspective, it is related to the ideas of entropic
gravity in \cite{jacobson, verlinde1, lashkari, faulkner}.

\begin{figure}[h]
\includegraphics[scale=0.7]{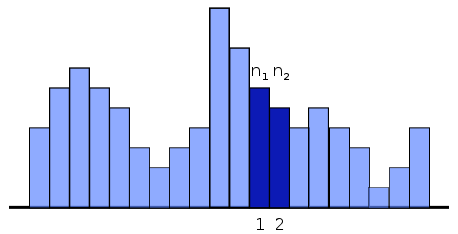}
\caption{By focusing on two adjacent sites in this energy distribution, one may check whether the system favours a
homogeneous or an inhomogeneous state. The properties of this entropic evolution are determined by the local density
of states.}
\label{hopping}
\end{figure}
Before deriving the equations of our model, it is helpful to consider Figure \ref{hopping} --- energy quanta that
randomly hop between sites in a line. To each site, we ascribe a density of states $\rho(n)$ counting the number of
ways for it to have $n$ units of energy. The growth of this function lets us determine which scenario is more likely:
site 1 giving a quantum to site 2 or site 2 giving a quantum to site 1. Asking this question is equivalent to comparing
the sizes of $\rho(n_1 - 1) \rho(n_2 + 1)$ and $\rho(n_1 + 1) \rho(n_2 - 1)$. Positing a form
$\log \rho(n) \propto n^{\alpha}$, we see that site 1 is most likely to give up energy when $\alpha < 1$ and site 2
is most likely to give up energy when $\alpha > 1$. Thus, we see that this random walk leads to diffusion when the
density of states is log-concave and clustering when the density of states is log-convex. In the diffusion case
\textit{e.g.}, a uniform energy distribution is the inevitable final state, even when the microscopic physics are
completely reversible. Special attention is paid to the Hagedorn phase $\alpha = 1$ which is almost completely static.

Even though the essence of our model is this simple statement, it takes the form of a nonlinear partial differential
equation that accepts a $\rho(E)$ function as input. A ubiquitous density of states, which we derive using the
AdS / CFT correspondence, consists of four phases. One of the narrow phases is omitted throughout this thesis for
simplicity. The three that are left consist of a diffusive phase at high energies, a Hagedorn phase at intermediate
energies and another diffusive phase at low energies. Roughly speaking, these respectively correspond to a black hole
forming, living for a long time and ultimately evaporating away. Less ambitiously, we may say that they correspond to
balls of plasma in a purely field theoretic setting \cite{aharony}. We derive rigorous bounds on the decay times for
these objects in our model and compare them to the hadronization times in \cite{aharony}. We find that our times are
longer in one dimension and much shorter in higher dimensions.

To address these problems, a second model is proposed that treats momentum as another quantity that moves
stochastically through a lattice. Since the second model is much more complicated, the discussion of its properties
remains at a speculative level. Even though evolution equations for energy
and momentum sound similar to the spirit of hydrodynamics, we compare our equations to the hydro equations and only
find agreement in the crudest approximation. Despite taking the form of classical PDEs, we hasten to emphasize that our
models include quantum effects when functions like the density of states are chosen appropriately.

This thesis begins with theoretical background in Chapter 2. This chapter focuses on the tools needed to derive
thermodynamic quantities via the AdS / CFT correspondence and contains some lengthy derivations. The main model is
derived afterward in Chapter 3. In Chapter 4, various results from the mathematical literature on nonlinear diffusion
equations are applied to our PDE and used to derive the time scales for black hole evaporation. The suspicious features
of our results are discussed in this chapter as well. Chapter 5 introduces numerical methods that are suitable for our
PDE and uses them to check most of our results. The method chosen for most problems is the implicit Crank-Nicolson
approach. Chapters 6 and 7 contain the newer results that were derived after
\cite{mvr} appeared. Their focus is the extension of our model that includes momentum. Just as our first model depends
on a density of states, our second model depends on a momentum restricted density of states. An expression for this
quantity is derived that allows a small amount of numerical work to be done. Code forming the basis for all of our
simulations is presented in the appendix.

\section{Aspects of holography}
Of all the conjectures that have been made about quantum gravity, the one that has had the largest impact so far is the
AdS / CFT correspondence proposed by Juan Maldacena \cite{maldacena}. Known by various other names like holography or
gauge-gravity duality, it states that string theory in anti-de Sitter space is equivalent to a conformally invariant
quantum field theory living on the boundary of that space. Questions about string theory can therefore be recast in the
language of quantum field theory without gravity. Deriving the evidence for the AdS / CFT correspondence would exceed
the scope of this thesis \cite{magoo}. Instead, we will explore certain dynamical processes that
can be best understood with the correspondence. The effect that will demand most of our attention is
black hole evaporation. Hawking's derivation of black hole evaporation is one of the most successful uses of quantum 
field theory in curved spacetime and any eventual theory of quantum gravity is expected to account for it. Many studies
of Hawking radiation have been done using string theory and the AdS / CFT correspondence in
particular \cite{lt1, lt2, horowitz, marolf}.

Naturally, the first such studies focused on the original version of the correspondence in which the background is
$AdS_5 \times \mathbb{S}^5$ \cite{maldacena}. If one writes the six-dimensional Euclidean Dirac matrices as
\begin{equation}
\Gamma_i = \left [
\begin{tabular}{cc}
$0$ & $\bar{C}_i$ \\
$C_i$ & $0$
\end{tabular}
\right ] \; i \in \{1, \dots, 6\} \; , \nonumber
\end{equation}
the conformal field theory is specified by the Lagrangian \cite{polchinski}
\begin{eqnarray}
L &=& -\frac{1}{2g^2_{\mathrm{YM}}} \int_{\mathbb{R}^3} \frac{1}{2} \mathrm{Tr} \left ( F_{\mu\nu} F^{\mu\nu} + 2D_{\mu} \phi_i D^{\mu} \phi^i - [\phi_i, \phi_j][\phi^i, \phi^j] \right ) \nonumber \\
&& + \mathrm{Tr} \left ( \bar{\lambda} \sigma^{\mu} D_{\mu} \lambda - C_i \lambda [\phi^i, \lambda] - \bar{C}_i \bar{\lambda} [\phi^i, \bar{\lambda}] \right ) \textup{d}x \; . \label{sym}
\end{eqnarray}
Typically the gauge group is $U(N)$ or $SU(N)$ meaning that the scalars, spinors and vectors that show up are really
$N \times N$ matrices consisting of those types of fields. This is called the $\mathcal{N} = 4$ Super Yang-Mills theory
or sometimes the field theory of $D3$-branes. A less than encouraging fact about string theory is that
$AdS_5 \times \mathbb{S}^5$ is far from the only background we need to consider. There is really a whole landscape of
vacua whose boundary field theories may look very different. Indeeed CFT duals have been proposed for
$AdS_4 \times \mathbb{CP}^3$ \cite{abjm}, $AdS_3 \times \mathbb{S}^3 \times \mathbb{T}^4$ \cite{lunin},
$AdS_3 \times \mathbb{S}^3 \times \mathbb{S}^3 \times \mathbb{S}^1$ \cite{tong1} and many others.

Calculations involving these theories are difficult. Even showing that (\ref{sym}) has conformal symmetry is not
trivial. Something that allows us to explore Hawking's process from the holographic viewpoint without choosing a
specific Lagrangian is the intimate connection between black holes and thermodynamics.

\subsection{Thermodynamics}
A number of different field theories have the same thermodynamic potentials. A useful example of this appears in a
conformal field theory. Neglecting the Casimir effect, energy and entropy are both extensive so they must be
proportional to the volume. A conformal theory has no intrinsic scale so the only dimensionful quantity that can
multiply this volume is the temperature. This leads to the expressions $E \propto VT^{d + 1}$ and $S \propto VT^d$.
Substituting them into eachother yields
\begin{equation}
S \propto V^{\frac{1}{d + 1}} E^{\frac{d}{d + 1}} \; . \label{cft-entropy}
\end{equation}
The density of states will turn out to play a fundamental role in our model so we will sometimes exponentiate this
expression.

In the calculations that follow we will see some situations in which this formula for the entropy does not hold. In
general, the rule is that (\ref{cft-entropy}) becomes true for non-conformal theories if the energy is much larger
than any other scale. Different low energy behaviours can be introduced if one compactifies a CFT like (\ref{sym})
on a sphere.

\subsubsection{In free field theory}
An exercise done in \cite{behan} is finding the partition function of a free field theory. Starting with the fact that
$E = p$ in a massless theory, $Z^*(\textbf{p}) = 1 + e^{-\beta p}$ is the contribution of a single fermionic mode and
$Z(\textbf{p}) = 1 + e^{-\beta p} + e^{-2\beta p} + \dots = \left ( 1 - e^{-\beta p} \right )^{-1}$ is the contribution
of a single bosonic mode. Using $s^*$ and $s$ for the number of internal states, the partition function is given by
\begin{equation}
Z = \prod_{\textbf{p}} Z(\textbf{p})^s Z^*(\textbf{p})^{s^*} \; . \nonumber
\end{equation}
If we take the log, the product turns into a sum and if we take the momentum spectrum to be continuous, the sum turns
into an integral. Remembering the integration measure for momentum space, we have
\begin{eqnarray}
\log Z &\approx& \int_{\mathbb{R}^d} s^* \log \left ( 1 + e^{-\beta p} \right ) - s \log \left ( 1 - e^{-\beta p} \right ) \frac{V \textup{d}\textbf{p}}{(2\pi)^d} \nonumber \\
&=& \frac{d \omega_d V}{(2\pi)^d} \int_0^{\infty} s^* \log \left ( 1 + e^{-\beta p} \right ) - s \log \left ( 1 - e^{-\beta p} \right ) p^{d - 1} \textup{d}p \nonumber \\
&=& \frac{d! \omega_d V}{(2\pi \beta)^d} \left [ s\zeta(d + 1) + s^*\zeta^*(d + 1) \right ] \nonumber \\
&\equiv& \frac{AV}{\beta^d} \; . \label{partfunc}
\end{eqnarray}
Here, $\zeta(\sigma) = \sum_{n = 1}^{\infty} \frac{1}{n^{\sigma}}$ is the Riemann zeta function,
$\zeta^*(\sigma) = \sum_{n = 1}^{\infty} \frac{(-1)^{n - 1}}{n^{\sigma}}$ is the alternating zeta function and
$\omega_d$ is the volume of a unit ball in $\mathbb{R}^d$. We may now use
$S = \frac{\textup{d}}{\textup{d}T} (T \log Z)$ and $E = T^2 \frac{\textup{d}}{\textup{d}T} (\log Z)$ to show that
(\ref{cft-entropy}) holds with a proportionality constant of
$\left [ \frac{(d + 1)^{d + 1}}{d^d} A \right ]^{\frac{1}{d + 1}}$.

If one is interested in the density of states, the exponential of this entropy is certainly the first term in $\rho(E)$.
However, there are an infinite number of other terms that come from the differences between the canonical and
microcanonical ensembles. The second term is a standard result that comes from treating $Z(\beta)$ as the Laplace
transform of $\rho(E)$. Performing a saddle point approximation,
\begin{eqnarray}
\rho(E) &=& \frac{1}{2\pi} \int_{-\infty}^{\infty} Z(i\beta) e^{i\beta E} \textup{d}\beta \nonumber \\
&\equiv& \frac{1}{2\pi} \int_{-\infty}^{\infty} e^{f(i\beta)} \textup{d}\beta \nonumber \\
&\sim& \frac{1}{2\pi} \int_{-\infty}^{\infty} e^{f(i\beta_0) - \frac{\beta^2}{2}f^{\prime\prime}(i\beta_0)} \textup{d}\beta \nonumber \\
&=& \frac{1}{\sqrt{2\pi f^{\prime\prime}(i\beta_0)}} e^{f(i\beta_0)} \; . \nonumber
\end{eqnarray}
The higher asymptotic terms cannot be found in the same way because the integral of
$e^{f(i\beta_0) - \frac{\beta^2}{2}f^{\prime\prime}(i\beta_0) - i\frac{\beta^3}{6}f^{\prime\prime\prime}(i\beta_0)}$
has no closed form solution. Instead, powers of $\beta$ after the first two must be Taylor expanded again so that
the above becomes
\begin{eqnarray}
\rho(E) &\sim& \frac{1}{2\pi} \int_{-\infty}^{\infty} e^{f(i\beta_0) - \frac{\beta^2}{2}f^{\prime\prime}(i\beta_0)} \left ( 1 - i\frac{\beta^3}{6}f^{\prime\prime\prime}(i\beta_0) - \frac{\beta^6}{72}f^{\prime\prime\prime}(i\beta_0)^2 + \dots \right ) \nonumber \\
&& \left ( 1 + \frac{\beta^4}{24}f^{\prime\prime\prime\prime}(i\beta_0) + \frac{\beta^8}{1152}f^{\prime\prime\prime\prime}(i\beta_0)^2 + \dots \right ) \dots \textup{d}\beta \; . \nonumber
\end{eqnarray}
These calculations require us to consider an ever-growing number of ways in which a power of $\beta$ can be made.
Nevertheless, this method is still practical for finding the third term in $\rho(E)$ and the resulting expression is
\begin{eqnarray}
\rho(E) &\sim& \frac{1}{\sqrt{2\pi}} \left ( \frac{d}{(d + 1)^{d + 1}} AVE^{-d-2} \right )^{\frac{1}{2(d + 1)}} \nonumber \\
&& \exp \left [ \left ( \frac{(d + 1)^{d + 1}}{d^d} AVE^d \right )^{\frac{1}{d + 1}} - \frac{(d + 2)(2d + 1)}{24(d + 1)} \left ( dAVE^d \right )^{-\frac{1}{d + 1}} \right ] \; . \label{free-dos}
\end{eqnarray}

In \cite{behan}, (\ref{free-dos}) is found in a different way. The inverse Laplace transform of $Z(\beta)$ is found
exactly via a Hankel contour but as a Taylor series, not an asymptotic series. From this series
\begin{equation}
\rho(E) \sim \sum_{j = 1}^{\infty} \frac{(AV)^j E^{dj - 1}}{j!(dj - 1)!} \; , \label{free-dos-series}
\end{equation}
the first three asymptotic terms are picked off. An advantage of this is that (\ref{free-dos-series}) can be compared
to a recent expression for the $d = 1$ density of states due to Loran, Sheikh-Jabbari and Vincon \cite{loran}:
\begin{equation}
\rho(E) \sim \frac{\pi c V}{3} \frac{I_1 \left ( \sqrt{2\pi c V E / 3} \right )}{\sqrt{2\pi c V E / 3}} \; . \label{free-dos-bessel}
\end{equation}
Neither is a generalization of the other because $d$ is arbitrary in (\ref{free-dos-series}) and the interactions are
arbitrary in (\ref{free-dos-bessel}).

In the partition function we have constructed, the $\beta$ variable is conjugate to the energy. There are also
conjugate variables associated with each momentum direction. Something special that we can do in $1 + 1$ dimensions
is combine these into a complex number. Let $p$ be a positive momentum. If there are $N$ excitations
of $p$ and $\tilde{N}$ excitations of $-p$, this state has an energy of $p(N + \tilde{N})$ and a momentum of
$p(N - \tilde{N})$. Therefore, generalized partition functions we can write down are:
\begin{eqnarray}
Z^*(p) &=& \sum_{N = 0}^1 \sum_{\tilde{N} = 0}^1 e^{-pV \tau_2 (N + \tilde{N}) + pVi \tau_1 (N - \tilde{N})} = \left ( 1 + e^{-pV (\tau_2 - i \tau_1)} \right ) \left ( 1 + e^{-pV (\tau_2 + i \tau_1)} \right ) \nonumber \\
Z(p) &=& \sum_{N = 0}^{\infty} \sum_{\tilde{N} = 0}^{\infty} e^{-pV \tau_2 (N + \tilde{N}) + pVi \tau_1 (N - \tilde{N})} = \left ( 1 - e^{-pV (\tau_2 - i \tau_1)} \right )^{-1} \left ( 1 - e^{-pV (\tau_2 + i \tau_1)} \right )^{-1} \; . \nonumber
\end{eqnarray}
Taking the product of $Z(p)^s Z^*(p)^{s^*}$ over all positive momenta, we have
\begin{eqnarray}
\log Z &\approx& \int_0^{\infty} s^* \left [ \log \left ( 1 + e^{-pV (\tau_2 - i \tau_1)}\right ) + \log \left ( 1 + e^{-pV (\tau_2 + i \tau_1)}\right ) \right ] \nonumber \\
&& - s \left [ \log \left ( 1 - e^{-pV (\tau_2 - i \tau_1)}\right ) + \log \left ( 1 - e^{-pV (\tau_2 + i \tau_1)}\right ) \right ] \frac{V \textup{d}p}{2\pi} \nonumber \\
&=& \frac{1}{2\pi} \left ( \frac{1}{\tau_2 - i\tau_1} + \frac{1}{\tau_2 + i\tau_1} \right ) \left [ s\zeta(2) + s^*\zeta^*(2) \right ] \nonumber \\
&=& -\frac{\Im \left ( \frac{1}{\tau} \right )}{\pi} \left [ s\zeta(2) + s^*\zeta^*(2) \right ] \nonumber \\
&=& -\frac{\pi \Im \left ( \frac{1}{\tau} \right )}{6} \left ( s + \frac{s^*}{2} \right ) \; . \label{gen-partfunc}
\end{eqnarray}
The dimensionless number $\tau = \tau_1 + i\tau_2$ is called the \textit{modular parameter}. If
$\tau \equiv \frac{i\beta}{V}$, (\ref{gen-partfunc}) becomes the regular partition function (\ref{partfunc}). The
quantity $c \equiv s + \frac{s^*}{2}$ appearing in (\ref{gen-partfunc}) is central charge that we would use in
(\ref{free-dos-bessel}) if we wanted to apply it to a free theory.

\subsubsection{In string theory}
The worldsheet theory of a string can be regarded as a conformal field theory in $1 + 1$ dimensions. However,
$S \propto \sqrt{E}$ would not be correct for a macroscopic observer who has different notions of energy and
dimensionality. The worldsheet Lagrangian for a supersymmetric string theory in flat space is
\begin{eqnarray}
L &=& -\frac{1}{4\pi \alpha^{\prime}} \int_0^{2\pi} \partial_a X^{\mu} \partial^a X_{\mu} - i \bar{\Psi}^{\mu} \gamma^a \partial_a \Psi_{\mu} \textup{d}\sigma^1 \nonumber \\
&=& -\frac{1}{2\pi \alpha^{\prime}} \int_0^{2\pi} 2\partial X^{\mu} \tilde{\partial} X_{\mu} - i\psi^{\mu} \tilde{\partial} \psi_{\mu} - i\tilde{\psi}^{\mu} \partial \tilde{\psi}_{\mu} \textup{d}\sigma^1 \; . \label{string-action}
\end{eqnarray}
In the second form we have split each Dirac spinor field into two Majorana spinor fields. We have also written
derivatives with respect to $\sigma_{\pm} = \sigma^0 \pm \sigma^1$ as $\partial$ and $\tilde{\partial}$. What makes
this different from a usual quantum field theory is that the $D$ scalar fields $X^{\mu}$ can be interpreted as
positions in a $D$-dimensional \textit{target space}. The worldsheet energy comes from
$\sigma^a \mapsto \sigma^a + \delta \sigma^a$ but the energy we should use for counting states is the conserved
quantity associated with $X^{\mu} \mapsto X^{\mu} + \delta X^{\mu}$. The worldsheet has $\mathrm{SO}(1, 1)$ Lorentz
symmetry regardless of how many fields there are, but the $\mathrm{SO}(D - 1, 1)$ Lorentz symmetry of the target space
is more sensitive. To survive quantization it requires that $D = 10$ \cite{polchinski}. If we had left the fermions out
of (\ref{string-action}) to construct bosonic string theory, the same calculation would tell us that $D = 26$.

To calculate the free energy of a gas of strings, we will begin in the same way as before.
\begin{eqnarray}
F &=& \frac{1}{\beta} \int_{\mathbb{R}^{D - 1}} \sum_m \log \left ( 1 - e^{-\beta \sqrt{p^2 + m^2}} \right ) - \sum_{m^*} \log \left ( 1 + e^{-\beta \sqrt{p^2 + m^{*2}}} \right ) \frac{V \textup{d}\textbf{p}}{(2\pi)^{D - 1}} \nonumber \\
&=& -\frac{1}{\beta} \sum_{n = 1}^{\infty} \int_{\mathbb{R}^{D - 1}} \sum_m \frac{1}{n} e^{-\beta n \sqrt{p^2 + m^2}} - \sum_{m^*} \frac{(-1)^n}{n} e^{-\beta n \sqrt{p^2 + m^{*}}} \frac{V \textup{d}\textbf{p}}{(2\pi)^{D - 1}} \nonumber
\end{eqnarray}
This expression has a sum over the masses of bosons and a sum over the masses of fermions. To arrive at
(\ref{partfunc}), we set these masses to zero and replaced the sums by degeneracy factors. This was valid because the
masses became negligible in the high temperature limit. The high temperature limit of a string theory is different
because it supports arbitrarily large masses. To continue, we will use the trick
\begin{equation}
\int_0^{\infty} e^{-\frac{a^2s}{2} - \frac{b^2}{2s}} \frac{\textup{d}s}{\sqrt{s}} = \frac{\sqrt{2\pi}}{a} e^{-ab} \nonumber
\end{equation}
to rewrite the free energy density.
\begin{eqnarray}
\frac{F}{V} &=& -\sum_{n = 1}^{\infty} \int_0^{\infty} \int_{\mathbb{R}^{D - 1}} \sum_m e^{-\frac{\beta^2 n^2 s}{2} - \frac{p^2 + m^2}{2s}} - (-1)^n \sum_{m^*} e^{-\frac{\beta^2 n^2 s}{2} - \frac{p^2 + m^{*2}}{2s}} \frac{\textup{d}\textbf{p}}{(2\pi)^{D - 1}} \frac{\textup{d}s}{\sqrt{2\pi s}} \nonumber \\
&=& -\sum_{n = 1}^{\infty} \int_0^{\infty} e^{-\frac{\beta^2 n^2}{4\pi \alpha^{\prime} \tau_2}} \left [ \sum_m e^{-\pi \tau_2 \alpha^{\prime} m^2} - (-1)^n \sum_{m^*} e^{-\pi \tau_2 \alpha^{\prime} m^{*2}} \right ] (4 \pi^2 \alpha^{\prime} \tau_2)^{-\frac{D}{2}} \frac{\textup{d} \tau_2}{\tau_2} \label{partfunc-sofar}
\end{eqnarray}
Above, we have made the substitution $\tau_2 = \frac{1}{2\pi \alpha^{\prime} s}$. To proceed further, we need to know
the mass spectrum of our theory.

For concreteness we will work in Type II which is a theory of closed strings. This is natural because evidence of the
AdS / CFT correspondence was first discovered with Type IIB string theory \cite{maldacena}. Very little would change if
we used Type I or heterotic strings. The mode expansions for the scalar fields are identical to the ones that describe
the closed bosonic string:
\begin{equation}
X^{\mu}(\sigma^0, \sigma^1) = x^{\mu} + \alpha^{\prime} p^{\mu} \frac{\sigma_+ + \sigma_-}{2} + i \sqrt{\frac{\alpha^{\prime}}{2}} \sum_{n \neq 0} \frac{1}{n} \left ( \tilde{\alpha}^{\mu}_n e^{-in\sigma_+} + \alpha^{\mu}_n e^{-in\sigma_-} \right ) \; . \nonumber
\end{equation}
For the closed superstring, the left and right movers ($\psi$ and $\tilde{\psi}$) are independent and have the mode
expansions
\begin{eqnarray}
\psi^{\mu}(\sigma^0, \sigma^1) &=& \sqrt{\alpha^{\prime}} \sum_{r \in \mathbb{Z} + v} b^{\mu}_r e^{ir\sigma_+} \nonumber \\
\tilde{\psi}^{\mu}(\sigma^0, \sigma^1) &=& \sqrt{\alpha^{\prime}} \sum_{r \in \mathbb{Z} + v} \tilde{b}^{\mu}_r e^{-ir\sigma_-} \; . \nonumber
\end{eqnarray}
Since fermions can have two different types of boundary conditions, the parameter
$v \in \left \{ 0, \frac{1}{2} \right \}$ denotes which one we are using. For Ramond fermions, which are periodic,
$v = 0$. For Neveu-Schwarz fermions which are antiperiodic, $v = \frac{1}{2}$. The creation and anhilation
operators above obey the relations \cite{bbs}
\begin{eqnarray}
[\alpha^{\mu}_m, \alpha^{\nu}_n] = &m \eta^{\mu\nu} \delta_{m + n, 0}& = [\tilde{\alpha}^{\mu}_m, \tilde{\alpha}^{\nu}_n] \nonumber \\
\{ b^{\mu}_r, b^{\nu}_s \} = &\eta^{\mu\nu} \delta_{r + s, 0}& = \{ \tilde{b}^{\mu}_r, \tilde{b}^{\nu}_s \} \; . \nonumber
\end{eqnarray}

To build up the spectrum from this, we need to consider gauge symmetries. The action (\ref{string-action}) came from
a more general action in which the worldsheet metric was dynamical. Choosing $g_{ab} = \eta_{ab}$ restricts the
physical Hilbert space to only those states which are anhilated by the \textit{Virasoro generators}:
\begin{eqnarray}
L_m &=& \frac{1}{2} \sum_{n \in \mathbb{Z}} : \alpha_{m - n} \cdot \alpha_n : + \frac{1}{4} \sum_{r \in \mathbb{Z} + v} (2r - m) : b_{m - r} \cdot b_r : - a \delta_{m, 0} \nonumber \\
G_r &=& \sum_{n \in \mathbb{Z}} \alpha_n \cdot b_{r - n} \; . \nonumber
\end{eqnarray}
Analogous expressions hold for $\tilde{L}_m$ and $\tilde{G}_r$. Like $D$, the normal ordering constant $a = v$ is an
anomaly that can be fixed by demanding Lorentz invariance \cite{bbs}. We may use the relativistic dispersion relation,
the mode expansions and the Virasoro generators to write down a formula for the mass operator.
\begin{eqnarray}
m^2 &=& -p_{\mu}p^{\mu} \nonumber \\
&=& -\frac{2}{\alpha^{\prime}} \alpha_0 \cdot \alpha_0 \nonumber \\
&=& -\frac{2}{\alpha^{\prime}} \left [ L_0 - \sum_{n > 0} \alpha_{-n} \cdot \alpha_n - \sum_{r > v} r b_{-r} \cdot b_r + a \right ] \nonumber \\
&=& \frac{2}{\alpha^{\prime}} \left [ \sum_{n > 0} \alpha_{-n} \cdot \alpha_n + \sum_{r > v} r b_{-r} \cdot b_r - a \right ] \label{mass-operator}
\end{eqnarray}
We must have $m^2 = \tilde{m}^2$. This translates into a condition known as \textit{level matching} requiring every
state to have the same number of left and right moving excitations. Despite accounting for a gauge symmetry in this
way, the action (\ref{string-action}) still has some gauge symmetry left. A common technique for dealing with this
redundancy is fixing the lightcone gauge. This essentially means that any Lorentz index $\mu$ running from $0$ to
$D - 1$ becomes a regular index $i$ running from $1$ to $D - 2$ \cite{bbs}.

We now have everything we need to derive the massless spectrum of Type II string theory. In typical examples of a Fock
space, the ground state is unique. It is a singlet with respect to any symmetry group of interest and denoted most
often by $\left | 0 \right >$. This is not the case for the superstring. For Ramond fermions, the operators $b^i_0$
commute with $m^2$ meaning that many states have zero mass. This degenerate ground state in fact transforms as a
spinor in ten dimensions. Moreover it can be split into two chiralities $\left | + \right >$ and $\left | - \right >$.
This is different from four dimensions which would make the split into Weyl spinors inconsistent with the split
into Majorana spinors that we have already performed \cite{polchinski}. For Neveu-Schwarz fermions, the lowest lying
state has negative $m^2$. However, one of the advantages of the superstring is that it allows us to avoid this tachyon
and start at the massless states. These are also degenerate and are denoted by $b^i_{-\frac{1}{2}} \left | 0 \right >$.
Even though $b$ is an anticommuting operator for the worldsheet, the $i$ index here makes this a vector particle
in the target space. We have shown that massless R states are spacetime fermions while massless NS states are
spacetime bosons. The choice between R and NS can be made for the left and right movers separately. This means that
Type II string theories have four sectors \cite{bbs}.
\begin{align}
\begin{tabular}{|c|c|c|c|}
\hline
& $(\tilde{v}, v)$ & Type IIA & Type IIB \\
\hline
R-R & $(0, 0)$ & $\left | - \right > \otimes \left | + \right >$ & $\left | + \right > \otimes \left | + \right >$ \\
NS-NS & $\left ( \frac{1}{2}, \frac{1}{2} \right )$ & $\tilde{b}^i_{-\frac{1}{2}} \left | 0 \right > \otimes b^i_{-\frac{1}{2}} \left | 0 \right >$ & $\tilde{b}^i_{-\frac{1}{2}} \left | 0 \right > \otimes b^i_{-\frac{1}{2}} \left | 0 \right >$ \\
NS-R & $\left ( \frac{1}{2}, 0 \right )$ & $\tilde{b}^i_{-\frac{1}{2}} \left | 0 \right > \otimes \left | + \right > \;\;\;\;\;\;$ & $\tilde{b}^i_{-\frac{1}{2}} \left | 0 \right > \otimes \left | + \right > \;\;\;\;\;\;$ \\
R-NS & $\left ( 0, \frac{1}{2} \right )$ & $\;\;\;\;\;\; \left | - \right > \otimes b^i_{-\frac{1}{2}} \left | 0 \right >$ & $\;\;\;\;\;\; \left | - \right > \otimes b^i_{-\frac{1}{2}} \left | 0 \right >$ \\
\hline
\end{tabular}
\label{spectrum-table}
\end{align}
Each sector is 64-fold degenerate.

Our expression for the free energy density has terms like $e^{-\pi \tau_2 \alpha^{\prime} m^2}$ summed over masses.
These sums look like familiar partition functions if we substitute (\ref{mass-operator}) in for $m^2$. Going back to
(\ref{partfunc-sofar}), it is almost correct to replace the term in square brackets with
$64 Z_{0, 0}(\tau_2) + 64 Z_{\frac{1}{2}, \frac{1}{2}}(\tau_2) - 128(-1)^n Z_{0, \frac{1}{2}}(\tau_2)$. In the notation
being used
\begin{eqnarray}
Z_{\tilde{v}, v} &=& e^{2(v + \tilde{v})\pi \tau_2} \left [ \sum_{\left \{ N^i_n, \tilde{N}^i_n \right \}} \exp \left ( -2\pi \tau_2 \sum_{i = 1}^8 \sum_{n = 1}^{\infty} n \left ( N^i_n + \tilde{N}^i_n \right ) \right ) \right ] \nonumber \\
&& \left [ \sum_{\left \{ M^i_r \right \}} \exp \left ( -2\pi \tau_2 \sum_{i = 1}^8 \sum_{r = v + 1}^{\infty} rM^i_r \right ) \right ] \left [ \sum_{\left \{ \tilde{M}^i_r \right \}} \exp \left ( -2\pi \tau_2 \sum_{i = 1}^8 \sum_{r = \tilde{v} + 1}^{\infty} r\tilde{M}^i_r \right ) \right ] \; , \label{partfunc-schematic}
\end{eqnarray}
$N^i_n$ and $\tilde{N}^i_n$ are bosonic occupation numbers while $M^i_r$ and $\tilde{M}^i_r$ are fermionic occupation
numbers. Accounting for level matching is the one correction that needs to be made. This can be done by inserting a
Kronecker delta
\begin{equation}
\delta_{L,R} = \int_{-\frac{1}{2}}^{\frac{1}{2}} e^{2\pi i \tau_1 (L - R)} \textup{d}\tau_1 \nonumber
\end{equation}
where $L = \sum_{i, n} n \tilde{N}^i_n + \sum_{i, r} r \tilde{M}^i_r$ and
$R = \sum_{i, n} n N^i_n + \sum_{i, r} r M^i_r$ are the left and right excitations respectively. Multiplying this by
(\ref{partfunc-schematic}), we see that the quantity being integrated is nothing but the generalized partition function
for 8 fermions and 8 bosons. Rewriting (\ref{partfunc-sofar}),
\begin{eqnarray}
\frac{F}{V} &=& -\sum_{n = 1}^{\infty} \int_0^{\infty} \int_{-\frac{1}{2}}^{\frac{1}{2}} e^{-\frac{\beta^2 n^2}{4\pi \alpha^{\prime} \tau_2}} 64 Z_{\mathrm{B}}(\tau)^8 \left [ Z_{\mathrm{R}}(\tau)^8 + Z_{\mathrm{NS}}(\tau)^8 - 2(-1)^n \sqrt{Z_{\mathrm{R}}(\tau) Z_{\mathrm{NS}}(\tau)}^8 \right ] \nonumber \\
&& (4 \pi^2 \alpha^{\prime} \tau_2)^{-5} \frac{\textup{d} \tau_1 \textup{d} \tau_2}{\tau_2} \; . \label{string-free-energy}
\end{eqnarray}
In terms of our old notation, $Z_{\mathrm{B}}(\tau) = Z(\tau)$ while $Z_{\mathrm{R}}(\tau)$ and $Z_{\mathrm{NS}}(\tau)$
approach $Z^*(\tau)$ in the small $\tau$ limit.

Our goal is to investigate the high temperature limit of (\ref{string-free-energy}). Since this corresponds to
$\beta \rightarrow 0$, the integral is dominated by the $n = 1$ term of the sum and the small $\tau$ limits of the
worldsheet partition functions. A curious fact about string theories is that at a high enough temperature, called the
Hagedorn temperature $T_{\mathrm{H}}$, the free energy density diverges. We will solve for $\beta_{\mathrm{H}}$. This
can be done by looking at any one of the four terms in the integrand of (\ref{string-free-energy}). Substituting the
generalized partition function (\ref{gen-partfunc}), the function we are integrating is
\begin{equation}
64 e^{-\frac{\beta^2}{4\pi \alpha^{\prime} \tau_2}} e^{-2\pi \Im \left ( \frac{1}{\tau} \right )} \frac{(4 \pi^2 \alpha^{\prime} \tau_2)^{-5}}{\tau_2} \; . \nonumber
\end{equation}
The value of $\beta_{\mathrm{H}}$ is reached when the overall exponent is zero. This means
\begin{eqnarray}
\beta_{\mathrm{H}}^2 &=& 8\pi^2 \alpha^{\prime} \frac{\tau_2^2}{\tau_1^2 + \tau_2^2} \nonumber \\
&\approx& 8\pi^2 \alpha^{\prime} \; . \label{hagedorn-temperature}
\end{eqnarray}
The partition function for a system first diverges when the density of states becomes exponential and the
decay of the Boltzmann factor can no longer overpower such growth. This is equivalent to saying
that $S \propto E$. The proportionality constant can be read off from (\ref{hagedorn-temperature}) because
$\beta = \frac{\textup{d} S}{\textup{d} E}$ must give $\beta_{\mathrm{H}}$. The result of this, in contrast to
(\ref{cft-entropy}) is:
\begin{equation}
S = 2\pi \sqrt{2 \alpha^{\prime}} E \; . \label{string-entropy}
\end{equation}

\subsection{Black holes}
It is clear how entropy arises in the field theories we have discussed. If we only know the energy $E$ of a field,
the corresponding ensemble of particles can be in any one of $\rho(E)$ microstates contributing to our lack of
knowledge about the system. Microstates of this form do not appear to be present for black holes. Classically, one can
learn everyting about a black hole from just three numbers: mass, charge and angular momentum. The discovery that black
holes have entropy as well, has led to some of the deepest results in theoretical physics \cite{wald}.

\subsubsection{Useful metrics}
Black hole metrics in arbitrary dimension have seen increasing interest since the discovery of the AdS / CFT
correspondence \cite{emparan}. Most authors take the Einstein equations to be fundamental so that they read
\begin{equation}
\mathcal{R}_{\mu\nu} - \frac{1}{2} \mathcal{R} g_{\mu\nu} + \Lambda g_{\mu\nu} = 8\pi G T_{\mu\nu} \label{einstein}
\end{equation}
regardless of how many dimensions there are. The same cannot be said of the Newtonian limit. If (\ref{einstein})
describes the full theory of gravity in $d + 1$ dimensions, one can show that dimension dependent prefactors
necessarily appear in the Poisson equation \cite{emparan}:
\begin{eqnarray}
\Delta \Phi &=& 8\pi G \frac{d - 2}{d - 1} \rho \nonumber \\
&=& 8\pi G \frac{d - 2}{d - 1} M \delta(0) \; . \label{newtonian-limit}
\end{eqnarray}
The second form above specializes to a point mass of $M$.

The neutral, irrotational black hole in arbitrary dimension is called the Schwarzschild-Tangherlini solution.
\begin{equation}
ds^2 = -\left ( 1 - \frac{\mu}{r^{d - 2}} \right ) dt^2 + \left ( 1 - \frac{\mu}{r^{d - 2}} \right )^{-1} dr^2 + r^2 d\Omega_{d - 1}^2 \label{schwarzschild}
\end{equation}
We will take this opportunity to review some of the basic properties that black hole metrics should have.
\begin{enumerate}
\item
It can be checked that (\ref{schwarzschild}) solves Einstein's equation with no cosmological constant and no
stress-energy tensor. In fact, it is the unique spherically symmetric and time independent solution. The requirement
that it be time independent is redundant if $d = 3$.
\item
It is clear that (\ref{schwarzschild}) is asymptotically flat. Therefore, the notion of ``escaping'' from a potential
well in this metric is well defined.
\item
It is also clear that beyond a certain radius, one can no longer escape. Far away from the origin, $t$ is timelike and
$r$ is spacelike but this reverses when $r$ falls below $\mu^{\frac{1}{d - 2}}$. On the inside of this
\textit{event horizon}, the flow of time in (\ref{schwarzschild}) is such that an object is inexorably drawn toward
the centre.
\item
This implies that the mass generating the event horizon has been compressed to a point. This, along with the fact that
metrics for equal point masses should be indistinguishable, tells us that (\ref{schwarzschild})
represents the most efficient packing of said mass into a sphere of radius $\mu^{\frac{1}{d - 2}}$.
\item
The mass may be computed by taking the Newtonian limit. For those unfamiliar with the ADM procedure \cite{adm}, we will
consider a test particle far away from the origin, moving radially outward. If the motion is non-relativistic, the
timelike geodesic condition becomes $g_{tt} \dot{t}^2 \approx -1$ or
$\dot{t}^2 \approx \left ( 1 - \frac{\mu}{r^{d - 2}} \right )^{-1}$. Substituting this into the geodesic equation,
\begin{eqnarray}
\ddot{r} &\approx& -\Gamma^r_{tt} \dot{t}^2 \nonumber \\
&\approx& \frac{1}{2} \partial_r g_{tt} \frac{g^{rr}}{1 - \frac{\mu}{r^{d - 2}}} \nonumber \\
&=& -\nabla \left ( \frac{\mu}{2r^{d - 2}} \right ) \; . \nonumber
\end{eqnarray}
The function inside the gradient should be a solution to (\ref{newtonian-limit}). Recalling the Green's function for
the $d$-dimensional Laplacian, this is only true if $\mu = \frac{16 \pi GM}{d(d - 1) \omega_d}$.
\end{enumerate}

A further generalization of interest to us is the metric for a black hole that is asymptotically AdS. Anti-de Sitter
space is the maximally symmetric solution to Einstein's equations when they have a negative cosmological constant. A
cosmological constant defines a length scale for the spacetime with
\begin{equation}
\Lambda \equiv -\frac{d(d - 1)}{2L^2} \label{cosmological-constant}
\end{equation}
by convention. Because of this, AdS posesses a conformal boundary. Its radial co-ordinate is infinite but an observer
is able to reach it in a finite amount of proper time. The solution is
\begin{equation}
ds^2 = -\left ( 1 + \frac{r^2}{L^2} \right ) dt^2 + \left ( 1 + \frac{r^2}{L^2} \right )^{-1} dr^2 + r^2 d\Omega_{d - 1}^2 \label{ads}
\end{equation}
and a black hole metric asymptotic to this is
\begin{equation}
ds^2 = -\left ( 1 + \frac{r^2}{L^2} - \frac{\mu}{r^{d - 2}} \right ) dt^2 + \left ( 1 + \frac{r^2}{L^2} - \frac{\mu}{r^{d - 2}} \right )^{-1} dr^2 + r^2 d\Omega_{d - 1}^2 \; . \label{schwarzschild-ads}
\end{equation}
Uniqueness of (\ref{schwarzschild-ads}) in the same sense as (\ref{schwarzschild}) is suspected but not known.
The horizon radius is a solution to $\frac{\mu}{r_0^{d - 2}} = 1 + \frac{r_0^2}{L^2}$ with the outermost one being
the point of no return. The mass is again given by $\mu = \frac{16 \pi GM}{d(d - 1) \omega_d}$. To see this,
a non-relativistic radial trajectory satisfies
\begin{eqnarray}
\ddot{r} &\approx& -\Gamma^r_{tt} \dot{t}^2 \nonumber \\
&\approx& \frac{1}{2} \partial_r g_{tt} \nonumber \\
&=& -\nabla \left ( \frac{\mu}{2r^{d - 2}} \right ) - \frac{r}{L^2} \; . \nonumber
\end{eqnarray}
While this lacks the sophistocation of methods like \cite{ashtekar}, we obtain the right answer if we simply subtract
the acceleration that a particle would have in pure AdS.

It will be convenient to rewrite this metric in Eddington-Finkelstein co-ordinates. This can be done using either
the retarded time or advanced time, which take the form
\begin{eqnarray}
u &=& t - r^* \nonumber \\
v &=& t + r^* \nonumber
\end{eqnarray}
respectively. Differentiating these to arrive at
\begin{eqnarray}
\dot{u} &=& \dot{t} - \frac{\textup{d}r^*}{\textup{d}r} \dot{r} \nonumber \\
\dot{v} &=& \dot{t} + \frac{\textup{d}r^*}{\textup{d}r} \dot{r} \; , \nonumber
\end{eqnarray}
we see that $r^*$ should be chosen so that its radial derivative is the factor relating $\dot{t}$ and
$\dot{r}$. For a null geodesic,
\begin{equation}
\dot{t} = \pm \left ( 1 + \frac{r^2}{L^2} - \frac{\mu}{r^{d - 2}} \right )^{-1} \dot{r} \nonumber
\end{equation}
where the positive sign corresponds to an outgoing particle and the negative sign corresponds to an ingoing particle.
We now have
\begin{equation}
\frac{\textup{d}r^*}{\textup{d}r} = \left ( 1 + \frac{r^2}{L^2} - \frac{\mu}{r^{d - 2}} \right )^{-1} \; . \nonumber
\end{equation}
In certain cases, this can be integrated to give $r^*$ explicitly. However, this is not needed for replacing $dt$.
Performing the change of variables,
\begin{eqnarray}
ds^2 &=& -\left ( 1 + \frac{r^2}{L^2} - \frac{\mu}{r^{d - 2}} \right ) du^2 - 2 du dr + r^2 d\Omega_{d - 1}^2 \nonumber \\
ds^2 &=& -\left ( 1 + \frac{r^2}{L^2} - \frac{\mu}{r^{d - 2}} \right ) dv^2 + 2 dv dr + r^2 d\Omega_{d - 1}^2 \label{ed-fink-ads}
\end{eqnarray}
are the desired metrics.

\subsubsection{Hawking radiation}
To derive the relation between entropy and the area of a black hole, we will follow Hawking's original paper
\cite{hawking} as well as the clarifications in \cite{parker, thompson}. What we will see is that two observers --- one
observing spacetime before a black hole has formed, the other after --- will have different definitions of the quantum
vacuum. The Klein-Gordon equation for a field in curved spacetime is
\begin{eqnarray}
\nabla_{\mu} \nabla^{\mu} \phi - m^2 \phi = 0 \nonumber \\
\frac{1}{\sqrt{-g}} \partial_{\mu} \left ( \sqrt{-g} g^{\mu\nu} \partial_{\nu} \phi \right ) - m^2 \phi = 0 \; . \label{kg-equation}
\end{eqnarray}
For any two solutions $\phi_1$ and $\phi_2$, the Klein-Gordon inner product
\begin{equation}
\left < \phi_1, \phi_2 \right > = \int_S \left [ \phi^*_1 \nabla^{\mu} \phi_2 - \phi_2 \nabla^{\mu} \phi^*_1 \right ] \textup{d}\Sigma_{\mu} \label{kg-inner-product}
\end{equation}
will be conserved. The notation above suggests a scalar field, but this does not have to be the case. Fields with
multiple components like vectors and spinors satisfy the Klein-Gordon equation componentwise. To indicate that Hawking
radiation is a mixture of all types of particles, we will write creation and anhilation operators as
$a^{\mathcal{I}\dagger}$ and $a^{\mathcal{I}}$ where $\mathcal{I}$ is an index set. One simplification we will make,
however, is that the fields are massless.

We may write a basis of solutions to (\ref{kg-equation}) as $f_i$ and choose them to be orthonormal with respect to
(\ref{kg-inner-product}). If we do this, the field operator takes the form
\begin{equation}
\phi^{\mathcal{I}} = \sum_i f_i a^{\mathcal{I}}_i + f^*_i a^{\mathcal{I}\dagger}_i \; . \nonumber
\end{equation}
In other words, positive frequency modes multiply anhilation operators while negative frequency modes multiply creation
operators. We will let these $f_i$ represent any particles that can be seen before a black hole forms. Since there are
no such particles, the past observer will see the vacuum state $\left | 0 \right >_a$ defined as the state that is
anhilated by all $a^{\mathcal{I}}_i$. The future observer sees a different metric and in particular a different time
component of the metric. This means he will have a different definition of positive and negative frequency. Writing
\begin{equation}
\phi^{\mathcal{I}} = \sum_i g_i b^{\mathcal{I}}_i + g^*_i b^{\mathcal{I}\dagger}_i \; , \nonumber
\end{equation}
each $g_i$ representing a particle in the black hole spacetime should be expressible as a linear combination
of the $f_i$. A positive frequency $g_i$ may therefore include a contribution from a negative frequency $f_i$ and
\textit{vice versa}. If so, the $b^{\mathcal{I}}_i$ will not anhilate the $a^{\mathcal{I}}_i$ vacuum and the
$a^{\mathcal{I}}_i$ will not anhilate the $b^{\mathcal{I}}_i$ vacuum. This discrepancy between
$\left | 0 \right >_a$ and $\left | 0 \right >_b$ means that the future observer will see radiation precisely because
the past observer did not.

It is not correct to say that the only modes of $\phi^{\mathcal{I}}$ are $f_i$ waves that the past observer can see
and $g_i$ waves that the future observer can see. There are also $h_i$ waves in the future that cannot be seen because
they are behind the event horizon of the black hole. We would have to consider these if we wanted to write the past
modes as linear combinations of the future modes. As it happens, we will only need to write the future modes as linear
combinations of the past modes. Converting
\begin{eqnarray}
g_i &=& \sum_j \alpha_{ij} f_j + \beta_{ij} f^*_j \nonumber \\
h_i &=& \sum_j \sigma_{ij} f_j + \tau_{ij} f^*_j \nonumber
\end{eqnarray}
into a set of relations between operators, we arrive at the so-called Bogoliubov transformation:
\begin{eqnarray}
b^{\mathcal{I}}_i &=& \sum_j \alpha^*_{ij} a^{\mathcal{I}}_j - \beta^*_{ij} a^{\mathcal{I}\dagger}_j \nonumber \\
c^{\mathcal{I}}_i &=& \sum_j \sigma^*_{ij} a^{\mathcal{I}}_j - \tau^*_{ij} a^{\mathcal{I}\dagger}_j \; . \nonumber
\end{eqnarray}
This tells us that the number of particles detected as belonging to $g_i$ in the future is given by
\begin{equation}
n_i = {}_a \left < 0 \right | b^{\mathcal{I}\dagger}_i b^{\mathcal{I}}_i \left | 0 \right >_a = \sum_j \left | \beta_{ij} \right |^2 \; . \label{emission}
\end{equation}
Assuming that we are dealing with bosons, we also have
\begin{equation}
1 = {}_a \left < 0 \right | \left [ b^{\mathcal{I}}_i, b^{\mathcal{I}\dagger}_i \right ] \left | 0 \right >_a = \sum_j \left | \alpha_{ij} \right |^2 - \left | \beta_{ij} \right |^2 \; . \label{absorbtion}
\end{equation}
We will now be more explicit about what the modes are so that we may plug them into the inner product and find the
$\alpha_{ij}$ and $\beta_{ij}$ coefficients.

Spherical waves are convenient choices, but it is important not to use the expressions for flat space spherical waves
when we are really in a curved space. By construction, outgoing null geodesics are lines of constant $u$ while ingoing
null geodesics are lines of constant $v$. Therefore, the advanced and retarded times should be used in place of
$t \pm r$ giving us
\begin{eqnarray}
f_{\omega, l_1, \dots, l_{d-1}} (v, \theta_1, \dots, \theta_{d - 1}) = \frac{e^{i \omega v}}{\sqrt{(\omega r)^{d - 1} d \omega_d}} Y_{l_1, \dots, l_{d - 1}} (\theta_1, \dots, \theta_{d - 1}) \nonumber \\
g_{\omega, l_1, \dots, l_{d-1}} (u, \theta_1, \dots, \theta_{d - 1}) = \frac{e^{i \omega u}}{\sqrt{(\omega r)^{d - 1} d \omega_d}} Y_{l_1, \dots, l_{d - 1}} (\theta_1, \dots, \theta_{d - 1}) \label{spherical-waves}
\end{eqnarray}
as approximate solutions for large $r$. We could similarly consider
$f_{\omega, l_1, \dots, l_{d-1}} (u, \theta_1, \dots, \theta_{d - 1})$ and
$g_{\omega, l_1, \dots, l_{d-1}} (v, \theta_1, \dots, \theta_{d - 1})$ but these would affect the result very little.
The interesting effects come from waves that switch from ingoing to outgoing while the black hole is forming. By this,
we mean that waves of constant $v$ travel toward the collapsing mass at $r = 0$. As long as an event horizon has not
formed yet, such waves may emerge from the other side and start moving away with constant $u$. A natural question to
ask is \textit{which} constant $u$? That is, what will $u$ be in terms of the $v$ that the wave used to have? This is
the key question that must be answered before we can take an inner product and derive Hawking's result. The difficulty
in relating these is explained in Figure \ref{affine-parameter}.
\begin{figure}[h]
\centering
\subfloat[][Before]{\includegraphics[scale=0.4]{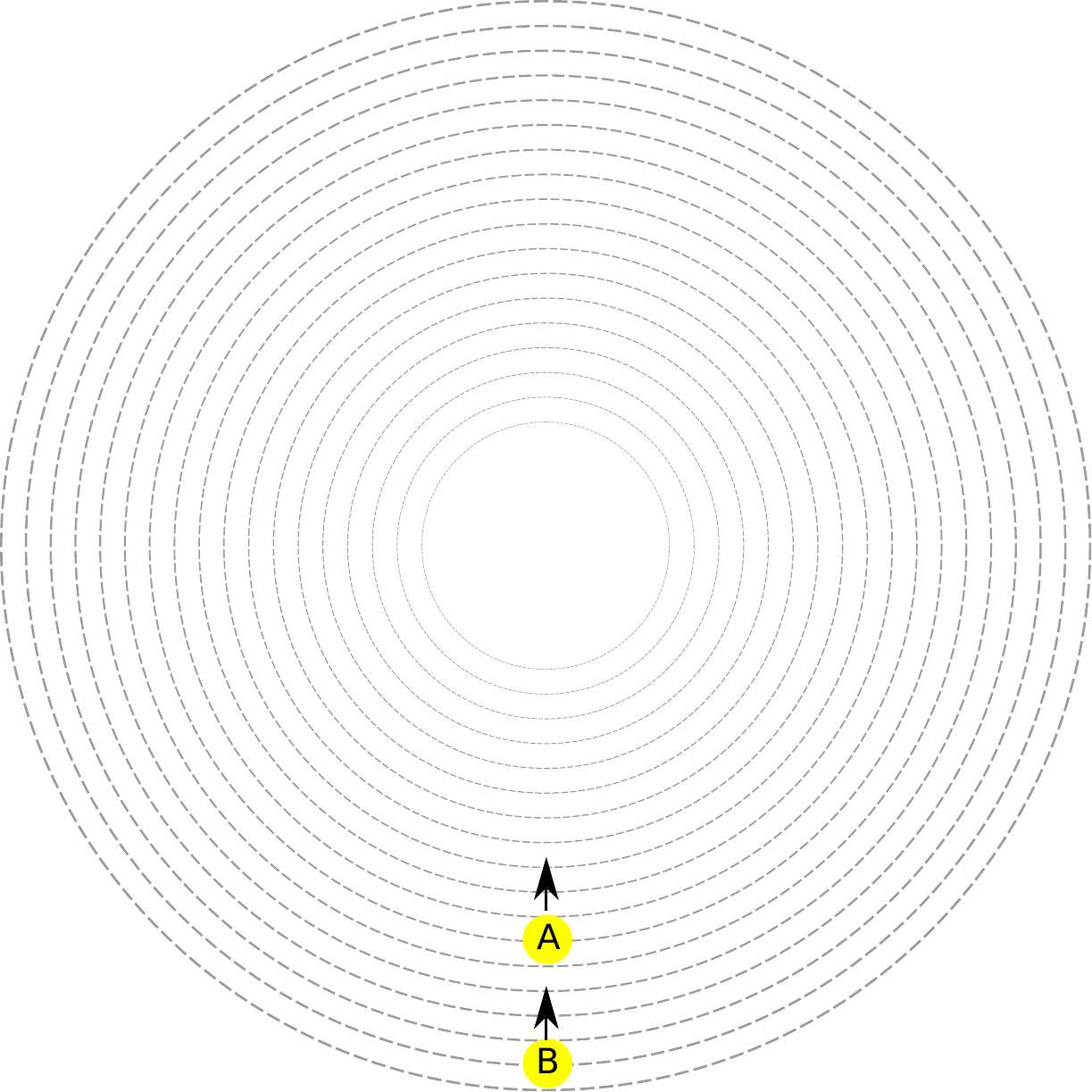}} \qquad
\subfloat[][After]{\includegraphics[scale=0.4]{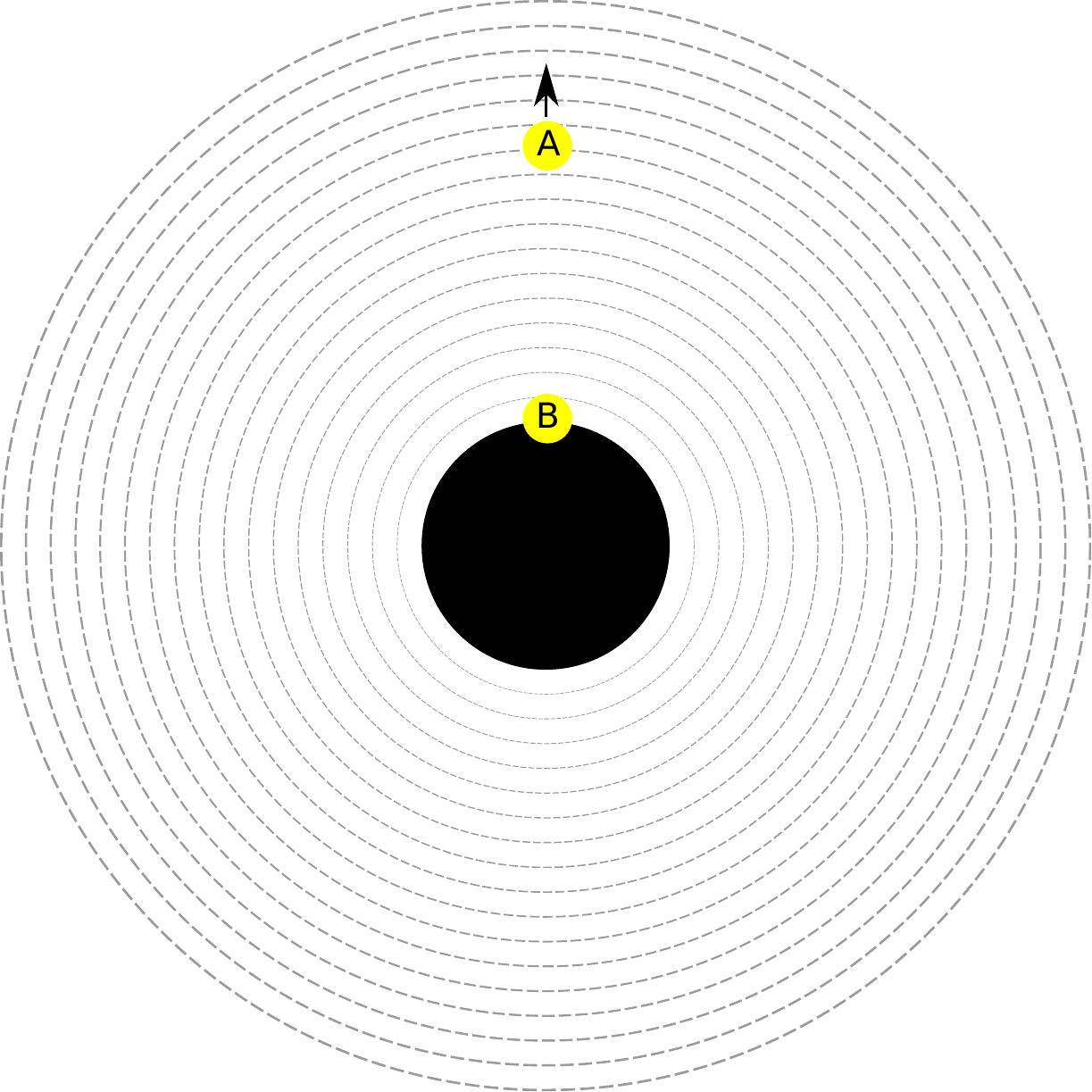}}
\caption{Photon A leaves a light source and starts heading toward $r = 0$. After a certain time interval, photon B does
the same. If the spacetime is Minkowski, the distance between the two photons will not change. Conversely, if a
black hole at $r = 0$ forms at just the right time, photon A will escape but photon B will stay trapped at the event
horizon forever.}
\label{affine-parameter}
\end{figure}
A better way to compare objects A and B is to imagine that A throws a ball C backwards until it is caught by B. The
proper time for C to travel should be equal along all stages of the journey. Instead of proper time, we will
use a difference of affine parameters which is appropriate for signals travelling at the speed of light.

Waves hoping to escape the black hole must start off with a $v$ smaller than the one posessed by photon B in Figure
\ref{affine-parameter}. We will call this largest advanced time $v_0$. If the spacetime is Minkowski, long before the
black hole has formed, $v = t + r$ and affine parametrizations are
\begin{eqnarray}
t(\lambda) &=& t(0) + \frac{\lambda}{2} \nonumber \\
r_{\mathrm{A}}(\lambda) &=& r_{\mathrm{A}}(0) - \frac{\lambda}{2} \nonumber \\
r_{\mathrm{B}}(\lambda) &=& r_{\mathrm{B}}(0) - \frac{\lambda}{2} \nonumber \\
r_{\mathrm{C}}(\lambda) &=& r_{\mathrm{A}}(0) + \frac{\lambda}{2} \; . \nonumber
\end{eqnarray}
By the time C makes it to B, it will have the same position and time co-ordinate as B so it must have the same $v$ as
B. Therefore subtracting the advanced times corresponding to A and C, we have
\begin{equation}
v - v_0 = r_{\mathrm{A}}(\lambda) - r_{\mathrm{C}}(\lambda) = -\lambda \; . \nonumber
\end{equation}
We may therefore call $v_0 - v$ an affine parameter that vanishes for the wave that stays at the event horizon. This
means that after the horizon at $r_0$ has formed, a wave's radial co-ordinate must look like $r = r_0 - \lambda$. We
will substitute this into the retarded time for C noting that C is not a wave of constant $u$ because it travels
backwards from A back to B.
\begin{eqnarray}
\dot{u} &=& \dot{t} - \frac{\textup{d}r^*}{\textup{d}r} \dot{r} \nonumber \\
&=& 2 \left ( 1 + \frac{r^2}{L^2} - \frac{\mu}{r^{d - 2}} \right )^{-1} \nonumber \\
&=& 2 \left ( 1 + \frac{r^2}{L^2} - \left ( \frac{r_0}{r} \right )^{d - 2} \left ( 1 + \frac{r_0^2}{L^2} \right ) \right )^{-1} \nonumber \\
&=& 2 \left ( 1 + \frac{(r_0 - \lambda)^2}{L^2} - \left ( \frac{r_0}{r_0 - \lambda} \right )^{d - 2} \left ( 1 + \frac{r_0^2}{L^2} \right ) \right )^{-1} \nonumber \\
\end{eqnarray}
This does not have a closed form integral, but the interesting effects come from waves that are close to the horizon.
Keeping only the lowest order in $\lambda$,
\begin{eqnarray}
\dot{u} &\approx& -\frac{2r_0}{\lambda} \frac{L^2}{(d - 2)L^2 + dr_0^2} \nonumber \\
u &\approx& -2r_0 \frac{L^2}{(d - 2)L^2 + dr_0^2} \log \left ( \frac{\lambda}{C} \right ) \nonumber \\
&=& -2r_0 \frac{L^2}{(d - 2)L^2 + dr_0^2} \log \left ( \frac{v_0 - v}{C} \right ) \; . \nonumber
\end{eqnarray}

The equation relating $u$ to $v$ has now been found, so we may substitute (\ref{spherical-waves}) into
(\ref{kg-inner-product}) for a surface whose normal derivative is $\partial^r$. We will abbreviate the
$l_1, \dots, l_{d - 1}$ dependence as $l$ and the $\theta_1, \dots, \theta_{d - 1}$ dependence as $\theta$.
\begin{eqnarray}
\alpha_{\omega, l}^{\omega^{\prime}, l^{\prime}} &=& \left < g_{\omega, l}, f_{\omega^{\prime}, l^{\prime}} \right > \nonumber \\
&=& i \int_{\mathbb{S}^{d - 1}} \int_{-\infty}^{v_0} f^*_{\omega^{\prime}, l^{\prime}}(v, \theta) \partial^r g_{\omega, l}(v, \theta) - g_{\omega, l}(v, \theta) \partial^r f^*_{\omega^{\prime}, l^{\prime}}(v, \theta) \textup{d}v \textup{d}\Omega \nonumber \\
&=& i \int_{\mathbb{S}^{d - 1}} \int_{-\infty}^{v_0} f^*_{\omega^{\prime}, l^{\prime}}(v, \theta) \left [ \partial_v + \left ( 1 + \frac{r^2}{L^2} - \frac{\mu}{r^{d - 2}} \right ) \partial_r \right ] g_{\omega, l}(v, \theta) \nonumber \\
&& - g_{\omega, l}(v, \theta) \left [ \partial_v + \left ( 1 + \frac{r^2}{L^2} - \frac{\mu}{r^{d - 2}} \right ) \partial_r \right ] f^*_{\omega^{\prime}, l^{\prime}}(v, \theta) \textup{d}v \textup{d}\Omega \nonumber \\
&=& \frac{i}{d \omega_{d - 1} \sqrt{\omega \omega^{\prime}}^{d - 1}} \int_{\mathbb{S}^{d - 1}} Y_{l}(\theta) Y^*_{l^{\prime}}(\theta) \textup{d}\Omega \nonumber \\
&& e^{i\omega^{\prime}v} \partial_v e^{2i \omega r_0 \frac{L^2}{(d - 2)L^2 + dr_0^2} \log \left ( \frac{v_0 - v}{C} \right )} - e^{2i \omega r_0 \frac{L^2}{(d - 2)L^2 + dr_0^2} \log \left ( \frac{v_0 - v}{C} \right )} \partial_v e^{i \omega^{\prime} v} \textup{d}v \nonumber \\
&=& \frac{\delta_{l_1}^{l_1^{\prime}} \delta_{l_2}^{l_2^{\prime}} \dots \delta_{l_{d - 1}}^{l_{d - 1}^{\prime}}}{d \omega_{d - 1} \sqrt{\omega \omega^{\prime}}^{d - 1}} \int_{-\infty}^{v_0} \left ( \frac{2 \omega r_0}{v_0 - v} \frac{L^2}{(d - 2)L^2 + dr_0^2} + \omega^{\prime} \right ) e^{2i \omega r_0 \frac{L^2}{(d - 2)L^2 + dr_0^2} \log \left ( \frac{v_0 - v}{C} \right )} e^{i \omega^{\prime} v} \textup{d}v \nonumber
\end{eqnarray}
The other Bogoliubov coefficient is found similarly. The only difference is that when integrating two spherical
harmonics without a complex conjugate, we have to use the identity
$Y_{l_1, l_2, \dots, l_{d - 1}} = (-1)^{l_1} Y^*_{-l_1, l_2, \dots, l_{d - 1}}$.
\begin{eqnarray}
\beta_{\omega, l}^{\omega^{\prime}, l^{\prime}} &=& \left < g_{\omega, l}, f^*_{\omega^{\prime}, l^{\prime}} \right > \nonumber \\
&=& \frac{(-1)^{l_1}\delta_{-l_1}^{l_1^{\prime}} \delta_{l_2}^{l_2^{\prime}} \dots \delta_{l_{d - 1}}^{l_{d - 1}^{\prime}}}{d \omega_{d - 1} \sqrt{\omega \omega^{\prime}}^{d - 1}} \int_{-\infty}^{v_0} \left ( \frac{2 \omega r_0}{v_0 - v} \frac{L^2}{(d - 2)L^2 + dr_0^2} - \omega^{\prime} \right ) e^{2i \omega r_0 \frac{L^2}{(d - 2)L^2 + dr_0^2} \log \left ( \frac{v_0 - v}{C} \right )} e^{-i \omega^{\prime} v} \textup{d}v \nonumber
\end{eqnarray}
The integrands above have a branch cut on the real axis because of the $\log \left ( \frac{v_0 - v}{C} \right )$. In
order to manipulate them with complex analysis, it is convenient to displace them with $\pm i \epsilon$. The
$\alpha$ and $\beta$ integrals become
\begin{eqnarray}
\int_{-\infty}^{v_0} \left ( \frac{2 \omega r_0}{v_0 - v} \frac{L^2}{(d - 2)L^2 + dr_0^2} + \omega^{\prime} \right ) e^{2i \omega r_0 \frac{L^2}{(d - 2)L^2 + dr_0^2} \log \left ( \frac{v_0 - v}{C} + i \epsilon \right )} e^{i \omega^{\prime} v} \textup{d}v \label{alpha-integral} \\
\int_{-\infty}^{v_0} \left ( \frac{2 \omega r_0}{v_0 - v} \frac{L^2}{(d - 2)L^2 + dr_0^2} - \omega^{\prime} \right ) e^{2i \omega r_0 \frac{L^2}{(d - 2)L^2 + dr_0^2} \log \left ( \frac{v_0 - v}{C} - i \epsilon \right )} e^{-i \omega^{\prime} v} \textup{d}v \label{beta-integral}
\end{eqnarray}
respectively. The signs for $\pm i \epsilon$ above are dictated by our requirement that $e^{\pm i \omega^{\prime} v}$
vanish at infinity.

\begin{figure}[h]
\includegraphics[scale=0.45]{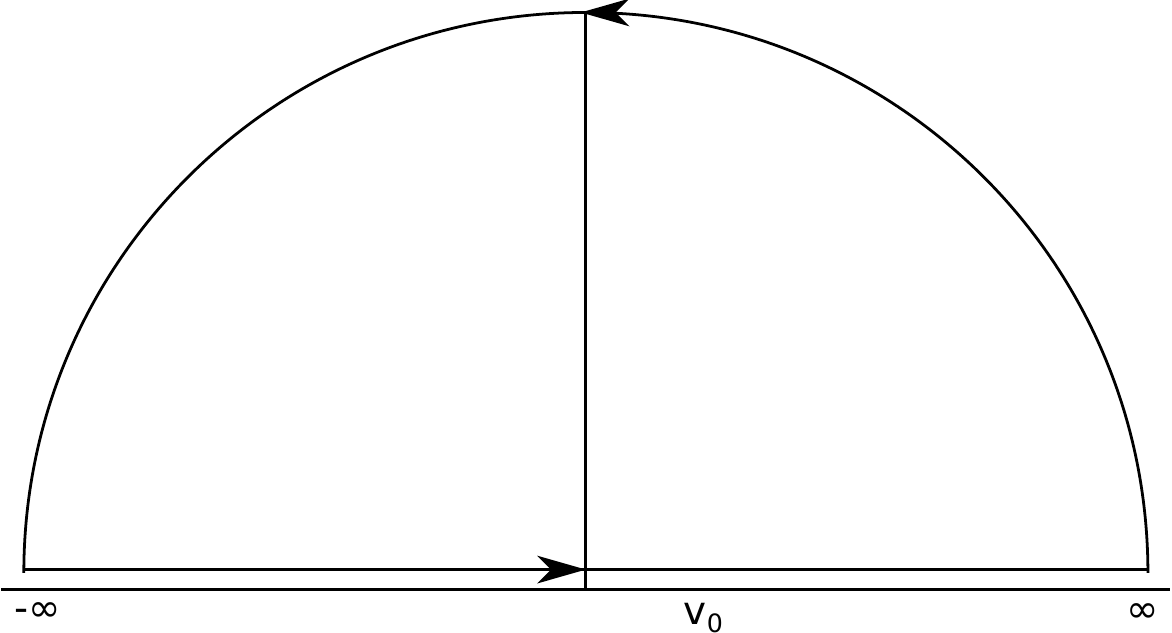}
\caption{For the function with $+i \epsilon$, it is not difficult to show that the semi-circular arc has no
contribution to the integral. Therefore the integral between $-\infty$ and $\infty$ is zero.}
\label{hawking-contour}
\end{figure}
The integral in (\ref{alpha-integral}) vanishes if the domain is the contour in Figure \ref{hawking-contour}. We may
therefore split it up as follows.
\begin{eqnarray}
&& \int_{-\infty}^{v_0} \left ( \frac{2 \omega r_0}{v_0 - v} \frac{L^2}{(d - 2)L^2 + dr_0^2} + \omega^{\prime} \right ) e^{2i \omega r_0 \frac{L^2}{(d - 2)L^2 + dr_0^2} \log \left ( \frac{v_0 - v}{C} + i \epsilon \right )} e^{i \omega^{\prime} v} \textup{d}v \nonumber \\
&=& -\int_{v_0}^{\infty} \left ( \frac{2 \omega r_0}{v_0 - v} \frac{L^2}{(d - 2)L^2 + dr_0^2} + \omega^{\prime} \right ) e^{2i \omega r_0 \frac{L^2}{(d - 2)L^2 + dr_0^2} \log \left ( \frac{v_0 - v}{C} + i \epsilon \right )} e^{i \omega^{\prime} v} \textup{d}v \nonumber \\
&=& -e^{i \omega^{\prime} v_0} \int_{-\infty}^0 \left ( \frac{2 \omega r_0}{v^{\prime}} \frac{L^2}{(d - 2)L^2 + dr_0^2} + \omega^{\prime} \right ) e^{2i \omega r_0 \frac{L^2}{(d - 2)L^2 + dr_0^2} \log \left ( - \left ( \frac{-v^{\prime}}{C} - i \epsilon \right ) \right )} e^{i \omega^{\prime} v^{\prime}} \textup{d}v^{\prime} \nonumber
\end{eqnarray}
In the second step, we made the substitution $v^{\prime} = v_0 - v$. This integral is written with the understanding
that it should be evaluated with a contour in the lower half plane. We must therefore write
$\log(-A) = \log(A) - i\pi$ instead of $\log(-A) = \log(A) + i\pi$ to avoid crossing the branch cut. We will perform
this step and then make another substitution $v^{\prime} = v - v_0$.
\begin{eqnarray}
&& \int_{-\infty}^{v_0} \left ( \frac{2 \omega r_0}{v_0 - v} \frac{L^2}{(d - 2)L^2 + dr_0^2} + \omega^{\prime} \right ) e^{2i \omega r_0 \frac{L^2}{(d - 2)L^2 + dr_0^2} \log \left ( \frac{v_0 - v}{C} + i \epsilon \right )} e^{i \omega^{\prime} v} \textup{d}v \nonumber \\
&=& -e^{i \omega^{\prime} v_0 + 2\pi \omega r_0 \frac{L^2}{(d - 2)L^2 + dr_0^2}} \int_{-\infty}^0 \left ( \frac{2 \omega r_0}{v^{\prime}} \frac{L^2}{(d - 2)L^2 + dr_0^2} + \omega^{\prime} \right ) e^{2i \omega r_0 \frac{L^2}{(d - 2)L^2 + dr_0^2} \log \left ( \frac{-v^{\prime}}{C} - i \epsilon \right )} e^{i \omega^{\prime} v^{\prime}} \textup{d}v^{\prime} \nonumber \\
&=& e^{2i \omega^{\prime} v_0 + 2\pi \omega r_0 \frac{L^2}{(d - 2)L^2 + dr_0^2}} \int_{-\infty}^{v_0} \left ( \frac{2 \omega r_0}{v_0 - v} \frac{L^2}{(d - 2)L^2 + dr_0^2} - \omega^{\prime} \right ) e^{2i \omega r_0 \frac{L^2}{(d - 2)L^2 + dr_0^2} \log \left ( \frac{v_0 - v}{C} - i \epsilon \right )} e^{i \omega^{\prime} v} \textup{d}v \nonumber
\end{eqnarray}
By manipulating (\ref{alpha-integral}), we have turned it into a multiple of (\ref{beta-integral}). This implies the
relation:
\begin{equation}
\left | \alpha_{\omega, l_1 \dots, l_{d - 1}}^{\omega^{\prime}, l_1^{\prime}, \dots, l_{d - 1}^{\prime}} \right | = e^{2\pi \omega r_0 \frac{L^2}{(d - 2)L^2 + dr_0^2}} \left | \beta_{\omega, l_1 \dots, l_{d - 1}}^{\omega^{\prime}, l_1^{\prime}, \dots, l_{d - 1}^{\prime}} \right | \; . \label{alpha-beta-ratio}
\end{equation}

Going back to (\ref{emission}) and (\ref{absorbtion}), the integral
$\sum_{l_1^{\prime}, \dots, l_{d - 1}^{\prime}} \int_0^{\infty} \left | \alpha_{\omega, l_1 \dots, l_{d - 1}}^{\omega^{\prime}, l_1^{\prime}, \dots, l_{d - 1}^{\prime}} \right |^2 - \left | \beta_{\omega, l_1 \dots, l_{d - 1}}^{\omega^{\prime}, l_1^{\prime}, \dots, l_{d - 1}^{\prime}} \right |^2 \textup{d}\omega^{\prime}$
describes what the black hole will absorb. The integral
$\sum_{l_1^{\prime}, \dots, l_{d - 1}^{\prime}} \int_0^{\infty} \left | \beta_{\omega, l_1 \dots, l_{d - 1}}^{\omega^{\prime}, l_1^{\prime}, \dots, l_{d - 1}^{\prime}} \right |^2 \textup{d}\omega^{\prime}$
describes what the black hole will emit. Without worrying about normalization, (\ref{alpha-beta-ratio}) tells us that
the ratio between a mode's absorbtion and emission cross sections is
\begin{equation}
n_{\omega} = \left ( e^{4\pi \omega r_0 \frac{L^2}{(d - 2)L^2 + dr_0^2}} - 1 \right )^{-1} \; . \label{bose-einstein-factor}
\end{equation}
This is precisely the Bose-Einstein thermal factor for a blackbody at temperature
\begin{equation}
T_{\mathrm{BH}} = \frac{1}{4\pi r_0} \left ( d - 2 + d \frac{r_0^2}{L^2} \right ) \label{bh-temperature}
\end{equation}
Had we used an anticommutator in (\ref{absorbtion}), we would have seen the Fermi-Dirac factor for the same
temperature. The famous Bekenstein-Hawking entropy, $S = \frac{A}{4G}$, clearly follows from this if $L$ is large.
For a general $L$, we will use the fact that $\mu = r_0^{d - 2} + \frac{r_0^d}{L^2}$ to write
\begin{equation}
\textup{d}M = \frac{d(d-1)\omega_d}{16\pi G} \textup{d}\mu = \frac{d(d-1)\omega_d}{16\pi G} \left [ (d - 2) r_0^{d - 3} + \frac{d}{L^2} r_0^{d - 1} \right ] \textup{d}r_0 \; . \nonumber
\end{equation}
Then integrating,
\begin{eqnarray}
S_{\mathrm{BH}} &=& \int \frac{\textup{d}E}{T_{\mathrm{BH}}} \nonumber \\
&=& \int \frac{\textup{d}M}{T_{\mathrm{BH}}} \nonumber \\
&=& \frac{d(d-1)\omega_d}{4G} \int r_0^{d - 2} \textup{d}r_0 \nonumber \\
&=& \frac{d\omega_d}{4G} r_0^{d - 1} \nonumber \\
&=& \frac{A}{4G} \label{bh-entropy}
\end{eqnarray}
and we see that the entropy formula is exactly the same in $AdS_{d + 1}$.

Hints that the area of a black hole somehow describes an entropy were already known in 1973 when Bekenstein proposed
the proportionality with a coefficient ``close to'' $\frac{\log 2}{8\pi}$ \cite{bekenstein1}. Apart from improving the
coefficient to $\frac{1}{4}$, Hawking's 1975 paper established that (\ref{bh-entropy}) is the genuine entropy of a
thermal spectrum \cite{hawking}. Modern techniques can derive (\ref{bh-entropy}) much more quickly but at the cost of
once again obscuring the nature of this entropy \cite{page, witten1}. In 1981, Bekenstein noticed that
(\ref{bh-entropy}) is more than just the entropy of a black hole. It is an upper bound on the entropy that \textit{any}
system occupying the same volume can have \cite{bekenstein2}. The argument, which was strong motivation for the
AdS / CFT correspondence \cite{magoo}, is remarkably simple. Suppose that a non-black hole system fills a ball of
radius $r_0$ and has more entropy than $\frac{A}{4G}$. Its mass must be less than that of a black hole with horizon
radius $r_0$ and therefore, it can be turned into said black hole through the addition of mass. Such a procedure would
give the system an entropy of $\frac{A}{4G}$ later on, violating the second law of thermodynamics. Incidentally, two
major open problems in physics are related to the evaporation of black holes. A featureless object described uniquely
by mass, charge and angular momentum should not have entropy and yet we have seen that it contains more entropy than
anything else. While their exact nature remains unknown, some methods for elucidating black hole microstates are
provided by string theory \cite{strominger}. A more serious problem is the \textit{black hole information paradox}.
This is concerned with the fact that a bath of radiation cannot contain information about the formation of a black
hole. If a thermal state is all that a black hole leaves behind after it evaporates, one effectively has a pure state
evolving into a mixed state which is a violation of unitarity. String theoretic resolutions to this have been proposed
as well but are, at the time of writing, much more speculative \cite{braunstein, amps}.

\subsection{Strong coupling}
The AdS / CFT correspondence delivers on a 1974 promise to make strongly coupled $U(N)$ and $SU(N)$ gauge theories
more tractable when $N$ is large \cite{thooft}. More precisely, the quantum gravity theory in the bulk that is dual to
a CFT becomes increasingly classical as we take $N \rightarrow \infty$ with $\lambda = g^2_{\mathrm{YM}} N$ fixed. This
is known as the large $N$, planar or `tHooft limit. The `tHooft coupling $\lambda$ which only needs to be fixed, is the
parameter that would have to be small for the usual Feynman diagram expansion to be valid. When a series of Feynman
diagrams is written down using powers of $\frac{1}{N}$ instead of $g_{\mathrm{YM}}$, the expansion looks very similar
to that of a closed string theory with coupling $g_{\mathrm{s}}$. For this reason, the identification
\begin{equation}
g_{\mathrm{s}} = \frac{\lambda}{4\pi N} \label{gs-identification}
\end{equation}
appears in the duality \cite{magoo}. This small string coupling allows a perturbative calculation to be done in AdS
when the field theory on the boundary is strongly coupled. Some of the coupling strengths not covered by this limit
(\textit{e.g.} large $g_{\mathrm{YM}}$ and $N$) can be explored with the help of string dualities. For instance, a
weak-strong symmetry known as S-duality is often associated with Type IIB string theory \cite{hull}. If IIB in AdS is
equivalent to SYM on the boundary, this statement implies that (\ref{sym}) is invariant under
$g_{\mathrm{YM}} \mapsto \frac{1}{g_{\mathrm{YM}}}$. Although such a result could have been discovered through
holography, it was discovered earlier using some of the same evidence that led to the correspondence
\cite{tseytlin, green}. In addition, we should note that even if one believes AdS / CFT, the idea of S-duality holding
for Type IIB string theory is also a conjecture \cite{hull}.

Using classical gravity to approximate field theories in the strong coupling regime has become the most widely explored
aspect of the AdS / CFT correspondence \cite{csaki, sachdev}. For this application, questions about whether string
theory is realized in nature, are irrelevant. We will go through an example of this duality, whereby the interacting
spectrum of (\ref{sym}) can be understood through our seemingly unrelated calculations regarding free field theories
and black holes.

\subsubsection{Gauge theory phases}
When compactified on a sphere, there are at least four interesting phases posessed by Super Yang-Mills. The
transitions in and out of these phases are gradual, as they must be for a theory with finitely many fields. However,
the transitions may become sharp in the strict $N \rightarrow \infty$ limit. Following our pattern above, we will
give an expression for the entropy of each phase in order of increasing energy.

The first thing we need to know is that in the original version of the correspondence, the bulk geometry is
$AdS_5 \times \mathbb{S}^5$ where the length scale of $AdS_5$ and the radius of $\mathbb{S}^5$ are equal
\cite{maldacena}. This radius, which we will call $L$ is given by the duality prescription as
\begin{equation}
L^4 = 4\pi g_{\mathrm{s}} \alpha^{\prime 2} N \; . \label{l4-identification}
\end{equation}
There is also a radius for the $\mathbb{S}^3$ of the the field theory, which we will call $R$. It is natural to
compare the dimensionless energies of the string theory $EL$ to ``some multiple'' of the dimensionless energies of the
field theory $ER$. By studying the Klein-Gordon equation (\ref{kg-equation}) for a graviton propagating in
$AdS_5 \times \mathbb{S}^5$, one may show that the second-lowest energy it can have is
\begin{equation}
E_1 = \frac{1}{L} \; . \label{first-excited-graviton}
\end{equation}
While this would not be the case for a general gauge theory, a highly supersymmetric theory like SYM has some excited
state energies that can be computed without the correspondence. This is a harder calculation but an analysis of
chiral primary operators \cite{magoo, witten2} tells us that $E_1 = \frac{1}{R}$ on the field theory side. The multiple
in question is therefore 1 and we will be able to replace $EL$ with $ER$ in what follows.

The low energy behaviour of the bulk is described by a free gas of strings in their worldsheet ground states.
Refering to (\ref{spectrum-table}), this is a gas of 128 bosons and 128 fermions --- essentially
gravitons and their superpartners. The free field theory result (\ref{partfunc}) includes a volume $V$, which is
only well defined if there is a clear separation between space and time, \textit{i.e.}
\begin{equation}
ds^2 = -dt^2 + g_{ij} dx^i dx^j \nonumber
\end{equation}
where $g_{ij}$ is a Riemanian metric. Since the line element
\begin{equation}
ds^2 = -\left ( 1 + \frac{r^2}{L^2} \right ) dt^2 + \left ( 1 + \frac{r^2}{L^2} \right )^{-1} dr^2 + r^2 d\Omega_3^2 + L^2 d\Omega_5^2 \label{ads5-s5}
\end{equation}
is not in this form, the brute force calculation of the partition function would have to start with the Klein-Gordon
equation. Solving the relevant Klein-Gordon equation is certainly a useful exercise. In addition to energy eigenvalues
like (\ref{first-excited-graviton}), it would allow us to derive a bound on the mass that any particle in $AdS$ must
satisfy \cite{bf1, bf2}. However, there is another method that can tell us the appropriate $V$ more quickly.
Performing a conformal transformation on (\ref{ads5-s5}), we may turn it into
\begin{equation}
ds^2 = -dt^2 + \left ( 1 + \frac{r^2}{L^2} \right )^{-2} dr^2 + \left ( 1 + \frac{r^2}{L^2} \right )^{-1} \left ( r^2 d\Omega_3^2 + L^2 d\Omega_5^2 \right ) \; . \label{rescaled-ads5-s5}
\end{equation}
The massless version of (\ref{kg-equation}) (called the minimally coupled Klein-Gordon equation) is not
invariant under such a rescaling, but the \textit{conformally coupled} Klein-Gordon equation is. This equation is
\begin{equation}
\nabla_{\mu} \nabla^{\mu} \phi + \frac{d - 1}{4d} \mathcal{R} \phi = 0 \label{kgr-equation}
\end{equation}
where $\mathcal{R}$ is the Ricci scalar. The volume associated with the (\ref{rescaled-ads5-s5}) metric would
therefore appear in the partition function for this graviton gas in the conformally coupled case.
\begin{eqnarray}
V &=& \int_0^{\infty} \iint_{\mathbb{S}^3 \times \mathbb{S}^5} \left ( 1 + \frac{r^2}{L^2} \right )^{-5} L^5 r^3 \textup{d}\Omega_3 \textup{d}\Omega_5 \textup{d}r \nonumber \\
&=& 15 \omega_3 \omega _5 L^5 \int_0^{\infty} \left ( 1 + \frac{r^2}{L^2} \right )^{-5} r^3 \textup{d}r \nonumber \\
&=& \frac{15}{24} \omega_3 \omega_5 L^9 \nonumber
\end{eqnarray}
Because the difference between minimal and conformal coupling only appears beyond the leading order thermodynamics
\cite{behan, magoo}, it is sufficient to substitute $d = 9$, $s = s^* = 128$ and the $V$ above. This yields
\begin{eqnarray}
S_1 &=& \left ( \frac{10^{10}}{9^9} \frac{15}{24} \frac{9! \omega_3 \omega_5 \omega_9}{(2\pi)^9} 128 \left ( \zeta(10) + \zeta^*(10) \right ) E^9 L^9 \right )^{\frac{1}{10}} \nonumber \\
&=& 10 \left ( 9! \omega_3 \omega_5 \omega_9 \frac{15}{24} \frac{1023}{4} \zeta(10) \right )^{\frac{1}{10}} \left ( \frac{EL}{18\pi} \right )^{\frac{9}{10}} \nonumber \\
&=& 10 \left ( \frac{2728}{9355} \pi^8 \right )^{\frac{1}{10}} \left ( \frac{ER}{9} \right )^{\frac{9}{10}} \label{sym-entropy1}
\end{eqnarray}
as the entropy of the lowest energy phase.

As energy increases, the infinite tower of worldsheet vibrations becomes important and the entropy enters the
Hagedorn regime. Starting with (\ref{string-entropy}), we just need to use the (\ref{gs-identification}) and
(\ref{l4-identification}) identifications to get the entropy for the second phase in terms of gauge theory parameters:
\begin{eqnarray}
S_2 &=& 2\pi \sqrt{2 \alpha^{\prime}} E \nonumber \\
&=& 2\pi \left ( \frac{L^4}{\pi g_s N} \right )^{\frac{1}{4}} E \nonumber \\
&=& 2\pi \left ( \frac{4}{\lambda} \right )^{\frac{1}{4}} ER \; . \label{sym-entropy2}
\end{eqnarray}

As with any proper string theory, the $AdS_5 \times \mathbb{S}^5$ background is not static. It receives a
backreaction from stringy states that becomes more significant as the energy increases. The highest energy phases of
Super-Yang Mills will therefore involve Newton's constant $G$. The formula
\begin{equation}
G_5 L^5 = 8\pi^3 g_{\mathrm{s}}^2 \alpha^{\prime 4} \label{g5-identification}
\end{equation}
is the last piece of the correspondence that we need \cite{burgess}. We mentioned previously that Einstein's
equations (and Newton's constant in particular) should be the same in all dimensions. Thus, it may seem strange to
refer to a five-dimensional gravitational constant $G_5$. The explanation is that $G_5$ is not the gravitational
constant at all, but rather an illusion created by the presence of compact dimensions. The true $G$ appears as a
prefactor in the Einstein-Hilbert action
\begin{equation}
\mathcal{S} = \frac{1}{16\pi G} \int_{AdS_5 \times \mathbb{S}^5} (\mathcal{R} - 2\Lambda) \sqrt{-g} \textup{d}x \; . \nonumber
\end{equation}
If the radius of the sphere is small enough, a macroscopic observer only sees $AdS_5$. Integrating out the
$\mathbb{S}^5$ and looking at the prefactor once again will tell us the relation between $G_5$ and $G$. Using the fact
that Ricci scalars add for direct product manifolds
\begin{eqnarray}
\mathcal{S} &=& \frac{1}{16\pi G} \int_{AdS_5 \times \mathbb{S}^5} (\mathcal{R}_{AdS_5} - 2\Lambda) \sqrt{-g} \textup{d}x + \frac{1}{16\pi G} \int_{AdS_5 \times \mathbb{S}^5} \mathcal{R}_{\mathbb{S}^5} \sqrt{-g} \textup{d}x \nonumber \\
\mathcal{S}_{\mathrm{eff}} &=& \frac{1}{16\pi G} \int_{\mathbb{S}^5} \sqrt{g_{\mathbb{S}^5}} \textup{d}x \int_{AdS_5} (\mathcal{R}_{AdS_5} - 2\Lambda) \sqrt{-g_{AdS_5}} \textup{d}x + \mathcal{S}_{\mathrm{shift}} \nonumber \\
&=& \frac{6 \omega_6 L^5}{16 \pi G} \int_{AdS_5} (\mathcal{R}_{AdS_5} - 2\Lambda) \sqrt{-g_{AdS_5}} \textup{d}x + \mathcal{S}_{\mathrm{shift}} \; . \nonumber
\end{eqnarray}
We have turned the action into an effective action by evaluating part of it. This makes it clear that
$G_5 = \frac{G}{6 \omega_6 L^5}$. The fact that these dimensionally reduced Newton constants are generally much
smaller than $G$ has led to the hypothesis that the apparent strength of gravity increases when the distance is very
small. Indeed, proponents of extra-dimension phenomenology have discussed the possibility of forming black holes
at the LHC \cite{banks, giddings, dimopoulos}.

With these constants in hand, we need to calculate the entropy associated with the geometry that develops in the
third phase. Since entropy increases with energy, it is only logical that our spacetime should eventually achieve the
geometry that has a monopoly on entropy --- that of a black hole. The mysterious microstates of this black hole can
be put in a one-to-one correspondence with the well defined microstates of the CFT. When the event horizon $r_0$ first
forms, it is smaller than the radius $L$. It is therefore a good approximation to describe it with the Schwarzschild
solution (\ref{schwarzschild}) involving all ten spacetime dimensions. Using the entropy formula (\ref{bh-entropy}),
\begin{eqnarray}
S_3 &=& \frac{9 \omega_9}{4G} r_0^8 \nonumber \\
&=& \frac{9 \omega_9}{4G} \mu^{\frac{8}{7}} \nonumber \\
&=& \frac{9}{4} \left ( \frac{G}{\omega_9} \right )^{\frac{1}{7}} \left ( \frac{2\pi E}{9} \right )^{\frac{8}{7}} \nonumber \\
&=& \frac{9}{4} \left ( 6 L^5 G_5 \frac{\omega_6}{\omega_9} \right )^{\frac{1}{7}} \left ( \frac{2\pi E}{9} \right )^{\frac{8}{7}} \nonumber \\
&=& \frac{9}{4} \left ( \frac{3\pi \omega_6}{N^2 \omega_9} \right )^{\frac{1}{7}} \left ( \frac{2\pi EL}{9} \right )^{\frac{8}{7}} \nonumber \\
&=& \frac{9}{4} \left ( \frac{1890}{N^2} \right )^{\frac{1}{7}} \left ( \frac{\pi ER}{9} \right )^{\frac{8}{7}} \; . \label{sym-entropy3}
\end{eqnarray}
As the black hole grows to a radius $r_0 \gg L$, the five small dimensions become negligible allowing us to use the
asymptotically AdS black hole (\ref{schwarzschild-ads}). This also makes it a good approximation to say
$\mu = r_0^2 \left ( 1 + \frac{r_0^2}{L^2} \right ) \approx \frac{r_0^4}{L^2}$. Inserting this into (\ref{bh-entropy}),
\begin{eqnarray}
S_4 &=& \frac{\omega_4}{G_5} r_0^3 \nonumber \\
&=& \frac{\omega_4}{G_5} \mu^{\frac{3}{4}} \nonumber \\
&=& \left ( \frac{\omega_4}{G_5} \right )^{\frac{1}{4}} \left ( \frac{4\pi EL^2}{3} \right )^{\frac{3}{4}} \nonumber \\
&=& \left ( \frac{2N^2 \omega_4}{\pi L^3} \right )^{\frac{1}{4}} \left ( \frac{4\pi EL^2}{3} \right )^{\frac{3}{4}} \nonumber \\
&=& \pi \sqrt{N} \left ( \frac{4}{3} ER \right )^{\frac{3}{4}} \; . \label{sym-entropy4}
\end{eqnarray}
This matches the behaviour that a conformal theory must have at high energies (\ref{cft-entropy}).

We have yet to give estimates for the energy ranges where these phases are valid. Prefactors for these energies would
be suspicious due to the gradual nature of the phase transitions. We will therefore only keep factors that may be
comparable to $N^2$. To determine when the Hagedorn phase becomes important, we should set $E$ to the mass of an
excited string. From (\ref{mass-operator}), we see that this is of order
$\frac{1}{\sqrt{\alpha^{\prime}}} \propto \frac{\lambda^{\frac{1}{4}}}{R}$.

Strings have a characteristic length and a black hole with this length as its horizon radius has a characteristic
energy. When the energy of a string gas exceeds this, it is expected to collapse to the small black hole that we
discussed before. Of course there are some non-black hole geometries having energies of this magnitude
(\textit{e.g.} a giant graviton \cite{mcgreevy}) but these are ``rare''. This is consistent with the
``heat death'' proposal in which a black hole is the inevitable final state of a system that evolves via thermal
fluctuations. An equivalent statement on the CFT side is that as the dimensions of gauge invariant operators increase,
the fraction of them that describe black holes approaches unity \cite{larjo}. The transition for this black hole
``probably forming'' can be found by checking when the Hagedorn entropy becomes comparable to the small black hole
entropy. Setting (\ref{sym-entropy2}) equal to (\ref{sym-entropy3}), this energy is of order
$\frac{N^2}{\lambda^{\frac{7}{4}} R}$.

Finally, the midpoint between the small black hole and the large black hole occurs when $r_0 = R$. Expressing the
event horizon radius in terms of the mass, $E \propto \frac{R^7}{G} \propto \frac{R^2}{G_5} \propto \frac{N^2}{R}$.
Putting this together we see that the entropy for strongly coupled SYM is given by
\begin{equation}
S(E) =
\begin{cases}
10 \left ( \frac{2728}{9355} \pi^8 \right )^{\frac{1}{10}} \left ( \frac{ER}{9} \right )^{\frac{9}{10}} & ER \ll \lambda^{\frac{1}{4}} \\
2\pi \left ( \frac{4}{\lambda} \right )^{\frac{1}{4}} ER & \lambda^{\frac{1}{4}} \ll ER \ll \lambda^{-\frac{7}{4}} N^2 \\
\frac{9}{4} \left ( \frac{1890}{N^2} \right )^{\frac{1}{7}} \left ( \frac{\pi ER}{9} \right )^{\frac{8}{7}} & \lambda^{-\frac{7}{4}} N^2 \ll ER \ll N^2 \\
\pi \sqrt{N} \left ( \frac{4}{3} ER \right )^{\frac{3}{4}} & N^2 \ll ER
\end{cases}
\; . \label{sym-entropy}
\end{equation}
Notice that if we were to find the entropy of free Super-Yang Mills by substituting $d = 3$ and $s = s^* = 8N^2$ in
(\ref{partfunc}), the result would be $\frac{4}{3} \pi \sqrt{N} \left ( ER \right )^{\frac{3}{4}}$. The entropies
differ by a factor of $\left ( \frac{4}{3} \right )^{\frac{1}{4}}$ or equivalently, the free energies differ by a
factor of $\frac{4}{3}$. Writing
\begin{equation}
F = -\frac{1}{6} h(\lambda) \pi^2 N^2 V T^4 \label{sym-free-energy4}
\end{equation}
with $h(0) = 1$ and $\lim_{\lambda \rightarrow \infty} h(\lambda) = \frac{3}{4}$, various authors have studied how
$h$ interpolates between these limits using curvature corrections on the string theory side \cite{gubser} and loop
diagrams on the field theory side \cite{fotopoulos, kim}. It was later found that interpolating between weakly coupled
and strongly coupled free energy is not as simple as multiplying by $h$. Corrections to (\ref{sym-free-energy4})
involving $T^2$ need to be multiplied by different functions of the `tHooft coupling \cite{burgess}.

\subsubsection{Plasma balls}
The microcanonical entropy of strongly coupled SYM on $\mathbb{S}^3$ (\ref{sym-entropy}) is a formula that we will use
repeatedly. Part of its derivation relied on the fact that the theory's dual description involved black holes radiating
a thermal spectrum. The goal of this thesis is to argue for the converse: an arbitrary field theory with an entropy
sufficiently similar to (\ref{sym-entropy}) exhibits dynamics that are indicative of black hole formation and
evaporation. There is a large class of field theory solutions, called plasma balls, that have been shown to be of this
type \cite{aharony}. Most studies of them are numerical \cite{aharony, figueras, bhardwaj} but at least one has been
constructed analytically \cite{milanesi}.

Consider the canonical phases of Super Yang-Mills found by fixing the temperature instead of the energy. We may
differentiate the entropy in (\ref{sym-entropy}) to plot $\beta$ as a function of $E$.
\begin{figure}[h]
\includegraphics[scale=0.45]{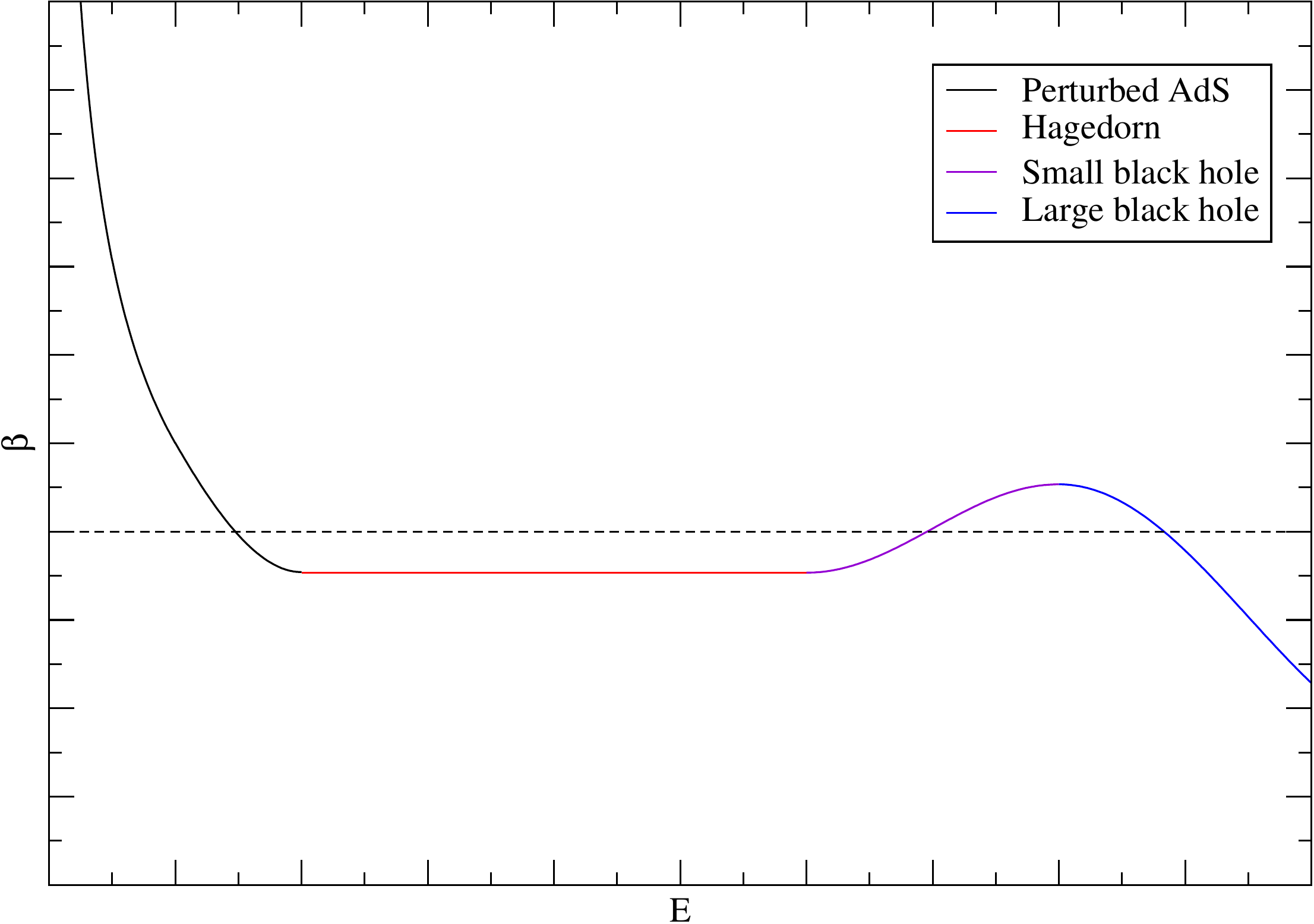}
\caption{For an inverse temperature like the one shown, the gravity side must choose the geometry that minimizes
the free energy.}
\label{hawking-page}
\end{figure}
At low temperatures, the system must be in the graviton gas phase. One could raise the temperature (lower the
dotted line) all the way to the Hagedorn temperature at which the canonical ensemble ceases to exist but the
most interesting situation occurs for an intermediate value where there is competition between three phases. From
(\ref{sym-entropy}), a straightforward calculation of the graviton gas free energy yields
\begin{equation}
F = -\frac{2728}{9355} \pi^8 R^9 T^{10} \label{sym-free-energy1}
\end{equation}
and we have already written the large black hole free energy (\ref{sym-free-energy4}). If we were to calculate the
small black hole free energy in the same way, we would find that it is positive, so (\ref{sym-free-energy1}) and
(\ref{sym-free-energy4}) are the only ones we need. Setting them equal, we find a first order phase transition at
\begin{equation}
T_{\mathrm{D}} = \frac{1}{R} \left ( \frac{9355 N^2}{16368 \pi^5} \right )^{\frac{1}{6}} \; . \label{dec-temperature}
\end{equation}
Were it not for the complication of the internal manifold $\mathbb{S}^5$, this would be the Hawking-Page transition
\cite{page} showing that a sufficiently large black hole in AdS can come to equilibrium with the radiation it emits.
Since the energy, entropy and temperature of a black hole are all known in terms of its event horizon radius $r_0$,
\begin{eqnarray}
F &=& E - TS \nonumber \\
&=& \frac{d \omega_d}{16 \pi G} r_0^{d - 2} \left ( 1 - \frac{r_0^2}{L^2} \right ) \; . \nonumber
\end{eqnarray}
When $r_0 < L$, this is minimized for an $r_0$ that rolls to zero. When $r_0 > L$, this is minimized for as large an
$r_0$ as possible. Substituting $r_0 = L$ into (\ref{bh-temperature}), we find
\begin{equation}
T_{\mathrm{HP}} = \frac{d - 1}{2\pi L} \; . \label{hp-temperature}
\end{equation}
This phenomenon on the field theory side has the interpretation of a deconfinement phase transition related to the
scale $R$. As $R \rightarrow \infty$, the temperature (\ref{dec-temperature}) vanishes and there is no confinement as
expected for a CFT in Minkowski space. Since the confining theory of greatest physical interest
(quantum chromodynamics) lives in infinite volume, it has little in common with Super Yang-Mills on $\mathbb{S}^3$.
A holographic study of QCD requires one to introduce a scale to SYM in a more drastic way.

This can be done by compactifying some but not all of the directions in a Minkowski CFT. Witten's model \cite{witten1}
\textit{e.g.} compactifies SYM on a Scherk-Schwarz circle --- $\mathbb{S}^1$ with antiperiodic fermions. Since the
other directions are extended, it is helpful to rewrite (\ref{ads}) so that they manifestly appear as Minkowski space.
Global AdS is large enough for this to be done several times yielding disjoint patches separated by co-ordinate
singularities. A given patch has the following metric known as Poincar\'e AdS
\begin{equation}
ds^2 = \frac{L^2}{z^2} \left [ -d\tau^2 + dz^2 + dx_i dx^i \right ] \label{poincare}
\end{equation}
where the boundary is located at $z = 0$. The transformation
\begin{eqnarray}
t &=& \arctan \left ( \frac{2L\tau}{L^2 + z^2 + x^2 - \tau^2} \right ) \nonumber \\
r &=& \frac{L}{z} \sqrt{z^2 + \tau^2 + \frac{1}{4L^2} \left ( L^2 + z^2 + x^2 - \tau^2 \right )^2} \label{global-to-poincare} \\
\sin \theta_i \dots \sin \theta_{i - 1} \cos \theta_i &=& \frac{x_i}{\sqrt{z^2 + \tau^2 + \frac{1}{4L^2} \left ( L^2 + z^2 + x^2 - \tau^2 \right )^2}} \nonumber \\
\end{eqnarray}
converts between the global and Poincar\'e metrics \cite{magoo}. Some sources assume $r \gg L$ before deriving
(\ref{poincare}) in order to write the simpler transformation $z = \frac{L^2}{r}$ \cite{schaposnik}. This gives the
false impression that (\ref{poincare}) is only approximately equal to a patch of AdS. Applying
(\ref{global-to-poincare}) to (\ref{schwarzschild-ads}) gives another form of the AdS black hole:
\begin{equation}
ds^2 = \frac{L^2}{z^2} \left [ -\left ( 1 - \frac{z^d}{z_0^d} \right ) d\tau^2 + \left ( 1 - \frac{z^d}{z_0^d} \right )^{-1} dz^2 + dx_i dx^i \right ] \; . \label{schwarzschild-poincare}
\end{equation}
If we compactify a spatial direction and let $j$ take on fewer values than $i$, the resulting metric is:
\begin{equation}
ds^2 = \frac{L^2}{z^2} \left [ -\left ( 1 - \frac{z^d}{z_0^d} \right ) d\tau^2 + \left ( 1 - \frac{z^d}{z_0^d} \right )^{-1} dz^2 + dx_j dx^j + d\theta^2 \right ] \; . \label{witten-model}
\end{equation}
An interesting procedure, that would not have worked for any of the previous metrics, is available to be used on
(\ref{witten-model}). We may Euclideanize, exchange the $\theta$ circle with the $\tau$ circle and switch back to a
Minkowskian signature. This yields a spacetime known as the AdS soliton without us having to solve Einstein's equations
again.
\begin{equation}
ds^2 = \frac{L^2}{z^2} \left [ -d\tau^2 + \left ( 1 - \frac{z^d}{z_0^d} \right )^{-1} dz^2 + dx_j dx^j + \left ( 1 - \frac{z^d}{z_0^d} \right ) d\theta^2 \right ] \label{ads-soliton}
\end{equation}
The deconfinement we saw earlier was a transition between two spacetimes that shared the same boundary: empty AdS and
the AdS black hole. We now have (\ref{witten-model}) and (\ref{ads-soliton}) competing for the same boundary. However,
the horizon position $z_0$ has a very different interpretation in (\ref{ads-soliton}) because it causes
the $\theta$ circle to shrink to zero size. A horizon that observers can safely cross changes the signature according
to $(-, +, +, \dots, +, +) \mapsto (+, -, +, \dots, +, +)$. On the other hand, if we allowed $z > z_0$, we would see
$(-, +, +, \dots, +, +) \mapsto (-, -, +, \dots, +, -)$. Because this is a Lorentzian theory, $z_0$ is simply a point
where the spacetime ends. This enduring scale, called the infrared wall, is what leads to a mass gap \cite{wiseman}.

\begin{figure}[h]
\centering
\subfloat[][Separate solutions]{\includegraphics[scale=0.4]{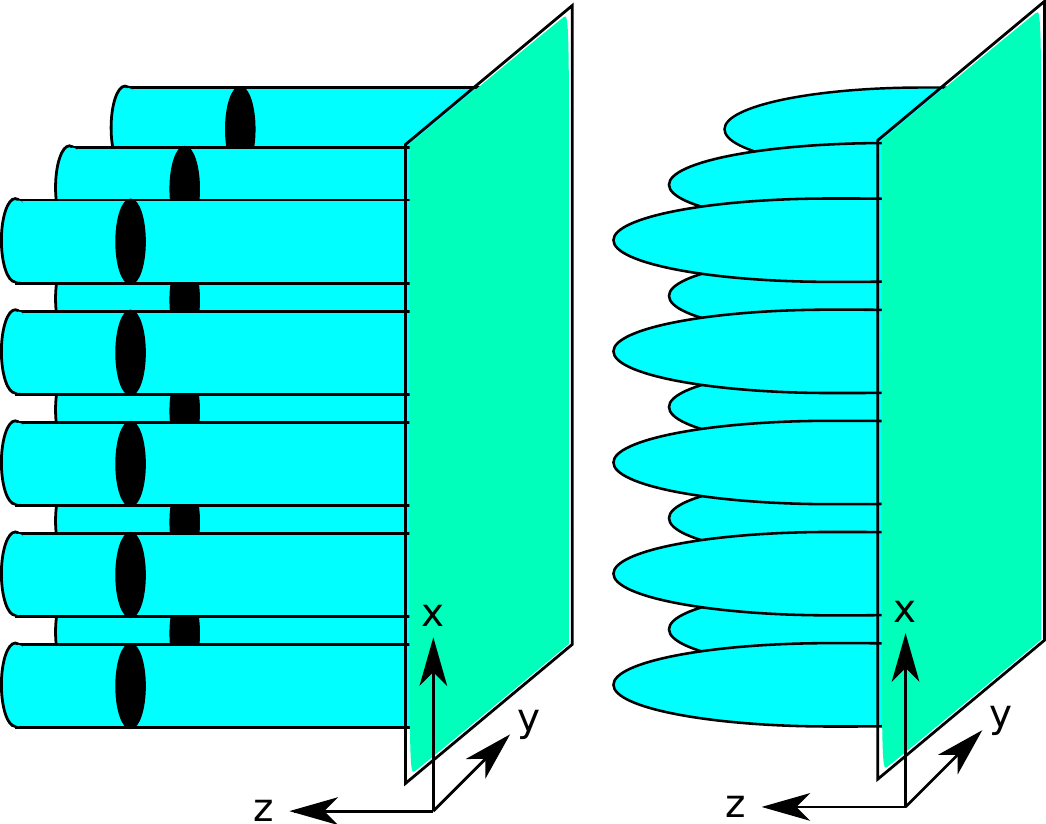}} \qquad
\subfloat[][Interpolation]{\includegraphics[scale=0.4]{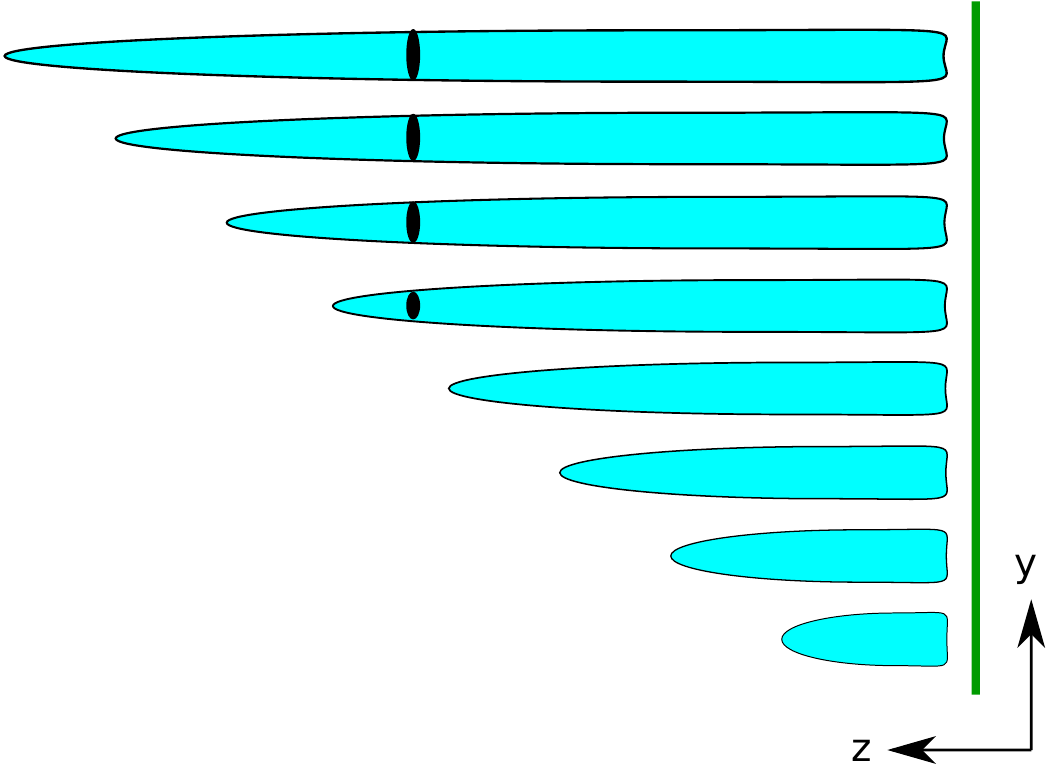}}
\caption{The green sheet is the boundary at $z = 0$. While Poincar\'e AdS looks like empty space ending on this
boundary, the black hole and soliton solutions are more interesting. Instead of enclosing a region of finite volume,
the horizon is a sheet at $z = z_0$ like that of a black brane. To remind us that there is a
Scherk-Schwarz circle at each point, we have drawn the horizon as a set of black circles inside cylinders that extend
infinitely far to the left. These cylinders become cigar shaped regions in the soliton solution that has the infrared
wall. On the right, we show an interpolation between these behaviours. For $y = \infty$, the IR wall does not exist.
For finite positive $y$ it is behind the horizon and for negative $y$ it is in front.}
\label{domain-wall}
\end{figure}
The Witten model with these two backgrounds is the typical arena for seeing plasma balls. These were conjectured
\cite{aharony} based on the observation that stable domain walls should exist between solutions like
(\ref{witten-model}) and (\ref{ads-soliton}). Roughly, such a domain wall is constructed by choosing a special
direction $y$ and making $z_0$ a function of $y$. Choosing this function appropriately, the bulk metric can be
made to look like the black hole at $y = \infty$ and the soliton at $y = -\infty$. This solution, which cannot be
found analytically, may look like the one in Figure \ref{domain-wall}. In order for it to be stable, the pressure of
the deconfined phase must be small enough to balance the domain wall tension at some temperature. Intuitive arguments
for this are given in \cite{aharony} with the final confirmation being numerical.
\begin{figure}[h]
\centering
\subfloat[][Domain wall]{\includegraphics[scale=0.4]{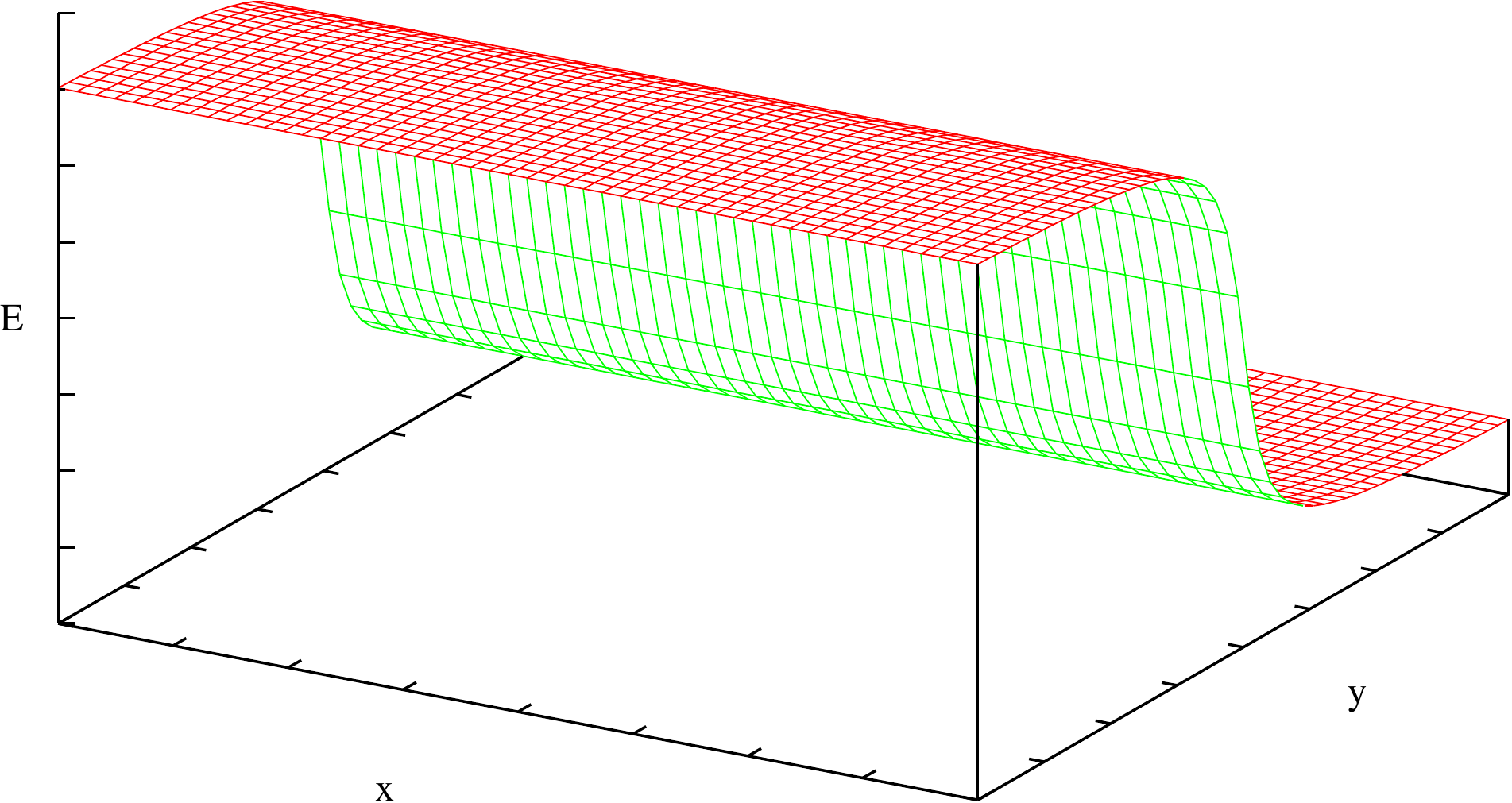}} \qquad
\subfloat[][Plasma ball]{\includegraphics[scale=0.4]{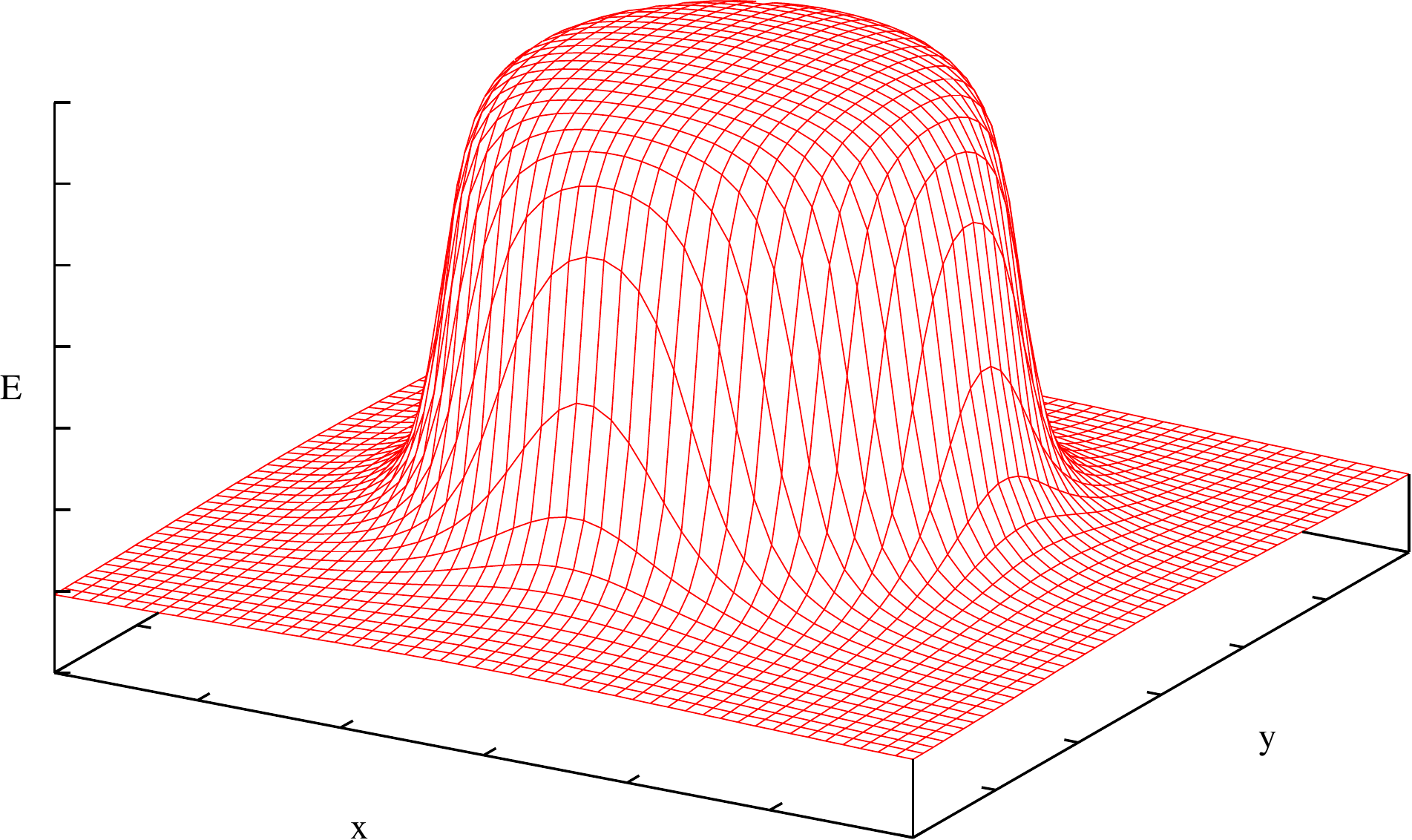}}
\caption{On the field theory side, the domain wall shows up as a sudden jump in the energy density. The minimum is
close to zero while the maximum is the expected energy at the deconfinement temperature. The existence of a plasma
ball is the additional assumption that we can make this domain wall ``wrap around''. The decay of the resulting
object might look like diffusion governed by the heat equation.}
\label{wrapping-domain-wall}
\end{figure}
This process can be repeated to further localize the domain wall. Instead of going from the confining vacuum at
$y = -\infty$ to the deconfined plasma at $y = \infty$, one may change the solution so that it goes from confined to
deconfined and back \cite{wiseman}. This can also be done using directions other than $y$ to make the area of the black
hole horizon finite. The black hole made in this way decays in a process that looks like some combination of shrinking
in $y$ and hitting the infrared wall in $z$. The field theory state dual to this black hole near the IR is called a
plasma ball. The dual decay process consists of hadrons leaving the ball and travelling outwards. Because they are
travelling into a confining vacuum, they must be color singlets, leading to a lifetime proportional to $N^2$
\cite{aharony}.

When deriving (\ref{sym-entropy}), the black holes we discussed were dual to energy eigenstates of SYM on
$\mathbb{S}^3$. In analogy with free theories (whose momentum eigenstates are completely delocalized), these states
have uniform energy density of order $\frac{N^2}{R^4}$ on the whole sphere. The situation is very different for plasma
balls. They have non-uniform energy densities like those in Figure \ref{wrapping-domain-wall} because their dual black
holes come from an interpolation of gravity saddle points. When we construct plasma balls (in a completely different
manner), we should keep in mind that small compact directions and infrared walls are likely to appear in the
corresponding geometries. As a check, it is interesting to see what goes wrong when trying to construct a plasma ball
for Super-Yang Mills on $\mathbb{S}^3$. In principle, one could prepare a state in the CFT that has Figure
\ref{wrapping-domain-wall}'s energy density at $t = 0$. Rather than a thermalization process dual to Hawking radiation,
this state's future is governed by the phenomenon of collective oscillations \cite{freivogel}. When quantized on a
sphere, the generators of conformal transformations obey the same commutation relations as traditional raising and
lowering operators. Rewriting them as $L^i_+$ and $L^i_-$, \cite{freivogel} constructed undamped oscillating states
by applying a function of them to a density matrix:
$\rho \mapsto \mathcal{N} g \left ( L^1_{\pm}, \dots, L^d_{\pm} \right ) \rho g^{\dagger} \left ( L^1_{\pm}, \dots, L^d_{\pm} \right )$.
In analogy with coherent states of the harmonic oscillator, explicit functions were given such as the simplest one:
\begin{equation}
g \left ( L^i_+, L^i_- \right ) = e^{\alpha L^i_+ + \beta L^i_-} \; . \label{coherent-state}
\end{equation}
When $\alpha = -\beta^*$, (\ref{coherent-state}) is unitary and $\mathcal{N} = 1$, but \cite{freivogel} gave
normalization constants for other $\alpha$ and $\beta$ as well. Crucially, the AdS isometry dual to
(\ref{coherent-state}) is no more complicated than a boost. This allows its effect on strongly coupled states to be
found with the AdS / CFT correspondence. The example considered in \cite{freivogel} starts with a three-dimensional
spacetime known as the BTZ black hole \cite{btz}:
\begin{equation}
ds^2 = -\left ( \frac{r^2}{R^2} - \frac{r_0^2}{R^2} \right ) dt^2 + \left ( \frac{r^2}{R^2} - \frac{r_0^2}{R^2} \right )^{-1} dr^2 + \frac{r^2}{R^2} dx^2 \; . \label{btz-ads}
\end{equation}
As before, the boundary theory dual to this has a uniform stress-energy tensor when compactified on the
$\frac{x}{R} \sim \frac{x}{R} + 2\pi$ sphere:
\begin{eqnarray}
T_{tt} &=& T_{xx} = \frac{r_0^2}{16\pi GR^3} \nonumber \\
T_{tx} &=& T_{xt} = 0 \; . \label{unboosted-tensor}
\end{eqnarray}
\begin{figure}[h]
\includegraphics[scale=0.45]{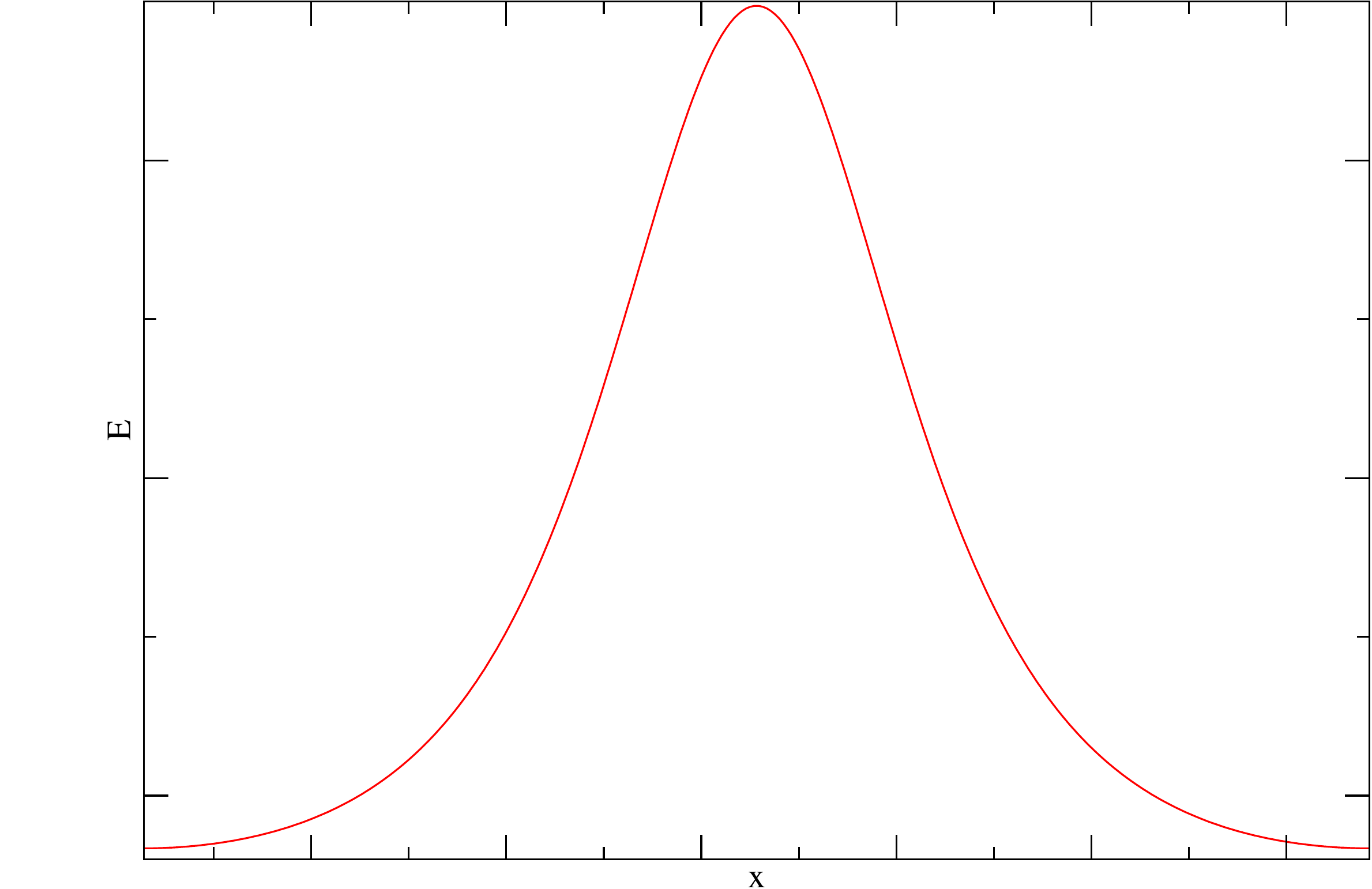}
\caption{This localized lump of energy is certainly something like the $t = 0$ slice of a plasma ball. We are claiming
that if it forms on a sphere, its evolution is too dynamic to be considered that of a plasma ball. Its motion cannot
be undone by a boost because this turns it back into the uniform profile of (\ref{unboosted-tensor}).}
\label{collective-oscillation}
\end{figure}
We could use $AdS_3$ / $CFT_2$ analogues of (\ref{l4-identification}) and (\ref{g5-identification}) to replace $r_0$
and $G$ with gauge theory expressions above. Because AdS has a boundary, boosting (\ref{btz-ads}) to a velocity of $v$
yields a black hole that oscillates about the origin indefinitely. The CFT state dual to this bouncing black hole has
the stress-energy tensor:
\begin{eqnarray}
T_{tt} &=& T_{xx} = \frac{1}{32\pi G R^3} \left [ \frac{(1 - v^2)(R^2 + r_0^2)}{\left ( v \cos \left ( \frac{t - x}{R} \right ) - 1 \right )^2} + \frac{(1 - v^2)(R^2 + r_0^2)}{\left ( v \cos \left ( \frac{t + x}{R} \right ) - 1 \right )^2} - 2R^2\right ] \nonumber \\
T_{tx} &=& T_{xt} = \frac{v(1 - v^2)(R^2 + r_0^2)}{8\pi G R^3} \left [ \frac{\sin \left ( \frac{t}{R} \right ) \sin \left ( \frac{x}{R} \right ) \left ( v \cos \left ( \frac{t}{R} \right ) \cos \left ( \frac{x}{R} \right ) - 1\right )}{\left ( v \cos \left ( \frac{t - x}{R} \right ) - 1 \right )^2 \left ( v \cos \left ( \frac{t + x}{R} \right ) - 1 \right )^2} \right ] \; . \label{boosted-tensor}
\end{eqnarray}
Thus we see that a valid CFT solution having Figure \ref{collective-oscillation} as its $t = 0$ energy density is not a
meta-stable ball at a fixed position, but a stable flow with an oscillating position \cite{freivogel}.

\section{Treating energy stochastically}
Our goal is to model certain features of an interacting field theory, without recourse to what the specific interaction
is. One way to accomplish this is to construct a model that is based on the system's density of states. A thermodynamic
quantity like this is easier to understand than the Hamiltonian because it is only an indication of the
\textit{spectrum} of the Hamiltonian. The result of our derivation will be an evolution equation for the energy density
at each point in space.

\subsection{Main equations}
To begin our analysis, we consider a cubic lattice of identical sites in $d$ dimensions and keep track of the number of
units of energy that can be found on each site. Following \cite{mvr}, we will write $\{ n_r \}$ to mean
$n_1$ units on site 1, $n_2$ units on site 2, $n_3$ units on site 3, \textit{etc}. Each configuration has a certain
probability of being realized. This probability is $P \left ( \{ N_r \}(t) = \{ n_r \} \right )$. Uppercase letters
have been used for random variables with the lowercase versions denoting specific values. However, we will often
shorten this to $P \left ( \{ n_r \} \right )$. In a stochastic process with continuous time, the probabilities as a
function of time obey the master equation \cite{ngvk}:
\begin{equation}
\frac{\partial P \left ( \{ n_r \} \right )}{\partial t} = \sum_{\{ n_r^{\prime} \}} \left [ P \left ( \{ n_r^{\prime} \} \right ) W_{\{ n_r^{\prime} \} \rightarrow \{ n_r \}} - P \left ( \{ n_r \} \right ) W_{\{ n_r \} \rightarrow \{ n_r^{\prime} \}} \right ] \; . \label{master}
\end{equation}
The $W$ quantities which determine the process are called the \textit{transition rates} and are defined by:
\begin{equation}
W_{\{ n_r \} \rightarrow \{ n_r^{\prime} \}} = \lim_{\tau \rightarrow 0} \frac{P \left ( \{ N_r \}(t + \tau) = \{ n_r^{\prime} \} | \{ N_r \}(t) = \{ n_r \} \right )}{\tau} \; . \nonumber
\end{equation}
To convert (\ref{master}) into something more concrete, we will make three physical assumptions: local energy
conservation, detailed balance and entropic dominance.

\subsubsection{Physical assumptions}
Inline with our first assumption, we declare that any transition which is nonlocal or does not conserve energy has a
$W$ value of zero. In the transition rates that are left, energy is transferred between two sites and those sites must
be nearest neighbours. Instead of listing all $\{ n_r^{\prime} \}$ configurations that can be reached from $\{ n_r \}$,
we may simply choose a pair of sites $\left < a, b \right >$ and a number $k$ to transfer between them. The master
equation therefore becomes
\begin{eqnarray}
\frac{\partial P \left ( \{ n_r \} \right )}{\partial t} &=& \sum_{\left < a, b \right >} \sum_{k \neq 0} \left [ P (\dots, n_a + k, n_b - k, \dots) W_{(n_a + k, n_b - k) \rightarrow (n_a, n_b)} \right. \nonumber \\
&& \left. - P (\dots, n_a, n_b, \dots) W_{(n_a, n_b) \rightarrow (n_a + k, n_b - k)} \right ] \; . \label{physical-master}
\end{eqnarray}
We will not work directly with probabilities, but rather the expectation of a particular site's energy:
\begin{equation}
\bar{n}_c \equiv \left < N_c \right > = \sum_{\{ n_r \}} P \left ( \{ n_r \} \right ) n_c \; . \label{expected-n}
\end{equation}
The next step is to differentiate (\ref{expected-n}) and substitute (\ref{physical-master}):
\begin{eqnarray}
\frac{\partial \bar{n}_c}{\partial t} &=& \sum_{\{ n_r \}} n_c \frac{\partial P \left ( \{ n_r \} \right )}{\partial t} \nonumber \\
&=& \sum_{\{ n_r \}} n_c \sum_{\left < a, b \right >} \sum_{k \neq 0} \left [ P (\dots, n_a + k, n_b - k, \dots) W_{(n_a + k, n_b - k) \rightarrow (n_a, n_b)} \right. \nonumber \\
&& \left. - P (\dots, n_a, n_b, \dots) W_{(n_a, n_b) \rightarrow (n_a + k, n_b - k)} \right ] \; . \nonumber
\end{eqnarray}
Every $P (\dots, n_a + k, n_b - k, \dots) W_{(n_a + k, n_b - k) \rightarrow (n_a, n_b)}$ in the sum is a
$P (\dots, n_a, n_b, \dots) W_{(n_a, n_b) \rightarrow (n_a + k, n_b - k)}$ for some other $\{ n_r \}$ and the
negative $k$ value. As long as $n_c$ does not appear, these cancel with the same coefficient. We may therefore
let $a = c$ and reindex.
\begin{eqnarray}
\frac{\partial \bar{n}_c}{\partial t} &=& \sum_{\left < b, c \right >} \sum_{k \neq 0} \sum_{\{ n_r \}} \left [ n_c P (\dots, n_b + k, n_c - k, \dots) W_{(n_b + k, n_c - k) \rightarrow (n_b, n_c)} \right. \nonumber \\
&& \left. - n_c P (\dots, n_b, n_c, \dots) W_{(n_b, n_c) \rightarrow (n_b - k, n_c + k)} \right ] \nonumber \\
&=& \sum_{\left < b, c \right >} \sum_{k \neq 0} \sum_{\{ n_r \}} \left [ (n_c + k) P (\dots, n_b, n_c, \dots) W_{(n_b, n_c) \rightarrow (n_b - k, n_c + k)} \right. \nonumber \\
&& \left. - n_c P (\dots, n_b, n_c, \dots) W_{(n_b, n_c) \rightarrow (n_b - k, n_c + k)} \right ] \nonumber \\
&=& \sum_{\left < b, c \right >} \sum_{k \neq 0} \sum_{\{ n_r \}} k P (\dots, n_b, n_c, \dots) W_{(n_b, n_c) \rightarrow (n_b - k, n_c + k)} \nonumber \\
&=& \left < \sum_{\left < b, c \right >} \sum_{k \neq 0} k W_{(n_b, n_c) \rightarrow (n_b - k, n_c + k)} \right > \nonumber \\
&\approx& \sum_{\left < b, c \right >} \sum_{k \neq 0} k W_{(\bar{n}_b, \bar{n}_c) \rightarrow (\bar{n}_b - k, \bar{n}_c + k)} \label{equation-for-general-w}
\end{eqnarray}
We have yet to show that it is safe to replace random variables by their expectations in the last step. Since
(\ref{equation-for-general-w}) is still quite general, further work is required to narrow down our choices for $W$.

The configurations $\{ n_r \}$ represent collections of several microstates. Introducing the function $\rho(n_r)$
giving the number of ways for site $r$ to have energy $n_r$, it is easy to count the number of ways in which our
configurations can be realized. There are $\prod_r \rho(n_r)$ microstates with the distribution $\{ n_r \}$. The
most familiar situation in statistical mechanics is that of thermal equilibrium. State $\mu$ in equilibrium is
achieved with probability $P(\mu) = \frac{1}{Z} e^{-\beta E}$. Since this is just one microstate, we should add up a
sufficient number of them to get
\begin{equation}
P \left ( \{ n_r \} \right ) = \frac{1}{Z} e^{-\beta E} \prod_r \rho(n_r) \; . \label{equilibrium-p}
\end{equation}
These are equilibrium probabilities so the master equation should vanish when they are inserted. This condition
provides a constraint on the possible transition rates but we will make use of a stronger condition; the principle of
detailed balance. Detailed balance, which holds for reversible Markov chains, states that all terms in
(\ref{physical-master}) should separately vanish in equilibrium instead of just the entire sum:
\begin{equation}
P (\dots, n_a + k, n_b - k, \dots) W_{(n_a + k, n_b - k) \rightarrow (n_a, n_b)} = P (\dots, n_a, n_b, \dots) W_{(n_a, n_b) \rightarrow (n_a + k, n_b - k)} \; . \label{detailed-balance}
\end{equation}
This principle was famously used by Einstein to predict spontaneous emission rates before quantum field theory had
been developed \cite{mccumber}. Substituting (\ref{equilibrium-p}) into (\ref{detailed-balance}), our condition becomes
\begin{equation}
\rho(n_a + k) \rho(n_b - k) W_{(n_a + k, n_b - k) \rightarrow (n_a, n_b)} = \rho(n_a) \rho(n_b) W_{(n_a, n_b) \rightarrow (n_a + k, n_b - k)} \; . \label{w-constraint}
\end{equation}
There are many solutions to this system of equations but some make more sense than others.

One solution to (\ref{w-constraint}) has $W_{(n_a + k, n_b - k) \rightarrow (n_a, n_b)}$ proportional to the number of
final states $\rho(n_a) \rho(n_b)$. This type of transition rate is the one most compatible with the ergodic principle.
When fluctuations are completely thermal, a higher number of final states should be the only thing favouring one
transition over another. We write
\begin{equation}
W_{(n_a, n_b) \rightarrow (n_a + k, n_b - k)} \propto C(n_a, n_b) \rho(n_a + k) \rho(n_b - k) \nonumber
\end{equation}
where there can be some additional dependence on $k$. Substituting this into (\ref{w-constraint}), we find
\begin{equation}
C(n_a, n_b) = C(n_a + k, n_b - k) \; . \nonumber
\end{equation}
Using this relation repeatedly, we may set $k$ equal to $n_a$ or $n_b$, telling us that $C$ is only a function of the
total energy. For this to be valid, any configuration must be reachable from the configuration obtained by having one
site shift all of its energy to a neighbour. This is the same as saying that there are no superselection sectors.
We now have
\begin{equation}
W_{(n_a, n_b) \rightarrow (n_a + k, n_b - k)} \propto C \left ( \frac{n_a + n_b}{2} \right ) \rho(n_a + k) \rho(n_b - k) \label{w-solution}
\end{equation}
where we have included the factor of $\frac{1}{2}$ for later convenience. We should note that
$W_{(n_a + k, n_b - k) \rightarrow (n_a, n_b)}$ is not proportional to the number of configurations that have $n_a$
on site $a$ and $n_b$ on site $b$. This would be $\rho (n_a) \rho (n_b) \prod_{r \notin \{a, b\}} \rho(n_r)$. A
transition rate involving all of these factors would be inconsistent with the nearest neighbour logic we have been
using. The only transition we have allowed is one in which site $b$ sends $k$ units of energy to site $a$. If a
transition rate compatible with the ergodic principle also depends linearly on $\rho(n_c)$ for some other site $c$,
this is not a transition rate for $(n_a, n_b) \rightarrow (n_a + k, n_b - k)$ but rather
$(n_a, n_b, n_c) \rightarrow (n_a + k, n_b - k, n_c)$. In other words, site $c$ has itself undergone a transition to
some other internal state that keeps the same energy $n_c$. This amounts to two transition happening in the same
timestep. Moreover, there is no way to tell that this is indeed a nearest neighbour transition paired with a
transition between internal states far away. It could have been site $b$ sending $k$ units of energy to site $c$
followed immediately by $c$ sending $k$ units of energy to site $a$, thus violating locality.

\subsubsection{The continuum limit}
The equation we wish to build on is
\begin{equation}
\frac{\partial n_c}{\partial t} = \sum_{\left < b, c \right >} \sum_{k \neq 0} k W_{(n_c, n_b) \rightarrow (n_c + k, n_b - k)} \label{non-continuum}
\end{equation}
where the transition rates are given by (\ref{w-solution}). So far, we have been assuming that the energy and the
spatial co-ordinate both vary by discrete amounts. One way to write this is to have site $c$ labelled by $\textbf{x}$
which means that site $b$ is $\textbf{x} \pm a\textbf{e}$ for some unit vector $\textbf{e}$. To consider a
continuous version of (\ref{non-continuum}), the lattice constant $a$ must approach zero. Additionally, the sum over
$k \neq 0$ must become a sum over $\pm \epsilon$ with $\epsilon$ also going to zero. Our formula (\ref{w-solution})
becomes
\begin{equation}
W_{(E(\textbf{x}), E(\textbf{x} + a\textbf{e})) \rightarrow (E(\textbf{x}) + \epsilon, E(\textbf{x} + a\textbf{e}) - \epsilon)} = C\left ( \frac{E(\textbf{x}) + E(\textbf{x} + a\textbf{e})}{2} \right ) \rho(E(\textbf{x}) + \epsilon) \rho(E(\textbf{x} + a\textbf{e}) - \epsilon) \label{transition-rates}
\end{equation}
where we use $E$ instead of $n$ to make it clear that we are talking about energy densities that are being incremented
continuously. One should keep in mind that $C$ has units of inverse time in order for $W$ to be a rate. Since our
differential equation for the energy density is now a function of the small parameters $\epsilon$ and $a$, a useful
approximation to it can be derived with a Taylor expansion.

Using (\ref{transition-rates}) in the continuous version of (\ref{non-continuum}), we have
\begin{eqnarray}
\frac{\partial E(\textbf{x})}{\partial t} &=& \epsilon \sum_{\textbf{e} \in \{ \pm \textbf{e}_1, \dots, \pm \textbf{e}_d \}} \left [ W_{(E(\textbf{x}), E(\textbf{x} + a\textbf{e})) \rightarrow (E(\textbf{x}) + \epsilon, E(\textbf{x} + a\textbf{e}) - \epsilon)} - W_{(E(\textbf{x}), E(\textbf{x} + a\textbf{e})) \rightarrow (E(\textbf{x}) - \epsilon, E(\textbf{x} + a\textbf{e}) + \epsilon)}\right ] \nonumber \\
&=& \epsilon \sum_{\textbf{e} \in \{ \pm \textbf{e}_1, \dots, \pm \textbf{e}_d \}} C\left ( \frac{E(\textbf{x}) + E(\textbf{x} + a\textbf{e})}{2} \right ) \left [ \rho(E(\textbf{x}) + \epsilon) \rho(E(\textbf{x} + a\textbf{e}) - \epsilon) \right. \nonumber \\
&& \left. - \rho(E(\textbf{x}) - \epsilon) \rho(E(\textbf{x} + a\textbf{e}) + \epsilon) \right ] \; . \label{continuum}
\end{eqnarray}
We will define
\begin{equation}
\frac{\partial E}{\partial t} = X(\epsilon, a) = \epsilon \tilde{X}(\epsilon, a) \nonumber
\end{equation}
in which case the relevant Taylor series becomes
\begin{equation}
\frac{\partial E}{\partial t} = \sum_{m = 0}^{\infty} \sum_{n = 0}^{\infty} \frac{\epsilon^m a^n}{m! n!} \left. \frac{\partial^{m + n} X}{\partial \epsilon^m \partial a^n} \right |_{\substack{a = 0 \\ \epsilon = 0}} \; . \label{2d-taylor}
\end{equation}
We can see from (\ref{continuum}) that $\tilde{X}(0, a) = 0 = \tilde{X}(\epsilon, 0)$, so any term in (\ref{2d-taylor})
that survives, must involve at least three derivatives of $X$: two with respect to $\epsilon$ and one with respect to
$a$. In fact, the number of derivatives we need to take is even higher. Differentiating something like
$\rho(E(\textbf{x} + a\textbf{e}))$ with respect to $a$ would contribute a
$\frac{\partial \rho}{\partial E} \partial_i E e_i$ term inside the sum. If we add up the $i^{\mathrm{th}}$ components
of $\textbf{e}$ where $\textbf{e}$ runs over all positive and negative standard basis vectors, the result is zero.
This means we need at least one more derivative with respect to $a$ and the approximation we seek is:
\begin{equation}
\frac{\partial E}{\partial t} = \frac{1}{4} \epsilon^2 a^2 \left. \frac{\partial^4 X}{\partial \epsilon^2 \partial a^2} \right |_{\substack{a = 0 \\ \epsilon = 0}} \; . \label{e-derivative}
\end{equation}
We will use the abbreviated notation $\rho_{+} = \rho(E(\textbf{x} + a\textbf{e}))$ and
$C_{+} = C \left ( \frac{E(\textbf{x}) + E(\textbf{x} + a\textbf{e})}{2} \right )$ which satisfy
\begin{eqnarray}
\frac{\partial \rho_{+}}{\partial a} &=& e_j \partial_j \rho_{+} \nonumber \\
\frac{\partial C_{+}}{\partial a} &=& \frac{1}{2} e_j \partial_j C_{+} \; . \label{convenient-relations}
\end{eqnarray}
Also, the derivatives with respect to $\epsilon$ are not calculated here but in the appendix. Picking up from where
the appendix leaves off,
\begin{eqnarray}
\left. \frac{\partial^4 X}{\partial \epsilon^2 \partial a^2} \right |_{\substack{a = 0 \\ \epsilon = 0}} &=& 2 \left. \frac{\partial^3 \tilde{X}}{\partial \epsilon \partial a^2} \right |_{\substack{a = 0 \\ \epsilon = 0}} \nonumber \\
&=& 4 \sum_{\textbf{e} \in \{ \pm \textbf{e}_1, \dots, \pm \textbf{e}_d \}} \frac{\partial^2}{\partial a^2} \left. \left [ C_{+} \left ( \rho_{+} \frac{\textup{d} \rho}{\textup{d} E} - \rho \frac{\textup{d} \rho_{+}}{\textup{d} E} \right ) \right ] \right |_{a = 0} \nonumber \\
&=& 4 \sum_{\textbf{e} \in \{ \pm \textbf{e}_1, \dots, \pm \textbf{e}_d \}} e_i \frac{\partial}{\partial a} \left. \left [ \frac{1}{2} \partial_i C_{+} \left ( \rho_{+} \frac{\textup{d} \rho}{\textup{d} E} - \rho \frac{\textup{d} \rho_{+}}{\textup{d} E} \right ) + C_{+} \left ( \partial_i \rho_{+} \frac{\textup{d} \rho}{\textup{d} E} - \rho \partial_i \frac{\textup{d} \rho_{+}}{\textup{d} E} \right ) \right ] \right |_{a = 0} \nonumber \\
&=& 4 \left [ C \left ( \partial_i \partial_j \rho \frac{\textup{d} \rho}{\textup{d} E} - \rho \partial_i \partial_j \frac{\textup{d} \rho}{\textup{d} E} \right ) -\partial_i C \left ( \rho \partial_j \frac{\textup{d} \rho}{\textup{d} E} - \partial_j \rho \frac{\textup{d} \rho}{\textup{d} E} \right ) \right ] \sum_{\textbf{e} \in \{ \pm \textbf{e}_1, \dots, \pm \textbf{e}_d \}} e_i e_j \nonumber \\
&=& 8 \left [ C \left ( \partial_i \partial_i \rho \frac{\textup{d} \rho}{\textup{d} E} - \rho \partial_i \partial_i \frac{\textup{d} \rho}{\textup{d} E} \right ) -\partial_i C \left ( \rho \partial_i \frac{\textup{d} \rho}{\textup{d} E} - \partial_i \rho \frac{\textup{d} \rho}{\textup{d} E} \right ) \right ] \; . \label{expression-before-log}
\end{eqnarray}
It is not immediately obvious but (\ref{expression-before-log}) simplifies to a more compact expression involing a
logarithm. If we expand
\begin{eqnarray}
-C \rho^2 \partial_i \frac{\textup{d} \log \rho}{\textup{d} E} &=& -C \rho^2 \partial_i \frac{\frac{\textup{d} \rho}{\textup{d} E}}{\rho} \nonumber \\
&=& C \rho^2 \frac{\partial_i \rho \frac{\textup{d} \rho}{\textup{d} E} - \rho \partial_i \frac{\textup{d} \rho}{\textup{d} E}}{\rho^2} \nonumber \\
&=& C \left ( \partial_i \rho \frac{\textup{d} \rho}{\textup{d} E} - \rho \partial_i \frac{\textup{d} \rho}{\textup{d} E} \right ) \; , \nonumber
\end{eqnarray}
we get something whose $\partial_i$ derivative is (\ref{expression-before-log}). This shows that
\begin{equation}
\frac{\partial E}{\partial t} = -2 \epsilon^2 a^2 \partial_i \left ( C(E) \rho^2(E) \partial_i \frac{\textup{d} \log \rho(E)}{\textup{d} E} \right ) \; . \label{hypercorrect-pde}
\end{equation}
Checking the dimensions of (\ref{hypercorrect-pde}), the left hand side is an energy density over a time. On the right
hand side, we have an energy density in the form of $\epsilon$ because the other $\epsilon$ cancels with the
$\textup{d}E$. We also have an inverse time because the function $C$ had inverse time units. The $a^2$ cancels with
the two spatial derivatives. From now on, we will drop unknown dimensionful parameters by absorbing them into the time.
The main differential equation of our model is
\begin{equation}
\frac{\partial E}{\partial t} = - \partial_i \left ( C \rho^2 \partial_i \frac{\textup{d} \log \rho}{\textup{d} E} \right ) \label{main-pde}
\end{equation}
in which $C$ is assumed to be a dimensionless function. Common choices for it will be $1$ and $\rho^{-2}$.

\subsection{Interesting features}
Our continuum limit equation has particularly nice things to say about a system with microcanonical phases like
(\ref{sym-entropy}). At least two of these phases only appear at energies that are large compared to the spatial volume.
After considering some insights in \cite{mvr} concerned with the case $\frac{\partial E}{\partial t} = 0$, we
will see that large energies are required to even trust the model at a basic level.

\subsubsection{Static situations}
A special role is played by the density of states whose logarithm is linear in $E$. This is the Hagedorn density of
states that we saw appearing in string theory and Super Yang-Mills: $\rho(E) \propto e^{\beta_{\mathrm{H}} E}$. In this
case $\frac{\textup{d} \log \rho(E)}{\textup{d}E} = \beta_{\mathrm{H}}$, a constant. The
$\partial_i$ acting on this constant will set the left hand side of (\ref{main-pde}) to zero. Under Hagedorn behaviour,
the energy distribution $E(\textbf{x}, t)$ does not change with time.

The gauge theories with holographic duals have a Hagedorn regime as well as other phases. As stated before, we expect
$\log \rho(E) \propto E^{\alpha}$ where we could have $\alpha < 1$, $\alpha > 1$ or $\alpha = 1$ depending on the
energy. Since the dynamics are frozen with a purely Hagedorn density of states, we expect changes in the energy to
take place very slowly if $\alpha = 1$ is the widest phase. The $\alpha = 1$ phase can equivalently be described as the
energy range for which the inverse temperature $\beta(E)$ is flat.

Using the fact that $\log \rho(E)$ is the microcanonical entropy, we can rewrite our main equation in terms of
$\beta(E)$ as well:
\begin{equation}
\frac{\textup{d} \log \rho(E)}{\textup{d} E} = \frac{\textup{d} S(E)}{\textup{d} E} = \beta(E) \nonumber
\end{equation}
and our equation becomes
\begin{equation}
\frac{\partial E}{\partial t} = - \partial_i \left ( C \rho^2 \partial_i \beta \right ) \; . \label{beta-pde}
\end{equation}
The phases can be characterized by whether $\beta$ is decreasing ($\alpha < 1$), increasing ($\alpha > 1$) or
neither ($\alpha = 1$).

With a Hagedorn density of states, $\frac{\partial E}{\partial t}$ vanishes for any energy distribution. Conversely
with a uniform energy distribution, $\frac{\partial E}{\partial t}$ vanishes for any density of states. This
equilibrium distribution may be stable or unstable depending on the phase we are in. We will decompose the energy as
\begin{equation}
E(\textbf{x}, t) = E_0 + \tilde{E}(\textbf{x}, t) \nonumber
\end{equation}
where $\tilde{E}$ is small, allowing us to keep only one power of it in the PDE (\ref{beta-pde}). First,
\begin{eqnarray}
\partial_i \beta(E) &\approx& \partial_i \left ( \beta(E_0) + \tilde{E} \frac{\textup{d} \beta(E_0)}{\textup{d}E} \right ) \nonumber \\
&=& \frac{\textup{d} \beta(E_0)}{\textup{d}E} \partial_i \tilde{E} \; . \nonumber
\end{eqnarray}
This expression with one power of $\tilde{E}$ is multiplied by $C(E) \rho^2(E)$. A first order expansion of
$C(E) \rho^2(E)$ would give an overall result that is second order in $\tilde{E}$ so we only expand it to zeroth order:
\begin{equation}
\frac{\partial \tilde{E}}{\partial t} \approx - C(E_0) \rho^2(E_0) \frac{\textup{d} \beta(E_0)}{\textup{d}E} \partial_i \partial_i \tilde{E} \; . \label{linear-beta-pde}
\end{equation}
This is either the heat equation or the reverse heat equation depending on whether the overall coefficient is negative
or positive. The sign of $\frac{\textup{d} \beta_0}{\textup{d}E}$ is what matters because $C$ and $\rho$ are positive
functions. Agreeing with our earlier intuition about the entropic dynamics of energy, we have the following three
cases:
\begin{itemize}
\item
$\frac{\textup{d}\beta}{\textup{d}E} < 0$ is a decreasing inverse temperature, a concave entropy and a log-concave
density of states. It leads to diffusion or inhomogeneities that decrease with time due to the heat equation.
\item
$\frac{\textup{d}\beta}{\textup{d}E} > 0$ is an increasing inverse temperature, a convex entropy and a log-convex
density of states. It leads to clustering or inhomogeneities that increase with time due to the reverse heat equation.
\item
$\frac{\textup{d}\beta}{\textup{d}E} = 0$ is a constant inverse temperature, a linear entropy and a Hagedorn density
of states. It leads to static behaviour.
\end{itemize}
Understanding the detailed properties of the diffusion and to a lesser extent the clustering caused by this PDE will be
the focus of the next chapter.

\subsubsection{Mean-field variances}
A loose end in this chapter has been the assumption that we may work only with expected values in
(\ref{equation-for-general-w}). In general, mean-field approximations may be used on quantities that have a small
variance. An energy with a small variance is also one of the desired features of our model. After all, the model is
an attempt at connecting the excitations of field theory degrees of freedom to Einstein gravity, something that is
completely deterministic.

If variances are initially small, we want to make sure that they grow slowly so that our model stays valid for a long
time. Just as we derived an expression for $\frac{\partial \bar{n}_c}{\partial t}$ from the master equation, we can
repeat the calculation for $\frac{\partial \bar{n^2}_c}{\partial t}$.
\begin{eqnarray}
\frac{\partial \bar{n^2}_c}{\partial t} &=& \sum_{\{ n_r \}} n^2_c \frac{\partial P \left ( \{ n_r \} \right )}{\partial t} \nonumber \\
&=& \sum_{\left < b, c \right >} \sum_{k \neq 0} \sum_{\{ n_r \}} \left [ n^2_c P (\dots, n_b + k, n_c - k, \dots) W_{(n_b + k, n_c - k) \rightarrow (n_b, n_c)} \right. \nonumber \\
&& \left. - n^2_c P (\dots, n_b, n_c, \dots) W_{(n_b, n_c) \rightarrow (n_b - k, n_c + k)} \right ] \nonumber \\
&=& \sum_{\left < b, c \right >} \sum_{k \neq 0} \sum_{\{ n_r \}} \left [ ((n_c + k)^2 - n^2_c) P (\dots, n_b, n_c, \dots) W_{(n_b, n_c) \rightarrow (n_b - k, n_c + k)} \right ] \nonumber \\
&=& \sum_{\left < b, c \right >} \sum_{k \neq 0} \sum_{\{ n_r \}} (2kn_c + k^2) P (\dots, n_b, n_c, \dots) W_{(n_b, n_c) \rightarrow (n_b - k, n_c + k)} \nonumber \\
&\approx& \sum_{\left < b, c \right >} \sum_{k \neq 0} (2kn_c + k^2) W_{(\bar{n}_b, \bar{n}_c) \rightarrow (\bar{n}_b - k, \bar{n}_c + k)} \label{v-equation-for-general-w}
\end{eqnarray}
If we combine (\ref{v-equation-for-general-w}) with (\ref{equation-for-general-w}),
\begin{eqnarray}
\sigma^2_c &=& \bar{n^2}_c - \bar{n}_c^2 \nonumber \\
\frac{\partial \sigma^2_c}{\partial t} &=& \frac{\partial \bar{n^2}_c}{\partial t} - 2\bar{n}_c \frac{\partial \bar{n}_c}{\partial t} \nonumber \\
&=& \sum_{\left < b, c \right >} \sum_{k \neq 0} k^2 W_{(\bar{n}_b, \bar{n}_c) \rightarrow (\bar{n}_b - k, \bar{n}_c + k)} \nonumber \\
&=& \sum_{\left < b, c \right >} \sum_{k \neq 0} k^2 C \left ( \frac{\bar{n}_b + \bar{n}_c}{2} \right ) \rho(\bar{n}_b - k) \rho(\bar{n}_c + k) \nonumber
\end{eqnarray}
where we have substituted the $W$ solution (\ref{w-solution}). This equation for the variance can be examined in the
continuum limit and the key is that we do not need as many orders as $\epsilon^2 a^2$ in the subsequent Taylor
expansion. The continuum limit is
\begin{eqnarray}
\frac{\partial \sigma^2(\textbf{x})}{\partial t} &=& \epsilon^2 \sum_{\textbf{e} \in \{ \pm \textbf{e}_1, \dots, \pm \textbf{e}_d \}} \left [ W_{(E(\textbf{x}), E(\textbf{x} + a\textbf{e})) \rightarrow (E(\textbf{x}) + \epsilon, E(\textbf{x} + a\textbf{e}) - \epsilon)} + W_{(E(\textbf{x}), E(\textbf{x} + a\textbf{e})) \rightarrow (E(\textbf{x}) - \epsilon, E(\textbf{x} + a\textbf{e}) + \epsilon)}\right ] \nonumber \\
&=& \epsilon^2 \sum_{\textbf{e} \in \{ \pm \textbf{e}_1, \dots, \pm \textbf{e}_d \}} C\left ( \frac{E(\textbf{x}) + E(\textbf{x} + a\textbf{e})}{2} \right ) \left [ \rho(E(\textbf{x}) + \epsilon) \rho(E(\textbf{x} + a\textbf{e}) - \epsilon) \right. \nonumber \\
&& \left. + \rho(E(\textbf{x}) - \epsilon) \rho(E(\textbf{x} + a\textbf{e}) + \epsilon) \right ] \label{v-continuum}
\end{eqnarray}
where the positive sign is due to the fact that we have $k^2$ instead of $k$. Again, define
\begin{equation}
\frac{\partial \sigma^2}{\partial t} = X(\epsilon, a) = \epsilon^2 \tilde{X}(\epsilon, a) \; . \nonumber
\end{equation}
When Taylor expanding $X$, we need at least two derivatives with respect to $\epsilon$ because of the
$\epsilon^2$ prefactor. This is all we need because $\tilde{X}(0, 0)$ is nonzero.
\begin{eqnarray}
\frac{\partial \sigma^2}{\partial t} &=& \frac{1}{2} \epsilon^2 \left. \frac{\partial^2 X}{\partial \epsilon^2} \right |_{\substack{a = 0 \\ \epsilon = 0}} \nonumber \\
&=& \epsilon^2 \tilde{X}(0, 0) \nonumber \\
&=& 2 \epsilon^2 \sum_{\textbf{e} \in \{ \pm \textbf{e}_1, \dots, \pm \textbf{e}_d \}} C \rho^2 \nonumber \\
&=& 4d \epsilon^2 C \rho^2 \label{variance-pde}
\end{eqnarray}
Unlike (\ref{hypercorrect-pde}), it is not the same unknown function that appears on the left and right hand sides of
(\ref{variance-pde}). One must first solve for $E(\textbf{x}, t)$ in order to solve for $\sigma(\textbf{x}, t)$.

Since the expression for $\frac{\partial E}{\partial t}$ has a small prefactor of $a^2$ while that for
$\frac{\partial \sigma^2}{\partial t}$ does not, it would seem that the rate of change of the variance is
parametrically larger, something we wished to avoid. However, it does not make sense to compare these quantities
directly. The energy has the same units as its standard deviation so we should compare
$\frac{\partial \sigma}{\partial t}$ to $\frac{\partial E}{\partial t}$ or more conveniently,
$\frac{\partial \sigma^2}{\partial t}$ to $\frac{\partial E^2}{\partial t}$.
\begin{equation}
\frac{\frac{\partial \sigma^2}{\partial t}}{\frac{\partial E^2}{\partial t}} = \frac{\frac{\partial \sigma^2}{\partial t}}{2E \frac{\partial E}{\partial t}} \label{need-high-energy}
\end{equation}
Factors of $\epsilon^2$ in the numerator and denominator cancel leaving $E a^2$ in the denominator. This tells us that
such a ratio of derivatives can indeed be small if the energy is large enough. In other words, a PDE like
(\ref{main-pde}) can be trusted to model high energy phenomena. Thinking about gravity, this includes the extreme
environments of black holes but not the everyday motion of test particles around them. On a more practical level, it
would be difficult to even write down the equation (\ref{main-pde}) if we were concerned with it holding for low
energies. For the field theories we are interested in, only asymptotic expressions are known for the density of states.
Even for situations in which the exact number of states is known for all energies, this $\rho(E)$ is not continuous.

\section{Nonlinear diffusion}
We now cover some of the properties of equations like (\ref{main-pde}) that are known analytically. The main
assumption we will use throughout this chapter is that $\beta$ is weakly decreasing, \textit{i.e.}
$\frac{\textup{d}\beta}{\textup{d}E} \leq 0$. This is a slight departure from the strongly coupled gauge theory result
as shown in Figure \ref{3-betas}.
\begin{figure}[h]
\centering
\subfloat[][Sharp phases]{\includegraphics[scale=0.2]{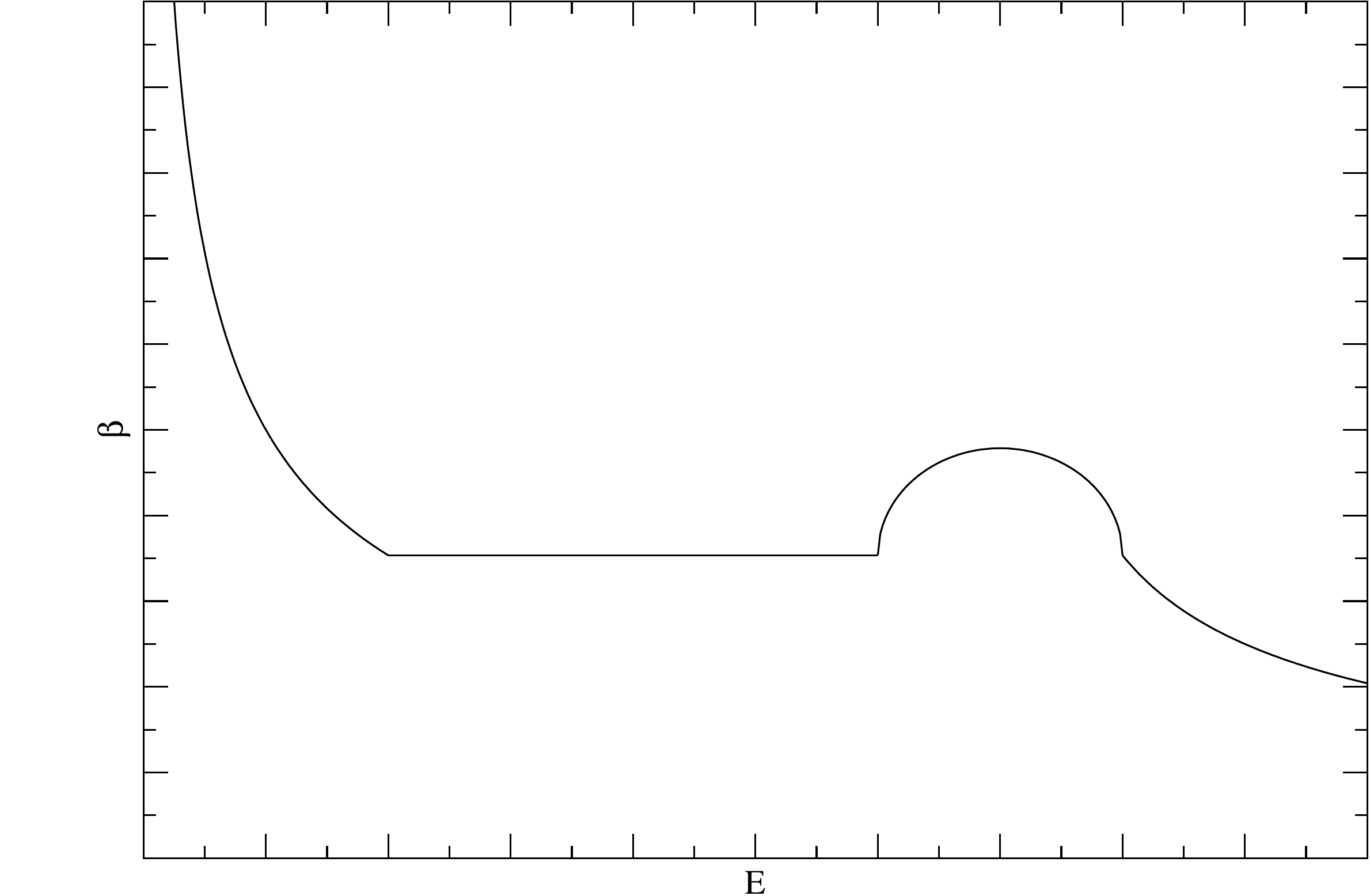}}
\subfloat[][Smooth phases]{\includegraphics[scale=0.2]{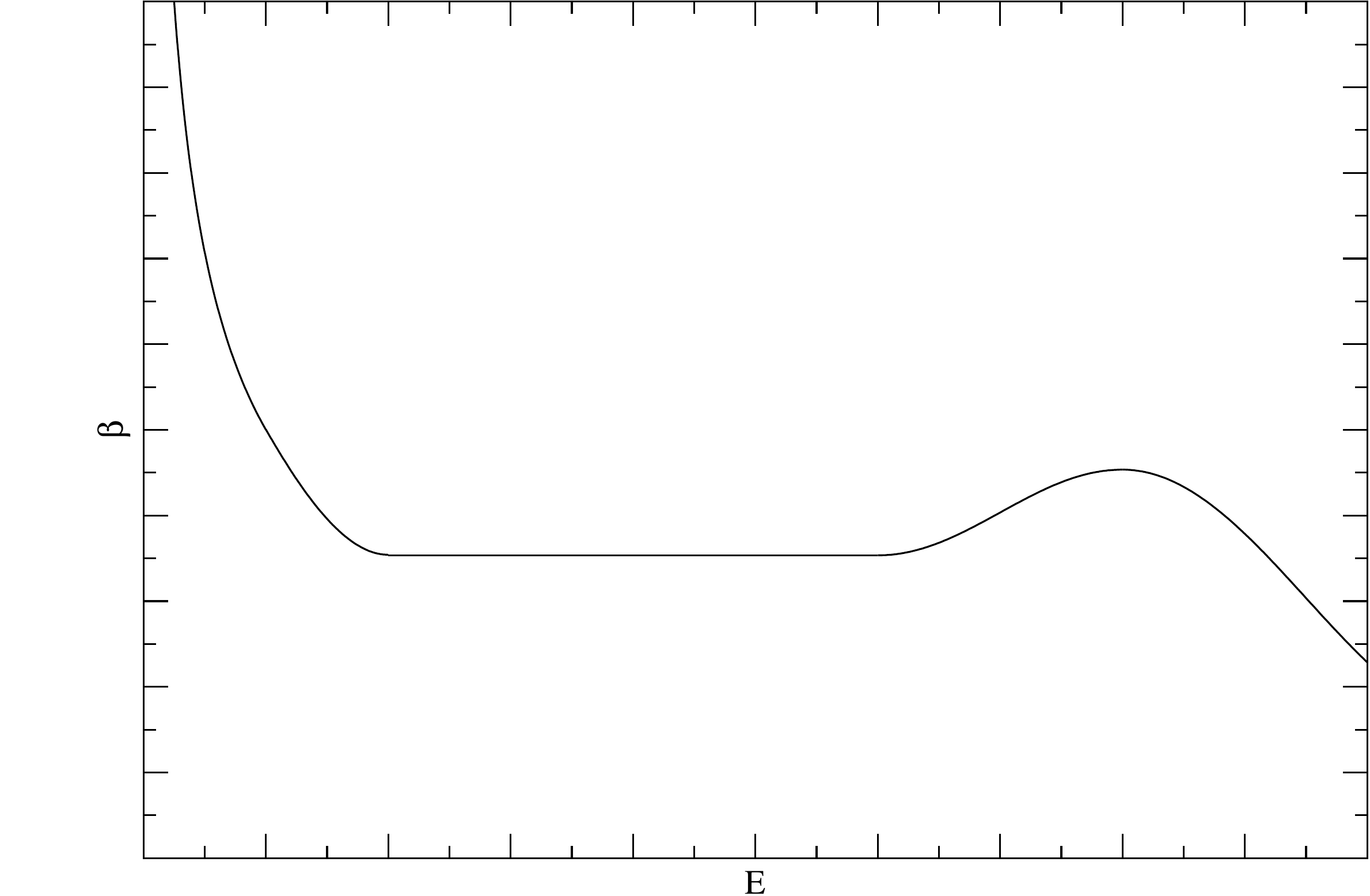}}
\subfloat[][Purely diffusive]{\includegraphics[scale=0.2]{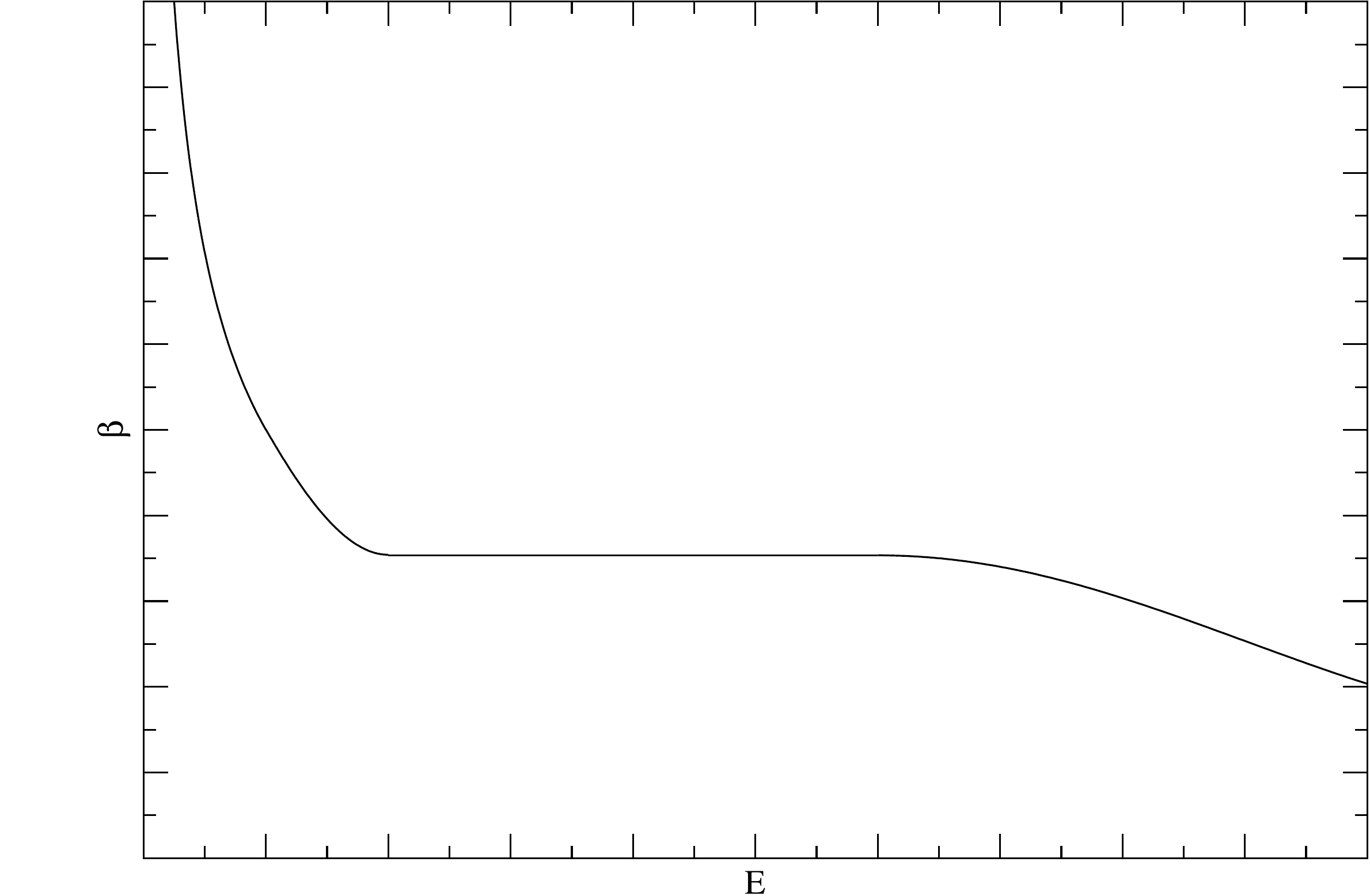}}
\caption{To emphasize the four phases expected in holographic gauge theories, we have drawn a piecewise
$\beta(E)$ function. It is more realistic to expect an approximation to this function that is differentiable
whenever $E > 0$. The function plotted in the middle figure is still difficult to work with because there is a small
range of energies for which it is increasing. Assuming that this phase is negligible is the best way to predict
the behaviour of our nonlinear PDE.}
\label{3-betas}
\end{figure}
If $C(E) = \rho^{-2}(E)$,
\begin{equation}
\frac{\partial E}{\partial t} = - \partial_i \left ( C(E) \rho^2(E) \partial_i \beta(E) \right ) \label{non-filtration-form}
\end{equation}
becomes
\begin{equation}
\frac{\partial E}{\partial t} = - \partial_i \partial_i \tilde{\beta}(E) \label{filtration-form}
\end{equation}
where $\tilde{\beta}(E) = \beta(E)$. We will in fact consider (\ref{filtration-form}) regardless of $C$. This is
because (\ref{non-filtration-form}) is always (\ref{filtration-form}) for some other $\tilde{\beta}$. Simply
define $\tilde{\beta}^{\prime} = C \rho^2 \beta^{\prime}$. Because $C$ and $\rho$ are positive functions,
$\tilde{\beta}$ is decreasing if and only if $\beta$ is. Therefore we will drop the tilde and write
\begin{equation}
\frac{\partial E}{\partial t} = - \Delta \beta(E) \label{filtration-equation}
\end{equation}
from now on. Of course the $\beta(E)$ in (\ref{filtration-equation}) no longer has to be of the form plotted in
Figure \ref{3-betas} but this will be unimportant for most of the results that follow.

\subsection{Basic properties on a bounded domain}
We will start by assuming that $x \in \Omega$ where $\Omega$ is an open, bounded domain in $\mathbb{R}^d$. This allows
the initial energy density $E(x, 0)$ to be integrable without decaying to zero. As shown in Figure
\ref{bounded-domain}, a potential problem with (\ref{filtration-equation}) is thus avoided because the low energies for
which $\beta$ diverges are not realized.
\begin{figure}[h]
\centering
\subfloat[][Energy distribution]{\includegraphics[scale=0.3]{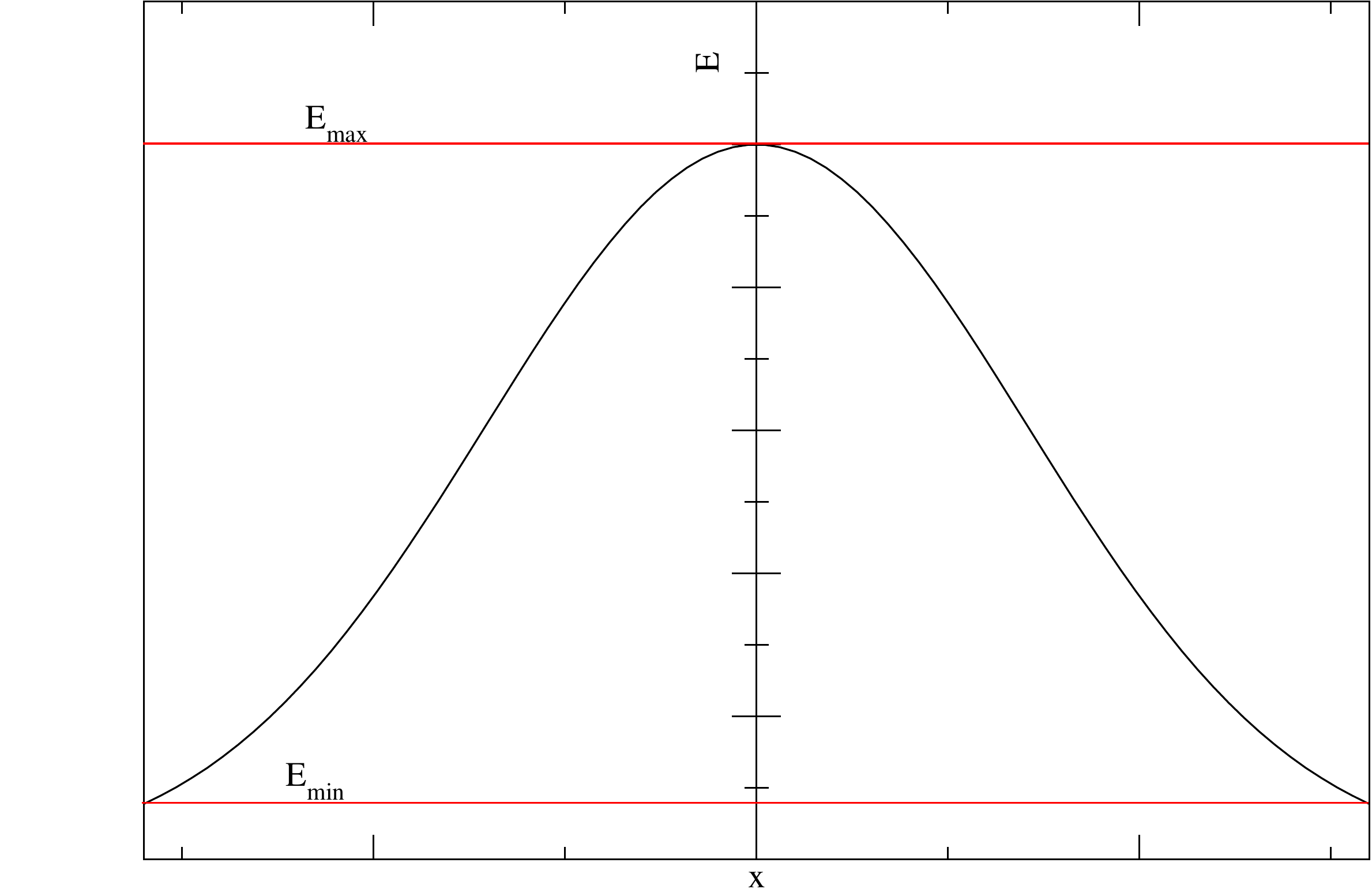}}
\subfloat[][Realized range of energies]{\includegraphics[scale=0.3]{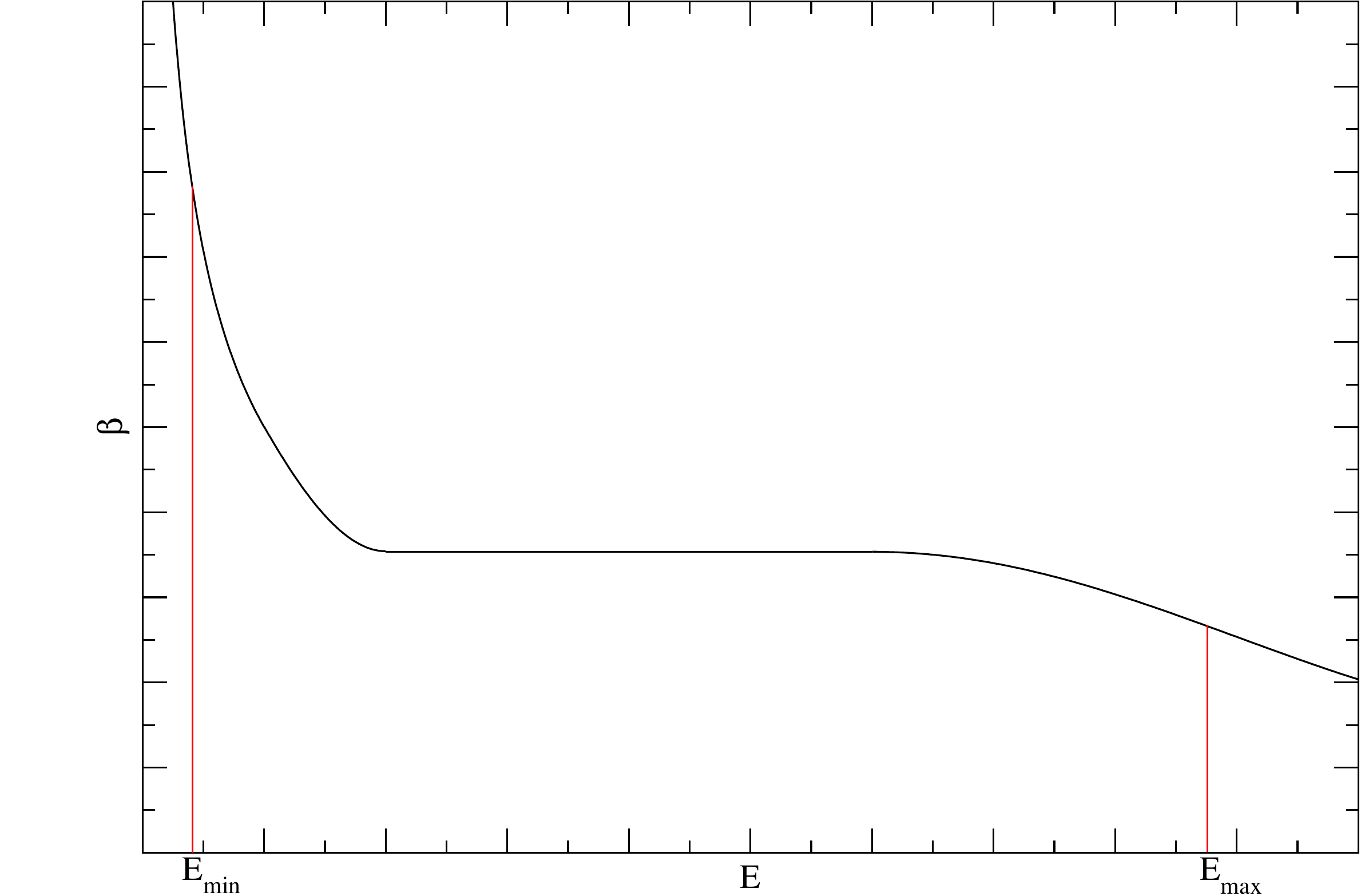}}
\caption{Resricting the size of the domain allows us to avoid the vanishingly small energies for which our PDE
no longer applies.}
\label{bounded-domain}
\end{figure}
The initial condition will be denoted $E_0(x)$ reflecting the fact that we choose an energy distribution to evolve
forward in time, \textit{i.e.} $E_0$ is an input to the Cauchy problem.

\subsubsection{Conservation of energy}
Energy conservation was one of the properties we demanded from the start. As with any Cauchy problem, whether or not
energy is conserved depends on the boundary conditions. As with the heat equation, Neumann boundary conditions
are the appropriate ones to consider. When discussing these mathematical results, we will use ``mass'' to refer to
the integral of $E(x, t)$ over space rather than a gap in the spectrum of a field theory.
\begin{theorem}
If $E$ solves
\begin{equation}
\begin{dcases}
\frac{\partial E}{\partial t} (x, t) = - \Delta \beta(E(x, t)) & (x, t) \in \Omega \times (0, \infty) \\
E(x, 0) = E_0(x) & x \in \Omega \\
\nabla E(x, t) \cdot n = 0 & (x, t) \in \partial \Omega \times (0, \infty)
\end{dcases} \nonumber
\end{equation}
then $M(t) = \int_{\Omega} E(x, t) \textup{d}x$ is constant.
\end{theorem}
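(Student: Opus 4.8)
The plan is to differentiate the mass functional directly and reduce the time derivative to a boundary flux that the Neumann condition annihilates. First I would write $M(t) = \int_\Omega E(x,t)\,\textup{d}x$ and, assuming the solution is regular enough to carry the time derivative inside the integral, compute
\begin{equation}
\frac{\textup{d}M}{\textup{d}t} = \int_\Omega \frac{\partial E}{\partial t}(x,t)\,\textup{d}x = -\int_\Omega \Delta\beta(E(x,t))\,\textup{d}x \; , \nonumber
\end{equation}
where the second equality is just the PDE (\ref{filtration-equation}).

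Next I would apply the divergence theorem. Since $\Delta\beta(E) = \nabla\cdot\left ( \nabla\beta(E) \right )$, Gauss's theorem turns the volume integral into a flux through the boundary,
\begin{equation}
-\int_\Omega \Delta\beta(E)\,\textup{d}x = -\int_{\partial\Omega} \nabla\beta(E)\cdot n\,\textup{d}S \; . \nonumber
\end{equation}
The chain rule gives $\nabla\beta(E) = \beta^{\prime}(E)\,\nabla E$, so the integrand equals $\beta^{\prime}(E)\left ( \nabla E\cdot n \right )$. The Neumann condition $\nabla E\cdot n = 0$ on $\partial\Omega\times(0,\infty)$ makes this vanish pointwise, provided $\beta^{\prime}(E)$ stays finite on the boundary --- which is exactly what the restriction to a bounded domain secures, since there the energies remain bounded away from the small values where $\beta$ blows up (cf. Figure \ref{bounded-domain}). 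Hence $\frac{\textup{d}M}{\textup{d}t} = 0$ and $M(t)$ is constant.

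The algebra here is routine; the real content, and the main obstacle, is justifying the two analytic manoeuvres. Differentiating under the integral sign requires a dominated-convergence argument, and the divergence theorem requires $\beta(E)$ to be $C^2$ up to $\partial\Omega$ with a well-defined normal trace of its gradient. For a possibly degenerate parabolic equation like (\ref{filtration-equation}) this is delicate, because solutions are in general only weak. I would resolve this in one of two ways: either invoke the standard existence and regularity theory for the filtration / porous-medium equation, which supplies classical smoothness wherever $E > 0$ (the relevant regime on a bounded domain), or --- more robustly --- recast the whole computation in the weak formulation by testing the equation against the constant function $\varphi \equiv 1$. The latter route delivers $\int_\Omega \partial_t E\,\textup{d}x = -\int_{\partial\Omega}\beta^{\prime}(E)\left ( \nabla E\cdot n \right )\textup{d}S = 0$ directly, bypassing any pointwise regularity assumption and isolating the Neumann condition as the sole ingredient.
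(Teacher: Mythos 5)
Your proof is correct and follows essentially the same route as the paper: differentiate $M$ under the integral, substitute the PDE, and use the divergence theorem (Green's identity) to reduce everything to the boundary flux $\beta^{\prime}(E)\left ( \nabla E \cdot n \right )$, which the Neumann condition annihilates. The extra remarks on weak formulation and regularity go beyond the paper's argument but do not change the approach.
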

\begin{proof}
We can show that the derivative of $M$ is zero using Green's first identity.
\begin{eqnarray}
\frac{\textup{d} M}{\textup{d} t} &=& \int_{\Omega} \frac{\partial E}{\partial t} \textup{d}x \nonumber \\
&=& \int_{\Omega} \Delta \beta(E) \textup{d}x \nonumber \\
&=& \int_{\Omega} \nabla \cdot \left ( \beta^{\prime}(E) \nabla E \right ) \textup{d}x \nonumber \\
&=& \int_{\Omega} \beta^{\prime}(E) \Delta E + \nabla \left ( \beta^{\prime}(E) \right ) \cdot \nabla E \textup{d}x \nonumber \\
&=& \int_{\partial \Omega} \beta^{\prime}(E) \left ( \nabla E \cdot n \right ) \textup{d}S_x \nonumber \\
&=& 0 \; . \nonumber
\end{eqnarray}
Dirichlet boundary conditions would not lead to conserved energy unless we finely tuned $\beta^{\prime}$ to be zero
on the boundary.
\end{proof}

\subsubsection{The maximum principle}
Perhaps the most ubiquitous tool in the study of elliptic and parabolic equations is the maximum principle. Although
it is sometimes introduced as a tool for studying linear equations, many nonlinear versions of it have appeared over
the years \cite{protter}. Our proof of a suitable maximum principle will be very similar to the one in \cite{tao}.
\begin{theorem}
Suppose that $\frac{\partial u}{\partial t} \geq - \Delta \beta(u)$ and
$\frac{\partial v}{\partial t} \leq - \Delta \beta(v)$ for a monotonically decreasing $\beta$. If
$u(x, 0) > v(x, 0)$ for all $x \in \Omega$, there does not exist a spacetime point
$(x_0, t_0) \in \Omega \times (0, \infty)$ for which $u(x_0, t_0) < v(x_0, t_0)$.
\end{theorem}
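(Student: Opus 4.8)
The plan is to prove the contrapositive as a comparison principle, via a first-touching-point argument. Set $w = u - v$ and $\phi = \beta(u) - \beta(v)$. Subtracting the two differential inequalities (using $\partial_t u \geq -\Delta \beta(u)$ and $-\partial_t v \geq \Delta \beta(v)$) gives the single inequality $\partial_t w \geq -\Delta \phi$. The structural fact that drives everything is monotonicity: since $\beta$ is decreasing, $w$ and $\phi$ always carry opposite signs, and $\phi = 0$ exactly where $w = 0$. In particular, at any fixed time a spatial point where $w$ attains the value $0$ as a \emph{minimum} is simultaneously a point where $\phi$ attains the value $0$ as a \emph{maximum}. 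This is the key observation that lets the sign of $\Delta \phi$ cooperate with the sign of $\partial_t w$.

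I would first establish the \emph{strict} version of the statement, temporarily assuming $\partial_t u > -\Delta \beta(u)$ (strict) while keeping $\partial_t v \leq -\Delta \beta(v)$, so that $\partial_t w > -\Delta \phi$. Suppose for contradiction that $u < v$ somewhere at some positive time. Since $w(\cdot,0) > 0$ on all of $\Omega$ and $w$ is continuous, there is a first time $t_\ast > 0$ and an interior point $x_\ast$ with $w(x_\ast, t_\ast) = 0$ while $w(\cdot, t) > 0$ for all $t < t_\ast$. At $(x_\ast, t_\ast)$ the onset of contact forces $\partial_t w(x_\ast, t_\ast) \leq 0$. Meanwhile $w(\cdot, t_\ast) \geq 0$ with equality at $x_\ast$, so by the structural observation $x_\ast$ is an interior spatial maximum of $\phi(\cdot, t_\ast)$, whence $\Delta \phi(x_\ast, t_\ast) \leq 0$, i.e.\ $-\Delta\phi(x_\ast,t_\ast) \geq 0$. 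Combining with the strict inequality yields $\partial_t w(x_\ast, t_\ast) > -\Delta \phi(x_\ast, t_\ast) \geq 0$, contradicting $\partial_t w(x_\ast, t_\ast) \leq 0$.

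To remove the artificial strictness, I would perturb $u$ to $u_\epsilon = u + \epsilon(1 + t)$, chosen so the time-growing shift produces a genuine gain $\partial_t u_\epsilon = \partial_t u + \epsilon$. Writing $\partial_t u_\epsilon + \Delta \beta(u_\epsilon) = [\partial_t u + \Delta\beta(u)] + \epsilon + [\Delta\beta(u_\epsilon) - \Delta\beta(u)]$ and noting the first bracket is nonnegative, one checks that the last bracket, which equals $(\beta'(u_\epsilon) - \beta'(u))\Delta u + (\beta''(u_\epsilon) - \beta''(u))|\nabla u|^2$, is $O(\epsilon)$ and hence dominated by $+\epsilon$ for $\epsilon$ small on the relevant bounded spacetime region. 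Thus $u_\epsilon$ is a strict supersolution; applying the strict case to $(u_\epsilon, v)$ gives $u_\epsilon > v$, and letting $\epsilon \to 0$ yields $u \geq v$ everywhere, which is exactly the claimed nonexistence of a point with $u < v$. The main obstacle, and the step deserving the most care, is precisely this perturbation: because $\beta$ is nonlinear, the spatially constant shift does not preserve the operator $\Delta\beta(\cdot)$ cleanly, so controlling the error term requires boundedness of $\beta', \beta''$ and of $\nabla u, \Delta u$ on the compact region up to the finite time under consideration. A secondary subtlety is ensuring the first contact occurs at an interior $x_\ast$; this is where the Neumann condition $\nabla E \cdot n = 0$ (which also gives $\nabla \phi \cdot n = 0$) must be invoked to rule out a boundary minimum via a Hopf-type argument.
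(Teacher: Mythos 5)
Your strategy genuinely differs from the paper's. The paper works with the single weighted function $e^{-At}\left[\beta(v)-\beta(u)\right]$ and derives a contradiction from four conditions at an interior minimum, the factor $e^{-At}$ supplying the strictness needed to handle non-strict differential inequalities in one pass; it also requires a case analysis over whether $\beta^{\prime}(u)$ or $\beta^{\prime}(v)$ serves as the negative prefactor that gets divided out. Your strict case is cleaner: working with $w=u-v$ and $\phi=\beta(u)-\beta(v)$, the observation that an interior zero minimum of $w(\cdot,t_*)$ is simultaneously an interior zero maximum of $\phi(\cdot,t_*)$ gives $-\Delta\phi(x_*,t_*)\geq 0$ directly, with no chain-rule expansion, no case analysis, and no division by $\beta^{\prime}$ (which is delicate on Hagedorn plateaus where $\beta^{\prime}=0$, a case the paper's division step does not really cover). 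That part of your argument is correct, modulo the interior-contact and compactness caveats that you flag explicitly and that the paper's own proof also glosses over.

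The gap is in the de-strictification step. With $u_\epsilon = u+\epsilon(1+t)$, the gain from the time derivative is exactly $+\epsilon$, while the error $\Delta\beta(u_\epsilon)-\Delta\beta(u) = (\beta^{\prime}(u_\epsilon)-\beta^{\prime}(u))\Delta u + (\beta^{\prime\prime}(u_\epsilon)-\beta^{\prime\prime}(u))|\nabla u|^2$ is bounded below only by $-C\epsilon(1+T)$, where $C$ involves the Lipschitz constants of $\beta^{\prime},\beta^{\prime\prime}$ and $\sup\left(|\Delta u|+|\nabla u|^2\right)$ on the closure of $\Omega\times[0,T]$. Both the gain and the loss are linear in $\epsilon$, so the inference ``is $O(\epsilon)$ and hence dominated by $+\epsilon$ for $\epsilon$ small'' is a non sequitur: what you actually need is $C(1+T)<1$, which is a condition on $\beta$, $u$ and $T$ that smallness of $\epsilon$ cannot buy you. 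The standard repair is to take $u_\epsilon = u + \epsilon e^{\lambda t}$ with $\lambda > C$: then $\partial_t u_\epsilon + \Delta\beta(u_\epsilon) \geq \epsilon(\lambda - C)e^{\lambda t} > 0$, your strict case applies on $[0,T]$ for every $\epsilon>0$, and letting $\epsilon\to 0$ gives $u\geq v$, i.e.\ the claimed nonexistence. (Alternatively, keep the linear perturbation but work on time intervals of length less than $1/C$ and iterate.) It is worth noting that the repaired perturbation reintroduces an exponential factor in time --- essentially the same device as the paper's $e^{-At}$ weight, which suggests this ingredient is hard to avoid rather than an artifact of the paper's particular route.
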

\begin{proof}
First suppose that $w \in C^2(\Omega \times (0, \infty))$ is initially non-negative but not always non-negative. Then
there must exist some point $(x_0, t_0) \in \Omega \times (0, \infty)$ such that $w(x_0, t_0) < 0$. We can let $x_0$
be the position of the minimum of $w(\cdot, t_0)$ so that $\nabla w(x_0, t_0) = 0$ and $\Delta w(x_0, t_0) \geq 0$
are also satisfied. If $t_0$ is the first time such a point occurs, $\frac{\partial w}{\partial t}(x_0, t_0) \leq 0$.

Now let $w(x, t) = e^{-At} \left [ \beta(v(x, t)) - \beta(u(x, t)) \right ]$ where $A$ is some positive constant.
Saying that $w$ is non-negative is the same as saying that $u \geq v$ because $\beta$ is monotone. Therefore if
$u \geq v$ initially but not always, the above says that there must be a point
$(x_0, t_0) \in \Omega \times (0, \infty)$ such that:
\begin{itemize}
\item $\beta(u(x_0, t_0)) > \beta(v(x_0, t_0))$
\item $\beta^{\prime}(u(x_0, t_0)) \nabla u(x_0, t_0) = \beta^{\prime}(v(x_0, t_0)) \nabla v(x_0, t_0)$
\item $-\Delta \beta(u(x_0, t_0)) \geq -\Delta \beta(v(x_0, t_0))$
\item $\beta^{\prime}(u(x_0, t_0)) \frac{\partial u}{\partial t}(x_0, t_0) - \beta^{\prime}(v(x_0, t_0)) \frac{\partial v}{\partial t}(x_0, t_0) \geq A \left [ \beta(u(x_0, t_0)) - \beta(v(x_0, t_0)) \right ]$
\end{itemize}
Since $u$ is a supersolution and $v$ is a subsolution, $\frac{\partial u}{\partial t} \geq -\Delta \beta(u)$ and
$\frac{\partial v}{\partial t} \leq -\Delta \beta(v)$ can be combined to give
$\frac{\partial}{\partial t}(u - v) + \Delta \left [ \beta(u) - \beta(v) \right ] \geq 0$. On the other hand, our four
conditions above can be combined into an inequality that contradicts this telling us that $u$ cannot drop below $v$ if
$u(x, 0) \geq v(x, 0)$.

First we use the fact that $\Delta \beta(u(x_0, t_0)) \leq \Delta \beta(v(x_0, t_0))$ to write
$\frac{\partial}{\partial t}(u(x_0, t_0) - v(x_0, t_0)) + \Delta \left [ \beta(u(x_0, t_0)) - \beta(v(x_0, t_0)) \right ] \leq \frac{\partial}{\partial t}(u(x_0, t_0) - v(x_0, t_0))$.
We must now show that the difference of the time derivatives is less than zero. To do this, we note that at least
one of the following must be true:
\begin{eqnarray}
\beta^{\prime}(u(x_0, t_0)) \frac{\partial u}{\partial t}(x_0, t_0) - \beta^{\prime}(v(x_0, t_0)) \frac{\partial v}{\partial t}(x_0, t_0) \leq \beta^{\prime}(u(x_0, t_0)) \frac{\partial}{\partial t}(u(x_0, t_0) - v(x_0, t_0)) \nonumber \\
\beta^{\prime}(u(x_0, t_0)) \frac{\partial u}{\partial t}(x_0, t_0) - \beta^{\prime}(v(x_0, t_0)) \frac{\partial v}{\partial t}(x_0, t_0) \leq \beta^{\prime}(v(x_0, t_0)) \frac{\partial}{\partial t}(u(x_0, t_0) - v(x_0, t_0)) \; . \nonumber
\end{eqnarray}
The first one is true if $\frac{\partial v}{\partial t}(x_0, t_0)$ and
$\beta^{\prime}(v(x_0, t_0)) - \beta^{\prime}(u(x_0, t_0))$ have the same sign. The second one is true if they have
different signs. Use $\beta^{\prime}_0$ to denote whatever prefactor appears in the correct statement, either
$\beta^{\prime}(u(x_0, t_0))$ or $\beta^{\prime}(v(x_0, t_0))$. The key is that this is a negative constant. The
last inequality in our list of four can turn this into
\begin{eqnarray}
\beta^{\prime}_0 \frac{\partial}{\partial t} (u(x_0, t_0) - v(x_0, t_0)) &\geq& \beta^{\prime}(u(x_0, t_0)) \frac{\partial u}{\partial t}(x_0, t_0) - \beta^{\prime}(v(x_0, t_0)) \frac{\partial v}{\partial t}(x_0, t_0) \nonumber \\
&\geq& A \left [ \beta(u(x_0, t_0)) - \beta(v(x_0, t_0)) \right ] \; . \nonumber
\end{eqnarray}
Dividing through by $\beta^{\prime}_0$, we get
\begin{equation}
\frac{\partial}{\partial t} (u(x_0, t_0) - v(x_0, t_0)) \leq \frac{A}{\beta^{\prime}_0} \left [ \beta(u(x_0, t_0)) - \beta(v(x_0, t_0)) \right ] \; . \nonumber
\end{equation}
Using the first inequality in the list of four, we see that this is less than zero, completing the proof.
\end{proof}
Even though this could still be generalized further \cite{arena}, it is already more general than the maximum
principle. To get the maximum principle, let $v$ be a solution (a special case of a subsolution) and $u$ be
$\max \{ v(x, 0) : x \in \bar{\Omega} \}$. Since $u$ is a constant, it is also a solution and therefore a supersolution.
The theorem above now tells us that $u$ must continue to upper bound $v$ at all later times which means the maximum
of $v$ decreases with time. As always, this is equivalent to the minimum principle which states that the minimum
increases with time. We have already seen in (\ref{linear-beta-pde}) that our PDE turns into the heat equation when
the energy distribution is close to uniform. The maximum principle tells us that our PDE is similar to the heat
equation in some respects for general energy distributions as well.

In addition to talking about the value of the maximum of $E$, we can gain some information about its position.
If our initial condition $E_0$ is spherically symmetric, our equation gives
\begin{eqnarray}
\frac{\partial E}{\partial t} &=& -r^{1-d} \frac{\partial}{\partial r} \left ( r^{d-1} \frac{\partial \beta(E)}{\partial r} \right ) \nonumber \\
\frac{\partial}{\partial t} \frac{\partial E}{\partial r} &=& -\frac{\partial}{\partial r} \left [ r^{1-d} \frac{\partial}{\partial r} \left ( r^{d-1} \frac{\partial \beta(E)}{\partial r} \right ) \right ] \nonumber \\
&=& -\frac{\partial}{\partial r} \left [ \frac{d-1}{r} \beta^{\prime}(E) \frac{\partial E}{\partial r} + \beta^{\prime\prime}(E) \left ( \frac{\partial E}{\partial r} \right )^2 + \beta^{\prime}(E) \frac{\partial^2 E}{\partial r^2} \right ] \nonumber \\
&=& \frac{d-1}{r^2} \beta^{\prime}(E) \frac{\partial E}{\partial r} - \frac{d-1}{r} \beta^{\prime\prime}(E) \left ( \frac{\partial E}{\partial r} \right )^2 - \frac{d-1}{r} \beta^{\prime}(E) \frac{\partial^2 E}{\partial r^2} - \beta^{\prime\prime\prime}(E) \left ( \frac{\partial E}{\partial r} \right )^3 \nonumber \\
&& - 3\beta^{\prime\prime}(E) \frac{\partial E}{\partial r} \frac{\partial^2 E}{\partial r^2} - \beta^{\prime}(E) \frac{\partial^3 E}{\partial r^3} \; . \nonumber
\end{eqnarray}
Now suppose that $E_0$ achieves its maximum at the origin and has no other local minima or maxima. In other words
$\frac{\partial E_0}{\partial r} < 0$ for $r \neq 0$. If $\frac{\partial E}{\partial r}$ were to become positive away
from the origin at some later time, it would have to first vanish. Also, since an extreme point has not formed yet,
$(r_0, t_0)$ is an inflection point satisfying $\frac{\partial^2 E}{\partial r^2}(r_0, t_0) = 0$ and
$\frac{\partial^3 E}{\partial r^3}(r_0, t_0) < 0$. Plugging these into our expression above, we see that
$\frac{\partial}{\partial t} \frac{\partial E}{\partial r} (r_0, t_0) = -\beta^{\prime}(E(r_0, t_0)) \frac{\partial^3 E}{\partial r^3}(r_0, t_0) < 0$.
If the radial derivative of $E$ ever gets to zero, it starts decreasing again and never passes zero to become positive.
Therefore the origin is the only local extremum at all times.

\subsubsection{Existence of a steady-state}
We now have everything we need to discuss the behaviour of solutions in the limit of infinite time. By setting
$\frac{\partial E}{\partial t}$ equal to zero, we can see that the limiting energy distribution satisfies
\begin{equation}
\begin{dcases}
\Delta \beta(E(x)) = 0 & x \in \Omega \\
\nabla E(x) \cdot n = 0 & x \in \partial \Omega
\end{dcases} \; . \nonumber
\end{equation}
The only solutions to the Neumann problem for Laplace's equation are constants. Therefore $\beta(E)$ must be
constant in space if $\frac{\partial E}{\partial t} = 0$. There are two ways for this to happen. One is for $E$ to
solely occupy the Hagedorn regime. If not, $\beta$ is invertible in a neighbourhood of at least one energy in the
steady-state and $E$ must be constant. By energy conservation, the value of this constant is of course
\begin{equation}
\bar{E_0} = \frac{1}{|\Omega|} \int_{\Omega} E_0(x) \textup{d}x \; . \nonumber
\end{equation}
In other words, we know what $\lim_{t \rightarrow \infty} E(\cdot, t)$ must be \textit{if it exists} but it is not
yet obvious that it exists. Even for an energy distribution whose extrema stay in the same place and smooth out over
time, it is possible for the intermediate regions to constantly oscillate without ever converging to any function. Our
proof that this does not happen for (\ref{filtration-equation}) will compare it to the heat equation.
\begin{figure}[h]
\includegraphics[scale=0.45]{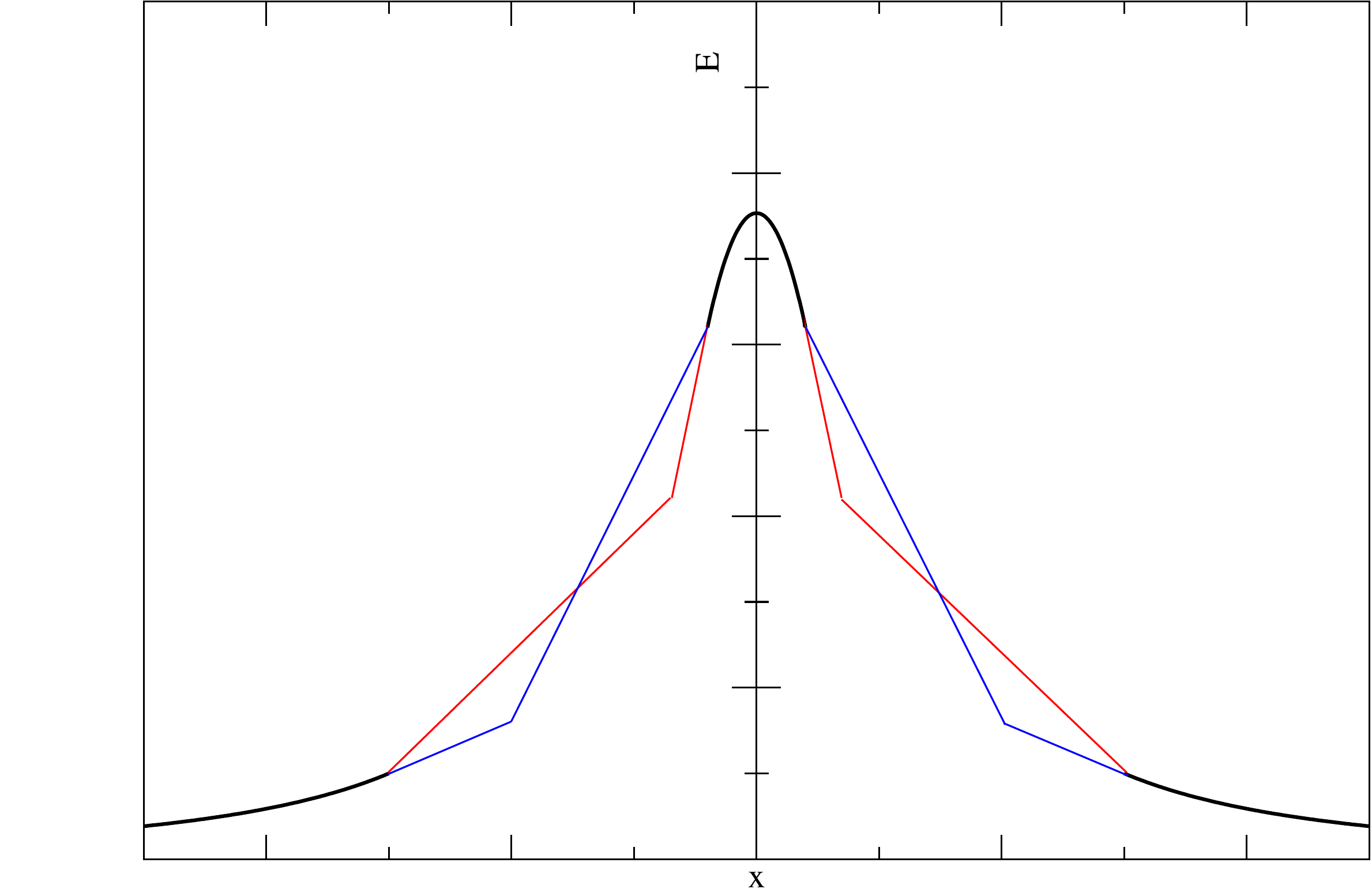}
\caption{The two energy distributions shown have equal minima, equal maxima and equal masses. They also have the
central peak as their only extreme value. A solution to (\ref{filtration-equation}) that alternated between these
functions indefinitely would satisfy all of the properties that we have proven so far but not have a steady-state.}
\label{esoteric-case}
\end{figure}
\begin{theorem}
Suppose that $E$ solves
\begin{equation}
\begin{dcases}
\frac{\partial E}{\partial t} (x, t) = - \Delta \beta(E(x, t)) & (x, t) \in \Omega \times (0, \infty) \\
E(x, 0) = E_0(x) & x \in \Omega \\
\nabla E(x, t) \cdot n = 0 & (x, t) \in \partial \Omega \times (0, \infty)
\end{dcases} \nonumber
\end{equation}
for a spherically symmetric $E_0$ with a central maximum as its only local extreme point. If $\beta^{\prime}(E) \leq 0$
becomes strict whenever $E_{\mathrm{min}} < E < \bar{E_0}$, then $E$ converges to its average.
\end{theorem}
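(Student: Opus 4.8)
The plan is to rule out the indefinite oscillation feared in Figure \ref{esoteric-case} by producing a strictly monotone Lyapunov functional whose limiting value can only be achieved by the constant profile. Since mass is conserved by the theorem proved above, the natural candidate is the spatial variance $\mathcal{F}(t) = \int_{\Omega} \left ( E(x, t) - \bar{E_0} \right )^2 \textup{d}x$. Differentiating, inserting the equation, integrating by parts and dropping the boundary term exactly as in the conservation-of-energy proof (the Neumann condition kills it), I would obtain $\frac{\textup{d}\mathcal{F}}{\textup{d}t} = 2 \int_{\Omega} \beta^{\prime}(E) |\nabla E|^2 \textup{d}x \leq 0$, using $\beta^{\prime} \leq 0$. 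Thus $\mathcal{F}$ is nonincreasing and bounded below by $0$, hence convergent, and the dissipation $D(t) = \int_{\Omega} \left ( -\beta^{\prime}(E) \right ) |\nabla E|^2 \textup{d}x$ is integrable in time. In particular $\liminf_{t \to \infty} D(t) = 0$, so there is a sequence $t_n \to \infty$ along which $D(t_n) \to 0$.

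Next I would upgrade vanishing dissipation to convergence of the profile, and this is where the equation is compared to the heat equation. On a bounded domain the maximum principle confines $E$ to the compact interval $[E_{\mathrm{min}}, E_{\mathrm{max}}]$ fixed by the initial data, safely away from the low-energy tail where $\beta$ is singular; there $-\beta^{\prime}$ is bounded above by some $\Lambda$, so $|\nabla \beta(E)|^2 = \left ( -\beta^{\prime}(E) \right ) \left ( -\beta^{\prime}(E) \right ) |\nabla E|^2 \leq \Lambda \left ( -\beta^{\prime}(E) \right ) |\nabla E|^2$ and hence $\int_{\Omega} |\nabla \beta(E(\cdot, t_n))|^2 \textup{d}x \to 0$. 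By the Poincar\'e--Wirtinger inequality $\beta(E(\cdot, t_n))$ then differs from its spatial mean by a quantity tending to $0$ in $L^2$; since $\beta$ is bounded on the compact range, those means are bounded, so along a further subsequence $\beta(E(\cdot, t_n)) \to b_{\infty}$ for a constant $b_{\infty}$, in $L^2$ and almost everywhere.

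It remains to identify the limit, and here the structural hypotheses do the work. Because $E_0$ is spherically symmetric with the origin as its only extremum --- a property preserved in time, as shown just before the statement --- each $E(\cdot, t_n)$ is radially decreasing and its almost-everywhere limit $E_{\infty}$ is a radially monotone profile with $\beta(E_{\infty}) \equiv b_{\infty}$. If $E_{\infty}$ were non-constant it would vary over some annulus, taking a continuum of values on which $\beta$ is forced to equal the single constant $b_{\infty}$; conservation of mass pins the mean of $E_{\infty}$ at $\bar{E_0}$, so part of this continuum lies in $(E_{\mathrm{min}}, \bar{E_0})$, where $\beta$ is strictly decreasing by hypothesis --- a contradiction. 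Hence $E_{\infty}$ is constant, and mass conservation forces $E_{\infty} = \bar{E_0}$. Dominated convergence (using the uniform bound from the maximum principle) gives $E(\cdot, t_n) \to \bar{E_0}$ in $L^2$, so $\mathcal{F}(t_n) \to 0$; since $\mathcal{F}$ is monotone with a limit, that limit is $0$, and therefore $E(\cdot, t) \to \bar{E_0}$ along the full family.

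The main obstacle is the compactness step: extracting a convergent subsequence of profiles and passing to the limit inside the nonlinear term $\beta(E)$. This demands a priori parabolic regularity for the nonlinear equation, which is delicate precisely where $\beta^{\prime} = 0$ (the Hagedorn phase) and the equation degenerates. The restriction that $\beta^{\prime} < 0$ be strict only on $(E_{\mathrm{min}}, \bar{E_0})$ is exactly what lets me sidestep the degenerate region: I never need $\nabla E_{\infty} = 0$ on a Hagedorn plateau, only that the strictly diffusive band below the average cannot support a non-constant steady profile. Establishing the requisite smoothness, and rigorously propagating the monotone single-extremum structure to the limit, is where most of the technical labour would lie.
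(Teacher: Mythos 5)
Your argument is correct in its essentials, but it is a genuinely different proof from the one in the thesis. The thesis argues by comparison with the heat equation: it tracks the mass $M_R(t)$ inside balls, uses the preserved radial monotonicity and $\beta^{\prime} \leq 0$ to show this mass decreases, then bounds the decay against a heat equation with diffusion constant $-\beta^{\prime}(\bar{E_0})$, whose convergence to the average is known by Fourier series; the sphere on which $E = \bar{E_0} - \epsilon$ is attained is thereby forced out to the boundary of $\Omega$. Your route --- the variance $\mathcal{F}$ as a Lyapunov functional, integrable dissipation, Poincar\'e--Wirtinger to force $\beta(E(\cdot, t_n))$ toward a constant, and identification of the limit via mass conservation plus the strict-decrease hypothesis --- is the standard entropy-dissipation scheme. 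It is arguably tighter than the paper's: it isolates exactly where each hypothesis enters ($\beta^{\prime} \leq 0$ for monotonicity of $\mathcal{F}$, strictness on $(E_{\mathrm{min}}, \bar{E_0})$ only to exclude a non-constant limit, Neumann data only to kill boundary terms), it yields an explicit mode of convergence ($L^2$), and it quarantines the degenerate Hagedorn plateau rather than absorbing it into a comparison heuristic. What the paper's approach buys in exchange is elementarity --- no functional inequalities or compactness machinery --- and a crude quantitative handle on the rate, since the comparison object is an explicit heat flow.

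Two small repairs to your write-up. First, the existence of the almost-everywhere limit $E_{\infty}$ of $E(\cdot, t_n)$ does not follow from the a.e.\ convergence of $\beta(E(\cdot, t_n))$ alone, because $\beta$ is not injective on the Hagedorn plateau; you should invoke Helly's selection theorem, which applies precisely because your profiles are radially nonincreasing and uniformly bounded (by the maximum principle), to extract a further subsequence converging pointwise to a radially monotone $E_{\infty}$. Second, a radially monotone limit can be piecewise constant, so a non-constant $E_{\infty}$ need not take ``a continuum of values''; but your contradiction survives in sharper form: if $E_{\infty}$ takes two values $v_1 < v_2$ with $\beta(v_1) = \beta(v_2) = b_{\infty}$, then monotonicity of $\beta$ forces $\beta$ to be constant on all of $[v_1, v_2]$, and since a non-constant $E_{\infty}$ with mean $\bar{E_0}$ must have $v_1 < \bar{E_0}$ (and $v_1 \geq E_{\mathrm{min}}$ by the minimum principle), the interval $[v_1, v_2]$ meets $(E_{\mathrm{min}}, \bar{E_0})$, contradicting strict decrease there. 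With these two adjustments the proof is complete at (indeed, above) the level of rigour of the paper's own argument.
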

\begin{proof}
Letting $B^d_R$ denote a ball of radius $R$, the mass satisfies
\begin{eqnarray}
M_R &=& \int_{B^d_R} E \textup{d}x \nonumber \\
\frac{\textup{d} M_R}{\textup{d}t} &=& \int_{B^d_R} \frac{\partial E}{\partial t} \textup{d}x \nonumber \\
&=& -\int_{B^d_R} \Delta \beta(E) \textup{d}x \nonumber \\
&=& -\int_{\mathbb{S}^{d-1}_R} \beta^{\prime}(E) \nabla E \cdot n \textup{d}S_x \nonumber \\
&\leq& 0 \; . \nonumber
\end{eqnarray}
The inequality came from the fact that the integrand is non-negative. The flux is non-positive because the value of $E$
must always decrease as we move away from the origin and $\beta^{\prime}(E)$ is non-positive by assumption. Let
$R(t, \epsilon)$ be the largest radius at which $E(R, t) = \bar{E_0} - \epsilon$. If $R(t)$ ever reaches the radius of
$\Omega$, the solution will have converged to $\bar{E_0}$ by energy conservation and the maximum principle. Therefore
assume the opposite and let $R = \max \{ R(t, \epsilon) : t > 0 \}$. Since $E(R, t) \leq \bar{E_0}$, we can make the
integral larger by replacing $\beta^{\prime}(E)$ with $\beta^{\prime}(\bar{E_0})$.
\begin{equation}
\frac{\textup{d} M_R}{\textup{d}t} \leq -\beta^{\prime}(\bar{E_0}) \int_{\mathbb{S}^{d-1}_R} \nabla E \cdot n \textup{d}S_x < 0 \nonumber
\end{equation}
We now see that $M_R$ for our equation (\ref{filtration-equation}) shrinks more quickly than $M_R$ for a heat equation
with diffusion constant $-\beta^{\prime}(\bar{E_0})$. The standard Fourier series method shows that the solution to
said heat equation with initial data $E_0$ will converge to $\bar{E_0}$. Thus $\mathbb{S}^{d-1}_R$, the sphere where
$\bar{E_0}$ is first achieved, eventually encloses a constant function. By energy conservation and the maximum
principle, $E$ must be equal to the same constant outside this sphere as well. Therefore for any $\epsilon > 0$, the
energy at the edge always comes within $\epsilon$ of $\bar{E_0}$ which means that ``the outside of this sphere'' must
have not existed all along.
\end{proof}

\subsection{Time scales}
So far we have shown that the nonlinear diffusion we are interested in shares a number of intuitive properties with
linear diffusion: energy is conserved, peaks smooth out over time and reasonable configurations of energy will decay to
a distribution that is completely uniform. However, we still have relatively little information on \textit{how quickly}
these energy distributions decay. Defining and bounding time scales for the diffusion will allow us to compare our
model to what is already known about gravitational bound states.

\subsubsection{The concentration comparison theorem}
\begin{figure}[h]
\includegraphics[scale=0.45]{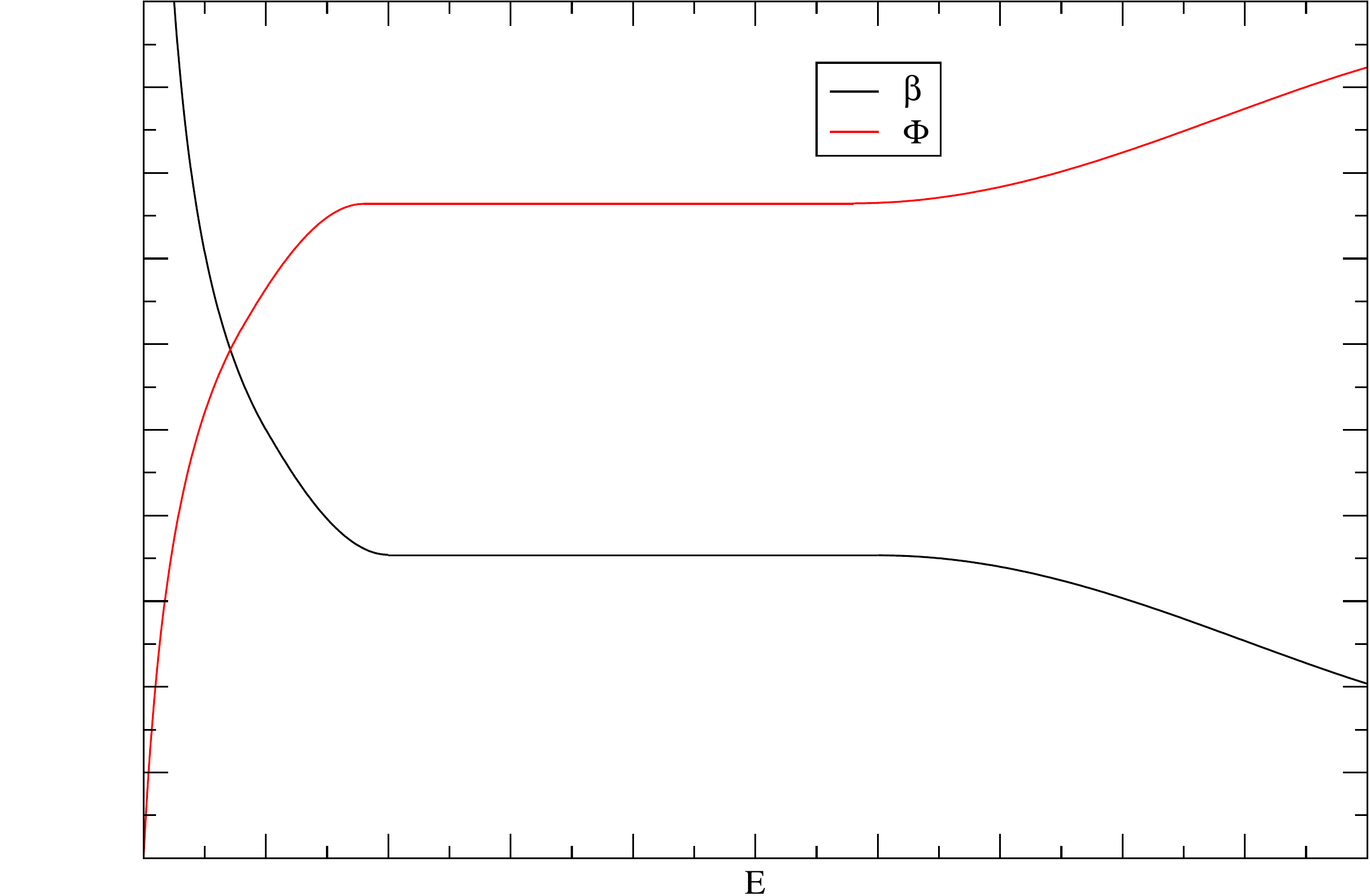}
\caption{This shows how we must change the function that appears inside the Laplacian in order to satisfy the
hypotheses of the theorem. Bounding the time scales arising from $\beta$ is no more difficult than bounding the
time scales arising from $\Phi$.}
\label{beta-to-phi}
\end{figure}
If we continued applying the basic properties, we would be able to derive some powerful results that are found in the
literature. One of these is the concentration comparison theorem \cite{vazquez1} which applies to equations of the form
(\ref{filtration-equation}) called \textit{filtration-equations}. The only difference is that filtration equations
are typically written $\frac{\partial E}{\partial t} = \Delta \Phi(E)$ where $\Phi$ is weakly increasing. A
fundamental property of diffusion is that the mass contained within a fixed ball at the origin $M_R(t)$ should decrease.
The total mass $M(t)$ which we found to be constant in time is $M_{\infty}(t)$. The concentration comparison theorem,
which we state below, is used to convert an inequality involving $\Phi$ to an inequality involving $M_R$.
\begin{theorem}
Let $\Phi_1, \Phi_2 \in C^1(\mathbb{R})$ be increasing functions sending $0$ to $0$ such that
$\Phi_1^{\prime} \geq \Phi_2^{\prime}$. Suppose that $u_1$ having mass $M_{1, R}(t)$ and $u_2$ having mass
$M_{2, R}(t)$ are solutions to $\frac{\partial u}{\partial t} = \Delta \Phi_1(u)$
and $\frac{\partial u}{\partial t} = \Delta \Phi_2(u)$ respectively with spherically symmetric data.
If $M_{1, R}(0) \leq M_{2, R}(0)$ for all $R$ then $M_{1, R}(t) \leq M_{2, R}(t)$ for all
$R$ and $t$.
\end{theorem}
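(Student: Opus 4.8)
The plan is to pass to the radial formulation and study the \emph{mass function} $M(r,t) = \int_0^r u(s,t)\, s^{d-1}\,\mathrm{d}s$ (dropping the irrelevant constant $d\omega_d$), so that the concentration statement $M_{1,R} \le M_{2,R}$ becomes a pointwise comparison $M_1(r,t) \le M_2(r,t)$ with $R$ playing the role of the spatial variable. First I would rewrite the radial filtration equation $\partial_t u = r^{1-d}\partial_r(r^{d-1}\partial_r \Phi(u))$ as an equation for $M$. Since $\partial_r M = u\, r^{d-1}$, integrating the PDE in $r$ and using $\Phi(0)=0$ to kill the flux at the origin gives $\partial_t M = r^{d-1}\partial_r \Phi(u) = \Phi'(u)\left(\partial_{rr}M - \tfrac{d-1}{r}\partial_r M\right)$, with $u = r^{1-d}\partial_r M$. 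Each $M_i$ therefore solves a quasilinear parabolic equation whose diffusivity is $\Phi_i'(u_i) \ge 0$, and the hypotheses are exactly those needed to attempt a comparison principle.

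The heart of the argument is a maximum-principle estimate on $w = M_1 - M_2$ in the style already used for the maximum principle above. Assuming $w \le 0$ initially, suppose $w$ first touches zero from below at an interior point $(r_0,t_0)$. There $\partial_r w = 0$ (so $u_1 = u_2 =: u_0$ and hence $\Phi_1'(u_0) \ge \Phi_2'(u_0)$), $\partial_{rr}w \le 0$, and $\partial_t w \ge 0$. Writing $c_i = \Phi_i'(u_0)$ and $a_i = \partial_{rr}M_i$, a short rearrangement yields
\[
\partial_t w = c_1(a_1 - a_2) + (c_1 - c_2)\left(a_2 - \tfrac{d-1}{r_0}\partial_r M_2\right),
\]
and the last bracket equals precisely $r_0^{d-1}\partial_r u_2(r_0,t_0)$. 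The first term is $\le 0$ because $c_1 \ge 0$ and $a_1 \le a_2$; the second term is $\le 0$ provided $\partial_r u_2 \le 0$, i.e. provided $u_2$ is radially non-increasing. This contradicts $\partial_t w \ge 0$ once the usual perturbation ($w \mapsto w + \delta t$ or $e^{-\lambda t}w$) is inserted to make the inequality strict, forcing $w \le 0$ for all time. The far endpoint is controlled because the $R=\infty$ case of the hypothesis orders the total masses, which are conserved by the energy-conservation result, and the origin is harmless since $M_1(0,t) = M_2(0,t) = 0$.

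The main obstacle is exactly the sign of that cross term $(\Phi_1' - \Phi_2')\,\partial_r u_2$: the assumption $\Phi_1' \ge \Phi_2'$ alone does not make $M_2$ a supersolution of the $M_1$-equation unless $u_2$ is radially decreasing. The natural resolution here is to work with radially non-increasing data, which the preservation-of-the-central-maximum argument given above shows to be an invariant class, so $\partial_r u_2(\cdot,t) \le 0$ persists for all $t$. The remaining technical difficulty is that $\Phi_i'$ may vanish, so the $M_i$-equation is only \emph{degenerate} parabolic and its solutions need not be classical; I would handle this by the standard device of replacing $\Phi_i$ with $\Phi_i + \varepsilon\,\mathrm{id}$ to obtain uniformly parabolic problems with smooth solutions, running the estimate above verbatim, and then passing to the limit $\varepsilon \to 0$ while using stability of the mass function under this approximation to recover $M_{1,R}(t) \le M_{2,R}(t)$ for all $R$ and $t$.
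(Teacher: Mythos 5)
First, a point of reference: the thesis does not actually prove this theorem. It is stated and immediately attributed to the literature (the V\'azquez reference cited beside it), and is then used purely as a black box for converting inequalities between filtration functions into inequalities between the masses $M_R$. So there is no in-paper proof to measure your argument against; it has to stand on its own. Your setup is the standard and correct route: with $M(r,t) = \int_0^r u(s,t)\,s^{d-1}\,\mathrm{d}s$, the radial filtration equation does become $\partial_t M = \Phi'(u)\left(\partial_{rr}M - \tfrac{d-1}{r}\,\partial_r M\right)$ with $u = r^{1-d}\partial_r M$, your rearrangement of $\partial_t w$ at a first touching point is algebraically right, and the identification of the bracket with $r_0^{d-1}\,\partial_r u_2(r_0,t_0)$ is exact.

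The gap is the one you notice and then legislate away. The hypothesis $\Phi_1' \geq \Phi_2'$ controls only the first term; the cross term $(\Phi_1'-\Phi_2')\,r_0^{d-1}\,\partial_r u_2$ has the favourable sign only if $u_2(\cdot,t_0)$ is radially non-increasing, and the theorem as stated assumes spherical symmetry, not monotonicity. Replacing ``spherically symmetric data'' by ``radially non-increasing data'' proves a different, weaker statement. Moreover, your justification that this class is invariant borrows the paper's central-maximum argument, which is carried out for classical solutions of a smooth, strictly parabolic problem; you would need to re-establish it for your $\varepsilon$-regularized equations and check that it survives the limit $\varepsilon \to 0$. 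Two further loose ends: on $\mathbb{R}^d$ a first touching point need not be attained (the supremum of $w$ over an unbounded domain may only be approached), and your endpoint control at $R = \infty$ appeals to a conservation result the paper establishes only on bounded domains with Neumann conditions --- its own discussion of unbounded domains shows conservation is delicate there. To be fair, the monotonicity you are forced to assume is genuinely needed --- it appears as a hypothesis in V\'azquez's concentration-comparison theorems --- and every datum the thesis feeds into this result (the box profile, the monotone power-law bumps) is radially non-increasing, so your restricted version would support all of the paper's applications. But as a proof of the theorem exactly as stated, the missing monotonicity is a substantive gap, not a technicality.
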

To use this theorem, we must convert our function $\beta(E)$ into something satisfying $\Phi(0) = 0$. This is easily
done with $\Phi(E) = \beta(E_{\mathrm{min}}) - \beta(E + E_{\mathrm{min}})$. Figure \ref{beta-to-phi} shows what
kind of function $\Phi$ is if $\beta$ is the function in Figure \ref{3-betas}.
\begin{figure}[h]
\includegraphics[scale=0.45]{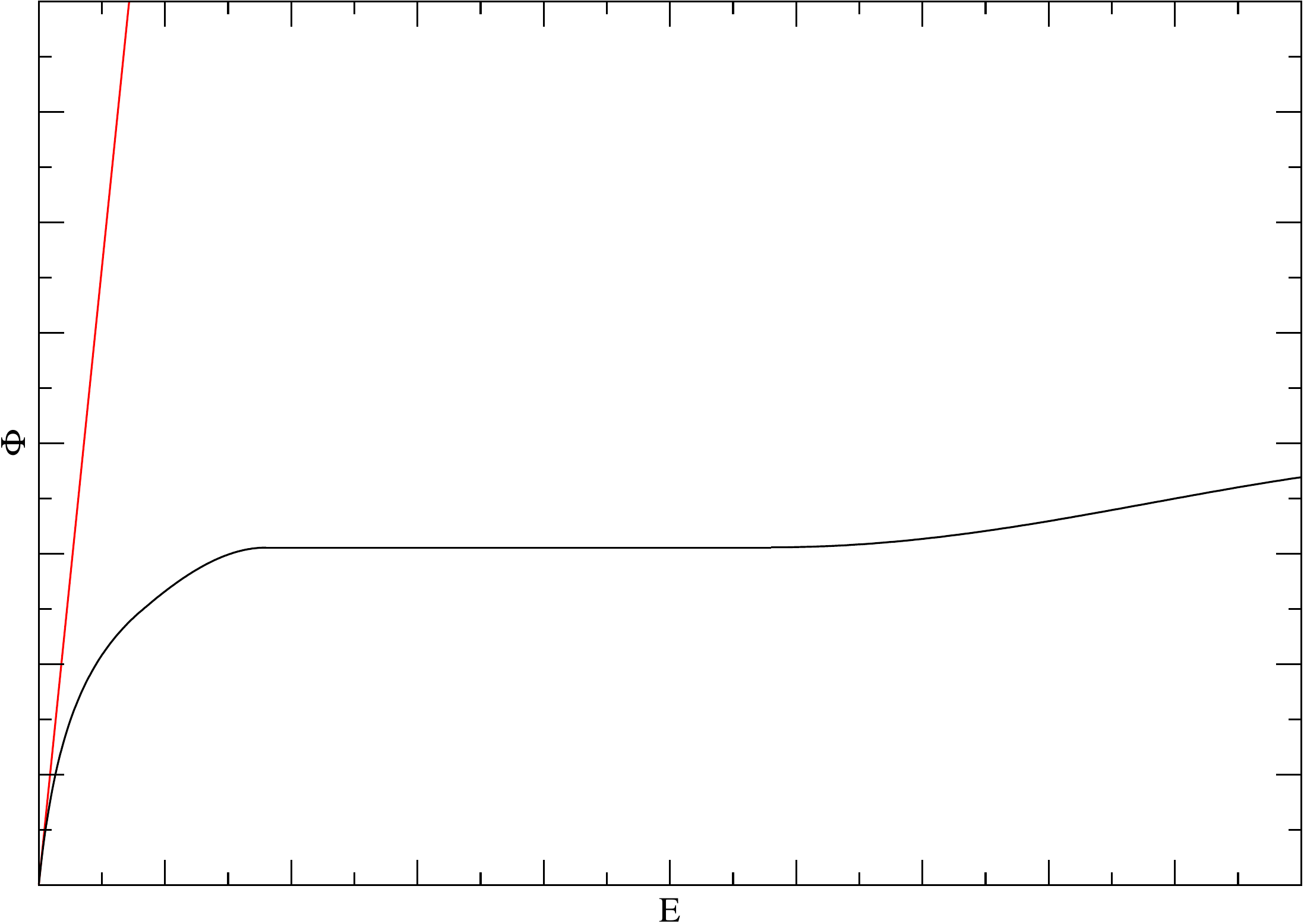}
\caption{We were able to show that our filtration equation had a well defined steady-state by comparing it to a heat
equation. With this linear function as our bounding function, an estimate for the decay time can also be obtained.
However, this would lead to a crude bound because the linear function above is not very similar to the true $\Phi$.}
\label{crude-bound}
\end{figure}
\begin{equation}
\frac{\partial}{\partial t} \left ( E + E_{\mathrm{min}} \right ) = -\Delta \beta \left ( E + E_{\mathrm{min}} \right ) \nonumber
\end{equation}
is now the same as saying
\begin{equation}
\frac{\partial E}{\partial t} = \Delta \Phi(E) \nonumber
\end{equation}
where $E_0$ is now some energy distribution equal to zero on $\partial \Omega$. If we can find an exact solution to
another filtration equation whose mass shrinks more quickly than that of $E$, we will have found a ``decay time'' that
is shorter than the one we are looking for. Similarly, a mass that shrinks more slowly would be associated with a longer
``decay time''. Time scales can therefore be determined if we choose bounding functions that are ``steeper''
than $\Phi$ or ``flatter'' than $\Phi$. The crudest thing we could do is find a lower bound on the decay time by
drawing a linear function above $\Phi$. The next logical step is to find proper estimates for the
decay time by comparing our filtration equation to something more non-trivial.

\subsubsection{Estimates in one dimension}
As stated before, we are considering a density of states that has two microcanonical phase transitions. We will call
the energy of the first one $E_{\mathrm{H}}$ for Hagedorn and the energy of the second one $E_{\mathrm{F}}$ for field
theory. A useful definition of decay time for us will be the time required for the maximum of an energy distribution
to descend from $E_{\mathrm{F}}$ to $E_{\mathrm{H}}$. Specifically, we will look at the initial condition
\begin{equation}
E_0(x) =
\begin{cases}
E_{\mathrm{F}} & |x_1|, \dots, |x_d| < a \\
0 & \mathrm{otherwise}
\end{cases}
\label{toy-condition}
\end{equation}
and see how long it takes until $E(x, t) < E_{\mathrm{H}}$ for all values of $x$. The steep vertical jump through the
Hagedorn regime is a feature that (\ref{toy-condition}) has in common with sharply peaked initial energy distributions.
The sharply peaked functions do not need to be constant at $E_{\mathrm{F}}$ but this will happen anyway once they are
allowed to evolve. The distributions in Figure \ref{non-square} will flatten out in a relatively short time because
diffusion dominates above $E_{\mathrm{F}}$ but not below. This suggests that the decay time for a general peak is
dominated by the decay time of (\ref{toy-condition}). Further evidence that (\ref{toy-condition}) is representative of
more general initial conditions will appear in the next chapter on numerics.
\begin{figure}[h]
\includegraphics[scale=0.45]{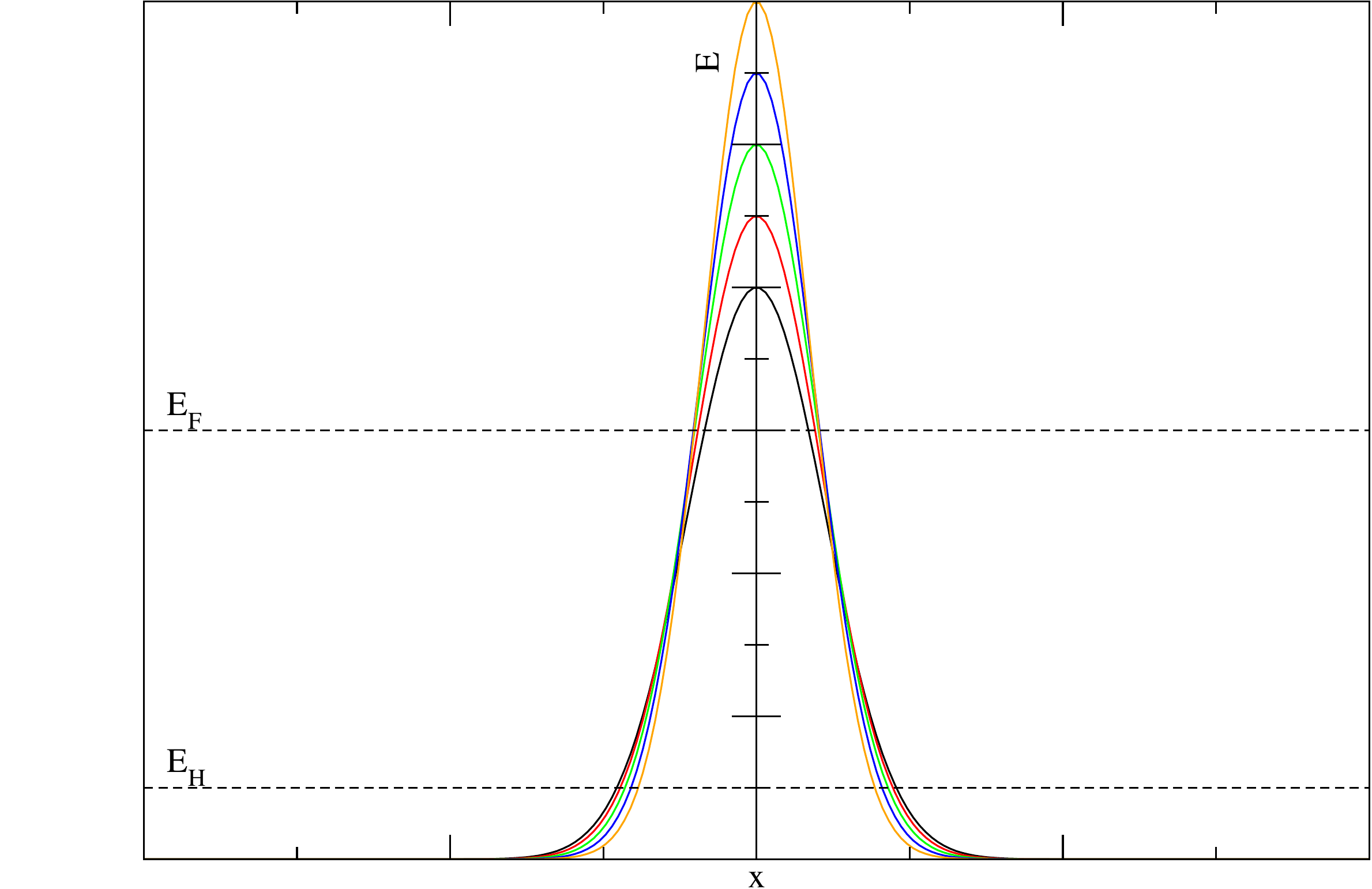}
\caption{Except for the piece on top that will quickly diffuse, these energy distributions are qualitatively similar
to the step function we are considering.}
\label{non-square}
\end{figure}
Finding the decay time of (\ref{toy-condition}) with respect to a general filtration function is still too hard, so we
will pick a simple filtration function with the goal of using the concentration comparison theorem. The one to pick is
\begin{equation}
\Phi(E) =
\begin{cases}
E & E < E_{\mathrm{H}} \\
E_{\mathrm{H}} & E \geq E_{\mathrm{H}}
\end{cases}
\; . \label{simplified-phi}
\end{equation}
If we take an energy profile solving $\frac{\partial E}{\partial t} = \Delta \Phi(E)$ and look at it at some instant
of time, the parts with energy above $E_{\mathrm{H}}$ should be static and the parts with energy below
$E_{\mathrm{H}}$ should be satisfying the heat equation. Energy crosses $E_{\mathrm{H}}$ at a particular distance
$x^*(t)$. Using the method of \cite{mvr}, we will construct a one-dimensional solution where $x^*$ starts off at $a$
and eventually shrinks to zero indicating that the decay time $T$ has been reached.
\begin{equation}
E(x, t) =
\begin{cases}
E_{\mathrm{F}} & |x| < x^*(t) \\
F(x, t) & |x| > x^*(t)
\end{cases}
\label{1d-solution-ansatz}
\end{equation}
where $F$ solves the regular heat equation. Clearly $F$ cannot be just any solution to the heat equation. To obey the
initial condition (\ref{toy-condition}), we must have $F(x, 0) = 0$ for $|x| > a$. Also, we must impose conservation
of energy. The mass contained between $0$ and $x^*(t)$ is $x^*(t) E_{\mathrm{F}}$. The mass everywhere else is
$\int_{x^*(t)}^{\infty} F(x, t) \textup{d}x$. We want their rates of change to be equal and opposite so
\begin{eqnarray}
E_{\mathrm{F}} \frac{\textup{d}x^*(t)}{\textup{d}t} &=& -\frac{\textup{d}}{\textup{d}t} \int_{x^*(t)}^{\infty} F(x, t) \textup{d}x \nonumber \\
&=& F(x^*(t), t) \frac{\textup{d}x^*(t)}{\textup{d}t} - \int_{x^*(t)}^{\infty} \frac{\partial F}{\partial t}(x, t) \textup{d}x \nonumber \\
&=& E_{\mathrm{H}} \frac{\textup{d}x^*(t)}{\textup{d}t} - \int_{x^*(t)}^{\infty} \Delta F(x, t) \textup{d}x \nonumber \\
&=& E_{\mathrm{H}} \frac{\textup{d}x^*(t)}{\textup{d}t} + \frac{\partial F}{\partial x} (x^*(t), t) \nonumber
\end{eqnarray}
where we have differentiated under the integral sign. Therefore
\begin{equation}
\left ( E_{\mathrm{F}} - E_{\mathrm{H}} \right ) \frac{\textup{d}x^*(t)}{\textup{d}t} = \frac{\partial F}{\partial x} (x^*(t), t) \label{1d-mass-matching}
\end{equation}
is a necessary condition for the ansatz (\ref{1d-solution-ansatz}) to work. It is also sufficient as we show in the
appendix. The initial condition for $F$ that will give us these necessary and sufficient conditions is
$F(x, 0) = A \Theta(x - a)$ where $A$ is a yet undetermined constant that will turn out to be between $E_{\mathrm{H}}$
and $2E_{\mathrm{H}}$. Figure \ref{gluing-solutions} shows the basic setup.
\begin{figure}[h]
\centering
\subfloat[][$t = 0$]{\includegraphics[scale=0.3]{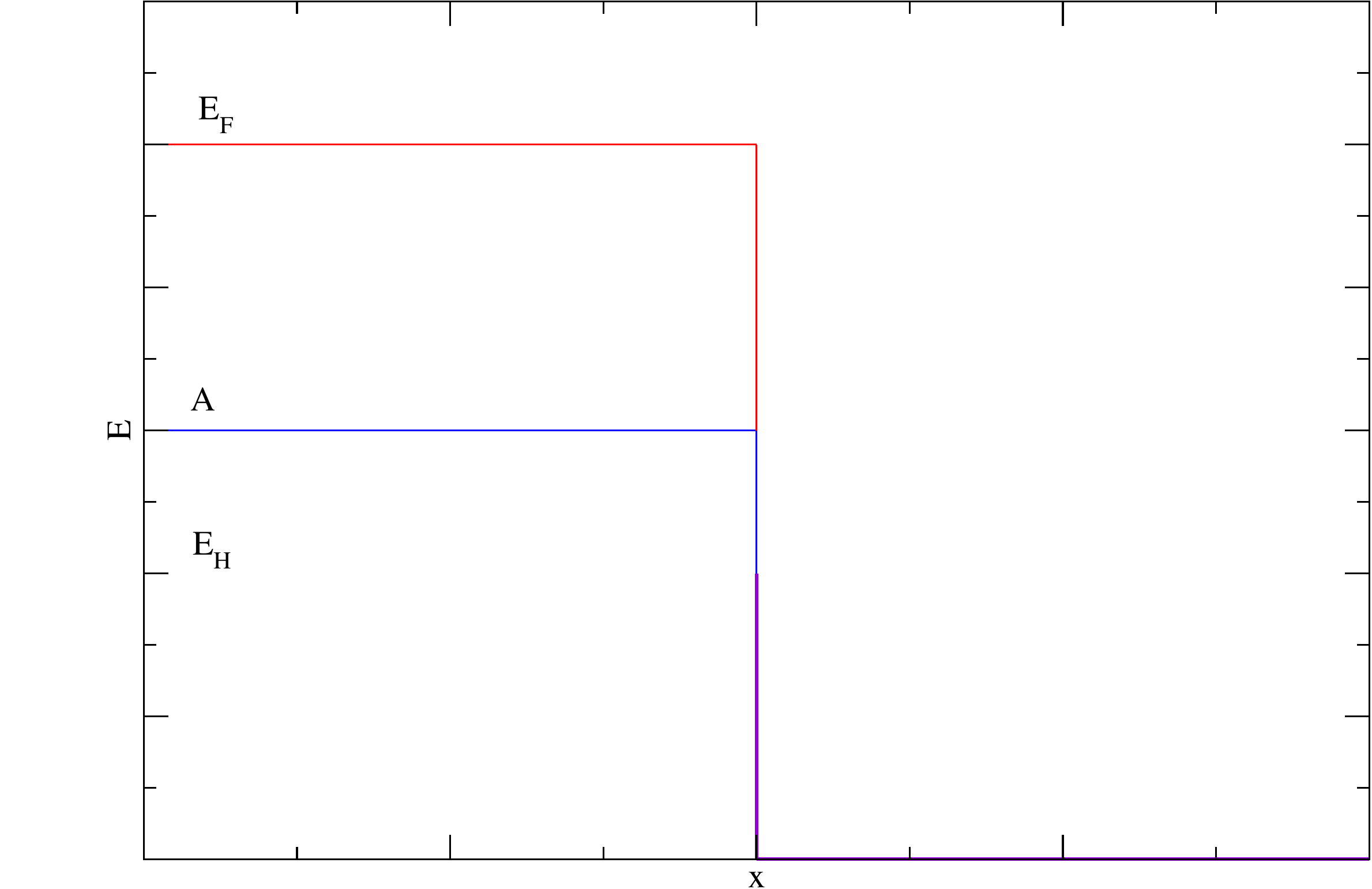}}
\subfloat[][Some later time]{\includegraphics[scale=0.3]{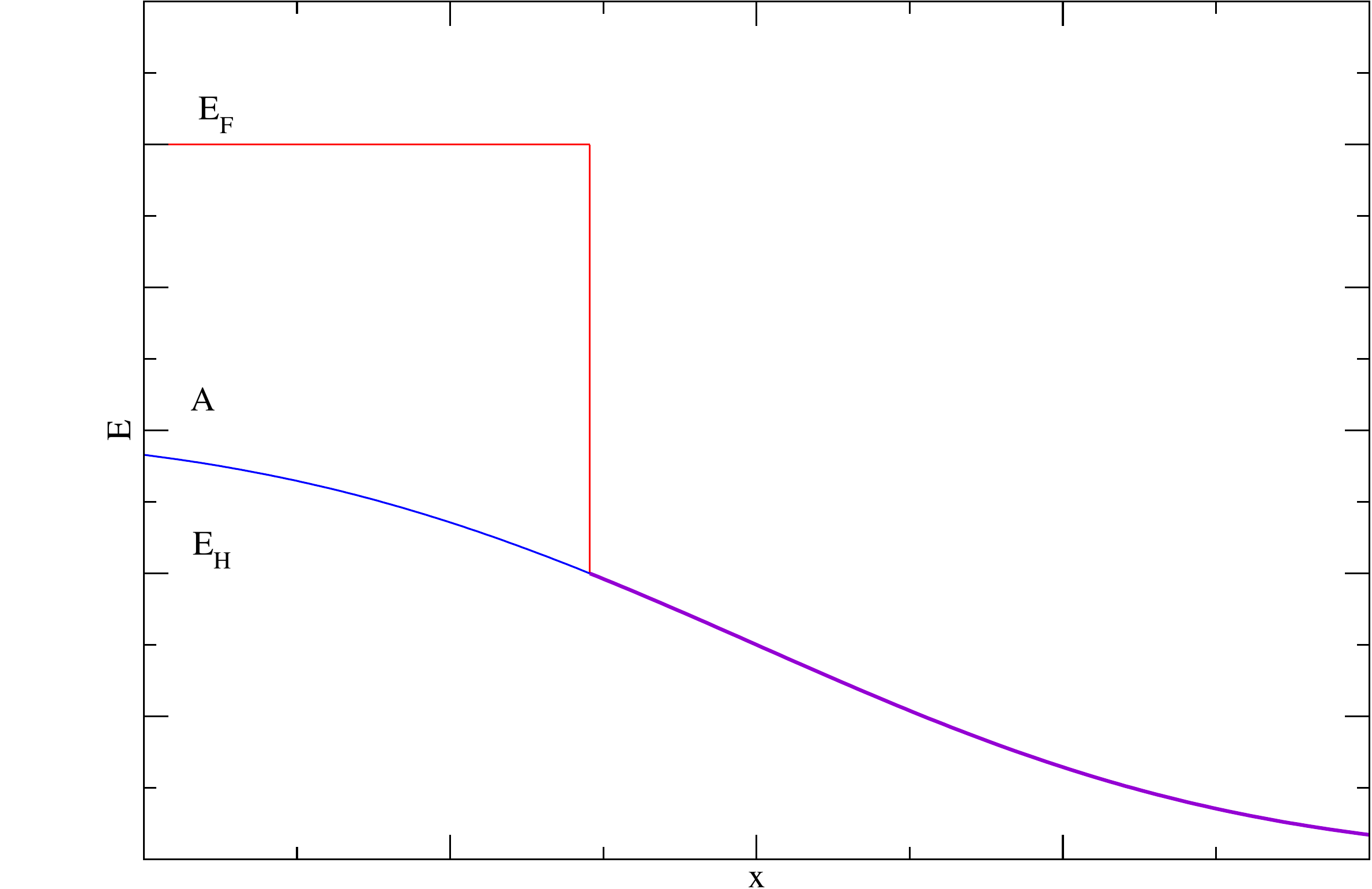}}
\caption{The union of the red and purple curves is what we are looking for. It is (\ref{1d-solution-ansatz}), the
solution to the (\ref{simplified-phi}) filtration equation for the (\ref{toy-condition}) Cauchy data. The solution to
the auxillary problem that we use to find it is the union of the blue and purple curves. This is $F$, the solution to
the heat equation for the Heaviside Cauchy data.}
\label{gluing-solutions}
\end{figure}
To solve for $F$, we first note that the heat kernel is
\begin{equation}
\Phi_{\mathrm{H}}(x,t) = \frac{1}{(4\pi t)^{\frac{d}{2}}} e^{-\frac{x^2}{4t}} \; . \nonumber
\end{equation}
The heat equation's Cauchy problem is solved by taking the convolution of the initial condition with the heat kernel.
Therefore
\begin{eqnarray}
F(x, t) &=& \frac{A}{\sqrt{4\pi t}} \int_{-\infty}^{\infty} \theta(y - a) e^{-\frac{(x - y)^2}{4t}} \textup{d}y \nonumber \\
&=& \frac{A}{\sqrt{4\pi t}} \int_{-\infty}^a e^{-\frac{(x - y)^2}{4t}} \textup{d}y \nonumber \\
&=& \frac{A}{2} \left ( 1 + \erf \left ( \frac{a - x}{2\sqrt{t}}\right ) \right ) \; . \label{1d-auxillary-solution}
\end{eqnarray}
Differentiating $F$ is now straightforward and the position of the interface can be found by setting
$F(x, t) = E_{\mathrm{H}}$. From this we obtain
\begin{eqnarray}
\frac{\partial F}{\partial x}(x, t) &=& -\frac{A}{\sqrt{4\pi t}} e^{-\frac{(x - a)^2}{4t}} \label{1d-f-derivative} \\
x^*(t) &=& a - 2\sqrt{t} \erf^{-1} \left ( \frac{2E_{\mathrm{H}}}{A} - 1 \right ) \label{1d-f-interface}
\end{eqnarray}
and the ratio between (\ref{1d-f-derivative}) at the interface and the derivative of (\ref{1d-f-interface}) is constant.
\begin{equation}
\frac{\frac{\partial F}{\partial x}(x^*(t), t)}{\frac{\textup{d}x^*(t)}{\textup{d}t}} = \frac{A}{\sqrt{4\pi}} \frac{e^{-\erf^{-1} \left ( \frac{2E_{\mathrm{H}}}{A} - 1 \right )^2}}{\erf^{-1} \left ( \frac{2E_{\mathrm{H}}}{A} - 1 \right )} \nonumber
\end{equation}
We must set this equal to $E_{\mathrm{F}} - E_{\mathrm{H}}$. The $A$ we obtain from doing so is given by
\begin{eqnarray}
A &=& \frac{2 E_{\mathrm{H}}}{1 + \erf(I)} \in \left [ E_{\mathrm{H}}, 2E_{\mathrm{H}} \right ] \label{a-value} \\
\sqrt{\pi}Ie^{I^2}(1 + \erf(I)) &=& \frac{E_{\mathrm{H}}}{E_{\mathrm{F}} - E_{\mathrm{H}}} \; . \label{transcendental-equation}
\end{eqnarray}
The transcendental equation defining $I$ can be solved because the range of the left hand side includes all positive
real numbers. The last thing we need to do is find a more explicit value of $A$ for the case when
$E_{\mathrm{F}} \gg E_{\mathrm{H}}$. If we were to consider $\sqrt{\pi}Ie^{I^2}(1 + \erf(I)) = 0$, the solution would
simply be $I = 0$. Since $E_{\mathrm{H}}$ is close to zero but not quite, it is appropriate to linearize the left hand
side of (\ref{transcendental-equation}) giving us $\sqrt{\pi}I = \frac{E_{\mathrm{H}}}{E_{\mathrm{F}} - E_{\mathrm{H}}}$.
Plugging this into (\ref{a-value}) gives
\begin{equation}
A \approx \frac{2 E_{\mathrm{H}}}{1 + \erf \left ( \frac{E_{\mathrm{H}} / \sqrt{\pi}}{E_{\mathrm{F}} - E_{\mathrm{H}}} \right )} \; . \nonumber
\end{equation}
Finally, the time scale can be found by setting $x^*(T) = 0$ or $F(0, T) = E_{\mathrm{H}}$.
\begin{eqnarray}
T &=& \frac{1}{4} \left ( \frac{a}{\erf^{-1} \left ( \frac{2E_{\mathrm{H}}}{A} - 1 \right )} \right )^2 \nonumber \\
&\approx& \frac{1}{4} \left ( \frac{a \left ( E_{\mathrm{F}} - E_{\mathrm{H}} \right )}{E_{\mathrm{H}} / \sqrt{\pi}} \right )^2 \nonumber \\
&\approx& \frac{\pi}{4} \left [ \frac{a E_{\mathrm{F}}}{E_{\mathrm{H}}} \right ]^2 \label{1d-timescale}
\end{eqnarray}

\subsubsection{Higher dimensional generalization}
The procedure above works in an arbitrary number of dimensions if the initial condition has energy $E_{\mathrm{F}}$
inside a hypercube of side length $2a$. We will take $F(x, 0) = A \theta(x_1 - a) \dots \theta(x_d - a)$, solve the
heat equation for $F$ with this Cauchy data and write
\begin{equation}
E(x, t) =
\begin{cases}
E_{\mathrm{F}} & x_1 < x_1^*(t), \dots, x_d < x_d^*(t) \\
F(x, t) & \mathrm{otherwise}
\end{cases}
\; . \label{nd-solution-ansatz}
\end{equation}
Performing the convolution with the heat kernel is straightforward because the solution has a product form.
\begin{eqnarray}
F(x, t) &=& \frac{A}{(4\pi t)^{\frac{d}{2}}} \int_{-\infty}^{\infty} \dots \int_{-\infty}^{\infty} \theta(y_1 - a) \dots \theta(y_d - a) e^{-\frac{(x - y)^2}{4t}} \textup{d}y_1 \dots \textup{d}y_d \nonumber \\
&=& \frac{A}{(4\pi t)^{\frac{d}{2}}} \int_{-\infty}^a \dots \int_{-\infty}^a e^{-\frac{(x_1 - y_1)^2 + \dots + (x_d - y_d)^2}{4t}} \textup{d}y_1 \dots \textup{d}y_d \nonumber \\
&=& \frac{A}{2^d} \left ( 1 + \erf \left ( \frac{a - x_1}{2\sqrt{t}}\right ) \right ) \dots \left ( 1 + \erf \left ( \frac{a - x_d}{2\sqrt{t}}\right ) \right ) \label{nd-auxillary-solution}
\end{eqnarray}
From this we see that the surface $x^*(t)$ where $F$ takes on the value $E_{\mathrm{H}}$ is the locus of points
\begin{equation}
x^*(t) = \left [
\begin{tabular}{c}
$a - 2 \sqrt{t} \erf^{-1} \left ( z_1 - 1 \right )$ \\
\vdots \\
$a - 2 \sqrt{t} \erf^{-1} \left ( z_d - 1 \right )$
\end{tabular}
\right ] \label{nd-f-interface}
\end{equation}
where $\prod_i z_i = 2^d \frac{E_{\mathrm{H}}}{A}$. Differentiating $F$ and evaluating it on this surface, we get
\begin{equation}
\frac{\partial F}{\partial x_i}(x^*(t), t) = \frac{2E_{\mathrm{H}}}{z_i} \frac{1}{\sqrt{4\pi}} e^{-\erf^{-1} (z_i - 1)^2} \; . \label{nd-f-derivative}
\end{equation}
The analogue of (\ref{1d-mass-matching}) to the higher dimensional case is
\begin{equation}
\left ( E_{\mathrm{F}} - E_{\mathrm{H}} \right ) \frac{\textup{d}V^*(t)}{\textup{d}t} = \int_{x^*(t)} \nabla F(x, t) \cdot n \textup{d}S_x \label{nd-mass-matching}
\end{equation}
where $V^*(t)$ is $a^d$ minus the volume enclosed by the surface (\ref{nd-f-interface}) in the positive orthant. This
can be visualized in Figure \ref{positive-orthant}. Clearly $V^*(0) = 0$. If we want to calculate the volume
$V^*(t)$, we are looking for $x_i$ values that go from the $x_i^*(t)$ curve to $a$ instead of from
$0$ to the $x_i^*(t)$ curve. This is the same as saying that
\begin{eqnarray}
z_1 &\in& \left [ 1, 2^d \frac{E_{\mathrm{H}}}{A} \right ] \nonumber \\
z_2 &\in& \left [ 1, \frac{2^d}{z_1} \frac{E_{\mathrm{H}}}{A} \right ] \nonumber \\
&\vdots& \nonumber \\
z_d &\in& \left [ 1, \frac{2^d}{z_1 \dots z_{d-1}} \frac{E_{\mathrm{H}}}{A} \right ] \; . \nonumber
\end{eqnarray}
\begin{figure}[h]
\centering
\subfloat[][Early time]{\includegraphics[scale=0.4]{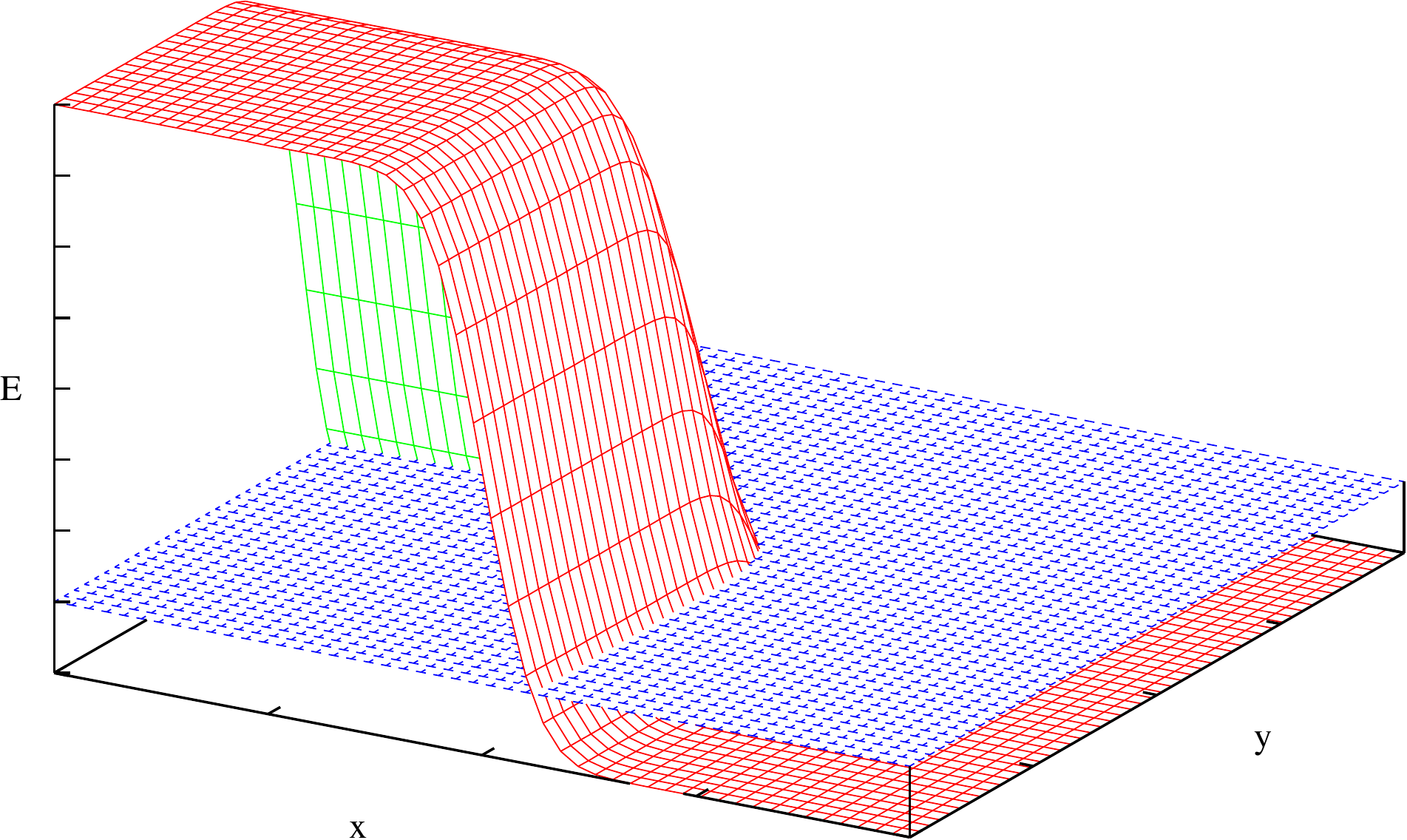}}
\subfloat[][Later time]{\includegraphics[scale=0.4]{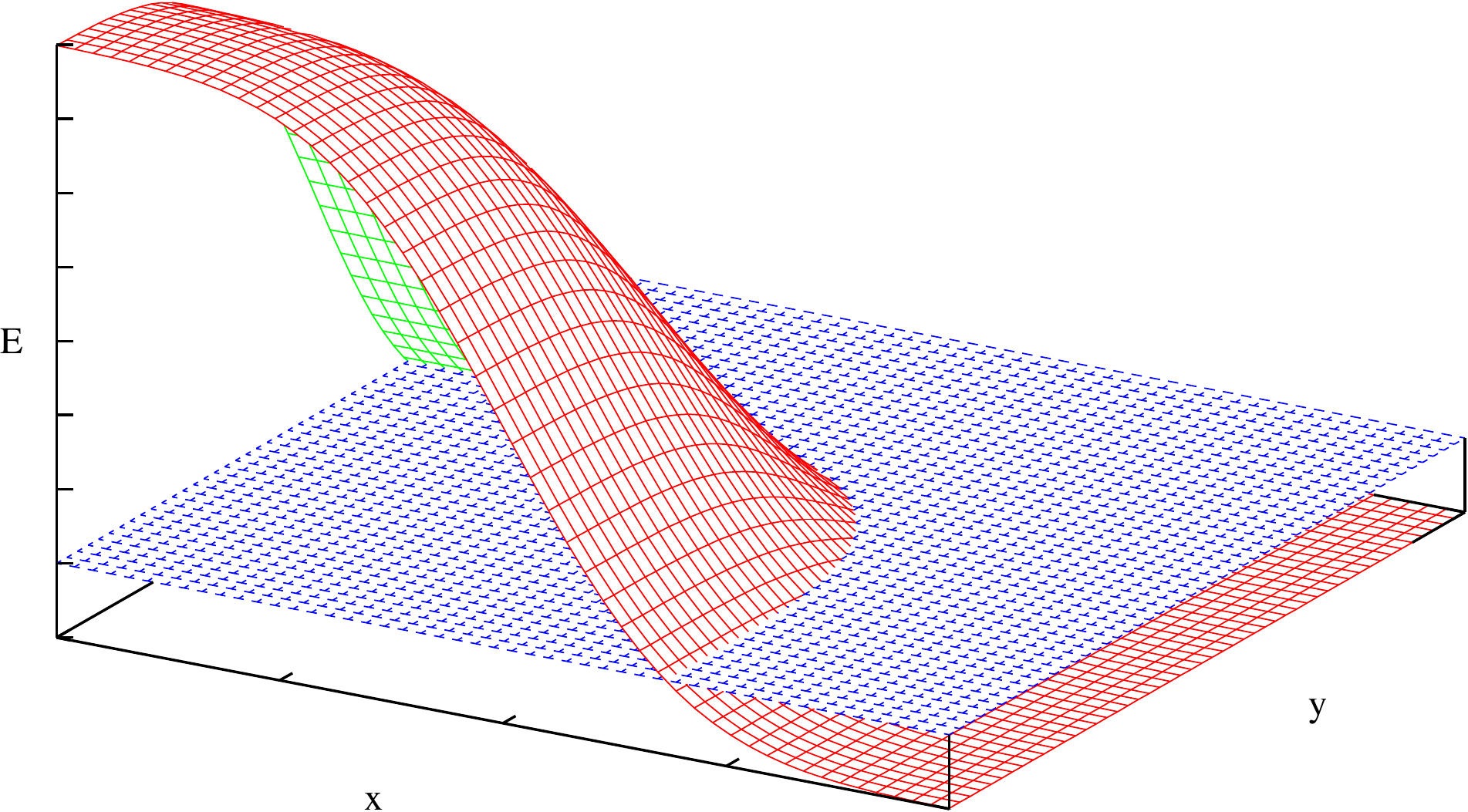}}
\caption{The curve where the blue sheet intersects the evolving surface is $x^*(t)$. The area of the blue sheet between
the surface and the co-ordinate planes is $a^2 - V^*(t)$. We are instead calculating the area $V^*(t)$ which has an
equal and opposite rate of change.}
\label{positive-orthant}
\end{figure}
Each $x_i$ only depends on one $z_i$ so it is trivial to find the Jacobian determinant.
\begin{eqnarray}
\frac{\textup{d}x_i}{\textup{d}z_i} &=& -\sqrt{\pi t} e^{\erf^{-1}(z_i - 1)^2} \nonumber \\
\left | \frac{\textup{d}x_1}{\textup{d}z_1} \dots \frac{\textup{d}x_d}{\textup{d}z_d} \right | &=& (\pi t)^{\frac{d}{2}} \prod_{i=1}^d e^{\erf^{-1}(z_i - 1)^2} \nonumber
\end{eqnarray}
Also, by multiplying all but one of these eigenvalues together, we can find the area element required to calculate the
surface integral in (\ref{nd-mass-matching}). An expression for the volume is
\begin{eqnarray}
V^*(t) &=& \int_1^{2^d \frac{E_{\mathrm{H}}}{A}} \dots \int_1^{\frac{2^d}{z_1 \dots z_{d - 1}} \frac{E_{\mathrm{H}}}{A}} \left | \frac{\textup{d}x_1}{\textup{d}z_1} \dots \frac{\textup{d}x_d}{\textup{d}z_d} \right | \textup{d}z_d \dots \textup{d}z_1 \nonumber \\
&=& (\pi t)^{\frac{d}{2}} \int_1^{2^d \frac{E_{\mathrm{H}}}{A}} \dots \int_1^{\frac{2^d}{z_1 \dots z_{d - 1}} \frac{E_{\mathrm{H}}}{A}} \prod_{i=1}^d e^{\erf^{-1}(z_i - 1)^2} \textup{d}z_d \dots \textup{d}z_1 \; . \label{nd-denominator}
\end{eqnarray}
We only need to know the length of $\nabla F$ to compute the surface integral. Working this out,
\begin{eqnarray}
\int_{x^*(t)} \nabla F(x, t) \cdot n \textup{d}S_x &=& \int_1^{2^d \frac{E_{\mathrm{H}}}{A}} \dots \int_1^{\frac{2^d}{z_1 \dots z_{d - 2}} \frac{E_{\mathrm{H}}}{A}} \left | \nabla F(x^*(t), t) \right | \left | \frac{\textup{d}x_1}{\textup{d}z_1} \dots \frac{\textup{d}x_{d-1}}{\textup{d}z_{d-1}} \right | \textup{d}z_{d-1} \dots \textup{d}z_1 \nonumber \\
&=& \frac{1}{\sqrt{4\pi t}} (\pi t)^{\frac{d-1}{2}} \int_1^{2^d \frac{E_{\mathrm{H}}}{A}} \dots \int_1^{\frac{2^d}{z_1 \dots z_{d - 1}} \frac{E_{\mathrm{H}}}{A}} 2E_{\mathrm{H}} \sqrt{\sum_{i=1}^d \frac{1}{z_i^2} e^{-2 \erf^{-1}(z_i - 1)^2}} \nonumber \\
&& \prod_{i=1}^{d-1} e^{\erf^{-1}(z_i - 1)^2} \textup{d}z_{d-1} \dots \textup{d}z_1 \label{nd-numerator}
\end{eqnarray}
where we have arbitrarily chosen $z_d$ as the variable to be determined from $z_1, \dots, z_{d-1}$.
The fact that (\ref{nd-denominator}) is proportional to one more factor of $t$ than (\ref{nd-numerator}) is what allows
(\ref{nd-mass-matching}) to be obeyed. There is a problem with these expressions however that only rears its head when
$d > 1$; not all points on the surface $x^*(t)$ reach the co-ordinate planes at the same time. In two dimensions
for instance, the endpoint
\begin{equation}
\left [
\begin{tabular}{c}
$a - 2 \sqrt{t} \erf^{-1} \left ( \frac{4E_{\mathrm{H}}}{A} - 1 \right )$ \\
$a$
\end{tabular}
\right ] \nonumber
\end{equation}
will reach $(0, a)$ before the midpoint
\begin{equation}
\left [
\begin{tabular}{c}
$a - 2 \sqrt{t} \erf^{-1} \left ( \sqrt{\frac{4E_{\mathrm{H}}}{A}} - 1 \right )$ \\
$a - 2 \sqrt{t} \erf^{-1} \left ( \sqrt{\frac{4E_{\mathrm{H}}}{A}} - 1 \right )$
\end{tabular}
\right ] \nonumber
\end{equation}
reaches $(0, 0)$. In other words, for late times, the volume (\ref{nd-denominator}) includes some regions outside the
positive orthant. It is not correct to integrate $x_i$ values from $a$ to the $x^*(t)$ curve. We must integrate from
$a$ to $\max \left ( x^*(t), 0 \right )$. The decay time obtained by setting (\ref{nd-auxillary-solution}) to
$E_{\mathrm{H}}$ is
\begin{equation}
T = \frac{1}{4} \left [ \frac{a}{\erf^{-1} \left ( 2 \left ( \frac{E_{\mathrm{H}}}{A} \right )^{\frac{1}{d}} - 1 \right )} \right ]^2 \; . \nonumber
\end{equation}
The expression (\ref{nd-denominator}) is most accurate when $T$ is large, \textit{i.e.} when
$A \approx 2^d E_{\mathrm{H}}$. If this is true, we can obtain a reasonable approximation to this complicated integral
by setting $V^*(T) \approx a^d$. Doing this, we see that
\begin{equation}
\int_1^{2^d \frac{E_{\mathrm{H}}}{A}} \dots \int_1^{\frac{2^d}{z_1 \dots z_{d - 1}} \frac{E_{\mathrm{H}}}{A}} \prod_{i=1}^d e^{\erf^{-1}(z_i - 1)^2} \textup{d}z_d \dots \textup{d}z_1 \approx \left ( \frac{2}{\sqrt{\pi}} \right )^d \erf^{-1} \left ( 2 \left ( \frac{E_{\mathrm{H}}}{A} \right )^{\frac{1}{d}} - 1 \right )^d \; . \nonumber
\end{equation}
With this, $E_{\mathrm{F}} - E_{\mathrm{H}}$ becomes much easier to express.
\begin{eqnarray}
E_{\mathrm{F}} - E_{\mathrm{H}} &=& \frac{\int_1^{2^d \frac{E_{\mathrm{H}}}{A}} \dots \int_1^{\frac{2^d}{z_1 \dots z_{d - 2}} \frac{E_{\mathrm{H}}}{A}} 2E_{\mathrm{H}} \sqrt{\sum_{i=1}^d \frac{1}{z_i^2} e^{-2 \erf^{-1}(z_i - 1)^2}} \prod_{i=1}^{d-1} e^{\erf^{-1}(z_i - 1)^2} \textup{d}z_{d-1} \dots \textup{d}z_1}{\pi d \int_1^{2^d \frac{E_{\mathrm{H}}}{A}} \dots \int_1^{\frac{2^d}{z_1 \dots z_{d - 1}} \frac{E_{\mathrm{H}}}{A}} \prod_{i=1}^d e^{\erf^{-1}(z_i - 1)^2} \textup{d}z_d \dots \textup{d}z_1} \nonumber \\
&\approx& \frac{2E_{\mathrm{H}} \sqrt{d} \left ( \frac{2}{\sqrt{\pi}} \right )^{d-1} \erf^{-1} \left ( 2 \left ( \frac{E_{\mathrm{H}}}{A} \right )^{\frac{1}{d-1}} - 1 \right )^{d-1}}{\pi d \left ( \frac{2}{\sqrt{\pi}} \right )^2 \erf^{-1} \left ( 2 \left ( \frac{E_{\mathrm{H}}}{A} \right )^{\frac{1}{d}} - 1 \right )^d} \nonumber \\
&\approx& \frac{E_{\mathrm{H}}}{\sqrt{\pi d} \erf^{-1} \left ( 2 \left ( \frac{E_{\mathrm{H}}}{A} \right )^{\frac{1}{d}} - 1 \right )} \left [ \lim_{\frac{2^d E_{\mathrm{H}}}{A} \rightarrow 1} \frac{\erf^{-1} \left ( 2 \left ( \frac{E_{\mathrm{H}}}{A} \right )^{\frac{1}{d-1}} - 1 \right )}{\erf^{-1} \left ( 2 \left ( \frac{E_{\mathrm{H}}}{A} \right )^{\frac{1}{d}} - 1 \right )} \right ]^{d - 1} \nonumber \\
&=& \frac{E_{\mathrm{H}}}{\sqrt{\pi d} \erf^{-1} \left ( 2 \left ( \frac{E_{\mathrm{H}}}{A} \right )^{\frac{1}{d}} - 1 \right )} \left ( \frac{d}{d - 1} \right )^{d - 1} \nonumber
\end{eqnarray}
In the first step, we have used the fact that
$\sqrt{\sum_{i=1}^d \frac{1}{z_i^2} e^{-2 \erf^{-1}(z_i - 1)^2}} \approx \sqrt{d}$
because $z_1, \dots, z_d$ are very close to $1$ while in the last step, we have used l'H\^{o}pital's rule. Linearizing
the inverse error function in the denominator and solving for $A$, we see that
\begin{equation}
\frac{2^d E_{\mathrm{H}}}{A} = \left [ 1 + \frac{E_{\mathrm{H}}}{E_{\mathrm{F}} - E_{\mathrm{H}}} \frac{1}{\sqrt{\pi d}} \left ( \frac{d}{d - 1} \right )^{d - 1} \right ] \approx 1 \; , \nonumber
\end{equation}
which is precisely the condition we needed in the first place to be able to say that $T$ was large and use this
approximation. Therefore when $E_{\mathrm{F}} \gg E_{\mathrm{H}}$, the time scale $T$ in $d$ dimensions becomes
\begin{eqnarray}
T &=& \frac{\pi d}{4} \left [ a \frac{E_{\mathrm{F}} - E_{\mathrm{H}}}{E_{\mathrm{H}}} \left ( \frac{d - 1}{d} \right )^{d - 1} \right ]^2 \nonumber \\
&\approx& \frac{\pi d}{4} \left [ \frac{a E_{\mathrm{F}}}{E_{\mathrm{H}}} \left ( \frac{d - 1}{d} \right )^{d - 1} \right ]^2 \; . \label{nd-timescale}
\end{eqnarray}

\subsubsection{Upper and lower bounds}
It is time to leverage these results to  improve the decay time estimates for the filtration function in Figure 
\ref{crude-bound}. To achieve the tightest possible bounds, we will make sure that our more-steep and less-steep
functions flatten out at $E_{\mathrm{H}}$. Figure \ref{bounding-functions1} shows what this
looks like. The function on the left $\Phi_1$ has a slope $\Phi_1^{\prime}(0)$ equal to $\Phi^{\prime}(0)$.
The function on the left $\Phi_2$ has a slope $\Phi_2^{\prime}(0)$ equal to
$\lim_{E \rightarrow E_{\mathrm{H}}^{-}} \Phi^{\prime}(E)$. When these slopes are
determined, the concentration comparison theorem tells us that our time scale will satisfy
\begin{equation}
\frac{\pi d}{4 \Phi_1^{\prime}(0)} \left [ \frac{a E_{\mathrm{F}}}{E_{\mathrm{H}}} \left ( \frac{d - 1}{d} \right )^{d - 1} \right ]^2 \leq T \leq \frac{\pi d}{4 \Phi_2^{\prime}(0)} \left [ \frac{a E_{\mathrm{F}}}{E_{\mathrm{H}}} \left ( \frac{d - 1}{d} \right )^{d - 1} \right ]^2 \; . \label{prototype-timescale}
\end{equation}
\begin{figure}[h]
\centering
\subfloat[][Comparing $\Phi$ to $\Phi_1$]{\includegraphics[scale=0.35]{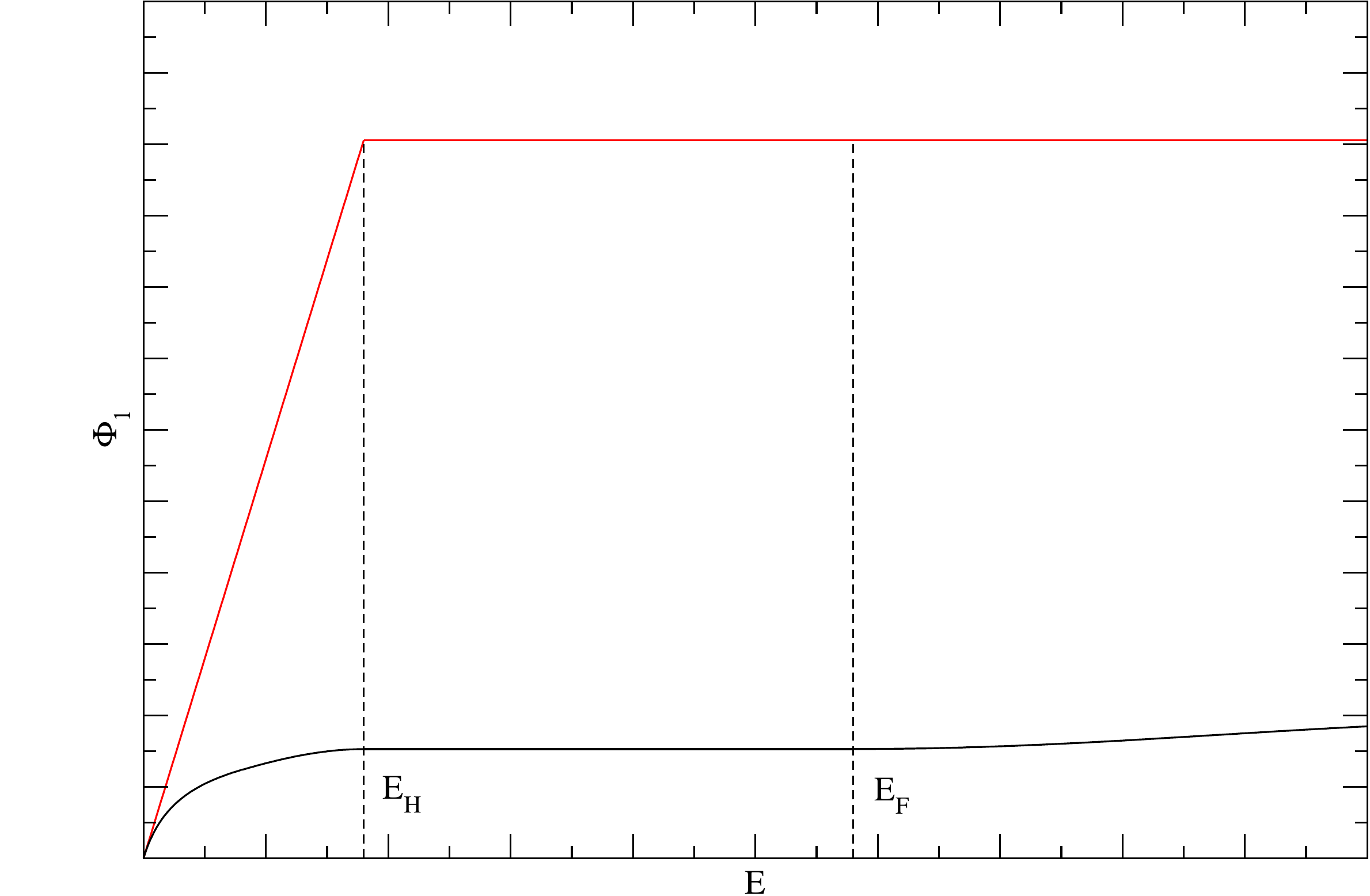}}
\subfloat[][Comparing $\Phi$ to $\Phi_2$]{\includegraphics[scale=0.35]{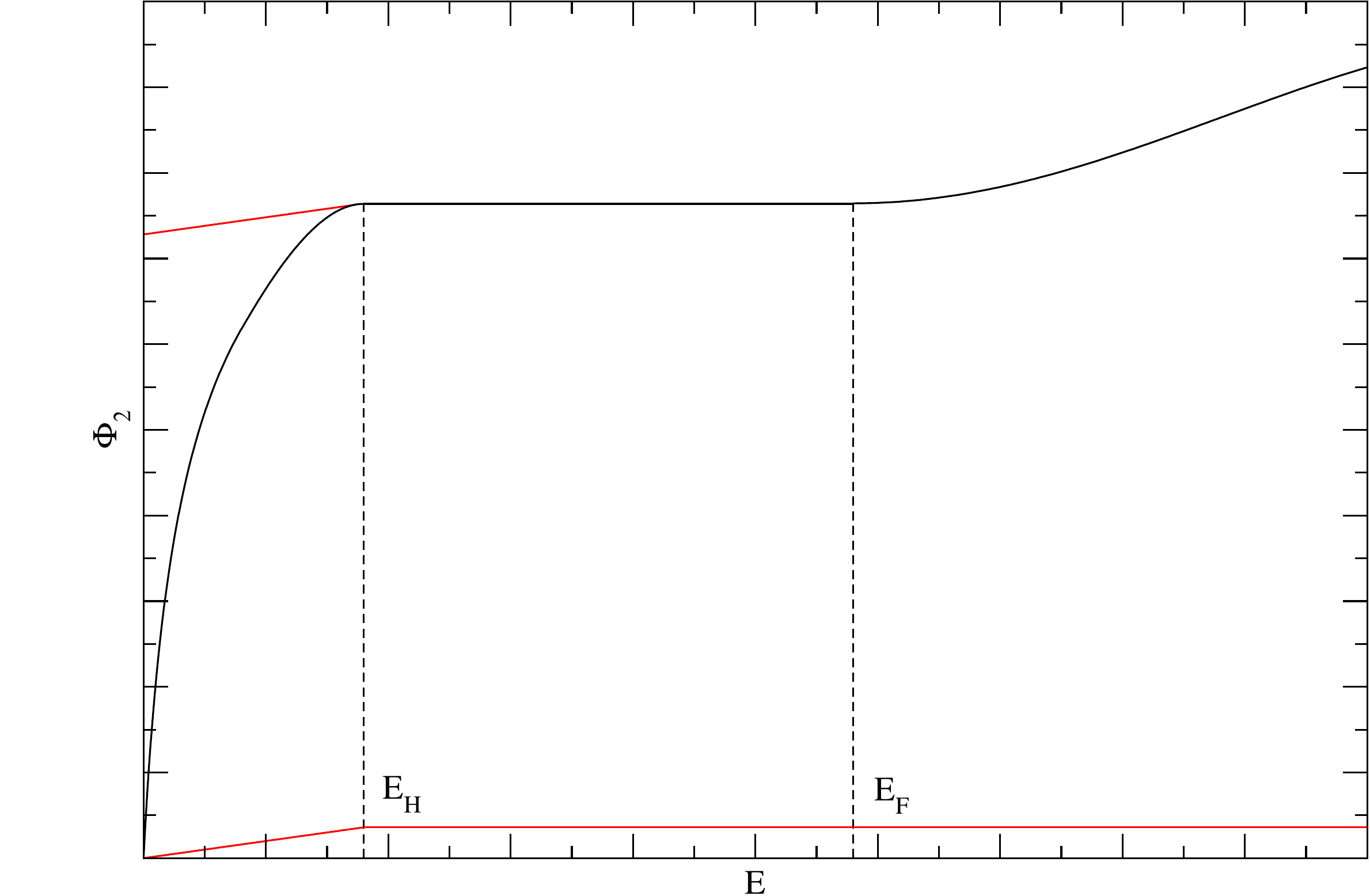}}
\caption{The piecewise-linear bounding functions must relate to our filtration function in this way. On the left,
$\Phi_1$ could cross over to Hagedorn behaviour at some energy $E_1 \geq E_{\mathrm{H}}$ and on the right, $\Phi_2$
could cross over to Hagedorn behaviour at some energy $E_2 \leq E_{\mathrm{H}}$. The tightest possible time scale
bounds are achieved when $E_1 = E_{\mathrm{H}} = E_2$.}
\label{bounding-functions1}
\end{figure}
This is the main part of the text that will use a specific formula for $\beta(E)$, so first consider the one that is
valid when $C(E) = \rho^{-2}(E)$. Recall that this is
$\beta(E) = cE^{\alpha - 1} = \beta \left ( E_{\mathrm{min}} \right )E_{\mathrm{min}}^{1 - \alpha} E^{\alpha - 1}$
in the lowest energy regime. The slopes that we must insert are simple:
\begin{eqnarray}
\Phi_1^{\prime}(0) &=& -\beta^{\prime} \left ( E_{\mathrm{min}} \right ) = (1 - \alpha)\frac{\beta \left ( E_{\mathrm{min}} \right )}{E_{\mathrm{min}}} \label{slope-1} \\
\Phi_2^{\prime}(0) &=& -\beta^{\prime} \left ( E_{\mathrm{H}} \right ) = (1 - \alpha) \beta \left ( E_{\mathrm{min}} \right ) \frac{E_{\mathrm{H}}^{\alpha - 2}}{E_{\mathrm{min}}^{\alpha - 1}} \; . \label{slope-2}
\end{eqnarray}
Plugging (\ref{slope-1}) and (\ref{slope-2}) into (\ref{prototype-timescale}), this time scale is now solved:
\begin{equation}
\frac{\pi d}{4(1 - \alpha) \beta \left ( E_{\mathrm{min}} \right )} \frac{E_{\mathrm{min}}}{E_{\mathrm{H}}^2} \left [ a E_{\mathrm{F}} \left ( \frac{d - 1}{d} \right )^{d - 1} \right ]^2 \leq T \leq \frac{\pi d}{4(1 - \alpha) \beta \left ( E_{\mathrm{min}} \right )} \frac{E_{\mathrm{min}}^{\alpha - 1}}{E_{\mathrm{H}}^{\alpha}} \left [ a E_{\mathrm{F}} \left ( \frac{d - 1}{d} \right )^{d - 1} \right ]^2 \; . \label{timescale-1}
\end{equation}
This dependence on $\left ( a E_{\mathrm{F}} \right )^2$ is a general feature of long Hagedorn regions as we will see
in our next time scale.
Recall that when we had $C(E) = 1$, the effective inverse temperature function to consider $\tilde{\beta}$ was
defined by
\begin{eqnarray}
\tilde{\beta}^{\prime}(E) &=& \rho^2(E) \beta^{\prime}(E) \; . \nonumber
\end{eqnarray}
We will construct a filtration function from this in the same way as before.
$\tilde{\Phi}(E) = \tilde{\beta} \left ( E_{\mathrm{min}} \right ) - \tilde{\beta} \left ( E + E_{\mathrm{min}} \right )$.
In the low energy regime, we have
\begin{eqnarray}
\tilde{\Phi}^{\prime}(E) &=& -\rho^2 \left ( E + E_{\mathrm{min}} \right ) \beta^{\prime} \left ( E + E_{\mathrm{min}} \right ) = \rho^2 \left ( E + E_{\mathrm{min}} \right ) \Phi^{\prime}(E) \nonumber \\
&=& (1 - \alpha) \beta \left ( E_{\mathrm{min}} \right ) e^{\frac{2}{\alpha} \frac{\beta \left ( E_{\mathrm{min}} \right ) \left ( E + E_{\mathrm{min}} \right )^{\alpha}}{E_{\mathrm{min}}^{\alpha - 1}}} \frac{\left ( E + E_{\mathrm{min}} \right )^{\alpha - 2}}{E_{\mathrm{min}}^{\alpha - 1}} \; . \nonumber
\end{eqnarray}
The first thing to notice is that $E_{\mathrm{H}}$ is the same for $\Phi$ and $\tilde{\Phi}$. One derivative vanishes
if and only if the other does. Also, at an energy of $E_{\mathrm{min}}$, the number of states is very close to $1$
so $\tilde{\Phi}^{\prime}(0) = \Phi^{\prime}(0)$. Finding the slopes of appropriate bounding functions
$\tilde{\Phi}_1$ and $\tilde{\Phi}_2$ comes down to extremizing $\tilde{\Phi}^{\prime}$.
\begin{figure}[h]
\centering
\subfloat[][$E_{\mathrm{H}} < E^*$]{\includegraphics[scale=0.35]{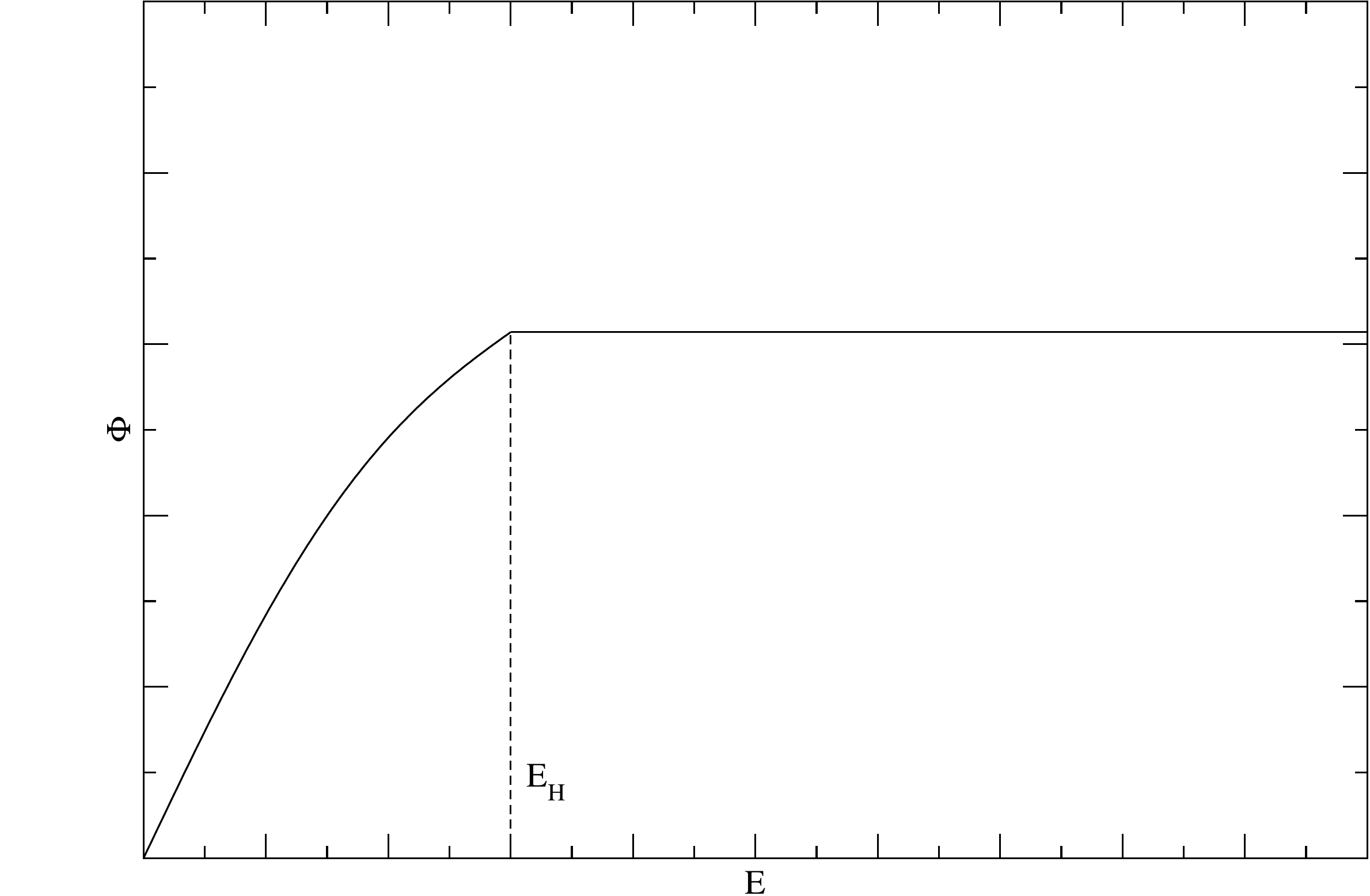}}
\subfloat[][$E_{\mathrm{H}} > E^*$]{\includegraphics[scale=0.35]{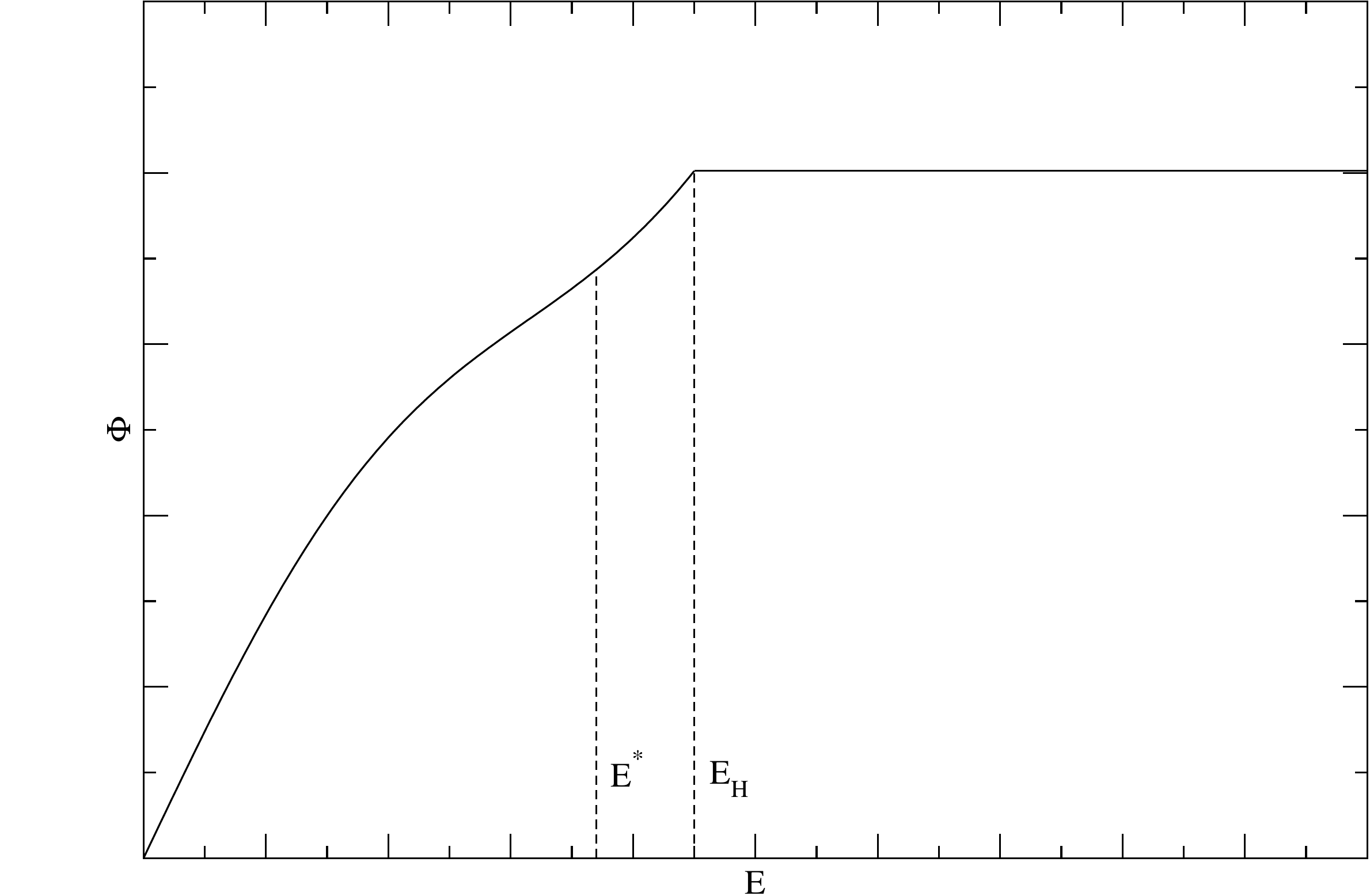}}
\caption{The upper bound for the decay time, given by $\tilde{\Phi}_2$ depends on which of the above cases is
realized. The figure on the left would lead to a situation much like the one for $\Phi_2$ that we saw before.
However, on the right, the Hagedorn energy is large enough to make $\tilde{\Phi}$ change concavity before reaching
the Hagedorn phase.}
\label{inflection-point}
\end{figure}
\begin{eqnarray}
\tilde{\Phi}^{\prime\prime}(E^*) &=& 0 \nonumber \\
2 \frac{\beta \left ( E_{\mathrm{min}} \right )}{E_{\mathrm{min}}^{\alpha - 1}} (1 - \alpha) E_{\mathrm{min}}^{\alpha - 1} \beta \left ( E_{\mathrm{min}} \right ) \left ( E^* + E_{\mathrm{min}} \right )^{2\alpha - 3} &=& (2 - \alpha) (1 - \alpha) E_{\mathrm{min}}^{\alpha - 1} \beta \left ( E_{\mathrm{min}} \right ) \left ( E^* + E_{\mathrm{min}} \right )^{\alpha - 3} \nonumber \\
E^* &=& \left [ \frac{2 - \alpha}{2 \beta \left ( E_{\mathrm{min}} \right ) E_{\mathrm{min}}^{1 - \alpha}} \right ]^{\frac{1}{\alpha}} - E_{\mathrm{min}} \nonumber
\end{eqnarray}
Therefore $\tilde{\Phi}$ starts off with a steep derivative at $0$ and becomes less steep until reaching the
inflection point $E^*$. This means that there are two important cases to consider: $E_{\mathrm{H}} < E^*$ and
$E_{\mathrm{H}} > E^*$. The flattest point occurs at $E_{\mathrm{H}}$ in the former case and at $E^*$ in the latter.
\begin{eqnarray}
\tilde{\Phi}_1^{\prime}(0) &=& (1 - \alpha)\frac{\beta \left ( E_{\mathrm{min}} \right )}{E_{\mathrm{min}}} \label{slope-3} \\
\tilde{\Phi}_2^{\prime}(0) &=&
\begin{cases}
(1 - \alpha) \beta \left ( E_{\mathrm{min}} \right ) \frac{E_{\mathrm{H}}^{\alpha - 2}}{E_{\mathrm{min}}^{\alpha - 1}} e^{\frac{2}{\alpha} \frac{\beta \left ( E_{\mathrm{min}} \right ) E_{\mathrm{H}}^{\alpha}}{E_{\mathrm{min}}^{\alpha - 1}}} & E_{\mathrm{H}} < E^* \\
(1 - \alpha) \left ( \frac{2 - \alpha}{2} \right )^{\alpha - \frac{2}{\alpha}} \beta \left ( E_{\mathrm{min}} \right )^{\frac{2}{\alpha}} E_{\mathrm{min}}^{\frac{2}{\alpha} - 2} e^{\frac{2}{\alpha} - 1} & E_{\mathrm{H}} > E^*
\end{cases} \label{slope-4}
\end{eqnarray}
Plugging (\ref{slope-3}) and (\ref{slope-4}) into (\ref{prototype-timescale}), the next time scale is solved as well:
\begin{equation}
\frac{\pi d}{4(1 - \alpha) \beta \left ( E_{\mathrm{min}} \right )} \frac{E_{\mathrm{min}}}{E_{\mathrm{H}}^2} \left [ a E_{\mathrm{F}} \left ( \frac{d - 1}{d} \right )^{d - 1} \right ]^2 \leq T \leq C_T \left ( E_{\mathrm{min}}, E_{\mathrm{H}} \right ) \left [ a E_{\mathrm{F}} \left ( \frac{d - 1}{d} \right )^{d - 1} \right ]^2 \label{timescale-2}
\end{equation}
where
\begin{equation}
C_T \left ( E_{\mathrm{min}}, E_{\mathrm{H}} \right ) = \begin{cases}
\frac{\pi d}{4(1 - \alpha) \beta \left ( E_{\mathrm{min}} \right )} \frac{E_{\mathrm{min}}^{\alpha - 1}}{E_{\mathrm{H}}^{\alpha}} e^{-\frac{2}{\alpha} \frac{\beta \left ( E_{\mathrm{min}} \right ) E_{\mathrm{H}}^{\alpha}}{E_{\mathrm{min}}^{\alpha - 1}}} & E_{\mathrm{H}} < \left [ \frac{2 - \alpha}{2 \beta \left ( E_{\mathrm{min}} \right ) E_{\mathrm{min}}^{1 - \alpha}} \right ]^{\frac{1}{\alpha}} \\
\frac{\pi d}{4(1 - \alpha)} \left ( \frac{2 - \alpha}{2} \right )^{\frac{2}{\alpha} - \alpha} \beta \left ( E_{\mathrm{min}} \right )^{-\frac{2}{\alpha}} \frac{E_{\mathrm{min}}^{2 - \frac{2}{\alpha}}}{E_{\mathrm{H}}^2} e^{1 - \frac{2}{\alpha}} & E_{\mathrm{H}} > \left [ \frac{2 - \alpha}{2 \beta \left ( E_{\mathrm{min}} \right ) E_{\mathrm{min}}^{1 - \alpha}} \right ]^{\frac{1}{\alpha}}
\end{cases} \; . \nonumber
\end{equation}
There is a third possibility; $E_{\mathrm{H}}$ being so large that $\Phi^{\prime} \left ( E_{\mathrm{H}} \right )$
surpasses $\Phi^{\prime}(0)$. However, $E_{\mathrm{min}}$ can always be made small enough to stop this from happening.

\subsubsection{Remaining problems}
A useful question to ask is how our decay time differs from a decay time that would arise from linear diffusion.
Solivng the heat equation with the (\ref{toy-condition}) initial data,
\begin{eqnarray}
E(x, t) &=& \frac{E_{\mathrm{F}}}{(4\pi t)^{\frac{d}{2}}} \int_{-a}^a \dots \int_{-a}^a e^{-\frac{(x-y)^2}{4t}} \textup{d}y_1 \dots \textup{d}y_d \nonumber \\
&=& \frac{E_{\mathrm{F}}}{2^d} \prod_{i = 1}^d \left ( \erf \left ( \frac{x_i + a}{2 \sqrt{t}} \right ) - \erf \left ( \frac{x_i - a}{2 \sqrt{t}} \right ) \right ) \; . \nonumber
\end{eqnarray}
The time needed for the central peak to come down to $E_{\mathrm{H}}$ is
\begin{equation}
T = \frac{a^2}{4} \erf^{-1} \left ( \frac{E_{\mathrm{H}}}{E_{\mathrm{F}}} \right )^{-\frac{2}{d}} \approx \frac{a^2}{4} \left ( \frac{2E_{\mathrm{F}}}{\sqrt{\pi}E_{\mathrm{H}}} \right )^{\frac{2}{d}} \; . \nonumber
\end{equation}
For $d > 1$, this decay time is parametrically less than the $O \left ( \left ( a E_{\mathrm{F}} \right )^2 \right )$
result that we found for diffusion with a Hagedorn regime. However, linear diffusion in $d = 1$ is much slower.
To get a sense of why the $d = 1$ times are similiar, we will look at the amount of energy outside $B^1_a$ after a
short amount of time \cite{mvr}. For diffusion with a Hagedorn regime, we find
\begin{eqnarray}
a E_{\mathrm{F}} - M_a(t) &=& A \int_a^{\infty} 1 + \erf \left ( \frac{a - x}{2 \sqrt{t}} \right ) \textup{d}x \nonumber \\
&=& 2A \sqrt{\frac{t}{\pi}} \nonumber \\
&\approx& 4 E_{\mathrm{H}} \sqrt{\frac{t}{\pi}} \; . \nonumber
\end{eqnarray}
For the heat equation with the same slope, the result is
\begin{eqnarray}
a E_{\mathrm{F}} - M_a(t) &=& E_{\mathrm{F}} \int_a^{\infty} \erf \left ( \frac{x + a}{2 \sqrt{t}} \right ) - \erf \left ( \frac{x - a}{2 \sqrt{t}} \right ) \textup{d}x \nonumber \\
&=& 2 E_{\mathrm{F}} \sqrt{\frac{t}{\pi}} \left ( 1 - e^{-\frac{a^2}{t}} \right ) + 2 E_{\mathrm{F}} a \left ( 1 - \erf \left ( \frac{a}{\sqrt{t}} \right ) \right ) \nonumber \\
&\approx& 2 E_{\mathrm{F}} \sqrt{\frac{t}{\pi}} \nonumber
\end{eqnarray}
where in the last step we have assumed $t \ll a^2$. Therefore, diffusion without a Hagedorn regime is faster at the
beginning, but not over longer time scales. We can explain this effect by noticing that a lump of energy diffusing
linearly becomes remarkably flat towards the ``end'' of its diffusion. Small spatial derivatives lead to small time
derivatives and a long time constant. Diffusion with a Hagedorn regime does not have this problem. The discontinuous
solutions we saw had steep profiles during all stages of what we called the diffusion. A different effect, the large
static phase, is what lengthened the time constant. Even though diffusion times are similar with and without a Hagedorn
phase, the numerics will show that if the energy distribution begins with a peak well above $E_{\mathrm{F}}$, its
diffusion will slow down significantly upon reaching $E_{\mathrm{F}}$. This does not contradict our analysis because
the diffusion in the high energy phase comes with a much larger constant than the diffusion that crosses over to
Hagedorn behaviour in the other two phases.

Regardless of how many dimensions there are, we should expect that our model overestimates the decay time of a plasma
ball. The energy in our model can only diffuse away to infinity so quickly once it leaves a ball of radius $a$.
In a realstic plasma ball, energy escaping the central region is quickly ejected because it is travelling through a
vacuum.

\subsection{Comments on unbounded domains}
For unbounded domains such as the whole space, the only sensible energy distributions are the ones that decay to zero
and filtration equations like (\ref{filtration-equation}) take on many new properties. Even though the
diffusion of energy governed by a master equation should certainly be well defined for a space that extends infinitely,
we found in the last chapter that our PDE is only a useful model for such behaviour when the energy is high. Therefore,
it is not necessarily true that a result derived for $\Omega = \mathbb{R}^d$ applies to the entropic limit of a field
theory.

\subsubsection{Barriers to uniqueness}
The first thing we note is that
\begin{equation}
\begin{dcases}
\frac{\partial E}{\partial t} (x, t) = - \Delta \beta(E(x, t)) & (x, t) \in \mathbb{R}^d \times (0, \infty) \\
E(x, 0) = E_0(x) & x \in \mathbb{R}^d
\end{dcases} \nonumber
\end{equation}
would have many solutions if we simply looked for them in the set of all differentiable functions. This property is
well known for the heat equation. In fact, for the heat equation \cite{chung} and similar equations \cite{dhungana}
in the whole space, nonzero solutions have been found that vanish at $t = 0$. These authors have found a restriction
that must be made in order to recover uniqueness; there must exist a paraboloid in $\mathbb{R}^d$ such that
$\log E$ stays below it at all times. Because our equation is different from the heat equation, it is not yet clear
that the same condition, or a stronger one, will yield unique solutions to (\ref{filtration-equation}). Nevertheless,
we will impose the stronger condition that $E$ be uniformly bounded in spacetime.

Requiring uniform boundedness is natural because it forces solutions to still obey the maximum principle. The maximum
principle phrased in terms of subsolutions and supersolutions made no assumption about the domain being bounded,
but in applying it, we set one solution equal to the maximum of the other. Our assumption of uniform boundedness
is the only reason that we can still write $u = \max \{ v(x, 0) : x \in \mathbb{R}^d \}$ because the maximum of a
continuous function is only guaranteed to exist on a compact domain. Having to make assumptions about how solutions
grow is a common theme when extending maximum principles to the unbounded case \cite{protter}. We now turn to the
question of uniqueness which is intimately related to energy conservation.

Consider the function $\beta(E) = -\frac{1}{m} E^m$ which is monotonically decreasing for
all $m \in \mathbb{R}$. The equation (\ref{filtration-equation}) for this choice
\begin{equation}
\frac{\partial E}{\partial t} = \nabla \cdot \left ( E^{m-1} \nabla E \right ) \label{very-fast}
\end{equation}
has been studied extensively \cite{esteban, rodriguez, vazquez2, carrillo}. It is called the porous medium equation for
$m \in (1, \infty)$, the fast diffusion equation for $m \in (0, 1]$ and the very fast diffusion equation for
$m \in (-1, 0]$. We are mainly interested in very fast diffusion because we had $0 < \alpha \leq 1$ and $m$ is
essentially $\alpha - 1$. We are only writing $m$ instead of $\alpha - 1$ here for consistency with the literature.

Trying to naively derive conservation of energy for (\ref{very-fast}) does not work. This is
related to the fact that $\frac{\partial E}{\partial t}$ does not need to be bounded at a given time, even if $E$ is
uniformly bounded. To illustrate this, let us work with the very fast diffusion equation in one dimension.
For the initial condition $E_0$, it is perfectly valid to choose a function that decays like $|x|^{-k}$ where $k > 1$.
For large $x$, this leads to a time derivative given by:
\begin{eqnarray}
\frac{\partial E}{\partial t} (x, 0) &=& \frac{\partial}{\partial x} \left [ E^{m - 1} (x, 0) \frac{\partial E}{\partial x} (x, 0) \right ] \nonumber \\
&=& \frac{\partial}{\partial x} \left [ E_0^{m - 1} (x) \frac{\partial E_0}{\partial x} (x) \right ] \nonumber \\
&\sim& \frac{\partial}{\partial x} \left ( x^{(1 - m)k} x^{-k - 1} \right ) \nonumber \\
&\sim& x^{-mk - 2} \; . \nonumber
\end{eqnarray}
The exponent tells us that it is possible for $E_0$ to decay ``too quickly''. If $k > \frac{2}{-m}$ then
$\frac{\partial E}{\partial t}$ is not bounded at $t = 0$. This affects our ability to compute
\begin{equation}
\frac{\textup{d}M}{\textup{d}t} = \frac{\textup{d}}{\textup{d}t} \int_{-\infty}^{\infty} E \textup{d}x \; . \nonumber
\end{equation}
In order to take the derivative inside the integral, there must exist an $\epsilon > 0$ such that
$\frac{\partial E}{\partial t}$ is bounded on $(-\infty, \infty) \times [0, \epsilon]$. We cannot do this with all
$k$ values, but even if we could, we would still not be out of the woods.

Pick a value of $k$ satisfying $\frac{1}{-m} < k < \frac{2}{-m}$. Since $k < \frac{2}{-m}$
$\frac{\partial E}{\partial t}$ as derived above must be finite at $t = 0$. Computing the time derivative of the
total mass,
\begin{eqnarray}
\frac{\textup{d}M}{\textup{d}t}(0) &=& \int_{-\infty}^{\infty} \frac{\partial E}{\partial t} (x, 0) \textup{d}x \nonumber \\
&=& \int_{-\infty}^{\infty} \frac{\partial}{\partial x} \left [ E^{m - 1} (x, 0) \frac{\partial E}{\partial x} (x, 0) \right ] \textup{d}x \nonumber \\
&=& \left. E_0^{m - 1} (x) \frac{\partial E_0}{\partial x} (x) \right|_{-\infty}^{\infty} \nonumber \\
&=& 2 \lim_{|x| \rightarrow \infty} E_0^{m - 1} (|x|) \frac{\partial E_0}{\partial x} (|x|) \nonumber \\
&\sim& \lim_{|x| \rightarrow \infty} |x|^{-mk - 1} \; . \nonumber
\end{eqnarray}
Our choice $k > \frac{1}{-m}$ tells us that the time derivative of $M$ is initially infinite. Therefore $M$ cannot
possibly be a constant function on $[0, \infty)$. We have demonstrated that there are infinitely many functions
$E_0 \in L^1(\mathbb{R})$ such that there is no energy conserving soluton to the following:
\begin{equation}
\begin{dcases}
\frac{\partial E}{\partial t}(x, t) = \frac{\partial}{\partial x} \left [ E^{m - 1}(x, t) \frac{\partial}{\partial x} E(x, t) \right ] & (x, t) \in \mathbb{R} \times (0, \infty) \\
E(x, 0) = E_0(x) & x \in \mathbb{R}
\end{dcases} \; . \label{naive-problem}
\end{equation}
As it turns out, however, for any $E_0 \in L^1(\mathbb{R})$, there \textit{is} \cite{esteban, vazquez2} an energy 
conserving solution to the similar problem:
\begin{equation}
\begin{dcases}
\frac{\partial E}{\partial t}(x, t) = \frac{\partial}{\partial x} \left [ E^{m - 1}(x, t) \frac{\partial}{\partial x} E(x, t) \right ] & (x, t) \in \mathbb{R} \times (0, \infty) \\
\lim_{t \rightarrow 0} E(x, t) = E_0(x) & x \in \mathbb{R}
\end{dcases} \; . \label{modified-problem}
\end{equation}
Differential equations can only hold in open sets so the notation $E(x, 0) = E_0(x)$ really means that any derivative
of $E(\cdot, t)$ will converge to the corresponding derivative of $E_0$ as $t \rightarrow 0$. The subtle distinction
between (\ref{naive-problem}) and (\ref{modified-problem}) is that the only thing that needs to converge is the
function itself. The $E(\cdot, t)$ will converge in $L^1(\mathbb{R})$ to $E_0$ but the
$\frac{\partial}{\partial t} E(\cdot, t)$ \textit{need not} converge to
$\frac{\partial}{\partial x} \left ( E_0^{m - 1} \frac{\partial E_0}{\partial x} \right )$. In other words,
$\frac{\textup{d}M(0)}{\textup{d}t}$ appeared to diverge because we were still calculating it incorrectly.

Esteban, Rodriguez and V\'azquez \cite{esteban} showed that an energy conserving solution to the very fast diffusion
Cauchy problem exists in one dimension. However, there are also solutions that vanish in finite time such as
\begin{equation}
E(x, t) = (T - t) \left [ 1 - \left ( \frac{1 - e^{-\sqrt{2}x}}{1 + e^{-\sqrt{2}x}} \right )^2 \right ] \; . \nonumber
\end{equation}
This solves (\ref{modified-problem}) for $m = 0$ and does not conserve energy. The theorem of \cite{esteban} still
applies to initial conditions like
\begin{equation}
E_0(x) = T \left [ 1 - \left ( \frac{1 - e^{-\sqrt{2}x}}{1 + e^{-\sqrt{2}x}} \right )^2 \right ] \; , \nonumber
\end{equation}
it just yields a different solution. In fact, for all $-1 < m \leq 0$ and all initial conditions, there are short lived
solutions to (\ref{modified-problem}) in addition to the one that conserves energy.
One of the following equivalent conditions is needed to make sure we are choosing the right solution.
\begin{enumerate}
\item
$\int_{-\infty}^{\infty} E(x, t) \textup{d}x = \int_{-\infty}^{\infty} E_0(x) \textup{d}x$ for $t \in (0, \infty)$.
\item
$\lim_{x \rightarrow \pm \infty} E^{m - 1}(x, t) \frac{\partial E}{\partial x}(x, t) = 0$ for $t \in (0, \infty)$.
\item
$E(x, t) > 0$ for $(x, t) \in \mathbb{R} \times (0, \infty)$.
\end{enumerate}
In fact, \cite{rodriguez} generalized this to arbitrary flux functions.
\begin{theorem}
Suppose $E_0 \in L^1(\mathbb{R})$ and $f ,g \in L^{\infty} ((0, \infty))$ are non-negative.
Then there exists a unique solution to
\begin{equation}
\begin{dcases}
\frac{\partial E}{\partial t}(x, t) = \frac{\partial}{\partial x} \left ( E^{m - 1}(x, t) \frac{\partial}{\partial x} E(x, t) \right ) & (x, t) \in \mathbb{R} \times (0, \infty) \\
\lim_{t \rightarrow 0} E(x, t) = E_0(x) & x \in \mathbb{R} \\
\lim_{x \rightarrow \infty} E^{m - 1}(x, t) \frac{\partial}{\partial x} E(x, t) = -f(t) & t \in (0, \infty) \\
\lim_{x \rightarrow -\infty} E^{m - 1}(x, t) \frac{\partial}{\partial x} E(x, t) = g(t) & t \in (0, \infty)
\end{dcases} \nonumber
\end{equation}
whose total mass satisfies
\begin{equation}
\int_{-\infty}^{\infty} E(x, t) \textup{d}x = \int_{-\infty}^{\infty} E_0(x) \textup{d}x - \int_0^t f(s) + g(s) \textup{d}s \; . \nonumber
\end{equation}
\end{theorem}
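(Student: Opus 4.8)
The plan is to construct the solution by an approximation scheme on bounded intervals, combined with a regularization of the singular nonlinearity, to extract uniform a priori bounds that survive both limits, and then to obtain uniqueness from the $L^1$-contraction structure characteristic of filtration equations. Writing the equation in the divergence form $\partial_t E = \partial_{xx}\Phi(E)$ with $\Phi(E)=\frac{1}{m}E^m$, so that the prescribed quantities at $\pm\infty$ are exactly the fluxes $\partial_x\Phi(E)$, makes the monotone structure manifest and is the form best suited to comparison arguments.

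First I would set up a family of approximate problems on $(-R,R)\times(0,\infty)$. Because the diffusivity $E^{m-1}$ is singular as $E \to 0$ when $m \leq 0$, I would regularize it as $(E+\delta)^{m-1}$ and take smooth, strictly positive initial data $E_0^{R,\delta}$ approximating $E_0$. The conditions at infinity become genuine Neumann-type flux conditions $\partial_x\Phi_\delta(E)|_{x=\pm R}$ set equal to smoothed versions of $-f$ and $g$. This approximate problem is uniformly parabolic on bounded data, so classical quasilinear parabolic theory yields a unique smooth solution $E^{R,\delta}$. The heart of the matter is the estimates, and here the non-negativity of $f$ and $g$ is essential: it guarantees that mass can only leave through the boundary, so the comparison principle keeps $E^{R,\delta}$ positive and bounded and forbids the instantaneous extinction that destroyed uniqueness for the naive problem (\ref{naive-problem}). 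Integrating the approximate equation gives $\frac{d}{dt}\int E^{R,\delta}\,dx = -(f_R+g_R)(t)$, furnishing a uniform $L^1$ bound, and the order-preserving, $L^1$-contractive structure together with a one-sided Aronson--B\'enilan estimate $\partial_t E \geq -CE/t$ supplies the time-equicontinuity needed to pass first to $\delta\to 0$ and then to $R\to\infty$ along monotone sequences.

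Uniqueness I would prove directly on the whole line by the doubling-of-variables argument: for two solutions $E_1,E_2$ with identical data and identical fluxes, monotonicity of $\Phi$ gives $\operatorname{sign}(E_1-E_2)=\operatorname{sign}(\Phi(E_1)-\Phi(E_2))$, so the matching flux conditions cause the boundary contributions to cancel and $\frac{d}{dt}\int|E_1-E_2|\,dx \leq 0$, whence $\int|E_1-E_2|(t)=0$. The mass identity then follows by integrating the limiting equation over $\mathbb{R}$ and identifying the two boundary limits with $-f$ and $g$, so that $\frac{d}{dt}\int E\,dx = (-f)-(g)$ integrates to the stated formula.

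The main obstacle I anticipate is precisely the issue flagged in the discussion preceding the theorem: ensuring that the two flux limits at $x\to\pm\infty$ are genuinely attained by the limiting solution, and that the integration by parts producing $\frac{d}{dt}\int E = -(f+g)$ is legitimate. For very fast diffusion the flux $E^{m-1}E_x$ need not vanish at infinity, and it is exactly the prescription of its value that selects the energy-accounting solution among the many admissible ones, just as the flux condition selected the correct solution to (\ref{modified-problem}). Controlling this flux \emph{uniformly} as $\delta\to 0$ and $R\to\infty$, rather than merely establishing interior regularity, is where the real work lies.
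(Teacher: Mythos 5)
The first thing to note is that the paper does not prove this statement at all: it is quoted, with attribution, from \cite{rodriguez} (following the work of Esteban, Rodriguez and V\'azquez \cite{esteban}), and the thesis uses it purely as a black box to select the energy-conserving solution $f = g = 0$. So there is no in-paper proof to compare against; your proposal has to be judged against the strategy of the cited literature, which it does reconstruct in broad outline --- regularized approximate problems, uniform $L^1$ and flux estimates, passage to the limit, $L^1$-contraction for uniqueness --- and you correctly identify the crux, namely uniform attainment of the flux conditions at infinity as the mechanism that selects one solution out of the continuum.

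Judged as a proof, however, the proposal has a genuine gap at its center: the positivity claim is backwards. Non-negative $f$ and $g$ mean mass is being \emph{withdrawn} through the boundary, and prescribed outgoing flux is exactly the situation in which comparison fails to keep a solution positive --- the constant $0$ is not a subsolution of the flux problem (its flux is zero, not $f$), and even the linear heat equation with these boundary data goes negative near the boundary in finite time. For $m \leq 0$ positivity is rescued not by the maximum principle but by the blow-up of the diffusivity $E^{m-1}$ at $E = 0$, which lets the equation meet the boundary demand by drawing mass from the entire line; your regularization $(E+\delta)^{m-1}$ destroys precisely this mechanism, so the uniform-in-$\delta$ lower bound is the hard estimate, not a consequence of comparison. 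Two further points would need repair: the Aronson--B\'enilan inequality is a scaling argument for the Cauchy problem and does not directly apply in the presence of time-dependent flux data (time-compactness should instead come from $m$-accretivity in $L^1$ and Crandall--Liggett, or from energy estimates on $\partial_t \Phi(E)$); and no non-negative solution obeying the stated mass law can be global in time once $\int_0^t \left ( f(s) + g(s) \right ) \textup{d}s$ exceeds $\int_{-\infty}^{\infty} E_0(x) \textup{d}x$ --- the cited theorem is really a statement on the maximal interval before the mass is exhausted, a restriction your construction (and, to be fair, the paper's own loose transcription of the theorem) ignores.
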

The energy conserving solution that interests us is precisely the case $f = g = 0$.
It is important to remember that even though the vanishing flux of $E$ is posed as a constraint, we are still free
to consider an initial profile $E_0$ that does not have a vanishing flux. This is an example of a problem with
\textit{inconsistent initial and boundary conditions}. Some numerical methods have been written specifically to
address this \cite{verma}. Infinite propogation speed ensures that the conditions are only inconsistent on a set of
measure zero. Depending on what we specify for the flux, the distribution might ``jump'' to a function that decays
quickly or to a function that decays slowly. When $m > 0$, energy conservation in (\ref{modified-problem}) is known to
hold without the complication of non-uniqueness \cite{vazquez2}.

We may summarize this discussion by saying that the maximum principle only holds for a general
(\ref{filtration-equation}) if we specify uniform boundedness and that uniqueness only holds for a general
(\ref{filtration-equation}) if we specify energy conservation. The very fast diffusion equation in one dimension 
provides an example of uniqueness not holding for (\ref{filtration-equation}) if we only specify uniform boundedness.
Even when we have a unique energy conserving solution, the time derivative of the total mass will not make sense at
$t = 0$ for the perfectly good initial conditions that are inconsistent with the boundary conditions. For those, the
energy flux that vanishes for the solution at all positive times will not vanish at $t = 0$.

\subsubsection{Barenblatt profiles}
Solutions to (\ref{filtration-equation}) on a bounded domain converge to their average values which means that
solutions to the same equation on the whole space should converge to zero. This will be an $L^{\infty}$ convergence
and not an $L^1$ convergence because of energy conservation. One should check whether the associated decay times
on $\mathbb{R}^d$ have anything to do with our main time scales (\ref{timescale-1}) and (\ref{timescale-2}). Some
known results about the very fast diffusion equation will help us do this. There is an important family of exact
solutions to the porous medium, fast diffusion and very fast diffusion equations given by:
\begin{equation}
U(x, t) = \left [ \frac{\left ( \frac{4}{1 - m} - 2d \right ) t}{|x|^2 + Bt^{\frac{2}{2 - d(1 - m)}}} \right ]^{\frac{1}{1 - m}} \label{barenblatt}
\end{equation}
where $B$ is a positive constant that determines the mass \cite{carrillo}. These self-similar functions called
Barenblatt solutions have a Dirac delta as their initial conditions. Solving for the mass,
\begin{eqnarray}
M(t) &=& d \omega_d \int_0^{\infty} U(r, t) r^{d - 1} \textup{d}r \nonumber \\
&=& d \omega_d \left [ \left ( \frac{4}{1 - m} - 2d \right ) t \right ]^{\frac{1}{1 - m}} \int_0^{\infty} \frac{d^{d - 1}}{\left ( r^2 + B t^{\frac{2}{2 - d (1 - m)}} \right )^{\frac{1}{1 - m}}} \textup{d}r \nonumber \\
&=& d \omega_d \left [ \frac{\left ( \frac{4}{1 - m} - 2d \right ) t}{Bt^{\frac{2}{2 - d (1 - m)}}} \right ]^{\frac{1}{1 - m}} \int_0^{\infty} \frac{B^{\frac{d}{2}} t^{\frac{d}{2 - d (1 - m)}} s^{d - 1}}{\left ( s^2 + 1 \right )^{\frac{1}{1 - m}}} \textup{d}s \nonumber \\
&=& d \omega_d \left ( \frac{4}{1 - m} - 2d \right )^{\frac{1}{1 - m}} B^{\frac{d}{2} - \frac{1}{1 - m}} \int_0^{\infty} \frac{s^{d - 1}}{\left ( s^2 + 1 \right )^{\frac{1}{1 - m}}} \textup{d}s \nonumber \\
&=& \frac{d \omega_d}{2} \left ( \frac{4}{1 - m} - 2d \right )^{\frac{1}{1 - m}} B^{\frac{d}{2} - \frac{1}{1 - m}} \frac{\Gamma \left ( \frac{d}{2} \right ) \Gamma \left ( \frac{1}{1 - m} - \frac{d}{2} \right )}{\Gamma \left ( \frac{1}{1 - m} \right )} \label{barenblatt-mass}
\end{eqnarray}
where we have recognized a beta function. The time that it takes for a Barenblatt profile's peak to reach
$E_{\mathrm{H}}$ is given by:
\begin{equation}
T = \left [ E_{\mathrm{H}} \left ( \frac{B}{\frac{4}{1 - m} - 2d} \right )^{\frac{1}{1 - m}} \right ]^{(1 - m) - \frac{2}{d}} \; . \label{time-with-b}
\end{equation}
If we wanted to call this a Barenblatt profile of mass $a E_{\mathrm{F}}$, we could use the mass relation
(\ref{barenblatt-mass}) to replace $B$:
\begin{equation}
T = \left ( \frac{4}{1 - m} - 2d \right )^{\frac{2m - d(1 - m)}{d(1 - m)}} E_{\mathrm{H}}^{\frac{d(1 - m) - 2}{d}} \left [ \frac{2 a E_{\mathrm{F}}}{d \omega_d \frac{\Gamma \left ( \frac{d}{2} \right ) \Gamma \left ( \frac{1}{1 - m} - \frac{d}{2} \right )}{\Gamma \left ( \frac{1}{1 - m} \right )}} \right ]^{\frac{\frac{4}{d} - 2(1 - m)}{2 - d(1 - m)}} \; . \label{time-without-b}
\end{equation}
Even though this decay time is concerned with a Dirac delta of mass $a E_{\mathrm{F}}$, it upper bounds the decay time
for a box of mass $a E_{\mathrm{F}}$ by the concentration comparison theorem. Things are even nicer than this; the
Barenblatt profiles attract all solutions to the Cauchy problems for these nonlinear diffusion equations. As time goes
on, general solutions $E(\cdot, t)$ will converge to $U(\cdot, t)$ and the rate of convergence has been found
\cite{carrillo}:
\begin{equation}
\lim_{t \rightarrow \infty} t^{\frac{d}{2 - d(1 - m)}} \left | \left | E(\cdot, t) - U(\cdot, t) \right | \right |_{L^{\infty}} = 0 \; . \nonumber
\end{equation}
This means that (\ref{time-without-b}) still depends on the mass in the same way as the true decay time. Since
$O \left ( \left ( aE_{\mathrm{F}} \right )^{\frac{\frac{4}{d} - 2 (1 - m)}{2 - d (1 - m)}} \right )$ and
$O \left ( \left ( aE_{\mathrm{F}} \right )^2 \right )$ are very different, we should not have used results from an
unbounded domain to derive (\ref{timescale-1}) and (\ref{timescale-2}). Implicit in all of this is a certain relation
between $d$ and $m$. Consider this list of conditions:
\begin{enumerate}
\item
The constant $\frac{4}{1 - m} - 2d$ must be positive for the Barenblatt solutions to be well defined.
\item
The overall power of $t$ in (\ref{barenblatt}) must be negative in order for the peak value to decay with time.
\item
In order for the mass (\ref{barenblatt-mass}) to be finite, $\frac{s^{d - 1}}{\left ( s^2 + 1 \right )^{\frac{1}{1 - m}}}$ must approach zero faster than $\frac{1}{s}$.
\item
An $O \left ( \left ( aE_{\mathrm{F}} \right )^{\frac{\frac{4}{d} - 2 (1 - m)}{2 - d (1 - m)}} \right )$ decay time should decrease with the mass.
\end{enumerate}
Any one of these four requirements will tell us that $m > \frac{d - 2}{d}$. Since diffusion in our model happens
when $0 < \alpha < 1$, our primary interest is $-1 < m < 0$. This is only above the critical exponent in one dimension
providing another reason why it was necessary to demand a bounded domain. We know from \cite{esteban, rodriguez} that
when $d = 1$, the very fast diffusion equation yields infinitely many solutions and one energy conserving solution.
The latter is indeed asymptotic to the Barenblatt profile having the same mass. However, in $d = 2$ we already see
that there is no energy conserving solution to the very fast diffusion equation. It would be interesting to find
some analogue of (\ref{barenblatt}) for bounded domains that acts as a fundamental solution and stays well defined for
all $m$ and $d$.

\section{Numerical analysis}
The expressions for our time scales rely on the fact that $E_{\mathrm{F}} \gg E_{\mathrm{H}}$ in a domain that has a
large but finite size. To see how sensitively the inequalities depend on these factors, we have undertaken a numerical
test of our time scales and (\ref{timescale-1}) in particular. The code used for most simulations can be found in the
appendix. One thing that is immediately visible in the code is our decision to use initial conditions that decay
according to a power law:
\begin{equation}
E_0(x) = E_{\mathrm{max}} \left ( \frac{1}{1 + x^2} \right )^{\frac{k}{2}} \; . \nonumber
\end{equation}
For the very fast diffusion equation on an unbounded domain, we saw that it was possible for the initial condition to
decay too quickly ($k > \frac{1}{-m}$). Since the nicest functions to simulate are the ones with a finite
$\frac{\textup{d}M}{\textup{d}t}(0)$, this rules out functions $E_0$ that decay exponentially or have compact support.
For low energies, the relevant exponent is $m + 1 = \alpha = \frac{9}{10}$. This confines us to using $k$ values smaller
than $10$ and not so much smaller that the decay takes forever.
The numerics are done on a bounded domain so $\frac{\textup{d}M}{\textup{d}t}(0)$ is technically finite for
any $k$. However, if we were to pick $k > 10$, the flux at the edges would experience unbounded growth as
we made the domain larger casting doubt on our ability to trust those results.

\subsection{Implementation details}
In (\ref{filtration-equation}), we have been considering a function $\beta = \frac{\textup{d}S}{\textup{d}E}$ which
has diffusive behaviour for low and high energies with a Hagedorn phase and possibly a clustering phase in between. The
chosen entropy should be of the Super Yang-Mills type (\ref{sym-entropy}) where
\begin{equation}
S(E) \sim
\begin{cases}
E^{\frac{9}{10}} & E < E_{\mathrm{H}} \\
E & E_{\mathrm{H}} < E < E^{\prime}_{\mathrm{F}} \\
E^{\frac{8}{7}} & E^{\prime}_{\mathrm{F}} < E < E_{\mathrm{F}} \\
E^{\frac{3}{4}} & E > E_{\mathrm{F}}
\end{cases}
\; . \nonumber
\end{equation}
A smooth approximation to this can be accomplished with the function:
\begin{equation}
S(E) = \left ( \frac{E}{E_{\mathrm{H}}} \right )^{\frac{9}{10}} \left ( 1 + \frac{E}{E_{\mathrm{H}}} \right )^{\frac{1}{10}} \left ( 1 + \frac{E}{E^{\prime}_{\mathrm{F}}} \right )^{\frac{1}{7}} \left ( 1 + \frac{E}{E_{\mathrm{F}}} \right )^{-\frac{11}{28}} \; . \label{numerical-s1}
\end{equation}
However, the clustering phase where $S$ is convex is difficult to simulate numerically and our results including
(\ref{timescale-1}) have neglected it. Assuming that $E^{\prime}_{\mathrm{F}} = E_{\mathrm{F}}$ so that this phase does
not exist, we arrive at:
\begin{equation}
S(E) = \left ( \frac{E}{E_{\mathrm{H}}} \right )^{\frac{9}{10}} \left ( 1 + \frac{E}{E_{\mathrm{H}}} \right )^{\frac{1}{10}} \left ( 1 + \frac{E}{E_{\mathrm{F}}} \right )^{-\frac{1}{4}} \; . \label{numerical-s2}
\end{equation}
\begin{figure}[h]
\centering
\subfloat[][$E^{\prime}_{\mathrm{F}} \neq E_{\mathrm{F}}$]{\includegraphics[scale=0.35]{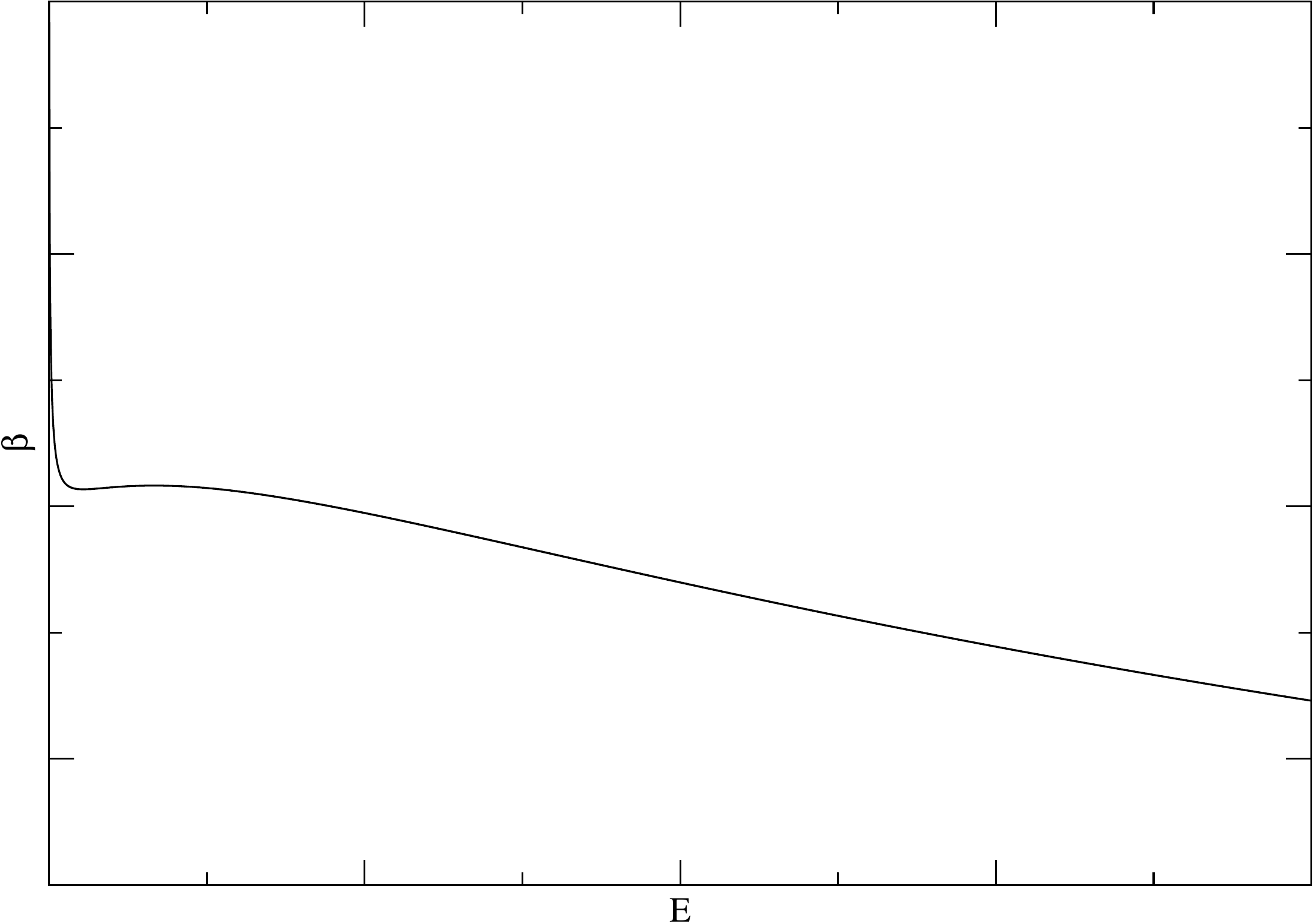}}
\subfloat[][$E^{\prime}_{\mathrm{F}} = E_{\mathrm{F}}$]{\includegraphics[scale=0.35]{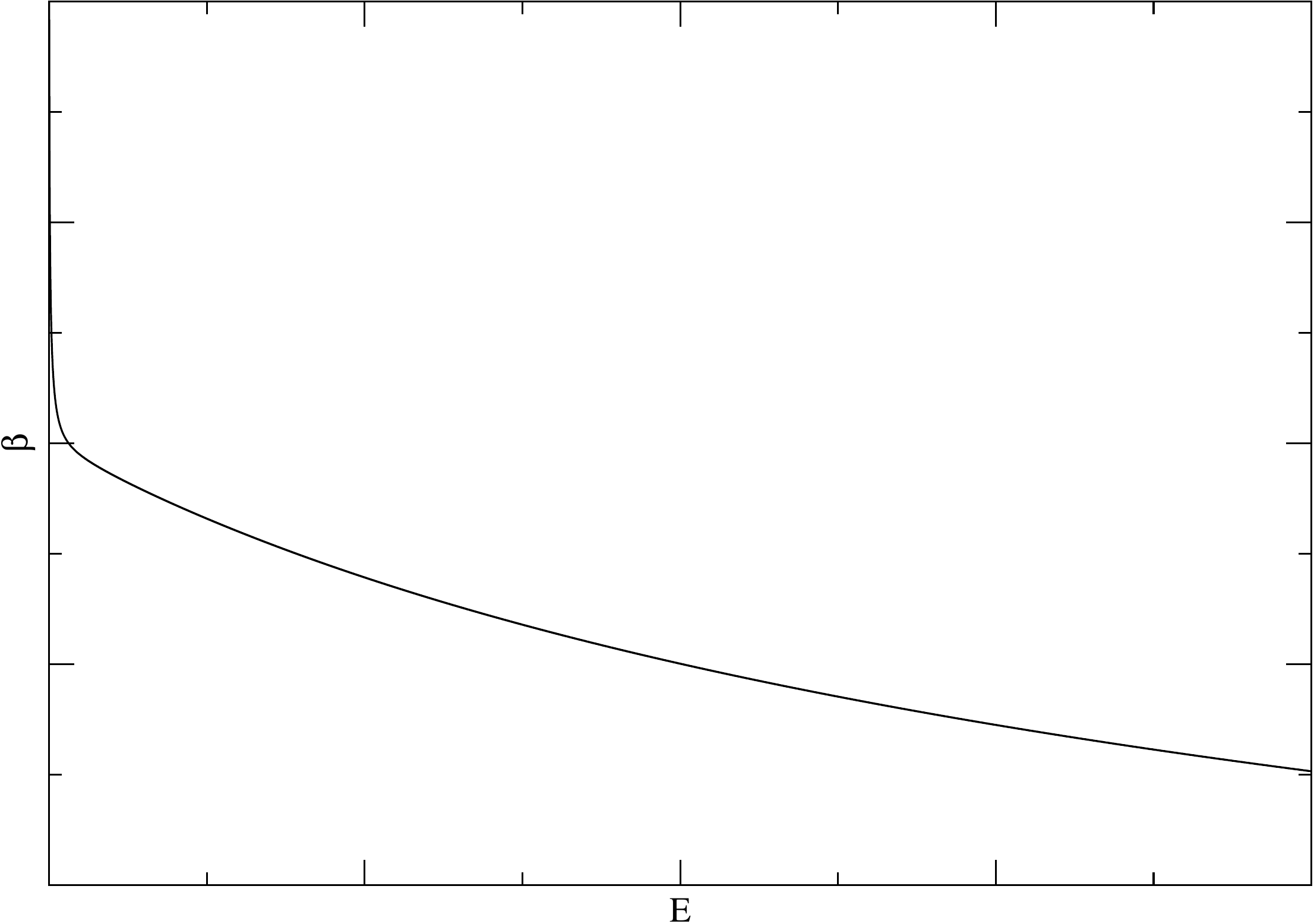}}
\caption{One the left is the derivative of (\ref{numerical-s1}) showing the three phases. The simplest choice for
removing the clustering phase leads us to differentiate (\ref{numerical-s2}), the plot on the right.}
\label{naive-cluster-removal}
\end{figure}
Plotting the $\beta(E)$ derived from these two functions, we see a slight problem. After the flat region of
(\ref{numerical-s1}), $\beta(E)$ increases reaching a local maximum. After this it begins to decrease more and more
quickly until it reaches a point of inflection and asymptotically approaches 0. The plot for (\ref{numerical-s2}), on
the other hand, does not look very flat in any region and seems to always be asymptotically approaching 0. There is no
point of inflection where some other behaviour ``crosses over'' to $E^{-\frac{1}{4}}$ behaviour. We will now show by
brute force that there is no inflection point, ruling out the possibility that this only happens for certain
$E_{\mathrm{H}}$ and $E_{\mathrm{F}}$.
\begin{claim}
For any $N > 0$, $\beta: \mathbb{R}_{> 0} \rightarrow \mathbb{R}$ defined by
\begin{equation}
\beta(x) = \frac{9}{10} x^{-\frac{1}{10}} (1 + x)^{\frac{1}{10}} \left ( 1 + \frac{x}{N} \right )^{-\frac{1}{4}} + \frac{1}{10} x^{\frac{9}{10}} (1 + x)^{-\frac{9}{10}} \left ( 1 + \frac{x}{N} \right )^{-\frac{1}{4}} - \frac{1}{4N} x^{\frac{9}{10}} (1 + x)^{\frac{1}{10}} \left ( 1 + \frac{x}{N} \right )^{-\frac{5}{4}} \nonumber
\end{equation}
has no point of inflection.
\end{claim}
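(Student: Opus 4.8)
The plan is to reduce the non-vanishing of $\beta''$ to the positivity of a single polynomial. The starting observation is that the displayed $\beta$ is exactly the derivative of $\tilde S(x)=x^{9/10}(1+x)^{1/10}(1+x/N)^{-1/4}$; writing $\tilde S=e^{L}$ with $L(x)=\tfrac{9}{10}\log x+\tfrac{1}{10}\log(1+x)-\tfrac14\log(1+x/N)$, one has $\beta=\tilde S'=L'e^{L}$ and hence $\beta''=\tilde S'''=\bigl(L'''+3L'L''+L'^{3}\bigr)e^{L}$. Since $e^{L}>0$ on $(0,\infty)$, a point of inflection of $\beta$ is precisely a zero of $P:=L'''+3L'L''+L'^{3}$, so it suffices to prove that $P$ never vanishes. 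The boundary behaviour $\beta''\sim\tfrac{99}{1000}x^{-21/10}\to+\infty$ as $x\to0^{+}$ and $\beta''\sim\tfrac{15}{64}x^{-9/4}>0$ as $x\to\infty$ strongly suggests the sharper statement $P>0$ throughout, which is what I would aim for.

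Next I would make $P$ explicit using the partial-fraction form $L'=\sum_{i}c_i/(x+r_i)$ with $(c_1,c_2,c_3)=(\tfrac{9}{10},\tfrac{1}{10},-\tfrac14)$ and $(r_1,r_2,r_3)=(0,1,N)$. Writing $t_i=(x+r_i)^{-1}$ so that $L''=-\sum_i c_i t_i^{2}$ and $L'''=2\sum_i c_i t_i^{3}$, a short expansion of $P=L'^{3}-3L'(-L'')+L'''$ groups the terms by monomial type and gives
\begin{equation}
P=\sum_{i}c_i(c_i-1)(c_i-2)\,t_i^{3}+3\sum_{i\neq j}c_ic_j(c_i-1)\,t_i^{2}t_j+6c_1c_2c_3\,t_1t_2t_3 . \nonumber
\end{equation}
Substituting the three rational values of $c_i$ turns every coefficient into an explicit rational number (for instance the three diagonal coefficients are $\tfrac{99}{1000}$, $\tfrac{171}{1000}$ and $-\tfrac{45}{64}$), so $P$ becomes a completely explicit combination of the $t_i$.

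To remove the negative powers I would clear denominators: set $Q(x)=x^{3}(x+1)^{3}(x+N)^{3}P(x)$, which has the same sign as $P$ on $(0,\infty)$ and is a degree-six polynomial in $x$ whose seven coefficients are polynomials in $N$. Expanding the ten products and collecting powers of $x$, the goal is to show that each coefficient is strictly positive for every $N>0$; this already holds for the two extreme coefficients, the leading one being $\tfrac{15}{64}$ and the constant term being $\tfrac{99}{1000}N^{3}$ (both consistent with the asymptotics above). Establishing $Q>0$ then gives $P>0$, hence $\beta''>0$, so $\beta$ has no inflection point.

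The hard part will be the five middle coefficients. Positivity is emphatically \emph{not} term-by-term: the $t_3^{3}$, $t_1^{2}t_2$, $t_1t_2^{2}$ and $t_1t_2t_3$ contributions are all negative, so the intermediate coefficients of $Q$ are sums of $N$-polynomials of mixed sign that must be reorganised before their positivity is visible. If collecting does not immediately display each coefficient with nonnegative $N$-coefficients, I would fall back on the relations forced by the data, namely $t_3<t_1$ for every $N>0$ and $t_2=t_1/(1+t_1)$, to bound the few dangerous monomials against the dominant $t_1^{3}$ and $t_2^{3}$ terms, or to rewrite $Q$ as a manifestly positive sum-of-squares-type combination. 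This bookkeeping is the only real obstacle; the conceptual content lies entirely in the reduction $\beta''=(L'''+3L'L''+L'^{3})e^{L}$.
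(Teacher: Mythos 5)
Your reduction is sound, and it is essentially the paper's own strategy in better notation: the paper likewise computes $\beta''$, clears the fractional powers, and is left with a degree-six polynomial in $x$ with $N$-dependent coefficients, the claim being decided by its lack of positive roots. Your logarithmic-derivative bookkeeping is correct ($\tilde S''' = (L'''+3L'L''+L'^3)e^{L}$, the expansion of $P$ in the $t_i$, the diagonal coefficients $\tfrac{99}{1000}$, $\tfrac{171}{1000}$, $-\tfrac{45}{64}$, and the two extreme coefficients of $Q$). In fact, since $(x+N)^3 = N^3(1+x/N)^3$, your $Q$ equals $N^3 x^{21/10}(1+x)^{29/10}(1+x/N)^{13/4}\beta''$, which is exactly $\tfrac{3}{8000}$ times the polynomial the paper arrives at, so the two routes are the same computation organized differently. (A cosmetic slip: as $x\to\infty$ one has $\beta''\sim\tfrac{15}{64}N^{1/4}x^{-9/4}$; the dropped $N^{1/4}$ is harmless.)

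The genuine gap is that you never establish positivity of the five middle coefficients of $Q$, and that step is the entire content of the claim. You verify only the leading and constant coefficients, correctly observe that positivity is invisible in your grouped form (the $t_3^3$, $t_1^2t_2$, $t_1t_2^2$ and $t_1t_2t_3$ blocks are negative), and then offer contingency plans --- bounding the dangerous monomials via $t_3<t_1$, or a sum-of-squares rewriting --- without executing any of them. As written this is a plan, not a proof. The paper closes precisely this gap by brute force: expanding and collecting powers of $x$ turns $\beta''(x)=0$ into
\begin{equation}
625 x^6 + (2500N + 1625) x^5 + (7250 N + 2275) x^4 + (9250N + 819) x^3 + (2340N^2 + 3402N) x^2 + (720N^3 + 972N^2) x + 264N^3 = 0 \; , \nonumber
\end{equation}
every coefficient of which is manifestly positive for $N>0$, so Descartes' rule of signs (zero sign changes) gives no positive real roots. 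So your plan does go through --- after full collection every coefficient of $Q$ is a polynomial in $N$ with positive coefficients, and none of your fallbacks are needed --- but the expansion itself must actually be performed, since the positivity you defer is exactly the question being asked.
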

\begin{proof}
First we will write
\begin{eqnarray}
\beta^{\prime}(x) &=& -\frac{9}{100} \left ( 1 + \frac{x}{N} \right )^{-\frac{1}{4}} \left [ x^{-\frac{11}{10}} (1 + x)^{\frac{1}{10}} - 2x^{-\frac{1}{10}} (1 + x)^{-\frac{9}{10}} + x^{\frac{9}{10}} (1 + x)^{-\frac{19}{10}} \right ] \nonumber \\
&& -\frac{1}{20N} \left ( 1 + \frac{x}{N} \right )^{-\frac{5}{4}} \left [ 9 x^{-\frac{1}{10}} (1 + x)^{\frac{1}{10}} + x^{\frac{9}{10}} (1 + x)^{-\frac{9}{10}} \right ] + \frac{5}{16N^2} x^{\frac{9}{10}} (1 + x)^{\frac{1}{10}} \left ( 1 + \frac{x}{N} \right )^{-\frac{9}{4}} \; . \nonumber
\end{eqnarray}
Now the equation $\beta^{\prime\prime}(x) = 0$ becomes:
\begin{eqnarray}
&& \frac{9}{1000} \left ( 1 + \frac{x}{N} \right )^{-\frac{1}{4}} \left [ 11 x^{-\frac{21}{10}} (1 + x)^{\frac{1}{10}} - 3 x^{-\frac{11}{10}} (1 + x)^{-\frac{9}{10}} - 27 x^{-\frac{1}{10}} (1 + x)^{-\frac{19}{10}} + 19 x^{\frac{9}{10}} (1 + x)^{-\frac{29}{10}} \right ] \nonumber \\
&& +\frac{27}{400N} \left ( 1 + \frac{x}{N} \right )^{-\frac{5}{4}} \left [ x^{-\frac{11}{10}} (1 + x)^{\frac{1}{10}} - 2 x^{-\frac{1}{10}} (1 + x)^{-\frac{9}{10}} + x^{\frac{9}{10}} (1 + x)^{-\frac{19}{10}} \right ] \nonumber \\
&& +\frac{3}{32N^2} \left ( 1 + \frac{x}{N} \right )^{-\frac{9}{4}} \left [ 9 x^{-\frac{1}{10}} (1 + x)^{\frac{1}{10}} + x^{\frac{9}{10}} (1 + x)^{-\frac{9}{10}} \right ] - \frac{45}{64N^3} x^{\frac{9}{10}} (1 + x)^{\frac{1}{10}} \left ( 1 + \frac{x}{N} \right )^{-\frac{13}{4}} = 0 \nonumber \\
&& \frac{3}{125} \left ( 1 + \frac{x}{N} \right )^3 \left [ 11 (1 + x)^3 - 3 x (1 + x)^2 -27 x^2 (1 + x) + 19 x^3 \right ] \nonumber \\
&& +\frac{9}{50N} \left ( 1 + \frac{x}{N} \right )^2 \left [ x (1 + x)^3 - 2 x^2 (1 + x)^2 + x^3 (1 + x) \right ] \nonumber \\
&& +\frac{1}{4N^2} \left ( 1 + \frac{x}{N} \right ) \left [ 9 x^2 (1 + x)^3 + x^3 (1 + x)^2 \right ] - \frac{15}{8N^3} x^3 (1 + x)^3 = 0 \nonumber \\
&& 625 x^6 + (2500N + 1625) x^5 + (7250 N + 2275) x^4 + (9250N + 819) x^3 \nonumber \\
&& + (2340N^2 + 3402N) x^2 + (720N^3 + 972N^2) x + 264N^3 = 0 \; . \nonumber
\end{eqnarray}
Even though we cannot factor this sixth degree polynomial, we may conclude that it has no positive real roots via
Descartes' rule of signs.
\end{proof}
What we should take away from this is that removing the clustering phase by hand is too crude to give us a good
filtration function to use in the numerics. Instead we will use a function like the one plotted in Figure
\ref{numerical-beta} where quadratic interpolation has been used to guarantee that it is differentiable and decreasing
with an almost-flat section and two inflection points.
\begin{figure}[h]
\includegraphics[scale=0.6]{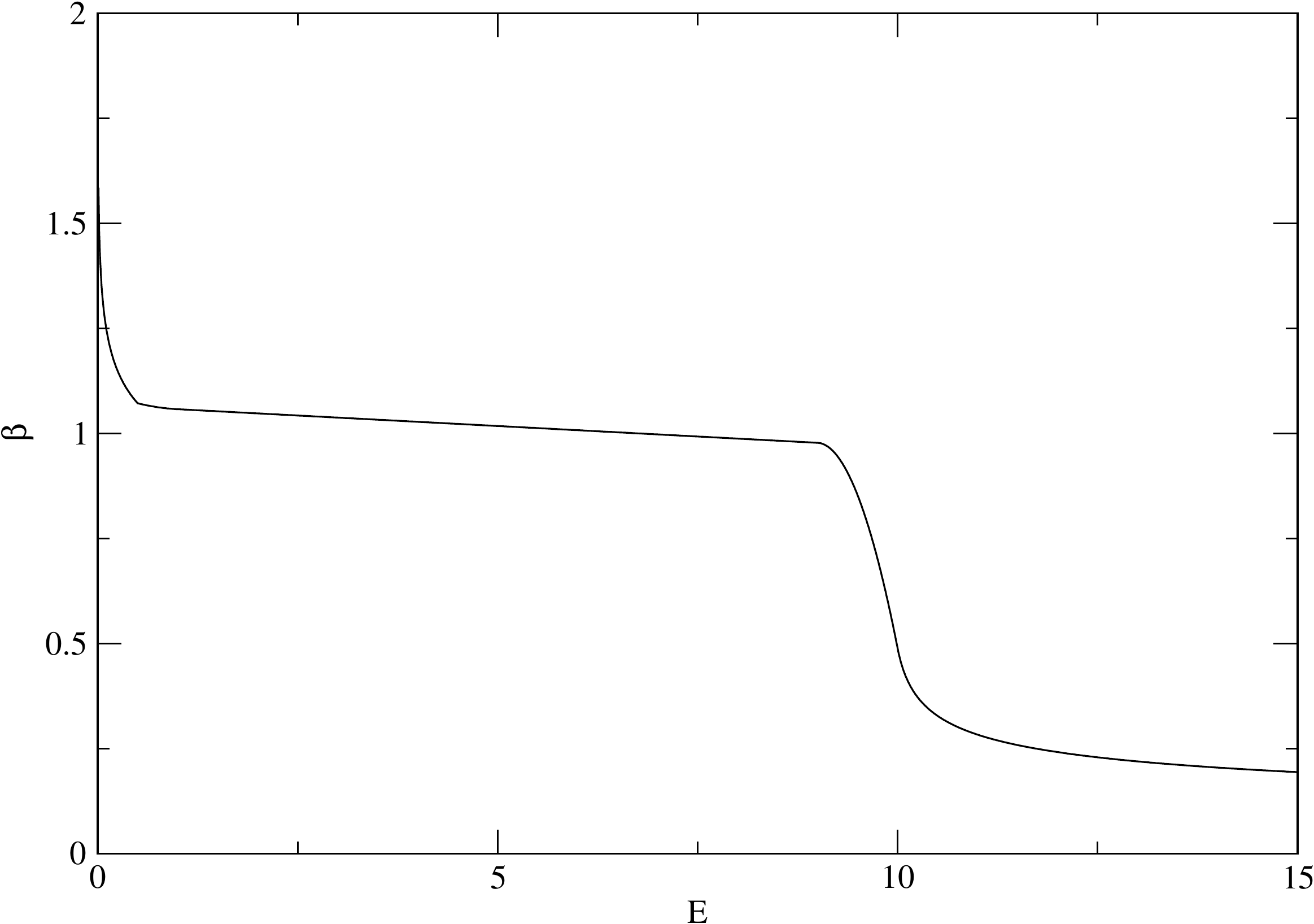}
\caption{For this function, the flat section begins at $E = 1.0$. The region $0.5 < E < 1.0$ is used to interpolate
between the flat section and the $E^{-\frac{1}{10}}$ power law. The flat section ends at $E = 0.9 E_{\mathrm{F}}$ and
the $E^{-\frac{1}{4}}$ power law begins at $E = E_{\mathrm{F}}$. In the region
$0.9 E_{\mathrm{F}} < E < E_{\mathrm{F}}$, a parabola again interpolates. The plot is for $E_{\mathrm{F}} = 10.0$ but
this general rule has been followed for all $E_{\mathrm{F}}$.}
\label{numerical-beta}
\end{figure}

\subsubsection{The Crank-Nicolson method}
Now that the form of (\ref{filtration-equation}) has been decided, we need a method capable of finding approximate
solutions to it. Our PDE involves time and space, so it is natural to discretize space and reduce it to the following
set of coupled ODEs:
\begin{eqnarray}
\frac{\textup{d}}{\textup{d}t} E_i(t) &=& f_i(E(t), t) \nonumber \\
f_i(E(t), t) &=& -\frac{\beta(E_{i+1}(t)) - 2\beta(E_i(t)) + \beta(E_{i-1}(t))}{(dx)^2} \; . \nonumber
\end{eqnarray}
The simplest method for integrating this type of ODE system is the forward Euler method:
\begin{equation}
E_i(t + dt) = E_i(t) + f_i(E(t), t) dt \; . \label{forward-euler}
\end{equation}
One might expect a method like (\ref{forward-euler}) to be accurate as long as $dx$ and $dt$ are small. However, one
can use the heat equation to show that things are not so simple.
\begin{figure}[h]
\centering
\subfloat[][$dt = 0.005$, $dx = 0.1$]{\includegraphics[scale=0.35]{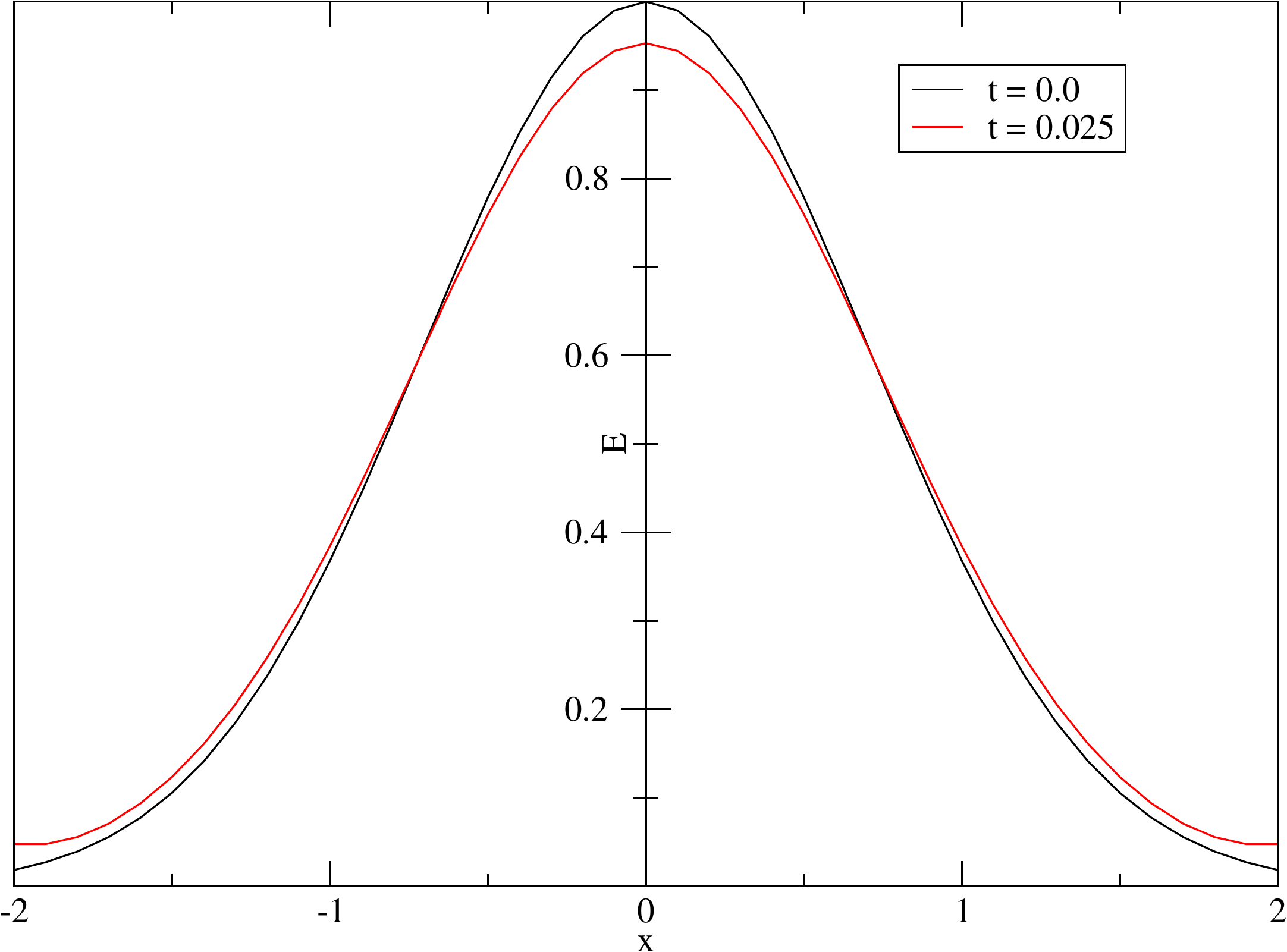}}
\subfloat[][$dt = 0.005$, $dx = 0.05$]{\includegraphics[scale=0.35]{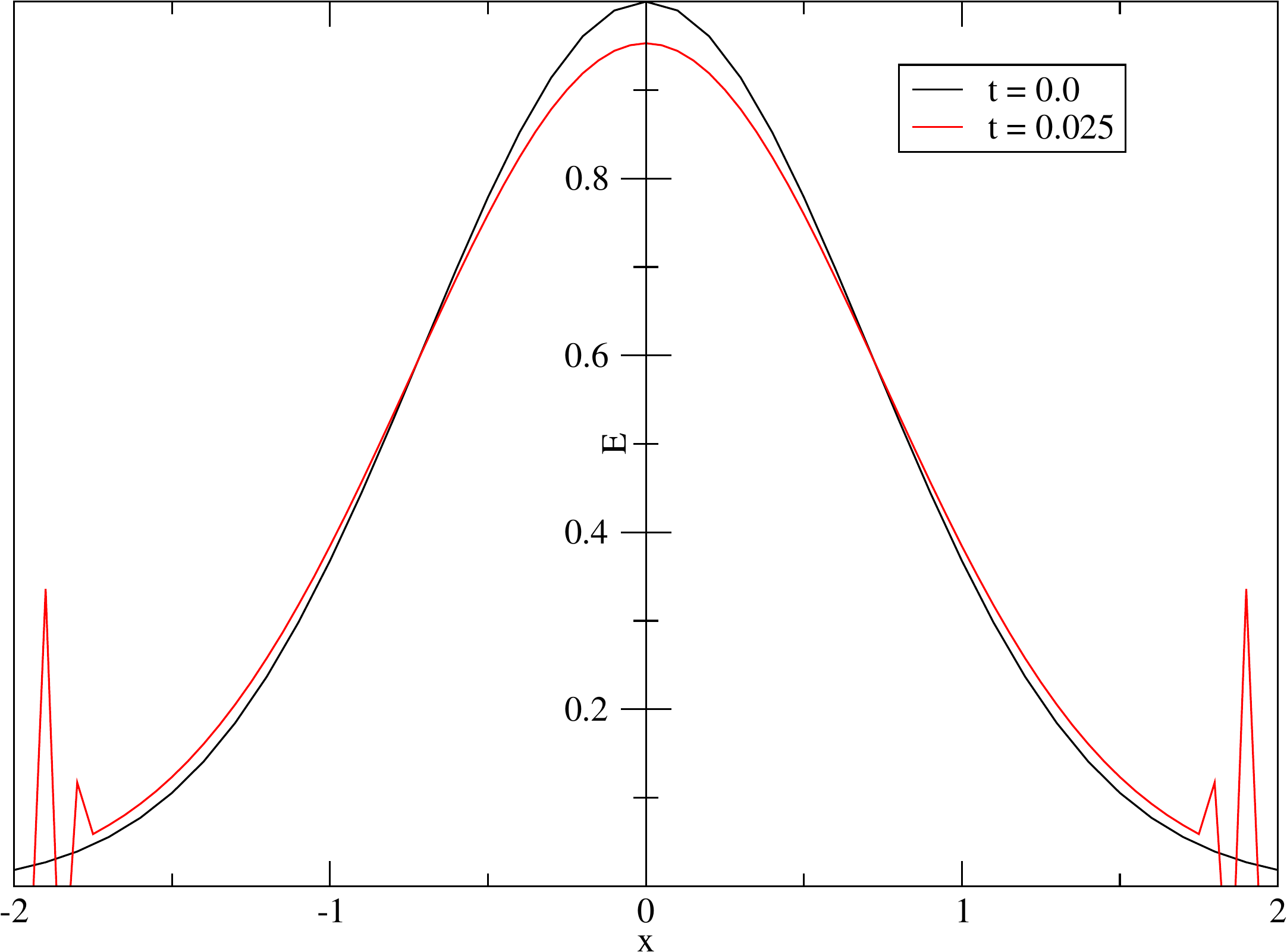}}
\caption{A standard demonstration that the forward Euler method is numerically unstable for the heat equation.}
\label{numerical-stability}
\end{figure}
Figure \ref{numerical-stability} shows that making the discretization smaller is not always an improvement. To avoid
numerical instabilities, the restriction that must be obeyed is $dt < \frac{1}{2} (dx)^2$. This is the precise
restriction for the heat equation but a nonlinear equation would suffer from a similar restriction. This is especially
harsh for investigating the long decay times associated with a Hagedorn phase. Although they are more difficult to
apply, there are other methods known to be stable for large timesteps \cite{press}. One of these is the backward Euler
method:
\begin{equation}
E_i(t + dt) = E_i(t) + f_i(E(t + dt), t + dt) dt \; . \label{backward-euler}
\end{equation}
In (\ref{forward-euler}), we evolve $E$ forward in time by adding the derivative that it \textit{has now}. In
(\ref{backward-euler}) which looks similar, we add the derivative that it \textit{will have} after we add it. Since
some inversion is clearly necessary, we refer to this type of method as an \textit{implicit method}. In this work, we
will use the average of the forward and backward Euler methods, known as the Crank-Nicolson method. In order to
isolate $E_i(t + dt)$ in (\ref{backward-euler}) when treating the heat equation, we must solve a system of linear
equations - one for each site in the lattice. For a nonlinear equation like (\ref{filtration-equation}), these become
nonlinear algebraic equations, and hence often require numerical methods themselves. Something that might look familiar
is:
\begin{figure}[h]
\centering
\subfloat[][Linear scale]{\includegraphics[scale=0.35]{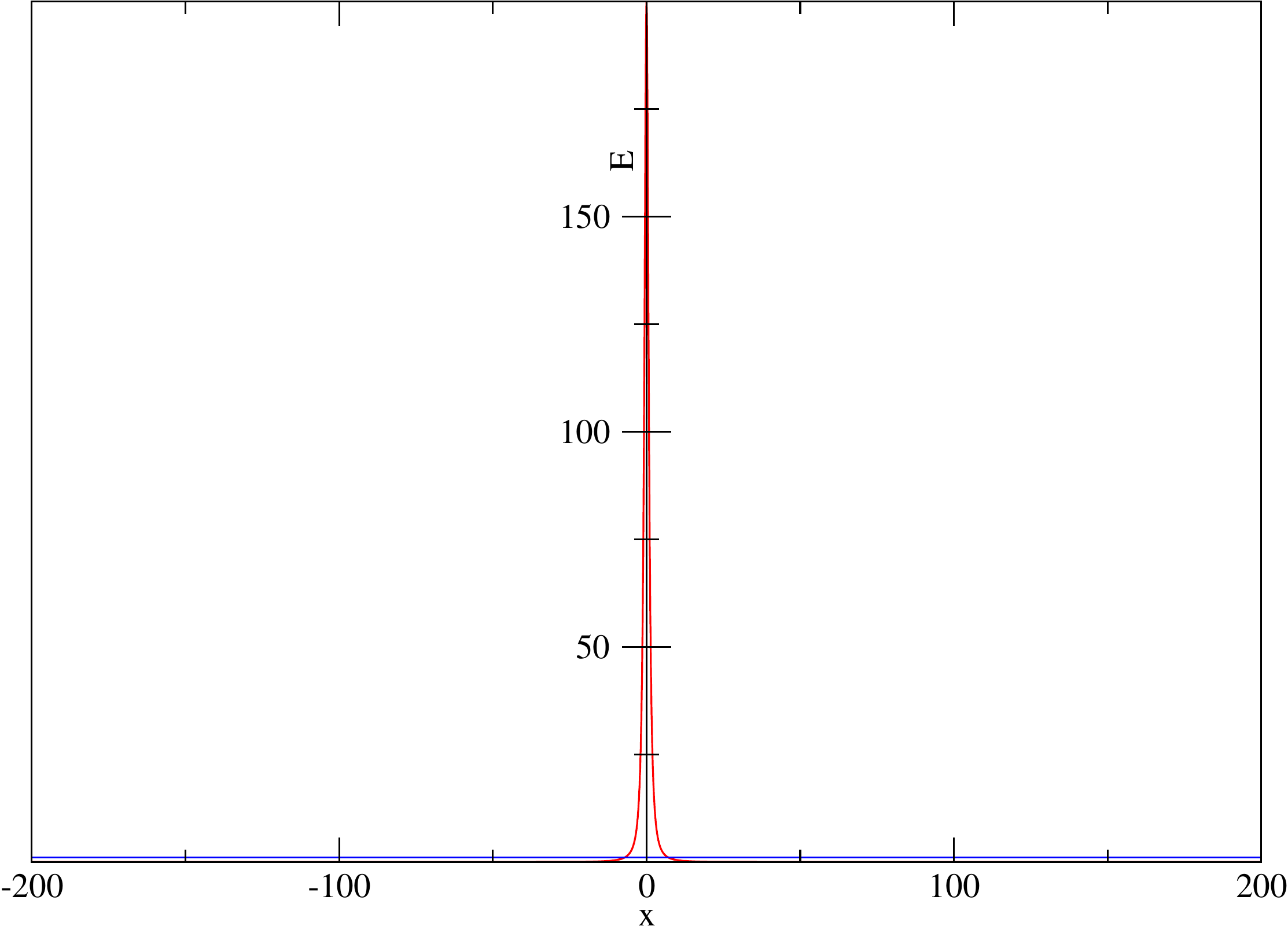}}
\subfloat[][Modified logarithmic scale]{\includegraphics[scale=0.35]{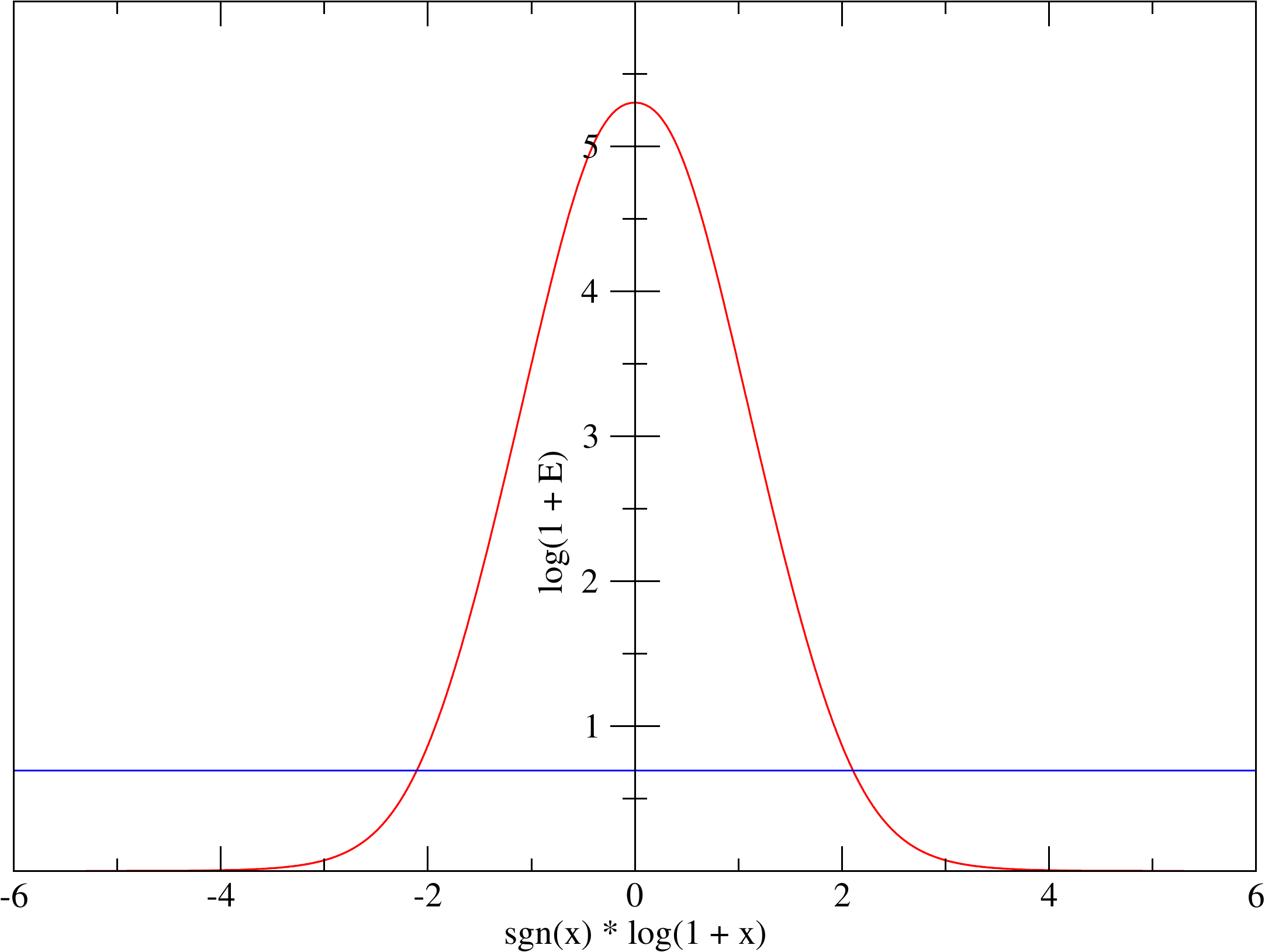}}
\caption{Typically we will work with an initial condition like that shown in red. The width of the domain is chosen
so that all of the function's mass can fit below the Hagedorn energy in blue. The plot on the left shows that $E_0(x)$
is very flat for most of the $x$ values. An adaptive $dx$ makes sense and will be chosen so that the minimum $dx$ is
much smaller than the distance between the inflection points. The plot on the right shows the graph of the same function
but more conveniently.}
\label{no-exaggeration}
\end{figure}
\begin{equation}
x_{n + 1} = x_{n} - \frac{f(x_n)}{f^{\prime}(x_n)} \; . \nonumber
\end{equation}
This is Newton's method where the sequence converges to a zero of $f$ unless the initial guess $x_0$ is sufficiently
far away. The derivative becomes the Jacobian when we have more than one variable.
\begin{eqnarray}
g^{\prime}(E^n(t + dt))_{ij} (E_j^{n + 1}(t + dt) - E_j^n(t + dt)) = -g_i(E^n(t + dt)) \nonumber \\
g_i(E(t + dt)) = E_i(t) + f_i(E(t + dt)) dt - E_i(t + dt) \label{newton-method}
\end{eqnarray}
We should solve this (tridiagonal) linear system for an initial guess of $E_i^0(t + dt) = E_i(t)$ since $dt$ is small
enough to not change the energy much with each step. The Crank-Nicolson method for us is then
\begin{eqnarray}
E_i(t + dt) &=& \frac{1}{2} \left [ E_i(t) + f_i(E(t)) dt + E^{\infty}_i(t + dt) \right ] \nonumber \\
f_i(E(t)) &=& -\frac{\beta(E_{i+1}(t)) - 2\beta(E_i(t)) + \beta(E_{i-1}(t))}{(dx)^2} \; , \label{crank-nicolson}
\end{eqnarray}
with $E^{\infty}_i(t + dt)$ generated in the (\ref{newton-method}) way. One aspect of this which is still undesirable
is having a constant $dx$. Given the energy profiles we wish to evolve, it will be more convenient to have $dx$ depend
on $i$ so that the lattice can be finely grained in regions where the energy profile changes the most. Figure
\ref{no-exaggeration} shows the need for a variable spatial step and also establishes the conventions we will use to
have logarithmic axes.

\subsubsection{Convergence tests}
We wish to test this algorithm, but do not have any exact solutions at our disposal. To verify that our finite
difference scheme converges to a solution as $O(h^2) \equiv O ((dx)^2)$, there is a useful test suggested by
\cite{choptuik} based on the Richardson expansion:
\begin{equation}
E^h(x, t) = E(x, t) + e_2(x, t) h^2 + e_4(x, t) h^4 + \dots \; . \label{richardson}
\end{equation}
Here $E^h$ is the approximate solution computed by the Crank-Nicolson method, which differs from the exact
solution according to functions $e_2$, $e_4$, \textit{etc}. If we choose different discretizations,
$dx \in \{h, \; 2h, \; 4h\}$ for example, (\ref{richardson}) states that for a given $(x, t)$,
\begin{equation}
Q(t) = \lim_{h \rightarrow 0} \frac{E^{4h}(x, t) - E^{2h}(x, t)}{E^{2h}(x, t) - E^h(x, t)} = \lim_{h \rightarrow 0} \frac{16h^2 - 4h^2}{4h^2 - h^2} = 4 \; . \nonumber
\end{equation}
As a better check, we will compare the 2-norms of the numerical solutions instead of choosing a single point $x$.
\begin{equation}
Q(t) = \lim_{h \rightarrow 0} \frac{\left | \left | E^{4h}(\cdot, t) - E^{2h}(\cdot, t) \right | \right |_{\ell^2}}{\left | \left | E^{2h}(\cdot, t) - E^h(\cdot, t) \right | \right |_{\ell^2}} = 4 \label{q-factor}
\end{equation}
Plotting this for five different values of $h$, it looks like $Q(t) = 2$. This would be expected for a non-centred
finite difference method whose Richardson expansion looks like:
\begin{equation}
E^h(x, t) = E(x, t) + e_1(x, t) h + e_2(x, t) h^2 + \dots \; . \nonumber
\end{equation}
\begin{figure}[h]
\includegraphics[scale=0.6]{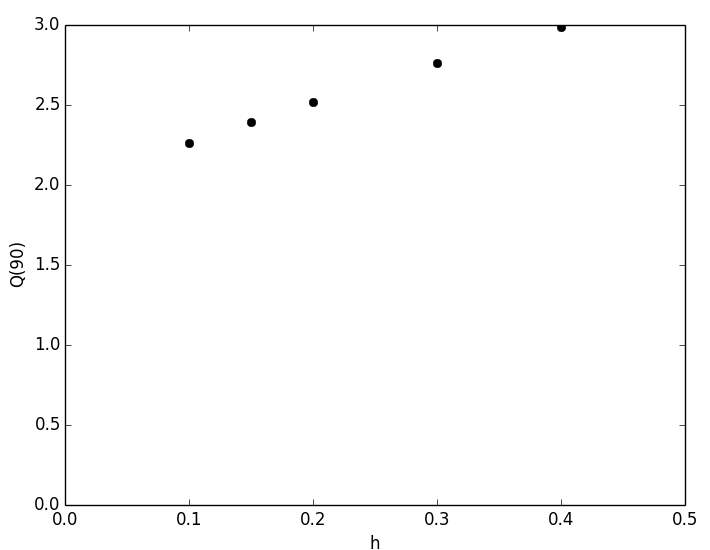}
\caption{The factor $Q(t)$ plotted for simulations that lasted $t = 90.0$. The grid was uniform and five step sizes
were used.}
\label{convergence-test}
\end{figure}
Our departure from the expected error $O(h^2)$ is likely a result of the nonlinear algebraic equations. As soon as we
use Newton's method, Crank-Nicolson is not being followed exactly and an $O(h)$ error can be introduced. Linear
combinations such as $2E^h - E^{2h}$ can be used to cancel this error term in a technique called Richardson
extrapolation. This has been successfully used for other diffusion equations with the Crank-Nicolson method
\cite{zlatev}.

\subsection{Simulation results}
Going ahead with the Crank-Nicolson method, one can print out slices of the energy every so often to see how it is
diffusing. The numerics make it clear that there are two time scales of interest. What we have predicted
in (\ref{timescale-1}) is the long decay time needed for a distribution of energy to descend below $E_{\mathrm{H}}$.
Before this, the distribution will reach $E_{\mathrm{F}}$ or some number close to it, at what we call the termalization
time. Figure \ref{commonality} shows that for a variety of initial conditions, the energy profiles are qualitatively
similar to step functions when this happens.
\begin{figure}[h]
\centering
\subfloat[][Wider function]{\includegraphics[scale=0.35]{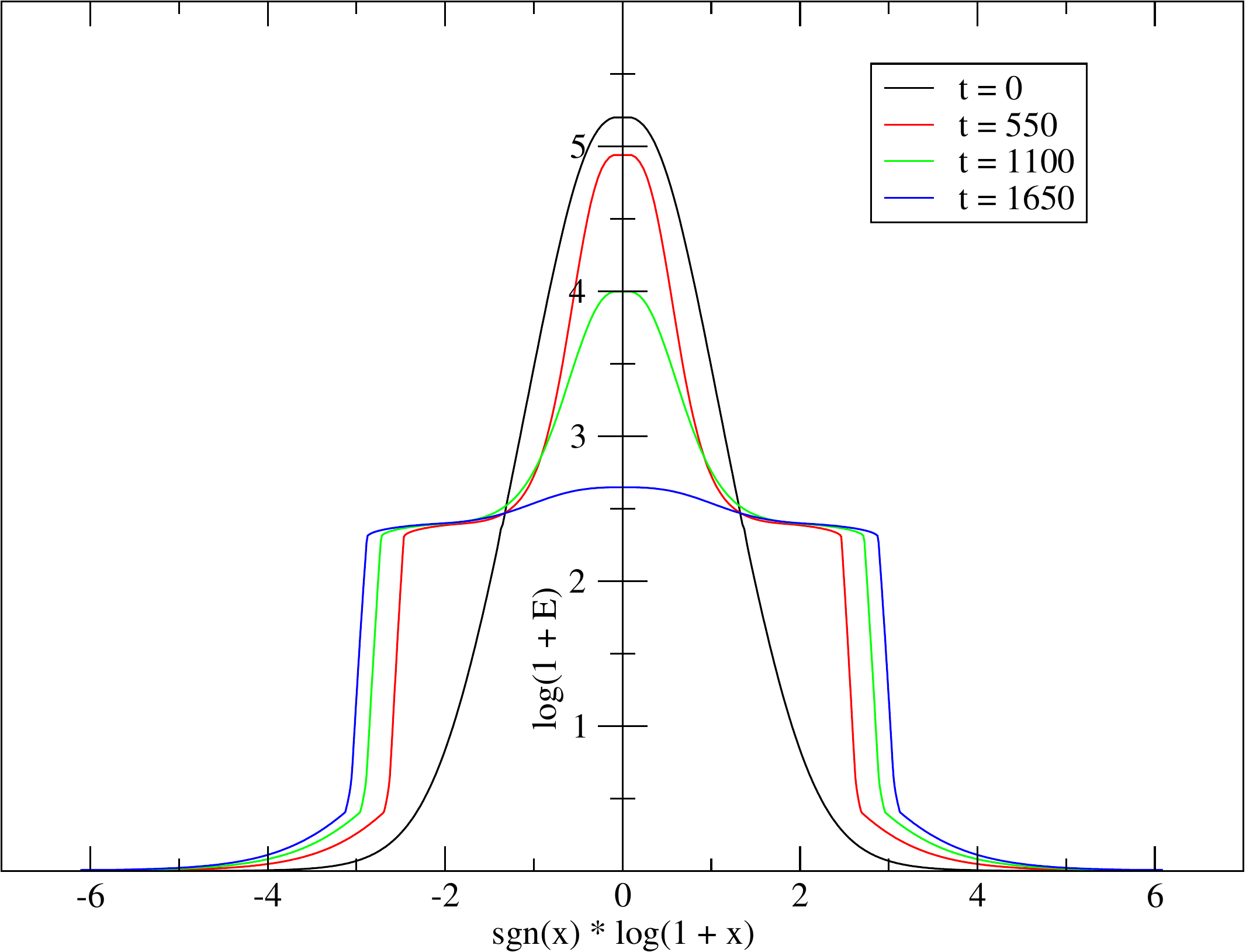}}
\subfloat[][Narrower function]{\includegraphics[scale=0.35]{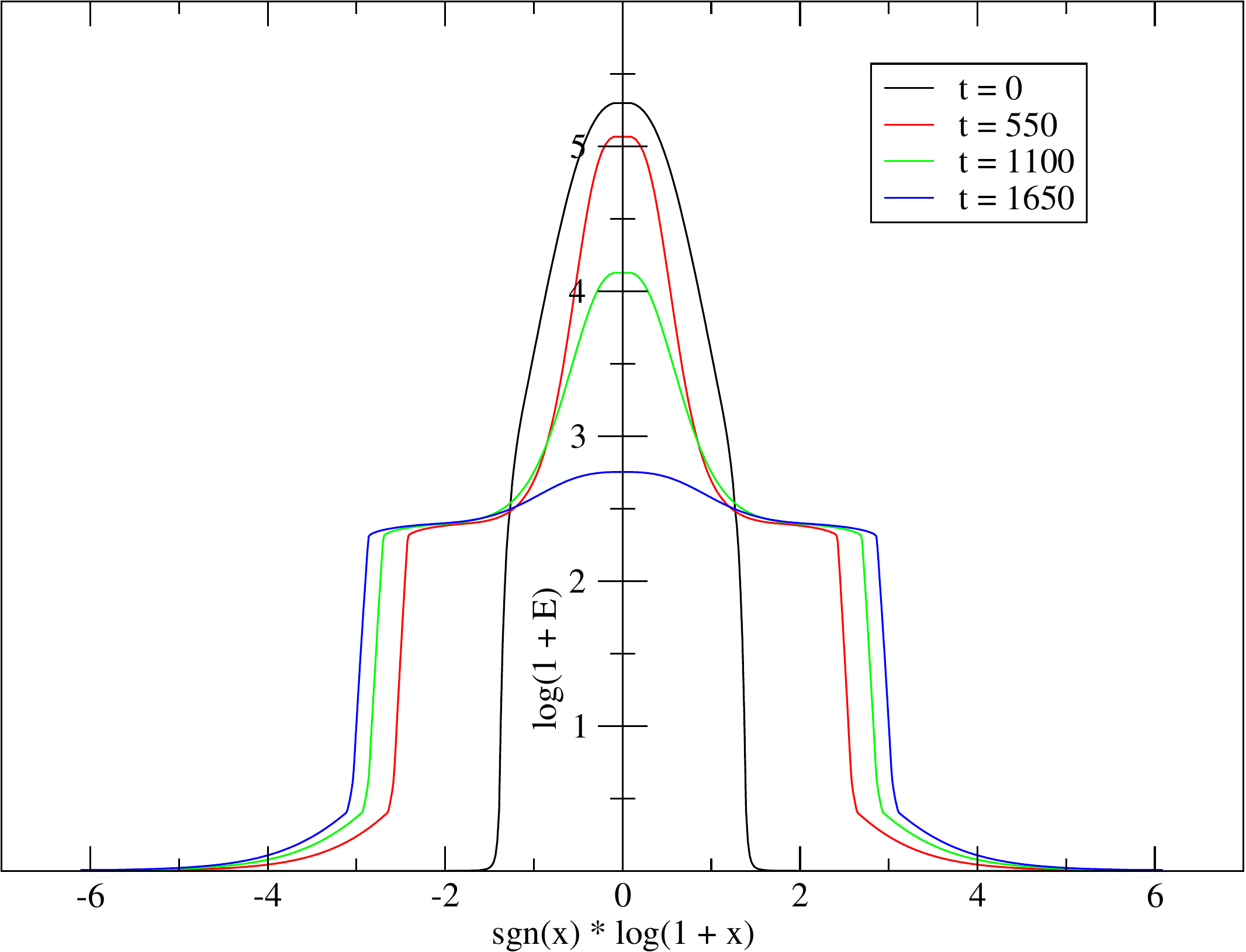}}
\caption{These initial conditions with the same total mass both evolve toward a state that has a flat line
in the high energy phase near $E_{\mathrm{F}} = 10.0$.}
\label{commonality}
\end{figure}

\subsubsection{Short time dynamics}
\begin{figure}[!t]
\subfloat[][Same peak]{\includegraphics[scale=0.4]{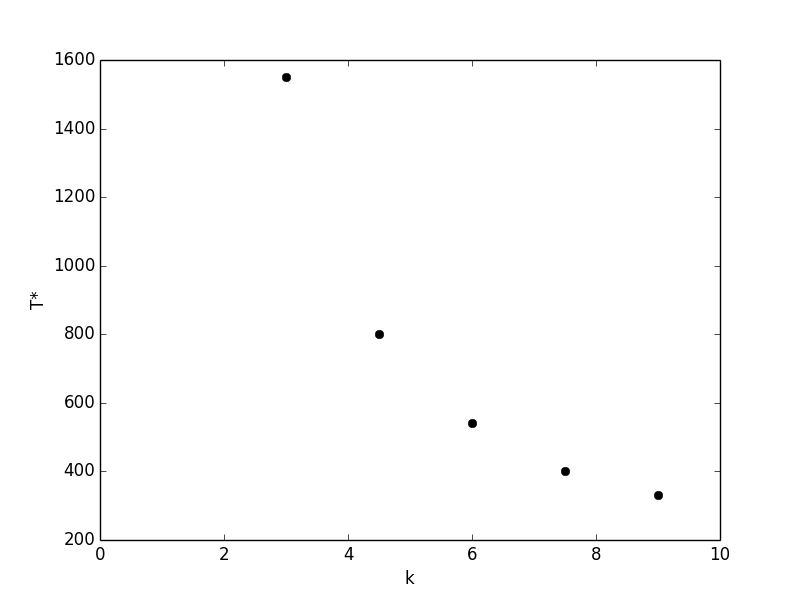}}
\subfloat[][Same mass]{\includegraphics[scale=0.4]{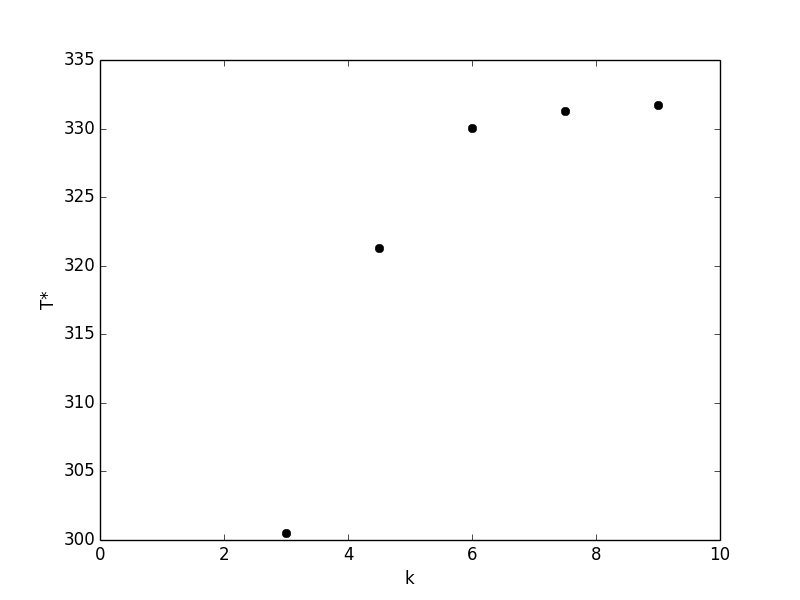}}
\caption{Two plots of the thermalization time for field theory energy $E_{\mathrm{F}} = 10.0$ and different values of
$k$ in the initial condition.}
\label{therm-times}
\end{figure}
\begin{figure}[!t]
\includegraphics[scale=0.4]{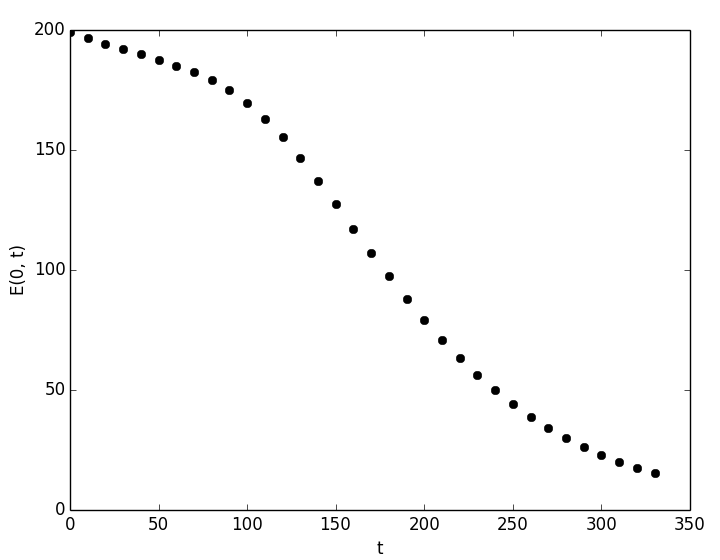}
\caption{A plot of how the peak energy moves down with time. This corresponds to the $k = 9$ decay in Figure
\ref{therm-times}.}
\label{nopower}
\end{figure}
The left plot in Figure \ref{commonality} uses the
$E_0(x) = 200 \left ( \frac{1}{1 + x^2} \right )^{\frac{k}{2}}$ form. It is fairly clear from the figure
that the profile is flattening out somewhere above $\log \left (1 + E_{\mathrm{F}} \right ) = \log 11$. After
interpolation, the effective Hagedorn and field theory energies become slightly shifted with respect to the parameters
$E_{\mathrm{H}}$ and $E_{\mathrm{F}}$. Therefore to measure thermalization times as a function of $k$, we have
used $\frac{3}{2} E_{\mathrm{F}}$ as a rule of thumb for where the peak energy should be. After making a plot for
five values of $k$, we see that the thermalization time is much more sensitive to variations in $L E_{\mathrm{F}} / M$
than $E_{\mathrm{F}} / E_{\mathrm{max}}$.

Another trend we may investigate is how $E(0, t)$ behaves as a function of $t$. Plotting this for $k = 9$ will give us
Figure \ref{nopower}. After a slow start at early times, the plot becomes steeper before levelling off again. This
basic shape holds for other $k$ values as well. It is different from the decay of a Barenblatt profile which would
always have $E(0, t)$ as a power law. The time $t^*$ when $E(0, t)$ is changing most quickly appears somewhere in
Figure \ref{nopower}. If we work with the functions having similar thermalization times (the ones normalized to have
the same mass), we find that $t^* \approx 0.6 T^*$ in all five cases. The ratio between
$\frac{\textup{d}}{\textup{d}t} E(0, t^*)$ and the average rate of change in the interval $(0, T^*)$ is about $1.8$. We
have not found a way to predict these numbers analytically.

\subsubsection{Long time dynamics}
\begin{figure}[!t]
\includegraphics[scale=0.4]{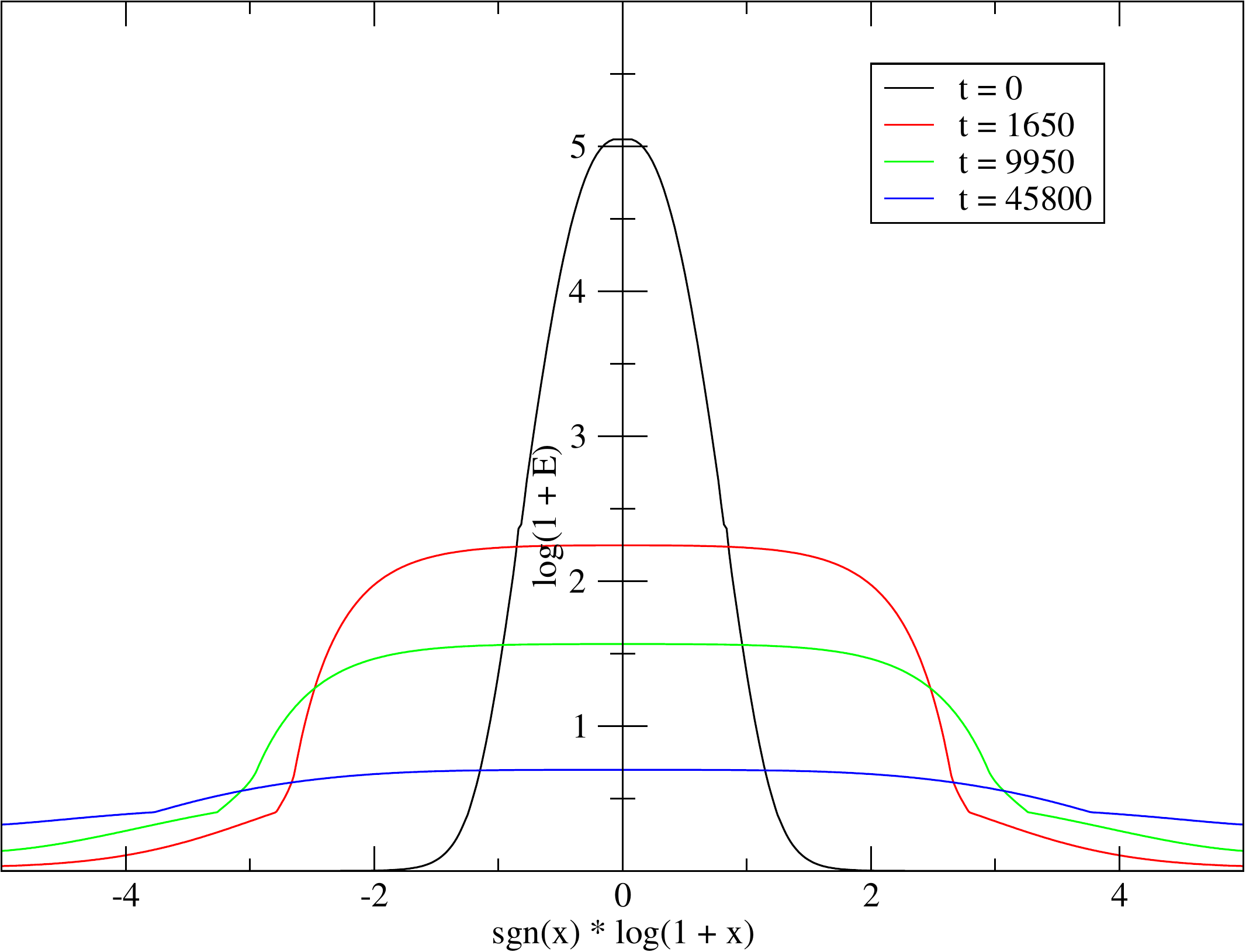}
\caption{This shows one of our initial conditions diffusing all the way through the Hagedorn regime. It takes about
a hundred times longer to do this than it does to reach the thermalization time.}
\label{decay-profiles}
\end{figure}
\begin{figure}[!b]
\includegraphics[scale=0.5]{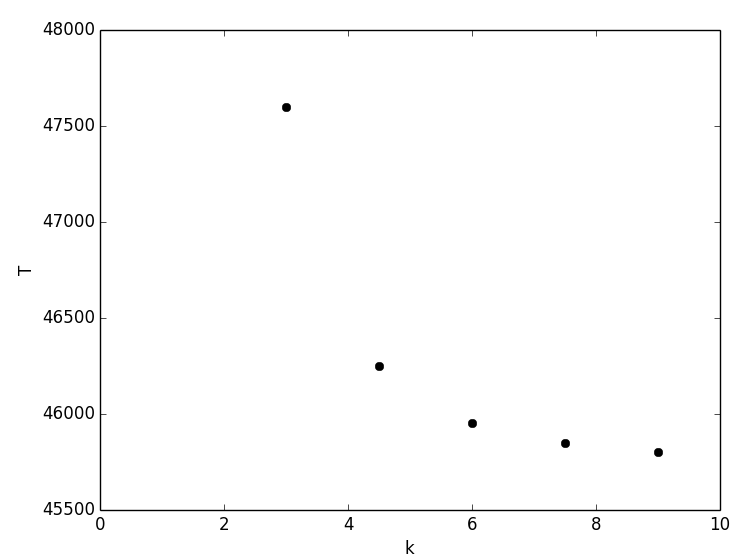}
\caption{A plot of the decay time for $E_{\mathrm{F}} = 10.0$, $E_{\mathrm{H}} = 1.0$ and five values of $k$. Even
though they specify different shapes for the initial condition, the times are all within $4\%$ of eachother.}
\label{decay-times}
\end{figure}
\begin{figure}[!h]
\includegraphics[scale=0.5]{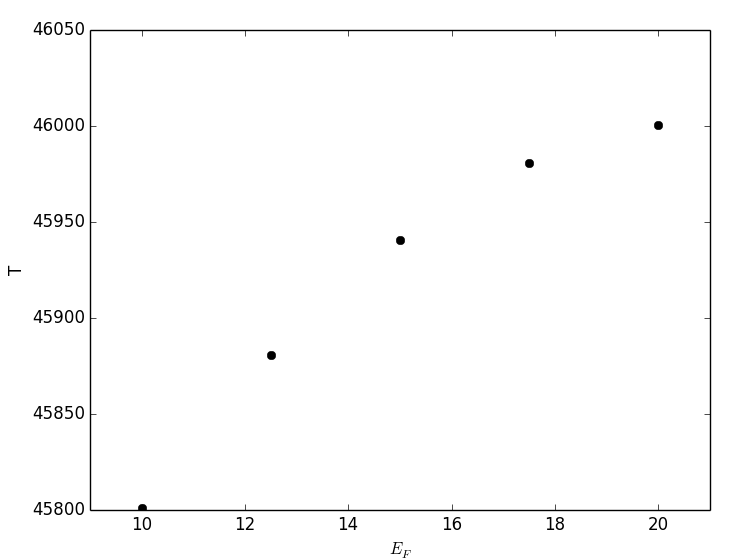}
\caption{A plot of the decay time for $k = 9$, $E_{\mathrm{H}} = 1.0$ and five values of $E_{\mathrm{F}}$. Even
though the profiles begin to flatten out at different heights, their final decay times are all within $0.4\%$ of
eachother.}
\label{decay-times-closer}
\end{figure}
The next step is to wait until $E(0, t) < E_{\mathrm{H}}$. An example of what this diffusion looks
like is in Figure \ref{decay-profiles}. This time, $T \gg T^*$ is predicted in (\ref{timescale-1}) to depend on the
squared mass. After all, a cylinder of radius $a$ and height $E_{\mathrm{F}}$ has a volume of
$2aE_{\mathrm{F}}$ if the base is one-dimensional. When investigating the short time dynamics, initial conditions of
five different shapes were normalized to have the same mass. Since $T^*$ was found to be almost the same for them, it
is no surprise that $T$ shares this property. Figure \ref{decay-times} shows a plot of these times.
As long as the mass is kept constant, early and late time dynamics are largely insensitive to $k$ so we will use the
initial condition
\begin{equation}
E_0(x) = 200 \left ( \frac{1}{1 + x^2} \right )^{\frac{9}{2}} \label{main-initial-condition}
\end{equation}
from now on. The major quantity we have not changed yet is the $E_{\mathrm{F}} / E_{\mathrm{H}}$ ratio. We have always
had $E_{\mathrm{F}} = 10.0$ and $E_{\mathrm{H}} = 1.0$ so far. As we know in one dimension, $T$ is proportional to
$M^2$ and not a power of $E_{\mathrm{F}}$. Whether or not we can change $E_{\mathrm{F}}$ and leave $T$ invariant will
be the real test of (\ref{timescale-1}). The condition
$1.0 = E_{\mathrm{H}} \ll E_{\mathrm{F}} \ll E_{\mathrm{max}} = 200.0$ needs to be satisfied so we will choose a few
$10.0 \leq E_{\mathrm{F}} \leq 20.0$ values. The decay times for these choices, plotted in Figure
\ref{decay-times-closer}, are remarkably close. The exact prefactors in (\ref{timescale-1}) still need to be checked
but it turns out that they are not very constraining. We know that $\alpha = \frac{9}{10}$ and taking the length $L$
from the code, we can plug it into (\ref{main-initial-condition}) to find $E_{\mathrm{min}} = E_0(L)$. Because
$E_{\mathrm{min}}$ is so small, the prefactor multiplying $M^2$ in the upper bound from (\ref{timescale-1}) is very
different from the prefactor multiplying $M^2$ in the lower bound from (\ref{timescale-1}). In this simulation, they
differ by a factor of
\begin{equation}
E_{\mathrm{min}}^{\alpha - 2} = \left ( 8.69 \cdot 10^{-19} \right )^{\frac{9}{10} - 2} = 7.36 \cdot 10^{19} \; , \nonumber
\end{equation}
which is much larger than any of the numbers put into the simulation by hand.

\subsubsection{Higher dimensions}
\begin{figure}[!h]
\subfloat[][Zoomed in]{\includegraphics[scale=0.3]{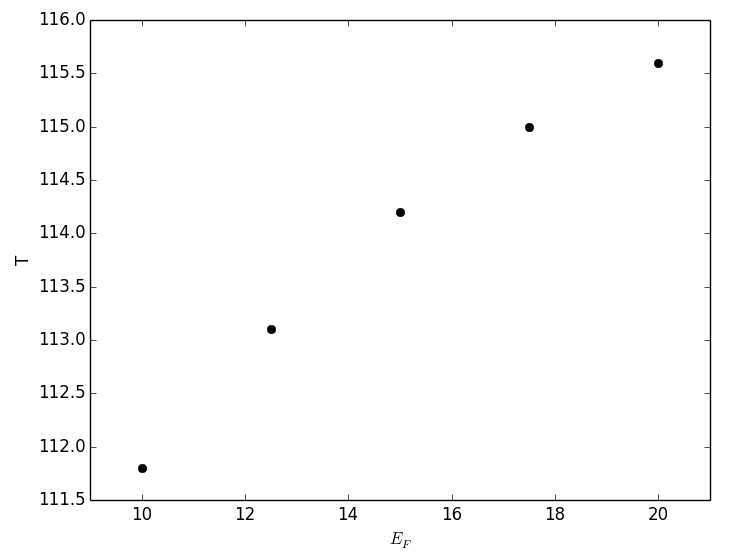}}
\subfloat[][Zoomed out]{\includegraphics[scale=0.3]{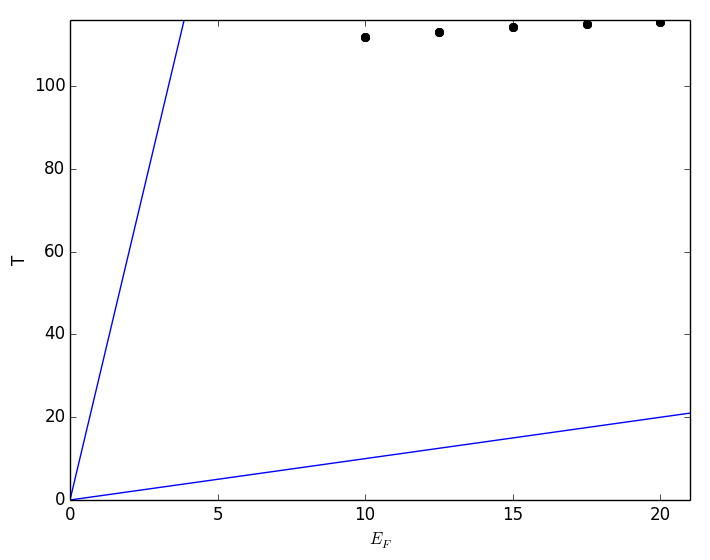}}
\caption{On the left is a plot of the decay time in two dimensions for $k = 9$, $E_{\mathrm{H}} = 1.0$ and five values
of $E_{\mathrm{F}}$. These are much smaller than the one-dimensional decay times in Figure \ref{decay-times-closer}.
The zoomed out version on the right shows that even though these dots do not vary linearly, they are sandwiched between
two bounds which do vary linearly. For the sake of the plot, the blue lines have been understated. In reality the 
lower bound is much closer to being horizontal and the upper bound is much closer to being vertical.}
\label{decay-times-2d}
\end{figure}
Good agreement between our prediction and the numerical results has been demonstrated in one dimension. In our case,
extending the numerics to higher dimensions is simple because we are focusing on spherically symmetric data. Our
equation becomes
\begin{eqnarray}
\frac{\partial E}{\partial t} &=& -r^{1 - d} \frac{\partial}{\partial r} \left ( r^{d - 1} \frac{\partial \beta(E)}{\partial r} \right ) \nonumber \\
&=& -\frac{\partial^2 \beta(E)}{\partial r^2} - \frac{d - 1}{r} \frac{\partial \beta(E)}{\partial r} \; . \nonumber
\end{eqnarray}
This adds a term with a single spatial derivative to our one-dimensional equation from before. A plot like Figure
\ref{decay-times-closer} should no longer be expected because $T$ in (\ref{timescale-1}) is no longer proportional to
$M^2$. In two dimensions,
$\left ( a E_{\mathrm{F}} \right )^2 = \left ( a^2 E_{\mathrm{F}} \right ) E_{\mathrm{F}} \propto M E_{\mathrm{F}}$.
This means we are looking for a decay time that varies linearly with the field theory energy. After modifying the
Crank-Nicolson code in the appendix, two-dimensional simulations testing this have been performed. The results, plotted
in Figure \ref{decay-times-2d}, do not appear to be proportional to $E_{\mathrm{F}}$. The $E_{\mathrm{F}} = 20.0$ decay
time is much less than twice the $E_{\mathrm{F}} = 10.0$ decay time. Even though the times scale differently from the
bounds in (\ref{timescale-1}), they do not actually violate the bounds because of the extreme prefactors that differ
by a factor of $10^{19}$.

The typical values for $T$ differ greatly between dimensions one and two. More generally, the decay times when
$d \geq 2$ are much smaller than when $d = 1$. One dimension is special because it is only in this case that the
$\alpha < 1$ Barenblatt profile exists. Figure \ref{decay-times-2d} shows decay times that are only nonzero because of
the bounded domain - a phenomenon known as instantaneous extinction. This states that $T \rightarrow 0$ as
$E_{\mathrm{min}} \rightarrow 0$, or in other words, the process we are simulating is not well defined in infinite
volume.

\section{Room for improvement}
The dynamics of our model have shown many similarities to plasma balls but there are three main problems. To start,
our solutions show that energy diffuses to infinity rather slowly once it escapes the central region. The phenomenon
of decay via hadron ejection is not realized. A more serious problem is the instantaneous extinction we have seen for
a lattice with two or more dimensions. Finally, we have seen that some punishing prefactors allow the admissible decay
times to span several orders of magnitude.

Though one can imagine several \textit{ad hoc} changes to our model that might eliminate these problems, an extension
that refers to conserved quantities other than energy is physically well motivated. The remainder of this thesis
will focus on developing an explicit PDE description for the momentum in a Poincar\'e invariant theory.

\subsection{A common approximation}
There is already a widely applicable PDE framework called hydrodynamics that puts energy and momentum on equal footing.
Hydrodynamics is concerned with systems that are close to equilibrium. Because of this, a system may deviate from a
translation invariant state only slightly leading to slowly varrying charges $E(\textbf{x}, t)$,
$\textbf{P}(\textbf{x}, t)$ and $Q(\textbf{x}, t)$. These are associated with long distance variations in their sources
$T(\textbf{x}, t)$, $\textbf{v}(\textbf{x}, t)$ and $\mu(\textbf{x}, t)$ respectively.

A theory with a symmetric stress-energy tensor and a $U(1)$ current has $d + 1 + \frac{(d + 1)(d + 2)}{2}$ independent
components. It is therefore not possible to describe every system using the $d + 2$ hydrodynamic variables above.
Nevertheless, hydrodynamics provides a good description of field theories with high occupation numbers and has
demonstrated a particular aptitude for problems relating to plasma balls and black holes.

\subsubsection{Basic hydrodynamics}
The hydrodynamics equations are nothing but the local conservation laws
\begin{eqnarray}
\partial_{\mu} T^{\mu \nu} &=& 0 \nonumber \\
\partial_{\mu} J^{\mu} &=& 0 \; . \label{hydro-equations}
\end{eqnarray}
The conserved currents are built from $T$, $\mu$ and $u^{\mu}$ making this a system of $d + 2$ equations for $d + 2$
unknown functions (the velocity satisfies $u^2 = -1$). The rules for writing (\ref{hydro-equations}) in terms of the
hydrodynamic variables are called the \textit{constitutive relations}. There is no limit to how complicated the
constitutive relations might be so it is helpful to use the assumption that the functions are slowly varrying. This
allows one to consider different versions of hydrodynamics based on how many derivatives are kept. In zeroth order
hydrodynamics, the temperature, velocity and chemical potential are not differentiated in the expressions for
$T^{\mu\nu}$ and $J^{\mu}$. In first order hydrodynamics, they are differentiated at most once. These orders are often
referred to as ``ideal'' and ``dissipative'' hydrodynamics. It is important to note that this is different from the
linearization of hydrodynamics. Nonlinear equations (\textit{e.g.} the Navier-Stokes equations) can easily arise from
zeroth or first order hydrodynamics, so it is common to introduce a further order-by-order expansion that drops terms
with too many variables multiplied together.

The first step in developing constitutive relations is decomposing the currents into components that are transverse
and longitudinal with respect to $u^{\mu}$. This leads to
\begin{eqnarray}
T^{\mu\nu} &=& \mathcal{E}u^{\mu}u^{\nu} + \mathcal{P}\Delta^{\mu\nu} + \left ( q^{\mu}u^{\nu} + q^{\nu}u^{\mu} \right ) + t^{\mu\nu} \nonumber \\
J^{\mu} &=& \mathcal{N}u^{\mu} + j^{\mu} \label{hydro-decomposition}
\end{eqnarray}
where we have defined the projector $\Delta^{\mu\nu} = \eta^{\mu\nu} + u^{\mu}u^{\nu}$. Transverse quantities like
$j^{\mu}$, $q^{\mu}$ and $t^{\mu\nu}$ cannot be built out of $T$, $\mu$ and $u^{\mu}$ without derivatives. Therefore
ideal hydrodynamics takes the form:
\begin{eqnarray}
T^{\mu\nu} &=& \varepsilon u^{\mu}u^{\nu} + P \Delta^{\mu\nu} \nonumber \\
J^{\mu} &=& n u^{\mu} \; . \label{ideal-constituents}
\end{eqnarray}
The pressure $P$, while not a charge or a source, is usually given in terms of sources by an equation of state
$P(T, \mu)$. Going to the fluid's rest frame makes it clear that $\varepsilon$ is the energy density and $n$ is the
charge density.

Things become more complicated when we include one order of dissipation \cite{kovtun}. Looking at
(\ref{hydro-decomposition}), many quantities such as $u^{\mu}$, $q^{\mu}$ and $j^{\mu}$ depend on position for a
general non-equilibrium configuration. There is nothing that prevents us from changing $u^{\mu}$ to a different
function of spacetime as long as $q^{\mu}$ and $j^{\mu}$ change as well to keep $T^{\mu\nu}$ and $J^{\mu}$ invariant.
This redundancy, similar to a gauge freedom, is called \textit{frame invariance} in hydrodynamics \cite{ambrosetti}.
Consider a shift $u^{\prime \mu}(\textbf{x}, t) = u^{\mu}(\textbf{x}, t) + \delta u^{\mu}(\textbf{x}, t)$. To preserve
the normalization, $\delta u^{\mu}$ is transverse to $u^{\mu}$. Using the inverse of (\ref{hydro-decomposition}),
corresponding changes in the coefficients can be calculated to first order.
\begin{eqnarray}
\mathcal{E} &=& u_{\mu} u_{\nu} T^{\mu\nu} \nonumber \\
\delta \mathcal{E} &=& 2 u_{\mu} \delta u_{\nu} T^{\mu\nu} = 2 u_{\mu} \delta u_{\nu} t^{\mu\nu} \approx 0 \nonumber \\
\mathcal{P} &=& \frac{1}{d} \Delta_{\mu\nu} T^{\mu\nu} \nonumber \\
\delta \mathcal{P} &=& \frac{2}{d} u_{\mu} \delta u_{\nu} T^{\mu\nu} \approx 0 \nonumber \\
\mathcal{N} &=& -u_{\mu} J^{\mu} \nonumber \\
\delta \mathcal{N} &=& -\delta u_{\mu} J^{\mu} = -\delta u_{\mu} j^{\mu} \approx 0 \; . \nonumber
\end{eqnarray}
Note that purely dissipative quantities like $j^{\mu}$ and $t^{\mu\nu}$ become second order when multiplied by
$\delta u_{\mu}$. These are all zero which explains why we could write down (\ref{ideal-constituents}) without worrying
about frame invariance. The same analysis would give $\delta t^{\mu\nu} = 0$ as well. Conversely,
\begin{eqnarray}
q_{\mu} &=& -\Delta_{\mu \alpha} u_{\beta} T^{\alpha\beta} \nonumber \\
\delta q_{\mu} &=& -\delta u^{\beta} T_{\mu\beta} - 2 u_{\mu} \delta u_{\alpha} u_{\beta} T^{\alpha\beta} - \delta u_{\mu} u_{\alpha} u_{\beta} T^{\alpha\beta} \nonumber \\
&=& -\mathcal{P} \delta u_{\mu} - q^{\beta} u^{\mu} \delta u_{\beta} - \delta u_{\beta} t^{\mu\beta} + 2 u_{\mu} \delta u_{\alpha} q^{\alpha} - \mathcal{E} \delta u_{\mu} \nonumber \\
&\approx& -(\mathcal{E} + \mathcal{P}) \delta u_{\mu} \nonumber \\
j_{\mu} &=& \Delta_{\mu\nu} J^{\nu} \nonumber \\
\delta j_{\mu} &=& u_{\mu} \delta u_{\nu} J^{\nu} + u_{\nu} \delta u_{\mu} J^{\nu} \nonumber \\
&=& u_{\mu} \delta u_{\nu} j^{\nu} - \mathcal{N} \delta u_{\mu} \nonumber \\
&\approx& -\mathcal{N} \delta u_{\mu} \; . \nonumber
\end{eqnarray}
We see that a suitable definition of local velocity reduces the number of terms in (\ref{hydro-decomposition}). The one
that makes $j^{\mu} = 0$ is called the Eckart frame while the one that makes $q^{\mu} = 0$ is called the Landau frame.
Naturally, $T^{\prime}(\textbf{x}, t) = T(\textbf{x}, t) + \delta T(\textbf{x}, t)$ and $\mu^{\prime}(\textbf{x}, t) = \mu(\textbf{x}, t) + \delta \mu(\textbf{x}, t)$ are
allowed shifts of the other hydro variables. Under these redefinitions, the coefficients from
(\ref{ideal-constituents}) become
\begin{eqnarray}
\varepsilon(T^{\prime}, \mu^{\prime}) &=& \varepsilon(T, \mu) + \frac{\partial \varepsilon}{\partial T} \delta T + \frac{\partial \varepsilon}{\partial \mu} \delta \mu \nonumber \\
P(T^{\prime}, \mu^{\prime}) &=& P(T, \mu) + \frac{\partial P}{\partial T} \delta T + \frac{\partial P}{\partial \mu} \delta \mu \nonumber \\
n(T^{\prime}, \mu^{\prime}) &=& n(T, \mu) + \frac{\partial n}{\partial T} \delta T + \frac{\partial n}{\partial \mu} \delta \mu \; . \nonumber
\end{eqnarray}
This means we can define temperature and chemical potential such that $\mathcal{E} = \varepsilon$ and
$\mathcal{N} = n$ \cite{kovtun}.

Choosing the Landau frame, constitutive relations are expressions for the scalar $\mathcal{P}$, the transverse vector
$j^{\mu}$ and the transverse traceless symmetric tensor $t^{\mu\nu}$. These may contain any combination of $T$,
$\mu$ and $u^{\mu}$ with one derivative. However, the equations of zeroth order hydrodynamics (\ref{ideal-constituents})
give relations between many of these terms up to higher order corrections. The Landau frame expressions that follow
from this are:
\begin{eqnarray}
T^{\mu\nu} &=& \varepsilon u^{\mu} u^{\nu} + \left ( P - \zeta \partial_{\lambda}u^{\lambda} \right ) \Delta^{\mu\nu} - \eta \Delta^{\mu\alpha} \Delta^{\nu\beta} \left ( \partial_{\alpha} u_{\beta} + \partial_{\beta} u_{\alpha} - \frac{2}{d} \eta_{\alpha\beta} \partial_{\lambda} u^{\lambda} \right ) \nonumber \\
J^{\mu} &=& nu^{\mu} - \sigma T \Delta^{\mu\nu} \partial_{\nu} \left ( \frac{\mu}{T} \right ) + \chi_T \Delta^{\mu\nu} \partial_{\nu} T \; . \label{dissipative-constituents}
\end{eqnarray}
The frame invariant parameters $\zeta$, $\eta$ and $\sigma$ are functions of $T$ and $\mu$ that must be determined from
experiment or the microscopic theory. They are called the bulk viscosity, shear viscosity and conductivity respectively.
It turns out that $\chi_T$ must be zero for time-reversal invariance to be satisfied. The linearization of
(\ref{dissipative-constituents}) will be important in what follows.

We linearize around the equilibrium solution which has constant hydro variables and zero velocity. This allows us to
write $u^{\mu} = (1, v^i)$. Explicitly, what we seek are evolution equations for
\begin{eqnarray}
J^0(\textbf{x}, t) &=& n(T(\textbf{x}, t), \mu(\textbf{x}, t)) \nonumber \\
&\approx& n(T_0, \mu_0) + \left ( \frac{\partial n}{\partial T} \right )_0 \delta T(\textbf{x}, t) + \left ( \frac{\partial n}{\partial \mu} \right )_0 \delta \mu(\textbf{x}, t) \nonumber \\
&\equiv& n_0 + \tilde{n}(\textbf{x}, t) \nonumber \\
T^{00}(\textbf{x}, t) &=& \varepsilon(T(\textbf{x}, t), \mu(\textbf{x}, t)) \nonumber \\
&\approx& \varepsilon(T_0, \mu_0) + \left ( \frac{\partial \varepsilon}{\partial T} \right )_0 \delta T(\textbf{x}, t) + \left ( \frac{\partial \varepsilon}{\partial \mu} \right )_0 \delta \mu(\textbf{x}, t) \nonumber \\
&\equiv& \varepsilon_0 + \tilde{\varepsilon}(\textbf{x}, t) \nonumber \\
T^{0i}(\textbf{x}, t) &=& (\varepsilon(T(\textbf{x}, t), \mu(\textbf{x}, t)) + P(T(\textbf{x}, t), \mu(\textbf{x}, t))) v^i(\textbf{x}, t) \nonumber \\
&\approx& (\varepsilon_0 + P_0) v^i(\textbf{x}, t) \nonumber \\
&\equiv& \tilde{P}^i(\textbf{x}, t) \; . \nonumber
\end{eqnarray}
It is important to remember that most coefficients in (\ref{ideal-constituents}) and (\ref{dissipative-constituents})
are functions of $T$ and $\mu$. Their derivatives evaluated at $T_0$ and $\mu_0$ should therefore appear in the
linearization. These derivatives, known as susceptibilities, do not have a functional dependence on $T$ and $\mu$
anymore; they are simply numbers. For this reason, we should not be worried if the linearized hydrodynamics equations
appear to have more coefficients than the six we have seen so far. One equation of hydrodynamics is always the
continuity equation $\frac{\partial \varepsilon}{\partial t} = \partial_i T^{0i}$. It is easy to see that this becomes
\begin{equation}
\frac{\partial \tilde{\varepsilon}}{\partial t} = \partial_i \tilde{P}^i \label{continuity-equation}
\end{equation}
because it is already linear. The equation involving stresses linearizes to
\begin{eqnarray}
\frac{\partial \tilde{P}^j}{\partial t} &=& \partial_i \left [ \left ( P - \zeta \partial_k v^k \right ) \delta^{ij} - \eta \left ( \partial^i v^j + \partial^j v^i  - \frac{2}{d} \delta^{ij} \partial_k v^k \right ) \right ] \nonumber \\
&=& \partial_i \left [ \left ( P - \frac{\zeta}{\varepsilon_0 + P_0} \partial_k \tilde{P}^k \right ) \delta^{ij} - \frac{\eta}{\varepsilon_0 + P_0} \left ( \partial^i \tilde{P}^j + \partial^j \tilde{P}^i  - \frac{2}{d} \delta^{ij} \partial_k \tilde{P}^k \right ) \right ] \nonumber \\
&=& \partial_i \left [ \delta^{ij} \left ( \frac{\partial P}{\partial \varepsilon} \right )_0  \tilde{\varepsilon} + \delta^{ij} \left ( \frac{\partial P}{\partial n} \right )_0  \tilde{n} + \delta^{ij} \frac{\frac{2}{d} \eta_0 - \zeta_0}{\varepsilon_0 + P_0} \partial_k \tilde{P}^k - \frac{\eta_0}{\varepsilon_0 + P_0} \left ( \partial^i \tilde{P}^j + \partial^j \tilde{P}^i \right ) \right ] \nonumber \\
&=& \left ( \frac{\partial P}{\partial \varepsilon} \right )_0 \partial^j \tilde{\varepsilon} + \left ( \frac{\partial P}{\partial n} \right )_0 \partial^j \tilde{n} + \frac{\frac{2 - d}{d} \eta_0 - \zeta_0}{\varepsilon_0 + P_0} \partial^j \partial_k \tilde{P}^k - \frac{\eta_0}{\varepsilon_0 + P_0} \partial_k \partial^k \tilde{P}^j \; . \label{stress-equation}
\end{eqnarray}
Finally, to handle the $U(1)$ current
\begin{eqnarray}
\frac{\partial \tilde{n}}{\partial t} &=& \partial_i \left [ nv^i - \sigma T \delta^{ij} \partial_j \left ( \frac{\mu}{T} \right ) \right ] \nonumber \\
&=& \partial_i \left [ \frac{n_0}{\varepsilon_0 + P_0} \tilde{P}^i - \sigma T \delta^{ij} \partial_j \left ( \frac{\mu}{T} \right ) \right ] \nonumber \\
&=& \partial_i \left [ \frac{n_0}{\varepsilon_0 + P_0} \tilde{P}^i - \sigma_0 T \delta^{ij} \left ( \frac{\partial \mu / T}{\partial \varepsilon} \right )_0 \partial_j \tilde{\varepsilon} - \sigma_0 T \delta^{ij} \left ( \frac{\partial \mu / T}{\partial n} \right )_0 \partial_j \tilde{n} \right ] \nonumber \\
&=& \frac{n_0}{\varepsilon_0 + P_0} \partial_i \tilde{P}^i - \sigma_0 T \left ( \frac{\partial \mu / T}{\partial \varepsilon} \right )_0 \partial_i \partial^i \tilde{\varepsilon} - \sigma_0 T \left ( \frac{\partial \mu / T}{\partial n} \right )_0 \partial_i \partial^i \tilde{n} \; . \label{current-equation}
\end{eqnarray}
It is interesting to note that the continuity equation (\ref{continuity-equation}) effectively splits into two
equations in non-relativistic hydrodynamics. In this limit, one must insert the speed of light back into
(\ref{dissipative-constituents}) and collect powers of $c$. Taking $c \rightarrow \infty$, inverse powers
vanish leaving only $O(c)$ and $O(1)$ terms in the hydro equations. Since these orders can be considered separate,
they yield conservation of mass and conservation of kinetic energy \cite{kaminski}.

Matching the predictions of hydrodynamics with those of the interaction picture leads to the Kubo formulas. A simple
Kubo formula can be derived for the diffusion constant in the heat equation
$\frac{\partial n}{\partial t} - D \frac{\partial^2 n}{\partial x^2} = 0$. Performing a two-sided Fourier transform
in space and a one-sided Fourier transform in time, the equation can be solved as
\begin{equation}
n(k, z) = \frac{n_0(k)}{Dk^2 - iz} = \frac{\chi \mu_0(k)}{Dk^2 - iz} \; . \label{kubo-step1}
\end{equation}
We assume that the initial buildup of charge is due to a local chemical potential that existed before $t = 0$. If this
chemical potential increased adiabatically starting at $t = -\infty$, the corresponding perturbation to the Hamiltonian
is given by
\begin{eqnarray}
H &\mapsto& H - \int_{-\infty}^{\infty} \mu(x, t) n(x, t) \textup{d}x \nonumber \\
&=& H - \int_{-\infty}^{\infty} e^{\epsilon t} \mu_0(x) \theta(-t) n(x, t) \textup{d}x \nonumber \\
&\equiv& H - \delta H \; . \nonumber
\end{eqnarray}
Heisenberg's equation now gives
\begin{eqnarray}
\frac{\textup{d}}{\textup{d}t} \left < n(x, t) \right > &=& -i \left < [ n(x, t), \delta H(x, t) ] \right > \nonumber \\
\left < n(x, t) \right > &=& -i \int_{-\infty}^t \int_{-\infty}^{\infty} e^{\epsilon t^{\prime}} \mu_0(x) \theta(-t^{\prime}) \left < [n(x, t), n(x^{\prime}, t^{\prime})] \right > dx^{\prime} dt^{\prime} \nonumber \\
&=& -i \int_{-\infty}^0 \int_{-\infty}^{\infty} e^{\epsilon t^{\prime}} \mu_0(x) \theta(t-t^{\prime}) \left < [n(x, t), n(x^{\prime}, t^{\prime})] \right > dx^{\prime} dt^{\prime} \nonumber \\
&=& -\int_{-\infty}^o \int_{-\infty}^{\infty} e^{\epsilon t^{\prime}} \mu_0(x) G_{nn}^{\mathrm{R}} (t - t^{\prime}, x - x^{\prime}) \textup{d}x^{\prime} \textup{d}t^{\prime} \; . \nonumber
\end{eqnarray}
Here $G_{nn}^{\mathrm{R}} (t - t^{\prime}, x - x^{\prime})$ is the retarded Green's function of $n$ with itself.
The one fact we need is that the zero mode of $G_{nn}^{\mathrm{R}}$ is the susceptibility $-\chi$. It is a matter of
complex analysis to show that
\begin{eqnarray}
\left < n(k, z) \right > &=& -\mu_0(k) \int_{-\infty}^{\infty} G_{nn}^{\mathrm{R}} (\omega, k) \frac{1}{(\epsilon + i\omega)(\epsilon + i (\omega - z))} \frac{\textup{d}\omega}{2\pi} \nonumber \\
&=& \frac{\mu_0(k)}{iz} \left [ G_{nn}^{\mathrm{R}} (0, k) - G_{nn}^{\mathrm{R}} (z, k) \right ] \nonumber \\
&=& \frac{i \mu_0(k)}{z} \left [ \chi + G_{nn}^{\mathrm{R}} (z, k) \right ] \; . \label{kubo-step2}
\end{eqnarray}
Comparing (\ref{kubo-step1}) to (\ref{kubo-step2}), we see that a quantum theory well described by linear diffusion
should be one whose momentum space retarded Green's function is $\frac{\chi D k^2}{iz - Dk^2}$. A calculation that is
longer but equally straightforward has the linear hydrodynamics equations in place of the heat equation.
This is done in \cite{kovtun}. Additionally a field theory is developed that allows one to compute corrections to the
Green's functions arising from small nonlinearities in the hydrodynamics equations.

\subsubsection{The fluid / gravity correspondence}
Correlation functions in certain CFTs are related to supergravity amplitudes in certain AdS backgrounds.
The link between Green's functions and linearized hydrodynamics opens up the possibility of describing fluid
phenomena using gravity. This was demonstrated in \cite{policastro} which found that the shear viscosity in Super
Yang-Mills theory is given by $\eta = \frac{\pi}{8} N^2 T^3$. By now it is known that in arbitrary dimension, the
holographic shear viscosity differs from the entropy density by a factor of $\frac{1}{4\pi}$ \cite{rangamani}. A
major result of 2007 is that all transport coefficients in the fluid stress-energy tensor can be algorithmically found
from the gravity side as well \cite{hubeny}. This expansion, which is valid to all orders, is called the
fluid / gravity correspondence.

In the examples discussed previously, we related the energy of a state on the boundary of AdS to the mass of a black
hole in the bulk. More generally, there is a systematic way to find the boundary stress-energy tensor corresponding to
a given bulk metric \cite{balasubramanian, fg}. Because every metric yields a conserved stress-energy tensor, it is not
surprising that the fluid / gravity correspondence exists. However, the limit in which temperature and velocity suffice
to describe $T^{\mu\nu}$ could have corresponded to an intractable $g_{\mu\nu}$. Showing otherwise, \cite{hubeny}
started with the metric
\begin{equation}
ds^2 = L^2 \left [ -\rho^2 \left ( 1 - \frac{1}{z_0^d \rho^d} \right ) dv^2 + 2 dv d\rho + \rho^2 dx_i dx^i \right ] \; . \label{ed-fink-poincare}
\end{equation}
This is nothing but the black brane (\ref{schwarzschild-poincare}) written in ingoing Eddington-Finkelstein
co-ordinates with $\rho = \frac{1}{z}$. Boosting the brane to a particular velocity, this becomes
\begin{equation}
ds^2 = L^2 \left [ -\rho^2 \left ( 1 - \frac{1}{z_0^d \rho^d} \right ) u_{\mu} u_{\nu} dx^{\mu} dx^{\nu} - 2 u_{\mu} dx^{\mu} d\rho + \rho^2 \Delta_{\mu\nu} dx^{\mu} dx^{\nu} \right ] \; . \label{boosted-ed-fink-poincare}
\end{equation}
Greek indices have been used for all co-ordinates except $\rho$ which is not in the field theory. The key step is
to promote $z_0$ (which determines the temperature) and $u^{\mu}$ to slowly varrying functions of spacetime. One
could repeat the calculation for a charged black hole if she wanted chemical potential to vary as well. Clearly, these
functions are heavily constrained for (\ref{boosted-ed-fink-poincare}) to still solve Einstein's equations. The
constraints turn out to be those of hydrodynamics with infinitely many orders of dissipation. Put another way, we may
say that temperature and velocity functions satisfying SYM hydrodynamics at a given order, cause
(\ref{boosted-ed-fink-poincare}) to only violate Einstein's equations at a higher order.

Instead of making $z_0$ and $u_{\mu}$ functions of $x$, \cite{hubeny} makes them functions of $\epsilon x$ to keep
track of derivatives before setting $\epsilon = 1$. Using $g^{(0)}$ to denote the metric of
(\ref{boosted-ed-fink-poincare}), $g^{(0)}$ with unconstrained $z_0$ and $u_{\mu}$ has a stress-energy tensor like
(\ref{ideal-constituents}) and violates the Einstein equations
at order $\epsilon$. To correct this and make the violation order $\epsilon^2$, one must do two things. The first
is to write $z_0 = z_0^{(0)} + \epsilon z_0^{(1)}$, $u_{\mu} = u_{\mu}^{(0)} + \epsilon u_{\mu}^{(1)}$ so that $g^{(0)}$
picks up terms of order $\epsilon$. When these corrections are explicitly calculated, $z_0^{(1)}$ \textit{e.g.} will
be some multiple of a derivative of $z_0^{(0)}$. The second is to add a new piece of the metric
$g = g^{(0)} + \epsilon g^{(1)}$. If we plug this new metric into Einstein's equation, a vanishing $\epsilon$ term
will dictate that $z_0^{(0)}$ and $u_{\mu}^{(0)}$ satisfy ideal hydrodynamics (\ref{ideal-constituents}). It will also
allow us to compute the corrected metric and show that it corresponds to the stress-energy tensor
(\ref{dissipative-constituents}). The result of this is \cite{wiseman}
\begin{eqnarray}
ds^2 &=& L^2 \left [ -\rho^2 \left ( 1 - \frac{1}{z_0^{(0) d} \rho^d} \right ) u^{(0)}_{\mu} u^{(0)}_{\nu} dx^{\mu} dx^{\nu} - 2 u^{(0)}_{\mu} dx^{\mu} d\rho + \rho^2 \Delta^{(0)}_{\mu\nu} dx^{\mu} dx^{\nu} \right. \nonumber \\
&& + 2 \rho^2 z_0^{(0)} F \left ( z_0^{(0)} \rho \right ) \left ( \Delta^{(0)}_{\mu \alpha} \Delta^{(0)}_{\mu \beta} \left ( \partial^{\alpha} u^{(0) \beta} + \partial^{\beta} u^{(0) \alpha} \right ) - \frac{1}{d - 1} \Delta^{(0)}_{\mu \nu} \partial_{\lambda} u^{(0) \lambda} \right ) dx^{\mu} dx^{\nu} \nonumber \\
&& \left. + \frac{2}{d - 1} \rho u^{(0)}_{\mu} u^{(0)}_{\nu} \partial_{\lambda} u^{(0) \lambda} dx^{\mu} dx^{\nu} - \rho u^{(0) \lambda} \partial_{\lambda} \left ( u^{(0)}_{\mu} u^{(0)}_{\nu} \right ) dx^{\mu} dx^{\nu} \right ] \label{fgc-metric}
\end{eqnarray}
where
\begin{equation}
F(x) = \int_x^{\infty} \frac{y^{d - 1} - 1}{y(y^d - 1)} \textup{d}y \; . \nonumber
\end{equation}
Of course (\ref{fgc-metric}) is still in the form of (\ref{boosted-ed-fink-poincare}). The choice to not collect
all the $dx^{\mu} dx^{\nu}$ terms makes it easy to see the non-derivative part whose components satisfy ideal
hydrodynamics. Continuing the procedure with $z_0^{(2)}$ and $u_{\mu}^{(2)}$ and $g^{(2)}$,
these same components are constrained to satisfy dissipative hydrodynamics. Rather
than (\ref{fgc-metric}) associated with the dissipative stress-energy tensor (\ref{dissipative-constituents}),
Einstein's equations then yield a longer metric associated with the stress-energy tensor for a conformal fluid at two
orders of dissipation. This logic continues inductively showing many non-trivial constitutive relations.
Beyond just conformal fluids, similar constitutive relations may be derived for fluids in spaces with compactified
directions. These relations have been used to study plasma balls and other objects that appear in the Witten
model \cite{cardoso, rocha, lahiri}. These studies assume that a black hole is already present because the process of
black hole formation is outside the regime of hydrodynamics \cite{wiseman}.

Simply replacing the stochastic model of this thesis with hydrodynamics is not what we plan to do. For one thing, it is
hoped that the initial conditions of Figure \ref{no-exaggeration} correspond to black holes that have not formed yet.
For hydrodynamics to apply, we must wait for the inhomogeneities to be smoothed out. Therefore, we should only expect
agreement between hydrodynamics and a suitable extension of our model if we linearize both of them. Another reason to
extend our model comes from the importance of the density of states. When we only had energy being transferred, this
function allowed us to read off diffusing or clustering behaviour. It is plausible that something similar can be done
when we include momentum.

\subsection{Restricting the density of states}
If momentum is viewed as a quantity exchanged between nearest neighbour sites, the rate for each transition naturally
depends on a \textit{momentum restricted density of states} $\rho(E, \textbf{P})$. While expressions for
$\rho(E, \textbf{P})$ appear in some models of electron structure \cite{weick}, the restricted density of states for
a field theory is a barely studied quantity. Here, we attempt to rectify this by deriving some properties of the
restricted density of states for simple field theories. A strongly coupled SYM expression analogous to
(\ref{sym-entropy}) is most likely beyond our reach.

\subsubsection{Some conformal field theory}
Some of the most interesting techniques for counting states are applicable to two-dimensional CFTs. Consider a
Minkowskian theory on $\mathbb{S}^1 \times \mathbb{R}$ where the circle has radius 1 by convention. At thermal
equilibrium, this can be thought of as a Euclidean theory on $\mathbb{S}^1 \times \mathbb{S}^1$ since temperature and
time are related by a Wick rotation. The radius of the second circle must be $\beta$. If we rescale each $\mathbb{S}^1$
by the same factor, one such rescaling gives us another Euclidean theory on a circle of radius 1 as shown in
Figure \ref{tori}.
\begin{figure}[h]
\includegraphics[scale=0.6]{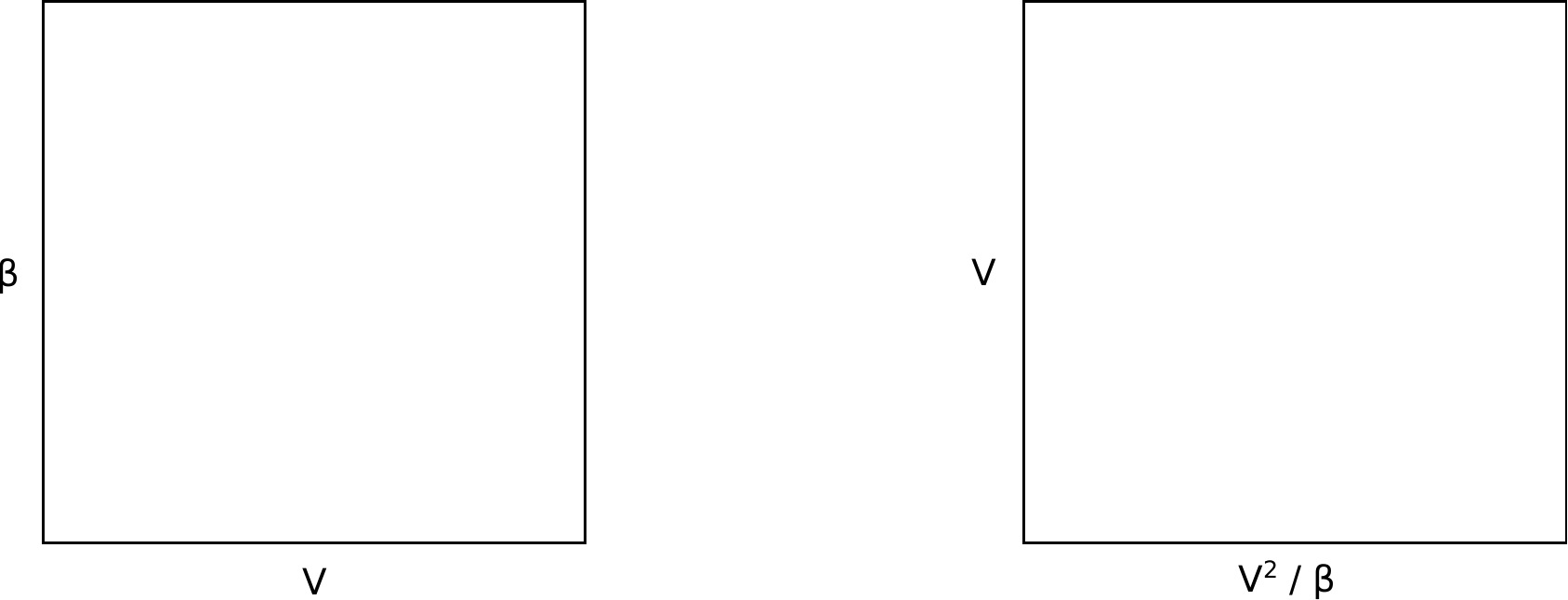}
\caption{These two theories live on a circle of volume $V$ but their temperatures are different. They are related
through multiplication by $\frac{V}{\beta}$ which is a conformal transformation.}
\label{tori}
\end{figure}
Conformal invariance then tells us that inverse temperature $\beta$ and inverse temperature $\frac{V^2}{\beta}$
are completely equivalent \cite{tong2}. In terms of the modular parameter we defined for (\ref{gen-partfunc}),
$\tau = i\frac{\beta}{2\pi}$ becomes $i\frac{4\pi^2 / \beta}{2\pi} = -\frac{1}{\tau}$. For a $\tau$ that is not
imaginary, the factor we trace over to compute the generalized partition function is
$e^{-2\pi \tau_2 H + 2\pi i \tau_1 P}$. Clearly, this does not change if we increase the real part of $\tau$ by 1.
Symmetries of the generalized partition function can therefore be written as
\begin{eqnarray}
\tau \mapsto \tau + 1 \nonumber \\
\tau \mapsto -\frac{1}{\tau} \; . \label{modular-transformations}
\end{eqnarray}
These two transformations generate the \textit{modular group} which is the discrete but infinite group
$SL(2, \mathbb{Z})$. The utility of modular invariance for studying thermodynamics was first noticed by Cardy in 1986
\cite{cardy}. Our previous expressions for the partition function of a free theory do not posess modular
invariance. This is because they are only asymptotic expressions for the high temperature limit of $Z$. Low temperature
information is lost when we convert sums to integrals. It is perhaps for this reason that most sources prefer to write
infinite products or special functions when simple expressions like (\ref{partfunc}) and (\ref{gen-partfunc}) are all
that are needed.

For an infinite collection of oscillators with frequencies
$1, 2, 3, \dots$, the energy of an arbitrary state $\left | N_1, N_2, N_3, \dots \right >$ is
$N_1 + 2N_2 + 3N_3 + \dots$. The degeneracy of energy $n$ is simply the number of ways to make $n$ by picking a certain
number of ones, a certain number of twos, a certain number of threes and so on. This well-studied object $p(n)$ is
usually called ``the partition function'' in number theory. We will call this microcanonical quantity
``the partition sum'' to distinguish it from the canonical partition function we have been using so far. The first
asymptotic $p(n) \sim \frac{1}{4 \sqrt{3} n} e^{\pi \sqrt{\frac{2}{3} n}}$ due to Hardy and Ramanujan was based on
transformation properties of the generating function
\begin{eqnarray}
F(q) &=& \sum_{n = 1}^{\infty} p(n) q^n \nonumber \\
&=& \left [ 1 + q + q^2 + \dots \right ] \left [ 1 + q^2 + q^4 + \dots \right ] \dots \nonumber \\
&=& \prod_{n = 1}^{\infty} \left ( 1 - q^n \right )^{-1} \; . \label{genfunc}
\end{eqnarray}
If we take $q = e^{2\pi i \tau}$ (and $\tau = i\frac{\beta}{2\pi}$ as before), (\ref{genfunc}) becomes a sum of
Boltzmann factors. It is not quite the partition function of our oscillator family because we have not yet included
the Casimir effect. Accounting for zero point energies, each term in the product (\ref{genfunc}) picks up a factor of
$q^{\frac{n}{2}}$ giving us
\begin{eqnarray}
Z_1(\tau) &=& \prod_{n = 1}^{\infty} q^{\frac{n}{2}} \left ( 1 - q^n \right )^{-1} \nonumber \\
&=& q^{-\frac{1}{24}} \prod_{n = 1}^{\infty} \left ( 1 - q^n \right )^{-1} \nonumber \\
&=& \eta^{-1}(\tau) \; . \label{partfunc-half}
\end{eqnarray}
We have analytically continued and written this partition function in terms of the \textit{Dedekind eta function}
\begin{equation}
\eta(\tau) = e^{\frac{\pi i \tau}{12}} \prod_{n = 1}^{\infty} \left ( 1 - e^{2\pi i n \tau} \right ) \; . \label{eta-function}
\end{equation}
Referring to the eta function, Rademacher gave a convergent series for $p(n)$
\begin{eqnarray}
p(n) &=& \frac{1}{\sqrt{2} \pi} \sum_{k = 1}^{\infty} A_k(n) \sqrt{k} \frac{\textup{d}}{\textup{d}n} \left ( \frac{\sinh \left ( \frac{\pi}{k} \sqrt{\frac{2}{3} \left ( n - \frac{1}{24} \right )} \right )}{\sqrt{n - \frac{1}{24}}} \right ) \label{rademacher} \\
A_k(n) &=& \sum_{m \in \left ( \mathbb{Z} / k\mathbb{Z} \right )^*} e^{\pi i \left ( s(m, k) - \frac{2nm}{k} \right )} \nonumber
\end{eqnarray}
where the coefficients are written in terms of the Dedekind sum
\begin{equation}
s(m, k) = \sum_{l = 1}^{k - 1} \frac{l}{k} \left ( \frac{ml}{k} - \bigg \lfloor \frac{ml}{k} \bigg \rfloor - \frac{1}{2} \right ) \; . \nonumber
\end{equation}
Just as we treated $\rho(E)$ as the inverse Laplace transform of $Z(\beta)$, the partition sum is an integral transform
of its generating function.
\begin{equation}
p(n) = \frac{1}{2\pi i} \int_{\gamma} \frac{F(q)}{q^{n + 1}} \textup{d}q \nonumber
\end{equation}
Instead of a vertical line, the contour $\gamma$ that leads to (\ref{rademacher}) is a beautiful shape built up
from fractals in the complex plane. A very readable thesis on this derivation is \cite{hsu}.

The main fact about $F(q)$ used in the proof is that it is an imaginary exponential times a function which plays
nicely with the modular group --- as it would be even if our theory were not free. Under the modular
generators (\ref{modular-transformations}), the Dedekind eta function transforms in the following way:
\begin{eqnarray}
\eta(\tau + 1) &=& e^{\frac{\pi i}{12}} \eta(\tau) \nonumber \\
\eta \left ( -\frac{1}{\tau} \right ) &=& \sqrt{-i \tau} \eta(\tau) \; . \label{modular-eta}
\end{eqnarray}
The first of these equalities is trivial. The second is not but there is a clever proof containing no more
than a page of algebra \cite{elstrodt}. A function with transformation properties similar to (\ref{modular-eta}) is
called a \textit{modular form}. More precisely, a holomorphic function $f$ in the upper half plane is a modular form of
weight $w$ if
\begin{equation}
\left | f \left ( -\frac{1}{\tau} \right ) \right | = \tau^w \left | f(\tau) \right | \; . \nonumber
\end{equation}
According to this definition, $\eta$ is a modular form of weight $\frac{1}{2}$. The Rademacher formula can be
rederived for (multiples of) other modular forms. The result is \cite{dijkgraaf, birmingham}
\begin{eqnarray}
p_w(n) &=& 2\pi \sum_{m - \frac{c}{24} < 0} \left ( \frac{n - \frac{c}{24}}{\left | m - \frac{c}{24} \right |} \right )^{\frac{w - 1}{2}} p_w(m) \nonumber \\
&& \sum_{k = 1}^{\infty} \frac{1}{k} Kl \left ( n - \frac{c}{24}, m - \frac{c}{24}; k \right ) I_{1 - w} \left ( \frac{4\pi}{k} \sqrt{\left | m - \frac{c}{24} \right | \left ( n - \frac{c}{24} \right )} \right ) \label{rademacher-w}
\end{eqnarray}
where the Kloosterman sum is defined by
\begin{equation}
Kl(n, m; k) = \sum_{d \in \left ( \mathbb{Z} / k\mathbb{Z} \right )^*} e^{\frac{2\pi i}{k} \left ( dn + d^{-1} m \right )} \; . \nonumber
\end{equation}
For our purposes, we will choose a weight $w$ and use $k = m = 0$. Uniqueness of the vacuum gives $p_w(0) = 1$ and the
central charge is 2 when we have left movers and right movers. This formula can be understood as the origin of the
Bessel function in (\ref{free-dos-bessel}).

In order to use the partition sum and its asymptotics to study $\rho(E, P)$, there are some changes that must be made
to our theory of positively indexed oscillators.
\begin{figure}[h]
\includegraphics[scale=0.6]{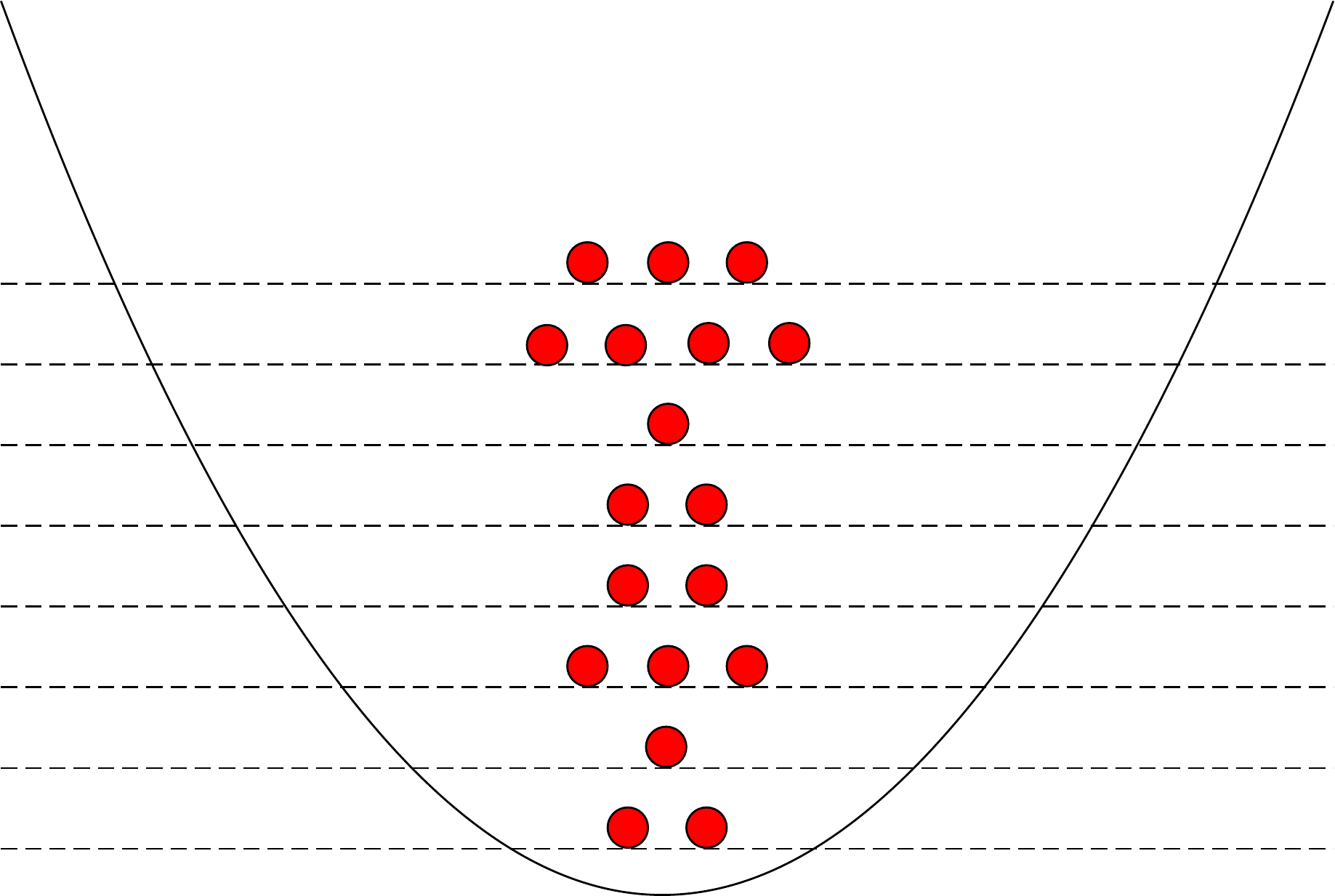}
\caption{The theory we have discussed so far resembles a harmonic potential where arbitrarily many non-interacting
bosons can be piled into each level.}
\label{well}
\end{figure}
To start, we should introduce negatively indexed oscillators to make it more like a field theory in momentum space.
An excitation of any oscillator increases the energy. Whether it increases or decreases momentum depends on whether
the frequency is positive or negative. This means that in the contribution of the newly introduced oscillators,
$\tau_2$ has the same sign as in $Z_1(\tau)$ and $\tau_1$ has the opposite. Including a sum of
$e^{-2\pi i n \tau^*}$ Boltzmann factors along with our sum of $e^{2\pi i n \tau}$ Boltzmann factors suggests that
the partition function for the doubly infinite family of oscillators is $\left | \eta(\tau) \right |^{-2}$.
This is almost correct, but we need to remember that the Casimir contribution doubles instead of vanishes when we add
more oscillators. Fixing this, the appropriate partition function is given by
\begin{equation}
Z_2(\tau) = \eta^{-1}(\tau) \eta^{-1}(-\tau^*)^* \; . \label{partfunc-full}
\end{equation}
This is still not modular invariant because we have not yet included the oscillator with zero frequency. The zero mode
has a continuous spectrum which is quadratic in the free variable $\xi$ \cite{tong2}. This can be seen \textit{e.g.} in
the scalar field mode expansion that was discussed in the context of the string action (\ref{string-action}).
The most exact expression for the partition function of scalar field theory is therefore
\begin{eqnarray}
Z_3(\tau) &=& Z_2(\tau) \int_{-\infty}^{\infty} e^{-\pi \Im \tau \xi^2} \frac{\textup{d}\xi}{2\pi} \nonumber \\
&=& \frac{1}{2\pi \sqrt{\Im \tau}} \eta^{-1}(\tau) \eta^{-1}(-\tau^*)^* \; . \label{partfunc-final}
\end{eqnarray}
To summarize these results, the partition function $Z_1$ is a modular form of weight $-\frac{1}{2}$. By
(\ref{rademacher-w}), its density of states is asymptotic to the partition sum;
$p_{-\frac{1}{2}}(n) \sim \frac{1}{4 \sqrt{3} n} e^{\pi \sqrt{\frac{2}{3} n}}$. Doubling the number of oscillators,
we made the partition function $Z_2$ which is a modular form of weight $-1$. Its density of states has the
behaviour $p_{-1}(n) \sim \frac{3^{\frac{1}{4}}}{12 n^{\frac{5}{4}}} e^{2\pi \sqrt{\frac{1}{3} n}}$. The partition
function $Z_3$, since it describes a CFT, is a modular form of weight 0. If we take the density of states
$p_0(n) \sim \frac{3^{\frac{1}{4}}}{\sqrt{12} n^{\frac{3}{4}}} e^{2\pi \sqrt{\frac{1}{3} n}}$ and substitute
$n = \frac{VE}{2\pi}$ for the conformal weight, our expression agrees with (\ref{free-dos}) which gave $\rho(E)$ for a
free CFT from an entirely different setup.

\subsubsection{Convoluted functions}
Evidently, the Casimir term and the zero mode do not affect $\log p_w (n)$ to leading order. It is therefore valid to
use $F(q)^2$ instead of $Z_3(\tau)$ as the generating function. With this approximation, the convolution formula gives
a very clear picture of how the density of states is comprised.
\begin{eqnarray}
p_0(n) &\approx& \frac{1}{n!} \frac{\textup{d}^n}{\textup{d}q^n} F(q)^2 \nonumber \\
&=& \sum_{k = 0}^n p(k) p(n - k) \nonumber
\end{eqnarray}
The first term $p(0) p(n)$ has $n$ right movers and $0$ left movers and corresponds to energy $n$ and momentum $n$.
The last term $p(n) p(0)$ has $0$ right movers and $n$ left movers and corresponds to energy $n$ and momentum $-n$.
The middle term $p \left ( \frac{n}{2} \right )^2$ has a left mover for every right mover and corresponds to energy $n$
and momentum $0$. This tells us that in a $1 + 1$-dimensional possibly interacting CFT,
\begin{equation}
\rho(E, P) = \frac{1}{2} p \left ( \frac{V}{4\pi} (E - P) \right ) p \left ( \frac{V}{4\pi} (E + P) \right ) \label{rdos-2d}
\end{equation}
is a good approximation to the momentum restricted density of states. This only becomes difficult to evaluate when
$|P| \neq E$ are of the same order.

Essentially we derived (\ref{rdos-2d}) by starting with a function whose asymptotic behaviour is
$\log p(n) \sim \pi \sqrt{\frac{2}{3} n}$. The convolution of two of them yields a function that follows
$\log p_0(n) \sim \pi \sqrt{\frac{4}{3} n}$. In other words, convolution doubles the central charge. At least for free
theories, this phenomenon can be readily seen in higher dimensions. The crudest form of (\ref{free-dos}) is
\begin{equation}
\log \rho(E) \sim \left ( \frac{(d + 1)^{d + 1}}{d^d} AVE^d \right )^{\frac{1}{d + 1}} \; . \nonumber
\end{equation}
It will be helpful to write this density of states explicitly as $\rho(E; A)$ from now on. For a bosonic theory, we had
$A$ proportional to $s$ and for a fermionic theory, we had $A$ proportional to $s^*$. Writing the partition
function $\prod_{\textbf{p}} Z(\textbf{p})^s Z^*(\textbf{p})^{s^*}$, it becomes clear that $\rho(\cdot; A)$ is
$\rho \left ( \cdot; \frac{A}{2} \right )$ convolved with itself. Performing substitutions in the
resulting integral,
\begin{eqnarray}
\rho(E; A) &=& \int_0^E \rho \left (\xi; \frac{A}{2} \right ) \rho \left (E - \xi; \frac{A}{2} \right ) \textup{d}\xi \nonumber \\
&=& \frac{1}{2} \int_{-E}^E \rho \left ( \frac{E - P}{2}; \frac{A}{2} \right ) \rho \left ( \frac{E + P}{2}; \frac{A}{2} \right ) \textup{d}P \nonumber \\
&=& \int_0^E \rho \left ( \frac{E - P}{2}; \frac{A}{2} \right ) \rho \left ( \frac{E + P}{2}; \frac{A}{2} \right ) \textup{d}P \nonumber \\
&=& \int_0^\infty \rho \left ( \frac{E - P}{2}; \frac{A}{2} \right ) \rho \left ( \frac{E + P}{2}; \frac{A}{2} \right ) \textup{d}P \nonumber \\
&=& \frac{1}{d \omega_d} \int_{\mathbb{R}^d} \rho \left ( \frac{E - | \textbf{P} |}{2}; \frac{A}{2} \right ) \rho \left ( \frac{E + | \textbf{P} |}{2}; \frac{A}{2} \right ) \textup{d}\textbf{P} \; . \label{posterior-integral}
\end{eqnarray}
In the second last step we have used the fact that densities of states are zero for negative arguments. Part of the
definition of the restricted density of states is
\begin{equation}
\rho(E; A) = \int_{\mathbb{R}^d} \rho(E, \textbf{P}; A) \textup{d}\textbf{P} \; . \label{prior-integral}
\end{equation}
Comparing (\ref{posterior-integral}) to (\ref{prior-integral}), a plausible formula for the restricted density of
states in a free CFT is
\begin{equation}
\log \rho(E, \textbf{P}) \sim \left ( \frac{(d + 1)^{d + 1}}{2(2d)^d} AV(E - | \textbf{P} |)^d \right )^{\frac{1}{d + 1}} + \left ( \frac{(d + 1)^{d + 1}}{2(2d)^d} AV(E + | \textbf{P} |)^d \right )^{\frac{1}{d + 1}} \; . \label{rdos-free}
\end{equation}

In order to use the convolution formula, we factored the partition function in a way that is not always valid. This
means we again have results for a free CFT in arbitrary dimension and an arbitrary CFT in two dimensions. It is
possible that further results could be obtained using an entropy proposed by Erik Verlinde in 2000
\cite{verlinde2, kutasov}. This Cardy-Verlinde formula suggests that higher dimensional CFTs are thermodynamically more
similar to their two-dimensional cousins than previously thought.

\section{Entropic dynamics of momentum}
Retracing our steps, there are nonlinear PDEs associated with changes in the momentum distribution on a lattice.
Each lattice direction, which we will call a large direction, needs to have one evolution equation associated with it.
Additionally, each site is allowed to have compact or small directions which do not show up in the PDEs.
Therefore the $\rho(E, \textbf{P})$ we should use comes from a more restricted function
$\rho(E, \textbf{P}, \textbf{P}^*)$ for the whole field theory and has $\textbf{P}^*$ for the small directions
integrated out. Most of the simulations for our previous model had one large direction. Because we used the density of
states (\ref{sym-entropy}) for SYM on $\mathbb{S}^3$, it is tempting to say that these simulations had three small
directions. This is misleading as Figure \ref{orthogonality} shows.
\begin{figure}[h]
\centering
\subfloat[][Sharing no directions]{\includegraphics[scale=0.45]{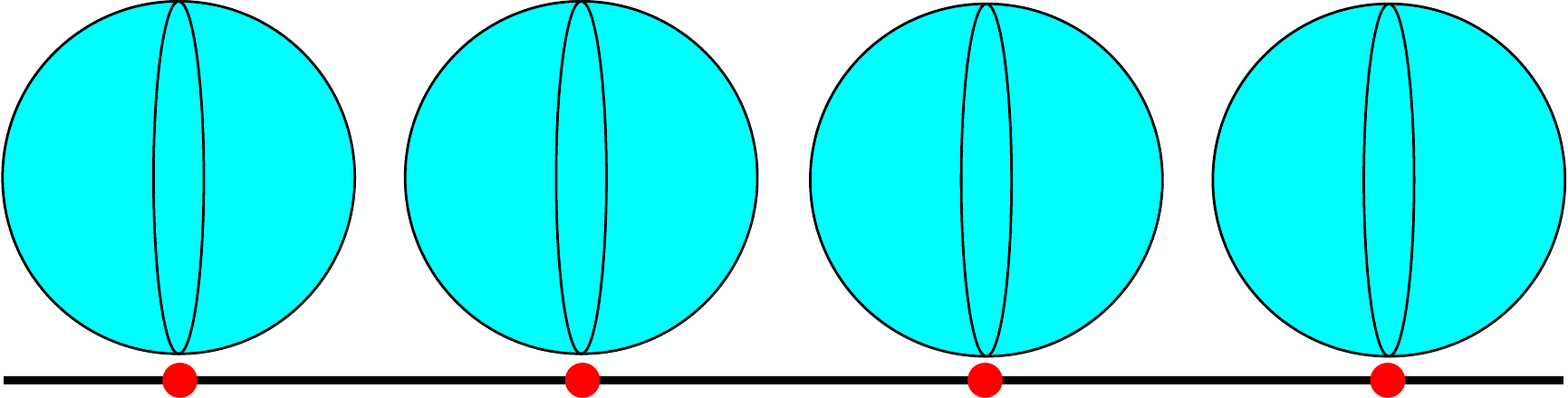}} \\
\subfloat[][Sharing one direction]{\includegraphics[scale=0.45]{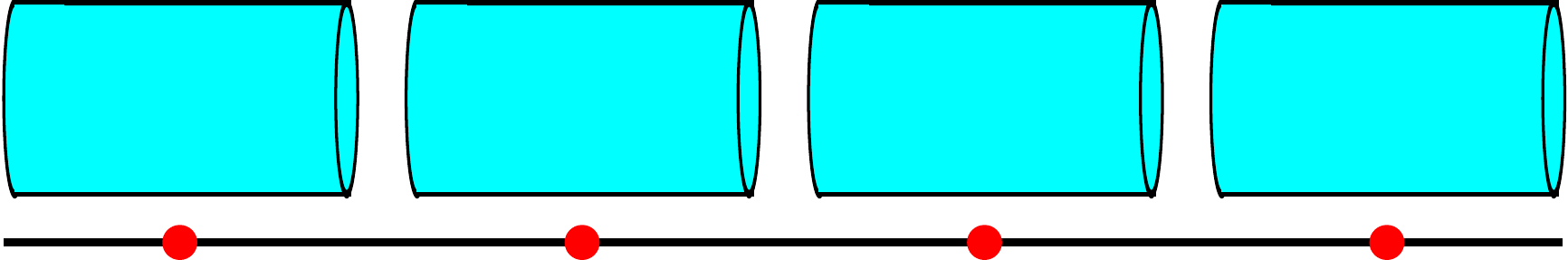}}
\caption{If one uses a density of states for two spatial dimensions, both of these situations are compatible with the
lattice being a line. One has an overall dimensionality of $2 + 1$, the other $3 + 1$. The task of introducing
momentum to our model forces us to address this ambiguity.}
\label{orthogonality}
\end{figure}
If each site is viewed as a miniature field theory whose spatial directions are all orthogonal to the lattice, there is
nothing left after integrating out $\textbf{P}^*$. The result is simply an unrestricted density of states
$\rho(E)$ precluding our attempts to extend the model. What we really need are sites that include the directions of
the lattice in addition to their compact directions. Therefore what we considered previously was not $\mathbb{S}^3$
with radius $r$ but $\mathbb{S}^2 \times [0, r]$. This is a background with only two small directions because the
non-periodic $[0, r]$ matches up with the lattice. By arguments in \cite{aharony}, the density of states for such a
theory should be similar to (\ref{sym-entropy}). Including time, this theory with one large direction and two small
ones lives on $\mathbb{R}^2 \times \mathbb{S}^2$. The Witten model, which is known to host plasma balls, has the
background $\mathbb{R}^4 \times \mathbb{S}^1$ suggesting that its lattice sites look like
$\mathbb{S}^1 \times [0, r]^3$. The equations that follow will have $d$ referring to the number of large spatial
dimensions only.

\subsection{Proceeding by analogy}
In order to derive our PDE for the energy
\begin{equation}
\frac{\partial E}{\partial t} = \partial_i \left ( C(E) \rho^2(E) \partial_i \frac{\textup{d} \log \rho(E)}{\textup{d}E} \right ) \; , \label{original-equation}
\end{equation}
we took the continuum limit of the equation
\begin{equation}
\frac{\partial n_c}{\partial t} = \sum_{\left < b, c \right >} \sum_{k \neq 0} k W_{(n_c, n_b) \rightarrow (n_c + k, n_b - k)} \; . \label{first-non-continuum}
\end{equation}
The main steps involved were assuming that $k$ was some small amount of energy $\pm \epsilon$ and writing $n_c$ as
$E(\textbf{x})$ where site $b$ was ``one lattice vector'' away from site $c$. Then (\ref{first-non-continuum}) became
\begin{equation}
\frac{\partial E(\textbf{x})}{\partial t} = \epsilon \sum_{\textbf{e} \in \{ \pm \textbf{e}_1, \dots, \pm \textbf{e}_d \}} \left [ W_{(E(\textbf{x}), E(\textbf{x} + a\textbf{e})) \rightarrow (E(\textbf{x}) + \epsilon, E(\textbf{x} + a\textbf{e}) - \epsilon)} - W_{(E(\textbf{x}), E(\textbf{x} + a\textbf{e})) \rightarrow (E(\textbf{x}) - \epsilon, E(\textbf{x} + a\textbf{e}) + \epsilon)}\right ] \label{first-continuum}
\end{equation}
and we took $\epsilon$ and $a$ to zero. The sum runs over positive and negative versions of the standard basis
vectors so that in a cubic lattice, all $2d$ nearest neighbours are accessed.

\subsubsection{Allowed transitions and rates}
In (\ref{first-continuum}) above, $\frac{\partial E(\textbf{x})}{\partial t}$ was proportional to $\epsilon$, the
smallest amount of energy allowed to move in one timestep. Therefore, we should presumably add an equation for
momentum where $\frac{\partial \textbf{P}(\textbf{x})}{\partial t}$ is proportional to some $\textbf{q}$, the smallest
momentum vector that moves in one timestep. This can be written as $q \textbf{e}$, the length times some unit vector,
where $q \rightarrow 0$ just like $\epsilon$ and $a$.

However, $q$ and $\epsilon$ are not independent. A CFT in $d = 1$ for instance has energy and momentum operators
that can be written in terms of left movers and right movers:
\begin{eqnarray}
E &=& N_{\textup{R}} + N_{\textup{L}} - \frac{c}{12} \nonumber \\
P &=& N_{\textup{R}} - N_{\textup{L}} \; . \nonumber
\end{eqnarray}
The only way to add to $E$ is to excite either $N_{\textup{R}}$ or $N_{\textup{L}}$ with $\epsilon$ units of energy
and this necessarily adds $\epsilon$ to or subtracts $\epsilon$ from $P$. In this case we must have $q = \epsilon$. We
will make the assumption that the other theories we wish to model also have the energy transferred equal to the
magnitude of the momentum transferred.

Now that we know $E$ increases or decreases by $\epsilon$ whenever $\textbf{P}$ increases or decreases by
$\epsilon \textbf{e}$, an allowed transition takes the form
\begin{equation}
(E(\textbf{x}), \textbf{P}(\textbf{x}), E(\textbf{x} + a\textbf{e}), \textbf{P}(\textbf{x} + a\textbf{e})) \rightarrow (E(\textbf{x}) \pm \epsilon, \textbf{P}(\textbf{x}) - \epsilon\textbf{e}^{\prime}, E(\textbf{x} + a\textbf{e}) \mp \epsilon, \textbf{P}(\textbf{x} + a\textbf{e}) + \epsilon\textbf{e}^{\prime}) \nonumber
\end{equation}
where $\textbf{e}$ is a unit vector indicating the two sites involved and $\textbf{e}^{\prime}$ is \textit{a priori}
some other unit vector. However, it does not make sense for momentum in the vertical direction to move between sites
that are horizontally displaced and it does not make sense for momentum pointing in the up direction to move down.
Therefore we require that $\textbf{e}^{\prime} = \textbf{e}$. This is where we see that the vector of large momentum
components $\textbf{P}$ lives in the same number of dimensions as the lattice vector $\textbf{x}$.

When focusing on energy distributions, the form of the transition rates in (\ref{first-continuum}) followed from
thermodynamic arguments as
\begin{equation}
W_{(E(\textbf{x}), E(\textbf{x} + a\textbf{e})) \rightarrow (E(\textbf{x}) + \epsilon, E(\textbf{x} + a\textbf{e}) - \epsilon)} = C\left ( \frac{E(\textbf{x}) + E(\textbf{x} + a\textbf{e})}{2} \right ) \rho(E(\textbf{x}) + \epsilon) \rho(E(\textbf{x} + a\textbf{e}) - \epsilon) \label{first-transition-rates}
\end{equation}
with the number of states having energy $E$ given by $\rho(E)$. The analogous transition rate involving momentum
should depend on $\rho(E, \textbf{P})$ in the same way. We are led to:
\begin{eqnarray}
&& W_{(E(\textbf{x}), \textbf{P}(\textbf{x}), E(\textbf{x} + a\textbf{e}), \textbf{P}(\textbf{x} + a\textbf{e})) \rightarrow (E(\textbf{x}) + \epsilon, \textbf{P}(\textbf{x}) - \epsilon\textbf{e}^{\prime}, E(\textbf{x} + a\textbf{e}) - \epsilon, \textbf{P}(\textbf{x} + a\textbf{e}) + \epsilon\textbf{e}^{\prime})} = \nonumber \\
&& \delta_{\textbf{e}, \textbf{e}^{\prime}} C\left ( \frac{E(\textbf{x}) + E(\textbf{x} + a\textbf{e})}{2}, \frac{\textbf{P}(\textbf{x}) + \textbf{P}(\textbf{x} + a\textbf{e})}{2} \right ) \nonumber \\
&& \rho(E(\textbf{x}) + \epsilon, \textbf{P}(\textbf{x}) - \epsilon\textbf{e}^{\prime}) \rho(E(\textbf{x} + a\textbf{e}) - \epsilon, \textbf{P}(\textbf{x} + a\textbf{e}) + \epsilon\textbf{e}^{\prime}) \; . \label{second-transition-rates}
\end{eqnarray}

\subsubsection{A tensor identity}
Consider what happens when we differentiate $E(\textbf{x} + a\textbf{e})$ twice with respect to $a$.
\begin{eqnarray}
\frac{\partial}{\partial a} E(\textbf{x} + a\textbf{e}) &=& e_i \partial_i E(\textbf{x} + a\textbf{e}) \nonumber \\
\frac{\partial^2}{\partial a^2} E(\textbf{x} + a\textbf{e}) &=& e_i e_j \partial_i \partial_j E(\textbf{x} + a\textbf{e}) \label{example-derivative}
\end{eqnarray}
Note that $e_i$ and $e_j$ without the boldface are not the $i^{\mathrm{th}}$ and $j^{\mathrm{th}}$ standard basis
vectors, they are the $i^{\mathrm{th}}$ and $j^{\mathrm{th}}$ components of the general unit vector $\textbf{e}$.
All of the terms that we must differentiate with respect to $a$ occur inside a sum over $\textbf{e}$ so we must know
how to deal with sums of components of unit vectors. When we sum $e_i e_j$, there is only one standard basis vector
for which $e_i$ is nonzero and only one standard basis vector for which $e_j$ is nonzero. They have to be the same one
for the resulting sum to be nonzero, so it is clear that:
\begin{equation}
\sum_{\textbf{e} \in \{ \pm \textbf{e}_1, \dots, \pm \textbf{e}_d \}} e_i e_j = 2\delta_{ij} \; . \nonumber
\end{equation}
We used this fact when deriving the PDE for our first model.
For reasons that are not initially clear, we will show that one can get the same answer up to a constant by replacing
the sum over lattice vectors with a surface integral over a sphere.
\begin{claim}
\begin{equation}
\int_{\mathbb{S}^{d - 1}} e_i e_j \textup{d}S_{\textbf{e}} = \omega_d \delta{ij} \nonumber
\end{equation}
where $\omega_d$ is the volume of the unit ball in $\mathbb{R}^d$.
\end{claim}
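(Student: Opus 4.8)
The plan is to evaluate the integral with the divergence theorem, which manufactures the volume $\omega_d$ automatically and explains why the same tensor structure seen in the lattice sum $\sum_{\textbf{e}} e_i e_j = 2\delta_{ij}$ reappears with a different prefactor. First I would fix the indices $i$ and $j$ and introduce, on the closed unit ball $B^d \subset \mathbb{R}^d$, the vector field with components $F_k(\textbf{x}) = x_i \delta_{kj}$ --- that is, the field pointing along the $j^{\mathrm{th}}$ coordinate direction with magnitude $x_i$. Its divergence is the constant $\partial_k F_k = \partial_j x_i = \delta_{ij}$, so the volume side of Gauss's theorem is just $\delta_{ij}$ times the volume of $B^d$, which is $\omega_d$ by definition.

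For the boundary term I would use that on $\mathbb{S}^{d-1}$ the outward unit normal $\textbf{n}$ coincides with the position vector $\textbf{e}$, and that the coordinate function $x_i$ restricted to the sphere is exactly the component $e_i$. Hence $\textbf{F} \cdot \textbf{n} = x_i n_j = e_i e_j$ on the sphere, and the divergence theorem reads
\begin{equation}
\omega_d \delta_{ij} = \int_{B^d} \partial_k F_k \, \textup{d}V = \int_{\mathbb{S}^{d - 1}} \textbf{F} \cdot \textbf{n} \, \textup{d}S_{\textbf{e}} = \int_{\mathbb{S}^{d - 1}} e_i e_j \, \textup{d}S_{\textbf{e}} \; , \nonumber
\end{equation}
which is precisely the claim.

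As an independent cross-check I would note that a parity argument kills the off-diagonal entries, since $e_i e_j$ is odd under the reflection $e_i \mapsto -e_i$ when $i \neq j$, while rotational symmetry forces all diagonal entries to agree; thus the integral must equal $c \, \delta_{ij}$ for some constant $c$. Taking the trace and using $\sum_i e_i^2 = 1$ gives $c \, d = \int_{\mathbb{S}^{d - 1}} \textup{d}S_{\textbf{e}}$, the surface area of the unit sphere, which equals $d \omega_d$; hence $c = \omega_d$ once more. There is no serious obstacle in this argument: the only step requiring care is the identification of the outward normal with the position vector, which is valid precisely because the domain is the \emph{unit} ball, and the mild bookkeeping relating the unit sphere's surface area $d\omega_d$ to the unit ball's volume $\omega_d$ in the trace computation.
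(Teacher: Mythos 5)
Your proof is correct, and it takes a cleaner route than the paper's. The paper also relies on the divergence theorem, but only for the diagonal case: it applies Gauss's theorem to the identity field $\textbf{v}(\textbf{e}) = \textbf{e}$, whose divergence $d$ produces $d\,\omega_d = \int_{\mathbb{S}^{d-1}} \sum_i (e_i)^2 \, \textup{d}S_{\textbf{e}}$, and then invokes symmetry to extract $\int (e_i)^2 = \omega_d$; for the off-diagonal case it switches techniques entirely, writing out explicit spherical coordinates and observing that the $\theta_{d-1}$ dependence is an odd product $\sin\theta_{d-1}\cos\theta_{d-1}$ integrating to zero. Your choice of the field $F_k(\textbf{x}) = x_i \delta_{kj}$, with divergence $\partial_j x_i = \delta_{ij}$, unifies both cases in a single application of the theorem: the Kronecker delta emerges from the volume integral rather than being argued separately, and no coordinate parametrization is needed. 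Your symmetry-plus-trace cross-check is essentially a compressed version of the paper's two separate ingredients (the parity argument mirrors the paper's odd-function step, and the trace normalization mirrors its symmetry step), so it is a nice consistency test but not independent of the paper's logic. One thing the paper's coordinate method buys that yours does not immediately: the same explicit parametrization is recycled later in the paper's second claim, for the fourth moment $\int e_i e_j e_k e_l \, \textup{d}S_{\textbf{e}}$, to handle the case of four distinct indices, whereas extending your single-field trick to that setting would require choosing a new, higher-degree vector field.
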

\begin{proof}
If $i = j$, we simply apply the divergence theorem to the identity vector field. Let
$\textbf{v}(\textbf{e}) = \textbf{e}$, defined on the unit ball $B^d$. Then,
\begin{equation}
\int_{B^d} \nabla \cdot \textbf{v} \textup{d}\textbf{e} = \int_{\partial B^d} \textbf{v} \cdot \textbf{n} \textup{d}S_{\textbf{e}} \nonumber
\end{equation}
for this vector field reads
\begin{equation}
d \int_{B^d} \textup{d}\textbf{e} = \int_{\mathbb{S}^{d - 1}} (e_1)^2 + \dots + (e_d)^2 \textup{d}S_{\textbf{e}} \nonumber
\end{equation}
because if the unit vector $\textbf{e}$ extends from the origin to a point on $\mathbb{S}^{d - 1}$, the unit normal
to the surface at this point is just $\textbf{e}$. By symmetry, the right hand side is $d$ copies of what we are
trying to compute, so this cancels the $d$ on the left hand side yielding:
\begin{equation}
\omega_d = \int_{\mathbb{S}^{d - 1}} (e_i)^2 \textup{d}S_{\textbf{e}} \; . \nonumber
\end{equation}
To see that this integral vanishes when $i$ and $j$ are different, parametrize the unit sphere as
\begin{eqnarray}
x_1 &=& \cos \theta_1 \nonumber \\
x_2 &=& \sin \theta_1 \cos \theta_2 \nonumber \\
&\dots& \nonumber \\
x_{d - 1} &=& \sin \theta_1 \dots \sin \theta_{d - 2} \cos \theta_{d - 1} \nonumber \\
x_d &=& \sin \theta_1 \dots \sin \theta_{d - 2} \sin \theta_{d - 1} \nonumber
\end{eqnarray}
and without loss of generality, choose $e_i = x_d$ and $e_j = x_{d - 1}$. The desired integral is then
\begin{equation}
\int_0^{2\pi} \int_0^{\pi} \dots \int_0^{\pi} [ \sin^2 \theta_1 \dots \sin^2 \theta_{d - 2} \sin \theta_{d - 1} \cos \theta_{d - 1} ] \sin^{d - 2} \theta_1 \sin^{d - 3} \theta_2 \dots \sin \theta_{d - 2} \textup{d}\theta_1 \dots \textup{d}\theta_{d - 2} \textup{d}\theta_{d - 1} \nonumber
\end{equation}
where the part in brackets is $e_i e_j$. Since the area element only goes up to $\theta_{d - 2}$, the part involving
$\theta_{d - 1}$ is just a sine and a cosine which we know integrates to zero.
\end{proof}
Unlike in our first model, equations involving momenta will have several places in which there are \textit{four}
components of $\textbf{e}$. This will make it important that we integrate rather than sum. A sum of four components
will give zero unless the indices are all the same.
\begin{equation}
\sum_{\textbf{e} \in \{ \pm \textbf{e}_1, \dots, \pm \textbf{e}_d \}} e_i e_j e_k e_l =
\begin{cases}
2 & i = j = k = l \\
0 & \mathrm{otherwise}
\end{cases} \nonumber
\end{equation}
The continuous version, on the other hand, is a much nicer object.
\begin{claim}
\begin{equation}
\int_{\mathbb{S}^{d - 1}} e_i e_j e_k e_l \textup{d}S_{\textbf{e}} = \frac{\omega_d}{d + 2} \left ( \delta_{ij}\delta_{kl} + \delta_{ik}\delta_{jl} + \delta_{il}\delta_{jk} \right ) \nonumber
\end{equation}
where $\omega_d$ is the volume of the unit ball in $\mathbb{R}^d$.
\end{claim}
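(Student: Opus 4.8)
The plan is to exploit the $O(d)$-invariance of the sphere and its surface measure to pin down the tensor $T_{ijkl} = \int_{\mathbb{S}^{d-1}} e_i e_j e_k e_l \, \textup{d}S_{\textbf{e}}$ as a multiple of the fully symmetric isotropic tensor, and then to fix the single remaining constant by a contraction. Since $\textup{d}S_{\textbf{e}}$ is invariant under every reflection and permutation of the coordinate axes as well as under continuous rotations, $T_{ijkl}$ is a fully symmetric, rotationally invariant rank-four tensor, which already constrains it to the stated shape up to one scalar.

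First I would record which components can be nonzero. Under the reflection $e_m \mapsto -e_m$, which preserves $\mathbb{S}^{d-1}$ and $\textup{d}S_{\textbf{e}}$, the integrand $e_i e_j e_k e_l$ changes sign whenever the index value $m$ occurs an odd number of times among $i,j,k,l$. Hence $T_{ijkl}$ vanishes unless every value appears an even number of times. Together with invariance under permuting the axes, this leaves exactly two independent nonzero entries: the diagonal value $I_4 = \int_{\mathbb{S}^{d-1}} e_1^4 \, \textup{d}S_{\textbf{e}}$ (all four indices equal) and the paired value $I_{22} = \int_{\mathbb{S}^{d-1}} e_1^2 e_2^2 \, \textup{d}S_{\textbf{e}}$ (two distinct values, each occurring twice).

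The crux, and the one place where axis permutations alone do not suffice, is the relation $I_4 = 3 I_{22}$. I would obtain it from full rotational invariance: for any unit vector $\textbf{u}$ the quantity $\int_{\mathbb{S}^{d-1}} (\textbf{u} \cdot \textbf{e})^4 \, \textup{d}S_{\textbf{e}}$ is independent of $\textbf{u}$. Evaluating at $\textbf{u} = \textbf{e}_1$ gives $I_4$, while evaluating at $\textbf{u} = \frac{1}{\sqrt{2}}(\textbf{e}_1 + \textbf{e}_2)$ gives $\frac{1}{4}\left( 2 I_4 + 6 I_{22} \right)$, the odd cross terms dropping out by the parity argument. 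Equating the two yields $I_4 = 3 I_{22}$. Consequently $T_{ijkl} = I_{22}\left( \delta_{ij}\delta_{kl} + \delta_{ik}\delta_{jl} + \delta_{il}\delta_{jk} \right)$, since the right-hand side takes the value $3 I_{22}$ on a fully diagonal pattern and $I_{22}$ on a paired pattern, matching $T$ in both cases.

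Finally I would fix $I_{22}$ by contracting $i = j$ and $k = l$ and summing. On the left, $\sum_{i,k} T_{iikk} = \int_{\mathbb{S}^{d-1}} \left( \sum_i e_i^2 \right)\left( \sum_k e_k^2 \right) \textup{d}S_{\textbf{e}} = \int_{\mathbb{S}^{d-1}} \textup{d}S_{\textbf{e}} = d \omega_d$, where the surface area $d \omega_d$ follows from summing the previous claim $\int_{\mathbb{S}^{d-1}} e_i^2 \, \textup{d}S_{\textbf{e}} = \omega_d$ over $i$. On the right, the same contraction of $\delta_{ij}\delta_{kl} + \delta_{ik}\delta_{jl} + \delta_{il}\delta_{jk}$ gives $d^2 + d + d = d(d+2)$, so $I_{22}\, d(d+2) = d\omega_d$ and therefore $I_{22} = \omega_d/(d+2)$, which is the claimed coefficient. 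The main obstacle is isolating $I_4 = 3 I_{22}$; everything else parallels the bookkeeping already carried out for the quadratic moment.
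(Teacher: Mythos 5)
Your proof is correct, but it takes a genuinely different route from the paper's. The paper proceeds computationally: it evaluates $\int_{\mathbb{S}^{d-1}} (e_i)^4 \, \textup{d}S_{\textbf{e}} = \frac{3\omega_d}{d+2}$ directly via the divergence theorem applied to the vector field $v_i(\textbf{e}) = (e_i)^3$, then obtains the mixed moment $\int (e_i)^2(e_k)^2 \, \textup{d}S_{\textbf{e}}$ by subtracting the quartic moment from the quadratic one and dividing by $d-1$, and finally kills the cases with an odd index pattern by writing out explicit spherical coordinates and spotting an odd integrand in $\theta_{d-1}$. You instead argue by invariance: reflections give the parity vanishing (replacing the paper's coordinate computation), axis permutations reduce everything to the two scalars $I_4$ and $I_{22}$, the rotation $\textbf{u} = \frac{1}{\sqrt{2}}(\textbf{e}_1 + \textbf{e}_2)$ yields the key relation $I_4 = 3I_{22}$ (replacing the paper's divergence-theorem computation), and the contraction $\sum_{i,k} T_{iikk} = d\omega_d = I_{22}\, d(d+2)$ pins the constant using only the already-established quadratic claim. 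What your approach buys is conceptual economy and generality: the isotropic-tensor structure $\delta_{ij}\delta_{kl} + \delta_{ik}\delta_{jl} + \delta_{il}\delta_{jk}$ is forced from the outset rather than assembled case by case, no explicit parametrization of the sphere is ever needed, and the same template (parity, permutation, one rotation, one trace) extends mechanically to sixth and higher moments. What the paper's approach buys is self-containedness at a more elementary level: it reuses the same divergence-theorem trick that proved the quadratic claim, and it produces the value of $I_4$ directly rather than inferring it from $I_{22}$, which some readers may find more concrete. Both arguments are complete; yours is arguably the cleaner one.
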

\begin{proof}
We will first prove the special case when $i = j = k = l$. Define a vector field
$\textbf{v} : B^d \rightarrow \mathbb{R}^d$ which cubes every component of its argument:
\begin{equation}
v_i(\textbf{e}) = (e_i)^3 \; . \nonumber
\end{equation}
Using the divergence theorem on this, we get:
\begin{eqnarray}
\int_{B^d} \nabla \cdot \textbf{v} \textup{d}\textbf{e} &=& \int_{\partial B^d} \textbf{v} \cdot \textbf{n} \textup{d}S_{\textbf{e}} \nonumber \\
3 \int_{B^d} (e_1)^2 + \dots (e_d)^2 \textup{d}\textbf{e} &=& \int_{\mathbb{S}^{d - 1}} (e_1)^4 + \dots + (e_d)^4 \textup{d}S_{\textbf{e}} \nonumber \\
3d \omega_d \int_0^1 r^{d + 1} \textup{d}r &=& d \int_{\mathbb{S}^{d - 1}} (e_i)^4 \textup{d}S_{\textbf{e}} \nonumber \\
\frac{3 \omega_d}{d + 2} &=& \int_{\mathbb{S}^{d - 1}} (e_i)^4 \textup{d}S_{\textbf{e}} \; . \nonumber
\end{eqnarray}
We will now apply this to the case when $i$ and $k$ are different.
\begin{eqnarray}
\sum_{k \neq i} \int_{\mathbb{S}^{d - 1}} (e_i)^2 (e_k)^2 \textup{d}S_{\textbf{e}} &=& \sum_{k = 1}^d \int_{\mathbb{S}^{d - 1}} (e_i)^2 (e_k)^2 \textup{d}S_{\textbf{e}} - \int_{\mathbb{S}^{d - 1}} (e_i)^4 \textup{d}S_{\textbf{e}} \nonumber \\
&=& \int_{\mathbb{S}^{d - 1}} (e_i)^2 \textup{d}S_{\textbf{e}} - \int_{\mathbb{S}^{d - 1}} (e_i)^4 \textup{d}S_{\textbf{e}} \nonumber \\
&=& \omega_d - \frac{3 \omega_d}{d + 2} \nonumber \\
&=& \omega_d \frac{d - 1}{d + 2}
\end{eqnarray}
We summed $k$ over all but one of the $d$ choices, so by symmetry, this should be $d - 1$ times the actual value.
This shows that:
\begin{equation}
\int_{\mathbb{S}^{d - 1}} e_i e_j (e_k)^2 \textup{d}S_{\textbf{e}} = \frac{\omega_d}{d + 2} \delta_{ij} \; . \nonumber
\end{equation}
We have computed the integral when $k$ and $l$ are the same. Similarly, if $k$ and $i$ were the same, the result would
be proportional to $\delta_{jl}$ and if $k$ and $j$ were the same, the result would be proportional to $\delta_{il}$.
This covers the three ways to have pairs of equal indices. It remains to be seen that the expression vanishes when
$i$, $j$, $k$ and $l$ are all different. As before, we can show this using explicit co-ordinates.
\begin{eqnarray}
e_i &=& \sin \theta_1 \dots \sin \theta_{d - 2} \sin \theta_{d - 1} \nonumber \\
e_j &=& \sin \theta_1 \dots \sin \theta_{d - 2} \cos \theta_{d - 1} \nonumber \\
e_k &=& \sin \theta_1 \dots \sin \theta_{d - 3} \cos \theta_{d - 2} \nonumber \\
e_l &=& \sin \theta_1 \dots \sin \theta_{d - 4} \cos \theta_{d - 3} \nonumber
\end{eqnarray}
Again, $e_i$ and $e_j$ contribute an odd function of $\theta_{d - 1}$ while the surface measure does not depend on
$\theta_{d - 1}$. The other components cannot change this because $e_k$ and $e_l$ are both distinct from $e_i$ and
$e_j$.
\end{proof}

\subsubsection{Setting up the equations}
In (\ref{first-continuum}), $\frac{\partial E(\textbf{x})}{\partial t}$ was schematically given by
``$W$ for a transition that adds $\epsilon$ to site $\textbf{x}$'' minus
``$W$ for a transition that subtracts $\epsilon$ from site $\textbf{x}$'', summed over $\textbf{e}$ and multiplied
by $\epsilon$. We will write out this type of expression except we will integrate instead of sum.
\begin{eqnarray}
\frac{\partial E(\textbf{x})}{\partial t} &=& \epsilon \int_{\mathbb{S}^{d - 1}} W_{(E(\textbf{x}), \textbf{P}(\textbf{x}), E(\textbf{x} + a\textbf{e}), \textbf{P}(\textbf{x} + a\textbf{e})) \rightarrow (E(\textbf{x}) + \epsilon, \textbf{P}(\textbf{x}) - \epsilon\textbf{e}, E(\textbf{x} + a\textbf{e}) - \epsilon, \textbf{P}(\textbf{x} + a\textbf{e}) + \epsilon\textbf{e})} \nonumber \\
&-& W_{(E(\textbf{x}), \textbf{P}(\textbf{x}), E(\textbf{x} + a\textbf{e}), \textbf{P}(\textbf{x} + a\textbf{e})) \rightarrow (E(\textbf{x}) - \epsilon, \textbf{P}(\textbf{x}) - \epsilon\textbf{e}, E(\textbf{x} + a\textbf{e}) + \epsilon, \textbf{P}(\textbf{x} + a\textbf{e}) + \epsilon\textbf{e})} \textup{d}S_{\textbf{e}} \label{second-continuum-e}
\end{eqnarray}
Note that $\textbf{P}(\textbf{x})$ always receives a $-\epsilon \textbf{e}$ contribution because we are adopting a
convention where $\textbf{x} + a\textbf{e}$ is the ``other site''. If our convention had $\textbf{x} - a\textbf{e}$
as the other site, $\textbf{P}(\textbf{x})$ would receive a $+\epsilon \textbf{e}$ contribution. The fact that there
are two ways to raise the energy $E(\textbf{x}) \mapsto E(\textbf{x}) + \epsilon$ (adding a left mover from the right
site and adding a right mover from the left site) is accounted for by the integral that causes $\textbf{e}$ to change
direction. Writing down the momentum equation and being careful about the same type of thing, we get:
\begin{eqnarray}
\frac{\partial \textbf{P}(\textbf{x})}{\partial t} &=& -\epsilon \int_{\mathbb{S}^{d - 1}} \textbf{e} \left [ W_{(E(\textbf{x}), \textbf{P}(\textbf{x}), E(\textbf{x} + a\textbf{e}), \textbf{P}(\textbf{x} + a\textbf{e})) \rightarrow (E(\textbf{x}) + \epsilon, \textbf{P}(\textbf{x}) - \epsilon\textbf{e}, E(\textbf{x} + a\textbf{e}) - \epsilon, \textbf{P}(\textbf{x} + a\textbf{e}) + \epsilon\textbf{e})} \right. \nonumber \\
&+& \left. W_{(E(\textbf{x}), \textbf{P}(\textbf{x}), E(\textbf{x} + a\textbf{e}), \textbf{P}(\textbf{x} + a\textbf{e})) \rightarrow (E(\textbf{x}) - \epsilon, \textbf{P}(\textbf{x}) - \epsilon\textbf{e}, E(\textbf{x} + a\textbf{e}) + \epsilon, \textbf{P}(\textbf{x} + a\textbf{e}) + \epsilon\textbf{e})} \right ] \textup{d}S_{\textbf{e}} \; . \label{second-continuum-p}
\end{eqnarray}
Substituting our transition rates (\ref{second-transition-rates}), these become:
\begin{eqnarray}
\frac{\partial E(\textbf{x})}{\partial t} &=& \epsilon \int_{\mathbb{S}^{d - 1}} C\left ( \frac{E(\textbf{x}) + E(\textbf{x} + a\textbf{e})}{2}, \frac{\textbf{P}(\textbf{x}) + \textbf{P}(\textbf{x} + a\textbf{e})}{2} \right ) \nonumber \\
&& \left [ \rho(E(\textbf{x}) + \epsilon, \textbf{P}(\textbf{x}) - \epsilon \textbf{e}) \rho(E(\textbf{x} + a\textbf{e}) - \epsilon, \textbf{P}(\textbf{x} + a\textbf{e}) + \epsilon \textbf{e}) \right. \nonumber \\
&& \left. - \rho(E(\textbf{x}) - \epsilon, \textbf{P}(\textbf{x}) - \epsilon \textbf{e}) \rho(E(\textbf{x} + a\textbf{e}) + \epsilon, \textbf{P}(\textbf{x} + a\textbf{e}) + \epsilon \textbf{e}) \right ] \textup{d}S_{\textup{e}} \nonumber \\
\frac{\partial \textbf{P}(\textbf{x})}{\partial t} &=& -\epsilon \int_{\mathbb{S}^{d - 1}} \textbf{e} C\left ( \frac{E(\textbf{x}) + E(\textbf{x} + a\textbf{e})}{2}, \frac{\textbf{P}(\textbf{x}) + \textbf{P}(\textbf{x} + a\textbf{e})}{2} \right ) \nonumber \\
&& \left [ \rho(E(\textbf{x}) + \epsilon, \textbf{P}(\textbf{x}) - \epsilon \textbf{e}) \rho(E(\textbf{x} + a\textbf{e}) - \epsilon, \textbf{P}(\textbf{x} + a\textbf{e}) + \epsilon \textbf{e}) \right. \nonumber \\
&& \left. + \rho(E(\textbf{x}) - \epsilon, \textbf{P}(\textbf{x}) - \epsilon \textbf{e}) \rho(E(\textbf{x} + a\textbf{e}) + \epsilon, \textbf{P}(\textbf{x} + a\textbf{e}) + \epsilon \textbf{e}) \right ] \textup{d}S_{\textup{e}} \; . \label{ep-equations}
\end{eqnarray}
Our task in the following section will be to differentiate the right hand sides with respect to $\epsilon$ and $a$
and plug in $\epsilon = 0$ and $a = 0$. Recall that in our model for the energy, the only non-vanishing derivative of
order $\leq 4$ was $\frac{\partial^4}{\partial \epsilon^2 \partial a^2}$. The momentum equation in (\ref{ep-equations})
has other low order derivatives that do not vanish. For example, set $C = 1$ and compute
$\frac{\partial^4}{\partial \epsilon \partial a^3}$. Differentiating with respect to $\epsilon$ once and setting
$\epsilon = 0$ simply removes the factor of $\epsilon$ in front. Therefore this contribution to
$\frac{\partial P_i}{\partial t}$ is:
\begin{eqnarray}
&-2& \int_{\mathbb{S}^{d - 1}} e_i \rho(E(\textbf{x}), \textbf{P}(\textbf{x})) \frac{\partial^3}{\partial a^3} \left. \rho(E(\textbf{x} + a\textbf{e}), \textbf{P}(\textbf{x} + a\textbf{e})) \right |_{a=0} \nonumber \textup{d}S_{\textbf{e}} \\
&=& -2 \rho(E(\textbf{x}), \textbf{P}(\textbf{x})) \partial_j \partial_k \partial_l \rho(E(\textbf{x}), \textbf{P}(\textbf{x})) \int_{\mathbb{S}^{d - 1}} e_i e_j e_k e_l \textup{d}S_{\textbf{e}} \nonumber \\
&=& \frac{-6 \omega_d}{d} \rho(E(\textbf{x}), \textbf{P}(\textbf{x})) \partial_i \partial_j \partial_j \rho(E(\textbf{x}), \textbf{P}(\textbf{x})) \; . \label{simple-momentum}
\end{eqnarray}
This produces a rotationally covariant equation precisely because of the integral. From the discrete sum of
$e_i e_j e_k e_l$ that we computed previously, we can see that this would have given us
$-4 \rho \partial_i \partial_i \partial_i \rho$ which does not transform as a vector.

\subsection{The continuum limit}
Rewriting (\ref{ep-equations}) for brevity,
\begin{eqnarray}
\frac{\partial E}{\partial t} &=& X(\epsilon, a) = \epsilon \tilde{X}(\epsilon, a) \nonumber \\
\frac{\partial P_i}{\partial t} &=& Y_i(\epsilon, a) = \epsilon \tilde{Y}_i(\epsilon, a) \; . \nonumber
\end{eqnarray}
We must Taylor expand $X$ and $Y_i$ around small arguments.

\subsubsection{To first non-vanishing order}
It is clear that if we do not differentiate at all with respect to $\epsilon$, $X$ and $Y_i$ will vanish at
$\epsilon = 0$ no matter how many times were differentiate with respect to $a$. It is also true that
$\tilde{X}$ and $\tilde{Y}_i$ will vanish identically if we do not differentiate them with respect to $a$. The integrand
in $\tilde{X}$ becomes the zero function of $\epsilon$ once we plug in $a = 0$.
The integrand in $\tilde{Y}_i$ does not but it becomes proportional to $e_i$ which is odd.

$X(0, a) = 0 = X(\epsilon, 0)$ and $Y_i(0, a) = 0 = Y_i(\epsilon, 0)$ so the lowest order derivative that could
possibly survive is $\frac{\partial^2}{\partial \epsilon \partial a}$. Looking at $X(\epsilon, a)$ once more,
$\left. \frac{\partial X}{\partial \epsilon} \right |_{\epsilon = 0} = \tilde{X}(a, 0)$ which is the zero function of
$a$. Therefore,
$\left. \frac{\partial^2 X}{\partial \epsilon \partial a} \right |_{\substack{a = 0 \\ \epsilon = 0}} = 0$. We will
now compute this same derivative for $Y_i$.
\begin{eqnarray}
\left. \frac{\partial^2 Y_i}{\partial \epsilon \partial a} \right|_{\substack{a = 0 \\ \epsilon = 0}} &=& \left. \frac{\partial \tilde{Y}_i}{\partial a} \right|_{\substack{a = 0 \\ \epsilon = 0}} \nonumber \\
&=& -2\int_{\mathbb{S}^{d - 1}} e_i \rho(E(\textbf{x}), \textbf{P}(\textbf{x})) \frac{\partial}{\partial a} C\left ( \frac{E(\textbf{x}) + E(\textbf{x} + a\textbf{e})}{2}, \frac{\textbf{P}(\textbf{x}) + \textbf{P}(\textbf{x} + a\textbf{e})}{2} \right ) \nonumber \\
&& \left. \rho(E(\textbf{x} + a\textbf{e}), \textbf{P}(\textbf{x} + a\textbf{e})) \right|_{a = 0} \textup{d}S_{\textbf{e}} \nonumber \\
&=& -2\rho \left ( \frac{1}{2} \rho \partial_j C + C \partial_j \rho \right ) \int_{\mathbb{S}^{d - 1}} e_i e_j \textup{d}S_{\textbf{e}} \nonumber \\
&=& -\omega_d \partial_i (C \rho^2) \nonumber
\end{eqnarray}
What this shows is that to second order,
\begin{eqnarray}
\frac{\partial E}{\partial t} &=& 0 \nonumber \\
\frac{\partial P_i}{\partial t} &=& -\epsilon a \omega_d \partial_i (C \rho^2) \; . \label{second-order-ep}
\end{eqnarray}

This is consistent with the intuition about which way a distribution of momentum should move. In
(\ref{second-order-ep}), $C \rho^2$ is a function of the energy (which is static) and the momentum magnitude
$| \textbf{P} |$. Consider the simplest case where we let $d = 1$ and Taylor expand $C \rho^2$ as some constant
plus $|P|$:
\begin{equation}
\frac{\partial P(x, t)}{\partial t} \propto -\frac{\partial |P(x, t)|}{\partial x} \; . \nonumber
\end{equation}
A solution to the above equation is $P(x, t) = f(x - t)$ where $f$ is a non-negative function. Because of the
absolute value, another solution is $P(x, t) = -f(x + t)$. This confirms that the direction of motion for a
disturbance is equal to the sign of the disturbance; $f(x - t)$ is a lump of positive (right) momentum that moves
to the right, while $-f(x + t)$ is a lump of negative (left) momentum that moves to the left. This behaviour appears
to rectify a shortcoming of our last model; an ejected piece of a plasma ball propagating through empty space.

For our purposes, it is not enough to stop at the second order expansion where the energy is constant in time.
As we did with our first model, we will compute as many derivatives of $X$ as we need to see the energy dynamics.
We argued above that we must differentiate $X$ at least once with respect to $a$ and at least twice with respect to
$\epsilon$. In fact, we must differentiate even more than this; $\frac{\partial^3 X}{\partial \epsilon^2 \partial a}$
still vanishes. What we find by looking at the fourth order derivatives is:
\begin{equation}
\frac{\partial E}{\partial t} = \frac{1}{4} \epsilon^2 a^2 \left. \frac{\partial^4 X}{\partial \epsilon^2 \partial a^2} \right|_{\substack{a = 0 \\ \epsilon = 0}} \; . \label{e-same-derivative}
\end{equation}
When evaluating this and similar expressions, we will surpress the steps in calculating the $\epsilon$ derivatives,
saving them for the appendix. Additionally, we will adopt the notation
$\rho_{+} = \rho(E(\textbf{x} + a\textbf{e}), \textbf{P}(\textbf{x} + a\textbf{e}))$ and
$C_{+} = C\left ( \frac{E(\textbf{x}) + E(\textbf{x} + a\textbf{e})}{2}, \frac{\textbf{P}(\textbf{x}) + \textbf{P}(\textbf{x} + a\textbf{e})}{2} \right )$
and recall that:
\begin{eqnarray}
\frac{\partial \rho_{+}}{\partial a} &=& e_j \partial_j \rho_{+} \nonumber \\
\frac{\partial C_{+}}{\partial a} &=& \frac{1}{2} e_j \partial_j C_{+} \; . \label{convenient-relations}
\end{eqnarray}

If we apply this to (\ref{e-same-derivative}), our steps are:
\begin{eqnarray}
\left. \frac{\partial^4 X}{\partial \epsilon^2 \partial a^2} \right|_{\substack{a = 0 \\ \epsilon = 0}} &=& 2 \left. \frac{\partial^3 \tilde{X}}{\partial \epsilon \partial a^2} \right|_{\substack{a = 0 \\ \epsilon = 0}} \nonumber \\
&=& 4 \int_{\mathbb{S}^{d - 1}} \frac{\partial^2}{\partial a^2} \left. \left [ C_{+} \left ( \rho_{+} \frac{\partial \rho}{\partial E} - \rho \frac{\partial \rho_{+}}{\partial E} \right ) \right ] \right |_{a = 0} \textup{d}S_{\textbf{e}} \nonumber \\
&=& 4 \int_{\mathbb{S}^{d - 1}} e_i \frac{\partial}{\partial a} \left. \left [ \frac{1}{2} \partial_i C_{+} \left ( \rho_{+} \frac{\partial \rho}{\partial E} - \rho \frac{\partial \rho_{+}}{\partial E} \right ) + C_{+} \left ( \partial_i \rho_{+} \frac{\partial \rho}{\partial E} - \rho \partial_i \frac{\partial \rho_{+}}{\partial E} \right ) \right ] \right |_{a = 0} \textup{d}S_{\textbf{e}} \nonumber \\
&=& 4 \int_{\mathbb{S}^{d - 1}} e_i e_j \left [ C \left ( \partial_i \partial_j \rho \frac{\partial \rho}{\partial E} - \rho \partial_i \partial_j \frac{\partial \rho}{\partial E} \right ) -\partial_i C \left ( \rho \partial_j \frac{\partial \rho}{\partial E} - \partial_j \rho \frac{\partial \rho}{\partial E} \right ) \right ] \textup{d}S_{\textbf{e}} \nonumber \\
&=& 4 \omega_d \left [ C \left ( \partial_i \partial_i \rho \frac{\partial \rho}{\partial E} - \rho \partial_i \partial_i \frac{\partial \rho}{\partial E} \right ) - \partial_i C \left ( \rho \partial_i \frac{\partial \rho}{\partial E} - \partial_i \rho \frac{\partial \rho}{\partial E} \right ) \right ] \nonumber \\
&=& -4 \omega_d \partial_i \left ( C \rho^2 \partial_i \frac{\textup{d} \log \rho}{\textup{d} E} \right ) \; . \nonumber
\end{eqnarray}
This shows that to first non-vanishing order,
\begin{equation}
\frac{\partial E}{\partial t} = -\epsilon^2 a^2 \omega_d \partial_i \left ( C \rho^2 \partial_i \frac{\partial \log \rho}{\partial E} \right ) \; . \label{fourth-order-e}
\end{equation}
Notice that when we only considered the energy, we dropped factors of $\epsilon$ and $a$ from (\ref{original-equation}).
We must keep factors of $\epsilon$ and $a$ in the PDEs for energy and momentum because they appear with different
exponents. The dominant contribution to $\frac{\partial \textbf{P}}{\partial t}$ includes $\epsilon a$, while in the
dominant contribution to $\frac{\partial E}{\partial t}$, it is $(\epsilon a)^2$. The ratio between these factors is
a physical quantity because it tells us how long the diffusion time scale in (\ref{fourth-order-e}) is relative to the
transport time scale in (\ref{second-order-ep}).

\subsubsection{To first consistent order}
We now have non-trivial differential equations for both energy and momentum. However, the former is correct to fourth
order while the latter is only correct to second order. To be consistent, we will find further terms in the momentum
PDE. The terms are
\begin{equation}
\frac{\partial P_i}{\partial t} = \epsilon a \left. \frac{\partial^2 Y_i}{\partial \epsilon \partial a} \right |_{\substack{a = 0 \\ \epsilon = 0}} + \frac{1}{6} \epsilon^3 a \left. \frac{\partial^4 Y_i}{\partial \epsilon^3 \partial a} \right |_{\substack{a = 0 \\ \epsilon = 0}} + \frac{1}{4} \epsilon^2 a^2 \left. \frac{\partial^4 Y_i}{\partial \epsilon^2 \partial a^2} \right |_{\substack{a = 0 \\ \epsilon = 0}} + \frac{1}{6} \epsilon a^3 \left. \frac{\partial^4 Y_i}{\partial \epsilon \partial a^3} \right |_{\substack{a = 0 \\ \epsilon = 0}} \; , \label{p-4-derivatives}
\end{equation}
which we will calculate one by one.
\begin{eqnarray}
\left. \frac{\partial^4 Y_i}{\partial \epsilon^3 \partial a} \right |_{\substack{a = 0 \\ \epsilon = 0}} &=& 3 \left. \frac{\partial^3 \tilde{Y_i}}{\partial \epsilon^2 \partial a} \right |_{\substack{a = 0 \\ \epsilon = 0}} \nonumber \\
&=& 6 \int_{\mathbb{S}^{d - 1}} e_i \frac{\partial}{\partial a} \left [ C_{+} \left ( \frac{\partial \rho}{\partial E} - e_j \frac{\partial \rho}{\partial P_j} \right ) \left ( \frac{\partial \rho_{+}}{\partial E} - e_k \frac{\partial \rho_{+}}{\partial P_k} \right ) \right. \nonumber \\
&& + C_{+} \left ( \frac{\partial \rho}{\partial E} + e_j \frac{\partial \rho}{\partial P_j} \right ) \left ( \frac{\partial \rho_{+}}{\partial E} + e_k \frac{\partial \rho_{+}}{\partial P_k} \right ) - C_{+} \rho_{+} \left ( \frac{\partial^2 \rho}{\partial E^2} + e_j e_k \frac{\partial^2 \rho}{\partial P_j \partial P_k} \right ) \nonumber \\
&& \left. \left. - C_{+} \rho \left ( \frac{\partial^2 \rho_{+}}{\partial E^2} + e_j e_k \frac{\partial^2 \rho_{+}}{\partial P_j \partial P_k} \right ) \right ] \right |_{a = 0} \textup{d}S_{\textbf{e}} \nonumber \\
&=& 6 \int_{\mathbb{S}^{d - 1}} e_i e_l \left [ 2 \frac{\partial \rho}{\partial E} \left ( \frac{1}{2} \partial_l C \frac{\partial \rho}{\partial E} + C \partial_l \frac{\partial \rho}{\partial E} \right ) + 2 \frac{\partial \rho}{\partial P_j} e_j e_k \left ( \frac{1}{2} \partial_l C \frac{\partial \rho}{\partial P_k} + C \partial_l \frac{\partial \rho}{\partial P_k} \right ) \right. \nonumber \\
&& - \left ( \frac{1}{2} \partial_l C \rho + C \partial_l \rho \right ) \left ( \frac{\partial^2 \rho}{\partial E^2} + e_j e_k \frac{\partial^2 \rho}{\partial P_j \partial P_k} \right ) \nonumber \\
&& \left. - \rho \left ( \frac{1}{2} \partial_l C \frac{\partial^2 \rho}{\partial E^2} + \frac{1}{2} \partial C e_j e_k \frac{\partial^2 \rho}{\partial P_j \partial P_k} + C \partial_l \frac{\partial^2 \rho}{\partial E^2} + C e_j e_k \partial_l \frac{\partial^2 \rho}{\partial P_j \partial P_k} \right ) \right ] \textup{d}S_{\textbf{e}} \nonumber \\
&=& \frac{6 \omega_d}{d + 2} \left [ \partial_l C \left ( \frac{\partial \rho}{\partial E} \right )^2 \delta_{jk} + \partial_l C \frac{\partial \rho}{\partial P_j} \frac{\partial \rho}{\partial P_k} + C \partial_l \left ( \frac{\partial \rho}{\partial E} \right )^2 \delta{jk} + C \partial_l \left ( \frac{\partial \rho}{\partial P_j} \frac{\partial \rho}{\partial P_k} \right ) \right. \nonumber \\
&& - \rho \partial_l C \frac{\partial^2 \rho}{\partial E^2} \delta_{jk} - \rho \partial_l C \frac{\partial^2 \rho}{\partial P_j \partial P_k} - C \partial_l \rho \frac{\partial^2 \rho}{\partial E^2} \delta_{jk} - C \partial_l \rho \frac{\partial^2 \rho}{\partial P_j \partial P_k} \nonumber \\
&& \left. - C \rho \partial_l \frac{\partial^2 \rho}{\partial E^2} \delta_{jk} - C \rho \partial_l \frac{\partial^2 \rho}{\partial P_j \partial P_k} \right ] \left ( \delta_{ij} \delta_{kl} + \delta_{ik} \delta_{jl} + \delta_{il} \delta_{jk} \right ) \nonumber \\
&=& -\frac{6 \omega_d}{d + 2} \partial_l \left [ C \rho^2 \left ( \frac{\partial^2 \log \rho}{\partial E^2} \delta_{jk} + \frac{\partial^2 \log \rho}{\partial P_j \partial P_k} \right ) \right ] \left ( \delta_{ij} \delta_{kl} + \delta_{ik} \delta_{jl} + \delta_{il} \delta_{jk} \right ) \nonumber
\end{eqnarray}
In the next one, the steps are very similar.
\begin{eqnarray}
\left. \frac{\partial^4 Y_i}{\partial \epsilon^2 \partial a^2} \right |_{\substack{a = 0 \\ \epsilon = 0}} &=& 2 \left. \frac{\partial^3 \tilde{Y_i}}{\partial \epsilon \partial a^2} \right |_{\substack{a = 0 \\ \epsilon = 0}} \nonumber \\
&=& 4 \int_{\mathbb{S}^{d - 1}} e_i e_j \left. \frac{\partial^2}{\partial a^2} \left [ C_{+} \rho_{+} \frac{\partial \rho}{\partial P_j} - C_{+} \rho \frac{\partial \rho_{+}}{\partial P_j} \right ] \right |_{a = 0} \textup{d}S_{\textbf{e}} \nonumber \\
&=& 4 \int_{\mathbb{S}^{d - 1}} e_i e_j e_k \left. \frac{\partial}{\partial a} \left [ \left ( \frac{1}{2} \partial_k C_{+} \rho_{+} + C_{+} \partial_k \rho_{+} \right ) \frac{\partial \rho}{\partial P_j} - \left ( \frac{1}{2} \partial_k C_{+} \frac{\partial \rho_{+}}{\partial P_j} + C_{+} \partial_k \frac{\partial \rho_{+}}{\partial P_j} \right ) \rho \right ] \right |_{a = 0} \textup{d}S_{\textbf{e}} \nonumber \\
&=& 4 \int_{\mathbb{S}^{d - 1}} e_i e_j e_k e_l \left [ \left ( \partial_k C \partial_l \rho  + C \partial_k \partial_l \rho \right ) \frac{\partial \rho}{\partial P_j} - \left ( \partial_k C \partial_l \frac{\partial \rho}{\partial P_j} + C \partial_k \partial_l \frac{\partial \rho}{\partial P_j} \right ) \rho \right ] \textup{d}S_{\textbf{e}} \nonumber \\
&=& \frac{4 \omega_d}{d + 2} \left ( \partial_k C \partial_l \rho \frac{\partial \rho}{\partial P_j} + C \partial_k \partial_l \rho \frac{\partial \rho}{\partial P_j} - \partial_k C \partial_l \frac{\partial \rho}{\partial P_j} \rho - C \partial_k \partial_l \frac{\partial \rho}{\partial P_j} \rho \right ) \left ( \delta_{ij} \delta_{kl} + \delta_{ik} \delta_{jl} + \delta_{il} \delta_{jk} \right ) \nonumber \\
&=& -\frac{4 \omega_d}{d + 2} \partial_k \left ( C \rho^2 \partial_l \frac{\partial \log \rho}{\partial P_j} \right ) \left ( \delta_{ij} \delta_{kl} + \delta_{ik} \delta_{jl} + \delta_{il} \delta_{jk} \right ) \nonumber
\end{eqnarray}
The final expression for
$\left. \frac{\partial^4 Y_i}{\partial \epsilon^3 \partial a} \right |_{\substack{a = 0 \\ \epsilon = 0}}$ involves
no more than one derivative of $C$ becaues there is only one derivative with respect to $a$. Since
$\left. \frac{\partial^4 Y_i}{\partial \epsilon^2 \partial a^2} \right |_{\substack{a = 0 \\ \epsilon = 0}}$ has two
$a$ derivatives, there is a chance that the expression will involve two derivatives of $C$. However, we know from above
that the double derivatives of $C$ cancel out. This will not happen when we compute
$\left. \frac{\partial^4 Y_i}{\partial \epsilon \partial a^3} \right |_{\substack{a = 0 \\ \epsilon = 0}}$. In this
case all three derivatives of $C$ will survive and we will have to extend (\ref{convenient-relations}) so that we know
how to deal with up to three derivatives of $C_{+}$. A simple $C(E(x))$ can be used to illustrate the problem.
\begin{eqnarray}
\frac{\partial C}{\partial x} &=& \frac{\partial C}{\partial E} \frac{\partial E}{\partial x} \nonumber \\
\frac{\partial^2 C}{\partial x^2} &=& \frac{\partial^2 C}{\partial E^2} \left ( \frac{\partial E}{\partial x} \right )^2 + \frac{\partial C}{\partial E} \frac{\partial^2 E}{\partial x^2} \nonumber \\
\frac{\partial^3 C}{\partial x^3} &=& \frac{\partial^3 C}{\partial E^3} \left ( \frac{\partial E}{\partial x} \right )^3 + 3 \frac{\partial^2 C}{\partial E^2} \frac{\partial E}{\partial x} \frac{\partial^2 E}{\partial x^2} + \frac{\partial C}{\partial E} \frac{\partial^3 E}{\partial x^3} \nonumber
\end{eqnarray}
These expansions are straightforward. However, when we consider $C_{+}$, we do not get the benefit of being able to
write multiple derivatives in a compact form.
\begin{eqnarray}
\left. \frac{\partial C_{+}}{\partial a} \right |_{a = 0} &=& \frac{1}{2} \frac{\partial C}{\partial E} \frac{\partial E}{\partial x} \nonumber \\
&=& \frac{1}{2} \frac{\partial C}{\partial x} \nonumber \\
\left. \frac{\partial^2 C_{+}}{\partial a^2} \right |_{a = 0} &=& \left. \frac{\partial}{\partial a} \left ( \frac{1}{2} \frac{\partial C_{+}}{\partial E} \frac{\partial E_{+}}{\partial x} \right ) \right |_{a = 0} \nonumber \\
&=& \frac{1}{4} \frac{\partial^2 C}{\partial E^2} \left ( \frac{\partial E}{\partial x} \right )^2 + \frac{1}{2} \frac{\partial C}{\partial E} \frac{\partial^2 E}{\partial x^2} \nonumber \\
&\neq& \frac{1}{4} \frac{\partial^2 C}{\partial x^2} \nonumber \\
\left. \frac{\partial^3 C_{+}}{\partial a^3} \right |_{a = 0} &=& \left. \frac{\partial}{\partial a} \left [ \frac{1}{4} \frac{\partial^2 C_{+}}{\partial E^2} \left ( \frac{\partial E_{+}}{\partial x} \right )^2 + \frac{1}{2} \frac{\partial C_{+}}{\partial E} \frac{\partial^2 E_{+}}{\partial x^2} \right ] \right |_{a = 0} \nonumber \\
&=& \frac{1}{8} \frac{\partial^3 C}{\partial E^3} \left ( \frac{\partial E}{\partial x} \right )^3 + \frac{3}{4} \frac{\partial^2 C}{\partial E^2} \frac{\partial E}{\partial x} \frac{\partial^2 E}{\partial x^2} + \frac{1}{2} \frac{\partial C}{\partial E} \frac{\partial^3 E}{\partial x^3} \nonumber \\
&\neq& \frac{1}{8} \frac{\partial^3 C}{\partial x^3} \label{inconvenient-relations}
\end{eqnarray}
We must be careful to use these relations in the last part of the calculation we are carrying out.
\begin{eqnarray}
\left. \frac{\partial^4 Y_i}{\partial \epsilon \partial a^3} \right |_{\substack{a = 0 \\ \epsilon = 0}} &=& \left. \frac{\partial^3 \tilde{Y}_i}{\partial a^3} \right |_{a = 0} \nonumber \\
&=& -2 \int_{\mathbb{S}^{d - 1}} e_i \rho \left. \frac{\partial}{\partial a^3} \left ( C_{+} \rho_{+} \right ) \right |_{a = 0} \textup{d}S_{\textbf{e}} \nonumber \\
&=& -\frac{2 \omega_d}{d + 2} \left [ C \partial_j \partial_k \partial_l \rho + \frac{3}{2} \partial_j C \partial_k \partial_l \rho  + 3 \partial_j \rho \left ( \frac{1}{4} \frac{\partial^2 C}{\partial E^2} \partial_k E \partial_l E + \frac{1}{2} \frac{\partial^2 C}{\partial E \partial P_m} \partial_k P_m \partial_l E \right. \right. \nonumber \\
&& \left. + \frac{1}{4} \frac{\partial^2 C}{\partial P_m \partial P_n} \partial_k P_m \partial_l P_n + \frac{1}{2} \frac{\partial C}{\partial E} \partial_k \partial_l E + \frac{1}{2} \frac{\partial C}{\partial P_m} \partial_k \partial_l P_m \right ) + \rho \left ( \frac{1}{8} \frac{\partial^3 C}{\partial E^3} \partial_j E \partial_k E \partial_l E \right. \nonumber \\
&& + \frac{3}{8} \frac{\partial^3 C}{\partial E^2 \partial P_m} \partial_j P_m \partial_k E \partial_l E + \frac{3}{8} \frac{\partial^3 C}{\partial E \partial P_m \partial P_n} \partial_j P_m \partial_k P_n \partial_l E + \frac{1}{8} \frac{\partial^3 C}{\partial P_m \partial P_n \partial P_o} \partial_j P_m \partial_k P_n \partial_l P_o \nonumber \\
&& + \frac{3}{4} \frac{\partial^2 C}{\partial E^2} \partial_j \partial_k E \partial_l E + \frac{3}{4} \frac{\partial^2 C}{\partial P_m \partial E} \partial_j P_m \partial_k \partial_l E + \frac{3}{4} \frac{\partial^2 C}{\partial E \partial P_m} \partial_j \partial_k P_m \partial_l E \nonumber \\
&& \left. \left. + \frac{3}{4} \frac{\partial^2 C}{\partial P_m \partial P_n} \partial_j \partial_k P_m \partial_l P_n + \frac{1}{2} \frac{\partial C}{\partial E} \partial_j \partial_k \partial_l E + \frac{1}{2} \frac{\partial C}{\partial P_m} \partial_j \partial_k \partial_l P_m \right ) \right ] \left ( \delta_{ij} \delta_{kl} + \delta_{ik} \delta_{jl} + \delta_{il} \delta_{jk} \right ) \nonumber
\end{eqnarray}
We saw in (\ref{simple-momentum}) that the fourth order term in $\frac{\partial P_i}{\partial t}$ involving
$\epsilon a^3$ is quite simple when $C = 1$. The above shows that it is considerably more messy for a general $C$.
Putting together all of the derivatives that have been computed, the end result is:
\begin{eqnarray}
\frac{\partial E}{\partial t} &=& -\epsilon^2 a^2 \omega_d \partial_i \left ( C \rho^2 \partial_i \frac{\partial \log \rho}{\partial E} \right ) \nonumber \\
\frac{\partial P_i}{\partial t} &=& -\epsilon a \omega_d \partial_i (C \rho^2) - \epsilon^3 a \frac{\omega_d}{d + 2} \partial_l \left [ C \rho^2 \left ( \frac{\partial^2 \log \rho}{\partial E^2} \delta_{jk} + \frac{\partial^2 \log \rho}{\partial P_j \partial P_k} \right ) \right ] \left ( \delta_{ij} \delta_{kl} + \delta_{ik} \delta_{jl} + \delta_{il} \delta_{jk} \right ) \nonumber \\
&& - \epsilon^2 a^2 \frac{\omega_d}{d + 2} \partial_k \left ( C \rho^2 \partial_l \frac{\partial \log \rho}{\partial P_j} \right ) \left ( \delta_{ij} \delta_{kl} + \delta_{ik} \delta_{jl} + \delta_{il} \delta_{jk} \right ) \nonumber \\
&& - \epsilon a^3 \frac{\omega_d}{3(d + 2)} \left [ C \partial_j \partial_k \partial_l \rho + \frac{3}{2} \partial_j C \partial_k \partial_l \rho  + 3 \partial_j \rho \left ( \frac{1}{4} \frac{\partial^2 C}{\partial E^2} \partial_k E \partial_l E + \frac{1}{2} \frac{\partial^2 C}{\partial E \partial P_m} \partial_k P_m \partial_l E \right. \right. \nonumber \\
&& \left. + \frac{1}{4} \frac{\partial^2 C}{\partial P_m \partial P_n} \partial_k P_m \partial_l P_n + \frac{1}{2} \frac{\partial C}{\partial E} \partial_k \partial_l E + \frac{1}{2} \frac{\partial C}{\partial P_m} \partial_k \partial_l P_m \right ) + \rho \left ( \frac{1}{8} \frac{\partial^3 C}{\partial E^3} \partial_j E \partial_k E \partial_l E \right. \nonumber \\
&& + \frac{3}{8} \frac{\partial^3 C}{\partial E^2 \partial P_m} \partial_j P_m \partial_k E \partial_l E + \frac{3}{8} \frac{\partial^3 C}{\partial E \partial P_m \partial P_n} \partial_j P_m \partial_k P_n \partial_l E + \frac{1}{8} \frac{\partial^3 C}{\partial P_m \partial P_n \partial P_o} \partial_j P_m \partial_k P_n \partial_l P_o \nonumber \\
&& + \frac{3}{4} \frac{\partial^2 C}{\partial E^2} \partial_j \partial_k E \partial_l E + \frac{3}{4} \frac{\partial^2 C}{\partial P_m \partial E} \partial_j P_m \partial_k \partial_l E + \frac{3}{4} \frac{\partial^2 C}{\partial E \partial P_m} \partial_j \partial_k P_m \partial_l E + \frac{3}{4} \frac{\partial^2 C}{\partial P_m \partial P_n} \partial_j \partial_k P_m \partial_l P_n \nonumber \\
&& \left. \left. + \frac{1}{2} \frac{\partial C}{\partial E} \partial_j \partial_k \partial_l E + \frac{1}{2} \frac{\partial C}{\partial P_m} \partial_j \partial_k \partial_l P_m \right ) \right ] \left ( \delta_{ij} \delta_{kl} + \delta_{ik} \delta_{jl} + \delta_{il} \delta_{jk} \right ) \; . \label{fourth-order-ep}
\end{eqnarray}
These equations take up a lot of space and that is all because of the last term.

\subsubsection{Static configurations}
For a general density of states $\rho(E, \textbf{P})$ and function $C(E, \textbf{P})$, (\ref{fourth-order-ep}) is only
static when energy and momentum are both uniform. If only one of them is, derivatives of $\rho$ and $C$ will not
vanish and the equations will introduce non-uniformities. If the momentum is initially zero everywhere, we can compare
the energy equation just derived with the one from our first model:
\begin{eqnarray}
\frac{\partial E}{\partial t} &=& -\epsilon^2 a^2 \omega_d \partial_i \left ( C(E, 0) \rho^2(E, 0) \partial_i \frac{\partial \log \rho(E, 0)}{\partial E} \right ) \nonumber \\
\frac{\partial E}{\partial t} &=& -\epsilon^2 a^2 \omega_d \partial_i \left ( C(E) \rho^2(E) \partial_i \frac{\textup{d} \log \rho(E)}{\textup{d} E} \right ) \; . \nonumber
\end{eqnarray}
The unrestricted density of states $\rho(E)$ should be qualitatively similar to the zero momentum density of states
$\rho(E, 0)$. Therefore these models predict similar diffusive behaviour at early times. However, this diffusion
generates momentum which can cause very different behaviour to occur at late times once the non-uniformities become
significant. For example, a Hagedorn density of states with $\textbf{P}(\textbf{x}, 0) = 0$ requires
$\frac{\partial E}{\partial t}$ to be zero (indicating slow diffusion) at $t = 0$ but not necessarily later times.
This supports the idea that our first model overestimates the amount of time taken for a cluster of energy to diffuse.

Before, we saw that the dynamics were frozen for any energy distribution when $\rho(E)$ was a Hagedorn density of
states. If such a density of states were to exist for this system of PDEs, $\log \rho(E, \textbf{P})$ would have to be
linear in both $E$ and $\textbf{P}$. Of the five terms in (\ref{fourth-order-ep}), this form causes the three involving
logarithms to vanish. What about the term in momentum proportional to $\epsilon a$? This will not vanish unless we
choose a special $C$ function as well. The one to choose is $C \propto \rho^{-2}$. It is not immediately obvious, but
this choice causes the one remaining term (the one proportional to $\epsilon a^3$) to vanish as well. This is because
\begin{eqnarray}
\left. \frac{\partial^4 Y_i}{\partial \epsilon \partial a^3} \right |_{\substack{a = 0 \\ \epsilon = 0}} &=& -2 \int_{\mathbb{S}^{d - 1}} e_i \left. \frac{\partial}{\partial a^3} \left ( C_{+} \rho_{+} \rho \right ) \right |_{a = 0} \textup{d}S_{\textbf{e}} \nonumber \\
&=& -2 \int_{\mathbb{S}^{d - 1}} e_i \left. \frac{\partial}{\partial a^3} \left ( e^{-2 \left ( \frac{aE + b_j P_j}{2} + \frac{\beta_{\mathrm{H}} E_{+} + \beta_{\mathrm{H}} v_j P_{j+}}{2} \right )} e^{\beta_{\mathrm{H}} E_{+} + \beta_{\mathrm{H}} v_j P_{j+}} e^{\beta_{\mathrm{H}} E + \beta_{\mathrm{H}} v_j P_j} \right ) \right |_{a = 0} \nonumber \\
&=& -2 \int_{\mathbb{S}^{d - 1}} e_i \left. \frac{\partial}{\partial a^3} 1 \right |_{a = 0} \textup{d}S_{\textbf{e}} \nonumber \\
&=& 0 \; . \nonumber
\end{eqnarray}
Therefore $C \propto \rho^{-2}$ and $C \propto 1$ appear to be natural choices once again. We will see that the
part of our equation that looks complicated for a more general $C$ simplifies considerably if we take the linearization.

\subsection{Consistency check}
Hydrodynamics is concerned with fluctuations of a system around some equilibrium state. This system is often a field
theory like the one being modelled by (\ref{fourth-order-ep}). An encouraging result we have seen already is the
solution at order $\epsilon a$ in terms of left and right moving waves. These are exact solutions to hydrodynamics for
a CFT in two spacetime dimensions. It will not be possible to compare solutions in the other cases of interest so we
will focus on matching coefficients in the linearized equations.

\subsubsection{Linearized equations}
We will use $(E_0, \textbf{P}_0)$ as our equilibrium state and insert
\begin{eqnarray}
E(\textbf{x}, t) &=& E_0 + \tilde{E}(\textbf{x}, t) \nonumber \\
\textbf{P}(\textbf{x}, t) &=& \textbf{P}_0 + \tilde{\textbf{P}}(\textbf{x}, t) \nonumber
\end{eqnarray}
into our equations of motion leaving only one power of the perturbations $(\tilde{E}, \tilde{\textbf{P}})$.
Working with the term inside the derivatives in our energy equation (\ref{fourth-order-ep}), we may write
\begin{eqnarray}
\frac{\partial \log \rho (E, \textbf{P})}{\partial E} &\approx& \frac{\partial \log \rho (E_0, \textbf{P}_0)}{\partial E} + \tilde{E} \frac{\partial^2 \log \rho (E_0, \textbf{P}_0)}{\partial E^2} + \tilde{P}_m \frac{\partial^2 \log \rho (E_0, \textbf{P}_0)}{\partial E \partial P_m} \nonumber \\
&=& \frac{\partial \log \rho_0}{\partial E} + \tilde{E} \frac{\partial^2 \log \rho_0}{\partial E^2} + \tilde{P}_m \frac{\partial^2 \log \rho_0}{\partial E \partial P_m} \nonumber
\end{eqnarray}
where we have used the subscript $0$ to denote evaluation at $(E_0, \textbf{P}_0)$. Taking one derivative kills the
constant so we have:
\begin{eqnarray}
C \rho^2 \partial_i \frac{\partial \log \rho}{\partial E} &\approx& C \rho^2 \partial_i \left [ \frac{\partial \log \rho_0}{\partial E} + \tilde{E} \frac{\partial^2 \log \rho_0}{\partial E^2} + \tilde{P}_m \frac{\partial^2 \log \rho_0}{\partial E \partial P_m} \right ] \nonumber \\
&=& C \rho^2 \left ( \frac{\partial^2 \log \rho_0}{\partial E^2} \partial_i \tilde{E} + \frac{\partial^2 \log \rho_0}{\partial E \partial P_m} \partial_i \tilde{P}_m \right ) \nonumber \\
&\approx& C_0 \rho^2_0 \left ( \frac{\partial^2 \log \rho_0}{\partial E^2} \partial_i \tilde{E} + \frac{\partial^2 \log \rho_0}{\partial E \partial P_m} \partial_i \tilde{P}_m \right ) \; . \nonumber
\end{eqnarray}
When we use this type of logic on the momentum equation in (\ref{fourth-order-ep}), a nice thing happens. Only a few
terms in the long $\epsilon a^3$ contribution do not have derivatives of $\tilde{E}$ and $\tilde{\textbf{P}}$
multiplied together: the one with $\partial_j \partial_k \partial_l \rho$, the one with
$\partial_j \partial_k \partial_l E$ and the one with $\partial_j \partial_k \partial_l P_m$.
Carrying out the straightforward linearization, we see that (\ref{fourth-order-ep}) becomes:
\begin{eqnarray}
\frac{\partial \tilde{E}}{\partial t} &=& -\epsilon^2 a^2 \omega_d C_0 \rho^2_0 \left ( \frac{\partial^2 \log \rho_0}{\partial E^2} \partial_i \partial_i \tilde{E} + \frac{\partial^2 \log \rho_0}{\partial E \partial P_m} \partial_i \partial_i \tilde{P}_m \right ) \nonumber \\
\frac{\partial \tilde{P}_i}{\partial t} &=& -\epsilon a \omega_d \rho^2_0 \left [ \left ( \frac{\partial C_0}{\partial E} + 2 C_0 \frac{\partial \log \rho_0}{\partial E} \right ) \partial_i \tilde{E} + \left ( \frac{\partial C_0}{\partial P_m} + 2 C_0 \frac{\partial \log \rho_0}{\partial P_m} \right ) \partial_i \tilde{P}_m \right ] \nonumber \\
&& - \epsilon^3 a \frac{\omega_d}{d + 2} \left [ \left [ C_0 \rho^2_0 \left ( \frac{\partial^3 \log \rho_0}{\partial E^3} \delta_{jk} + \frac{\partial^3 \log \rho_0}{\partial E \partial P_j \partial P_k} \right ) \right. \right. \nonumber \\
&& \left. + \rho^2_0 \left ( \frac{\partial^2 \log \rho_0}{\partial E^2} \delta_{jk} + \frac{\partial^2 \log \rho_0}{\partial P_j \partial P_k} \right ) \left ( \frac{\partial C_0}{\partial E} + 2 C_0 \frac{\partial \log \rho_0}{\partial E} \right ) \right ] \partial_l \tilde{E} \nonumber \\
&& + \left [ C_0 \rho^2_0 \left ( \frac{\partial^3 \log \rho_0}{\partial E^2 \partial P_m} \delta_{jk} + \frac{\partial^3 \log \rho_0}{\partial P_j \partial P_k \partial P_m} \right ) \right. \nonumber \\
&& \left. \left. + \rho^2_0 \left ( \frac{\partial^2 \log \rho_0}{\partial E^2} \delta_{jk} + \frac{\partial^2 \log \rho_0}{\partial P_j \partial P_k} \right ) \left ( \frac{\partial C_0}{\partial P_m} + 2 C_0 \frac{\partial \log \rho_0}{\partial P_m} \right ) \right ] \partial_l \tilde{P}_m \right ] \left ( \delta_{ij} \delta_{kl} + \delta_{ik} \delta_{jl} + \delta_{il} \delta_{jk} \right ) \nonumber \\
&& - \epsilon^2 a^2 \frac{\omega_d}{d + 2} C_0 \rho^2_0 \left [ \frac{\partial^2 \log \rho_0}{\partial P_j \partial E} \partial_k \partial_l \tilde{E} + \frac{\partial^2 \log \rho_0}{\partial P_j \partial P_m} \partial_k \partial_l \tilde{P}_m \right ] \left ( \delta_{ij} \delta_{kl} + \delta_{ik} \delta_{jl} + \delta_{il} \delta_{jk} \right ) \label{linearized-ep} \\
&& - \epsilon a^3 \frac{\omega_d}{2(d + 2)} \rho^2_0 \left [ \left ( \frac{\partial C_0}{\partial E} + 2 C_0 \frac{\partial \log \rho_0}{\partial E} \right ) \partial_i \partial_j \partial_j \tilde{E} + \left ( \frac{\partial C_0}{\partial P_m} + 2 C_0 \frac{\partial \log \rho_0}{\partial P_m} \right ) \partial_i \partial_j \partial_j \tilde{P}_m \right ] \; . \nonumber
\end{eqnarray}

These equations are still rather long. The $\epsilon a^3$ term is now only one line but the linearization has made the
$\epsilon^3 a$ term expand. The situation can be improved if we assume that the momentum $\textbf{P}_0$ around which we
linearize is not only a constant function but a constant close to zero. This is reminiscent of a common assumption in
hydrodynamics where the fluid velocity must be much less than the speed of sound to yield propagating hydrodynamic
modes \cite{kovtun}. We also used this assumption previously when we linearized hydrodynamics. This allows us to remove
all coefficients above that involve an odd number of
derivatives with respect to components of $\textbf{P}$. After all, functions like $C$ and $\log \rho$ are spherically
symmetric in $\textbf{P}$, so their odd order derivatives vanish at the origin. Moreover, second derivatives become
proportional to $\delta_{jk}$:
\begin{eqnarray}
\frac{\partial^2 \log \rho}{\partial P_j \partial P_k} &=& \frac{\partial}{\partial P_j} \left ( \frac{\partial \log \rho}{\partial |\textbf{P}|} \frac{P_k}{|\textbf{P}|} \right ) \nonumber \\
&=& \frac{\partial^2 \log \rho}{\partial |\textbf{P}|^2} \frac{P_j P_k}{|\textbf{P}|^2} + \frac{\partial \log \rho}{\partial |\textbf{P}|} \frac{\partial}{\partial P_j} \left ( \frac{P_k}{|\textbf{P}|} \right ) \nonumber \\
&=& \frac{\partial^2 \log \rho}{\partial |\textbf{P}|^2} \frac{P_j P_k}{|\textbf{P}|^2} + \frac{\partial \log \rho}{\partial |\textbf{P}|} \frac{|\textbf{P}| \delta_{jk} - \frac{P_j P_k}{|\textbf{P}|}}{|\textbf{P}|^2} \nonumber \\
&=& \frac{\partial \log \rho}{\partial |\textbf{P}|} \frac{\delta_{jk}}{|\textbf{P}|} + \frac{P_j P_k}{|\textbf{P}|^2} \left ( \frac{\partial^2 \log \rho}{\partial |\textbf{P}|^2} - \frac{1}{|\textbf{P}|} \frac{\partial \log \rho}{\partial |\textbf{P}|} \right ) \nonumber \\
&\approx& \frac{\partial \log \rho}{\partial |\textbf{P}|} \frac{\delta_{jk}}{|\textbf{P}|} \; . \nonumber
\end{eqnarray}
If we substitute this into (\ref{linearized-ep}), we arrive at:
\begin{eqnarray}
\frac{\partial \tilde{E}}{\partial t} &=& -\epsilon^2 a^2 \omega_d C_0 \rho^2_0 \frac{\partial^2 \log \rho_0}{\partial E^2} \partial_i \partial_i \tilde{E} \nonumber \\
\frac{\partial \tilde{P}_i}{\partial t} &=& -\epsilon a \omega_d \rho^2_0 \left ( \frac{\partial C_0}{\partial E} + 2 C_0 \frac{\partial \log \rho_0}{\partial E} \right ) \partial_i \tilde{E} \nonumber \\
&& - \epsilon^3 a \omega_d \left [ C_0 \rho^2_0 \left ( \frac{\partial^3 \log \rho_0}{\partial E^3} + \frac{1}{|\textbf{P}_0|} \frac{\partial^2 \log \rho}{\partial E \partial |\textbf{P}|} \right ) + \rho^2_0 \left ( \frac{\partial C_0}{\partial E} + 2 C_0 \frac{\partial \log \rho_0}{\partial E} \right ) \right ] \partial_i \tilde{E} \nonumber \\
&& - \epsilon^2 a^2 \frac{\omega_d}{d + 2} C_0 \rho^2_0 \frac{1}{|\textbf{P}_0|} \frac{\partial \log \rho_0}{\partial |\textbf{P}|} \left ( \partial_j \partial_j \tilde{P}_i + 2 \partial_i \partial_j \tilde{P}_j \right ) \nonumber \\
&& - \epsilon a^3 \frac{\omega_d}{2(d + 2)} \rho^2_0 \left ( \frac{\partial C_0}{\partial E} + 2 C_0 \frac{\partial \log \rho_0}{\partial E} \right ) \partial_i \partial_j \partial_j \tilde{E} \; . \label{small-linearized-ep}
\end{eqnarray}

\subsubsection{Comparison with hydrodynamics}
We are now in a position to compare (\ref{small-linearized-ep}) to hydrodynamics. Since our equations have second and
third derivatives, we should not attempt to relate them to ideal hydrodynamics. In fact, ideal hydrodynamics has a
conserved entropy and one of the fundamental assumptions in our model was that the entropy was driven to increase
\cite{bhattacharya}. We will have to use a stress-energy tensor that includes derivatives --- a situation we referred
to as dissipative hydrodynamics.

In the Landau frame, dissipative hydrodynamics came from (\ref{dissipative-constituents}). For our purposes, it is
enough that these equations have two derivatives even though (\ref{small-linearized-ep}) has three. For a proper
comparison, we need the linearization of the hydro equations. An alternative expression for the system
(\ref{continuity-equation}) and (\ref{stress-equation}) with no charge is
\begin{eqnarray}
&& \frac{\partial \tilde{\varepsilon}}{\partial t} + ik \tilde{P}_{\parallel} = 0 \nonumber \\
&& \frac{\partial \tilde{P}_{\parallel}}{\partial t} + ik \frac{\partial P_0}{\partial \varepsilon} \tilde{\varepsilon} + \gamma_{\textup{s}} k^2 \tilde{P}_{\parallel} = 0 \nonumber \\
&& \frac{\partial \tilde{P}_{\perp}}{\partial t} + \gamma_{\eta} k^2 \tilde{P}_{\perp} = 0 \nonumber \; .
\end{eqnarray}
In this form, we have implicitly Fourier transformed the fields and decomposed the momentum into a part
parallel to $\textbf{k}$ and a part perpendicular to $\textbf{k}$. The coefficients
\begin{eqnarray}
\gamma_{\textup{s}} &=& \frac{\frac{d - 2}{d} \eta_0 + \zeta_0}{\varepsilon_0 + P_0} \nonumber \\
\gamma_{\eta} &=& \frac{\eta_0}{\varepsilon_0 + P_0} \nonumber
\end{eqnarray}
have been defined as in \cite{kovtun}. The most obvious difference we see is that the energy decouples in
(\ref{small-linearized-ep}) instead of satisfying a continuity equation. If a model with only energy is governed
by the heat equation, one hopes that this is an effective description of a continuity equation with Fick's law:
$P_i \propto \partial_i E$. The linearization of (\ref{fourth-order-ep}) has shown that this is not the case; the
heat equation for energy is still an explicit part of our model.

One way around this is to consider an incompressible fluid: $\partial_i \tilde{P}_i = 0$. In hydrodynamics this is
equivalent to the approximation that the energy is constant in time. For this to be true in the model
(\ref{small-linearized-ep}), we need an energy that is already uniform so that only momentum is flowing. The simplest
comparison we can make is between the incompressible hydro equations and a pure momentum version of our model.
Since this is the $\epsilon^2 a^2$ term in (\ref{small-linearized-ep}), the relations
\begin{eqnarray}
\gamma_{\eta} &=& \epsilon^2 a^2 \frac{\omega_d}{d + 2} C_0 \rho^2_0 \frac{1}{|\textbf{P}_0|} \frac{\partial \log \rho_0}{\partial |\textbf{P}|} \nonumber \\
\gamma_{\mathrm{s}} &=& 2 \gamma_{\eta} \label{transport-coefficients}
\end{eqnarray}
are produced.  This implies $\zeta = \frac{4 - d}{2d} \eta$, which is much different from the results of
\cite{policastro}. For a conformal fluid in any number of dimensions, tracelessness of the stress-energy tensor
demands $\zeta = 0$. This also differs from the results of \cite{takeuchi} which considered the hydrodynamics of
a non-conformal theory of holographic QCD. One would have to contend with this problem even if she found a way of
going beyond incompressible hydrodynamics (\textit{e.g.} introducing auxiliary conserved currents to cancel the
problematic terms in (\ref{small-linearized-ep})).

These problems (no continuity equation and the coefficients of $\partial_j \partial_j P_i$ and
$\partial_i \partial_j P_j$ not being independent) suggest that our entropic model is fundamentally incompatible
with the long distance effective description that is hydrodynamics. Some evidence for this can be seen in our
expression for the mean-field variance (\ref{need-high-energy}). To have energy variances grow more slowly than squared
energies, our $E$ in the denominator had to be large. Similarly, requiring $|\textbf{P}|$ to be large is the most
obvious way of ensuring that the growth of the momentum covariance matrix is small. While not necessarily incompatible
with hydrodynamics, this is certainly incompatible with the linearization of it.

\subsubsection{One more simulation}
\begin{figure}[h]
\centering
\subfloat[][Energy]{\includegraphics[scale=0.4]{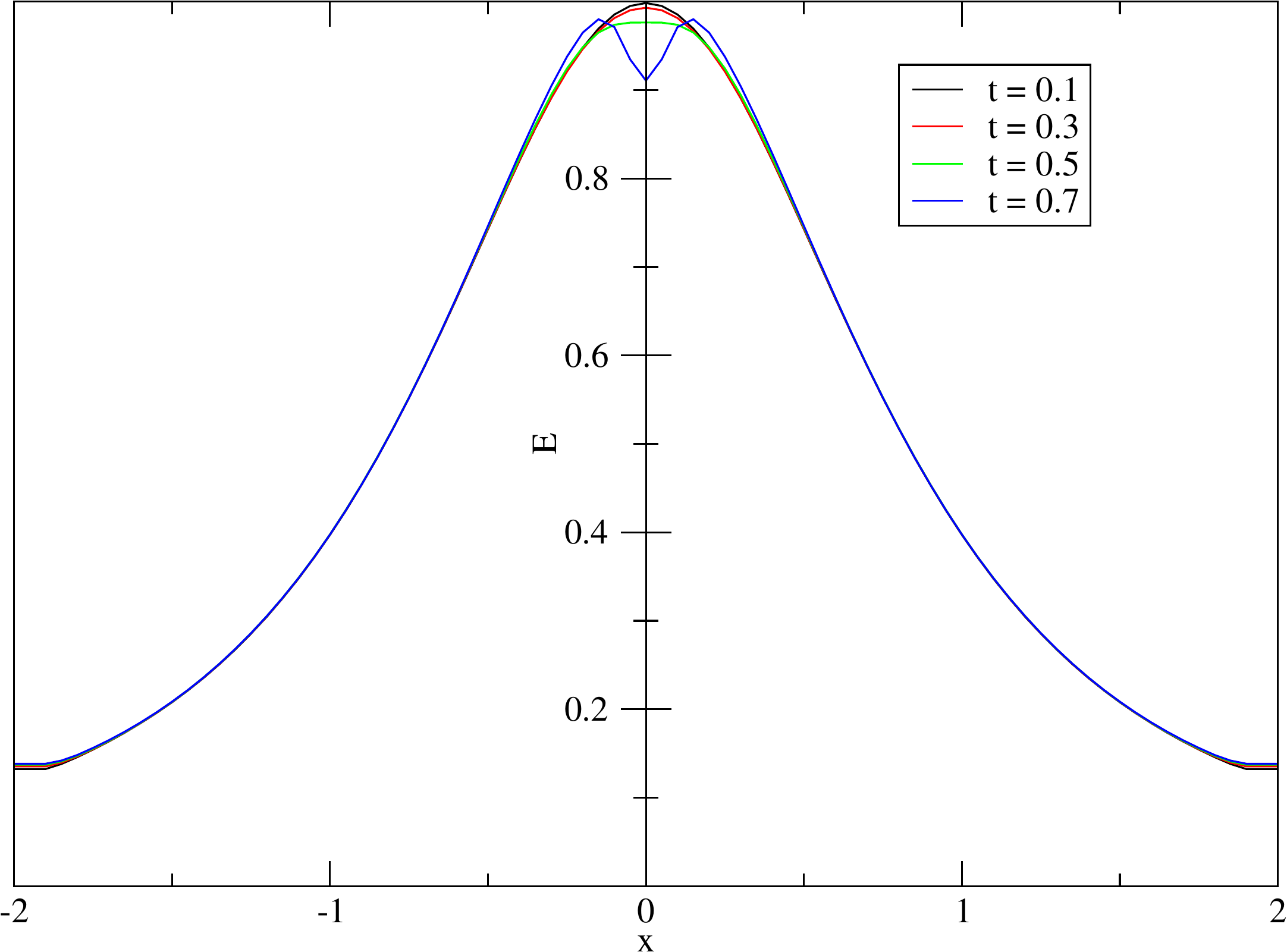}}
\subfloat[][Momentum]{\includegraphics[scale=0.4]{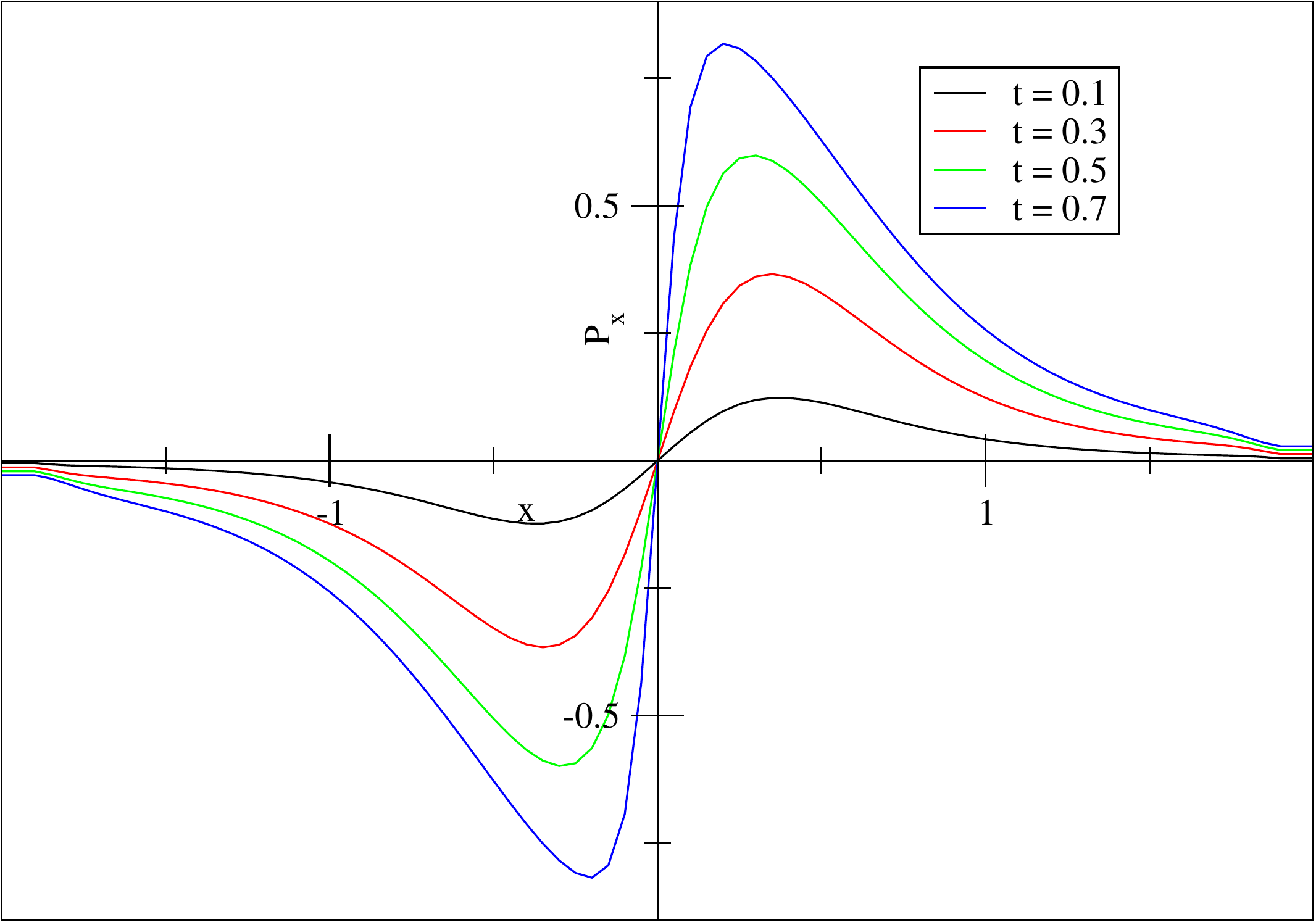}}
\caption{Energy and momentum profiles for the simulation that only has an $x$-axis. They are shown up until the time
when momentum values of $\pm 0.9$ form.}
\label{momentum-simulation-1d}
\end{figure}
The most fundamental problem with our first model is that it only had well defined decay times in one dimension. The
only way to make this model sensible is to remove the low energy phase from the density of states, allowing
(\ref{need-high-energy}) to be strictly obeyed. This has the effect of removing decay times altogether leading to
infinitely long lived black holes. A pressing question is whether the decay times of (\ref{fourth-order-ep}) remain
comparable when moving from one dimension to the next.

Since instantaneous extinction is a low energy effect, it suffices to use much easier initial conditions than the ones
in Figure \ref{no-exaggeration}. Its presence is also agnostic to whether the field theory is strongly coupled or
weakly coupled. Therefore, we may use the restricted density of states expressions (\ref{rdos-free}) that are valid for
free fields. Unfortunately, a numerical investigation of (\ref{fourth-order-ep}) shows that momenta grow very quickly
even if they start from zero. This leads to a time scale for satisfying
$|| \textbf{P} ||_{\infty} \approx || E ||_{\infty}$. After this time, momenta at various points are comparable to the
energies at those points and (\ref{rdos-free}) is no longer valid. This time scale is shorter than the black hole
decay time and probably also shorter than the black hole thermalization time that we defined earlier. Nevertheless, we
present results showing that this ``momentum generation time'' is similar in $d = 1$ and $d = 2$. There may be
arguments involving the other time scales that follow from this.

Figure \ref{momentum-simulation-1d} shows the results for $d = 1$. In this case, the sites only have directions that
are aligned with the lattice so the $d = 1$ restricted density of states is used. Choices made for
(\ref{fourth-order-ep}) are $\epsilon = a = 0.1$ and $C = 1$. Since we are only interested in short time dynamics, the
simplest forward difference Euler method has been used.
\begin{figure}[h]
\centering
\subfloat[][Energy]{\includegraphics[scale=0.4]{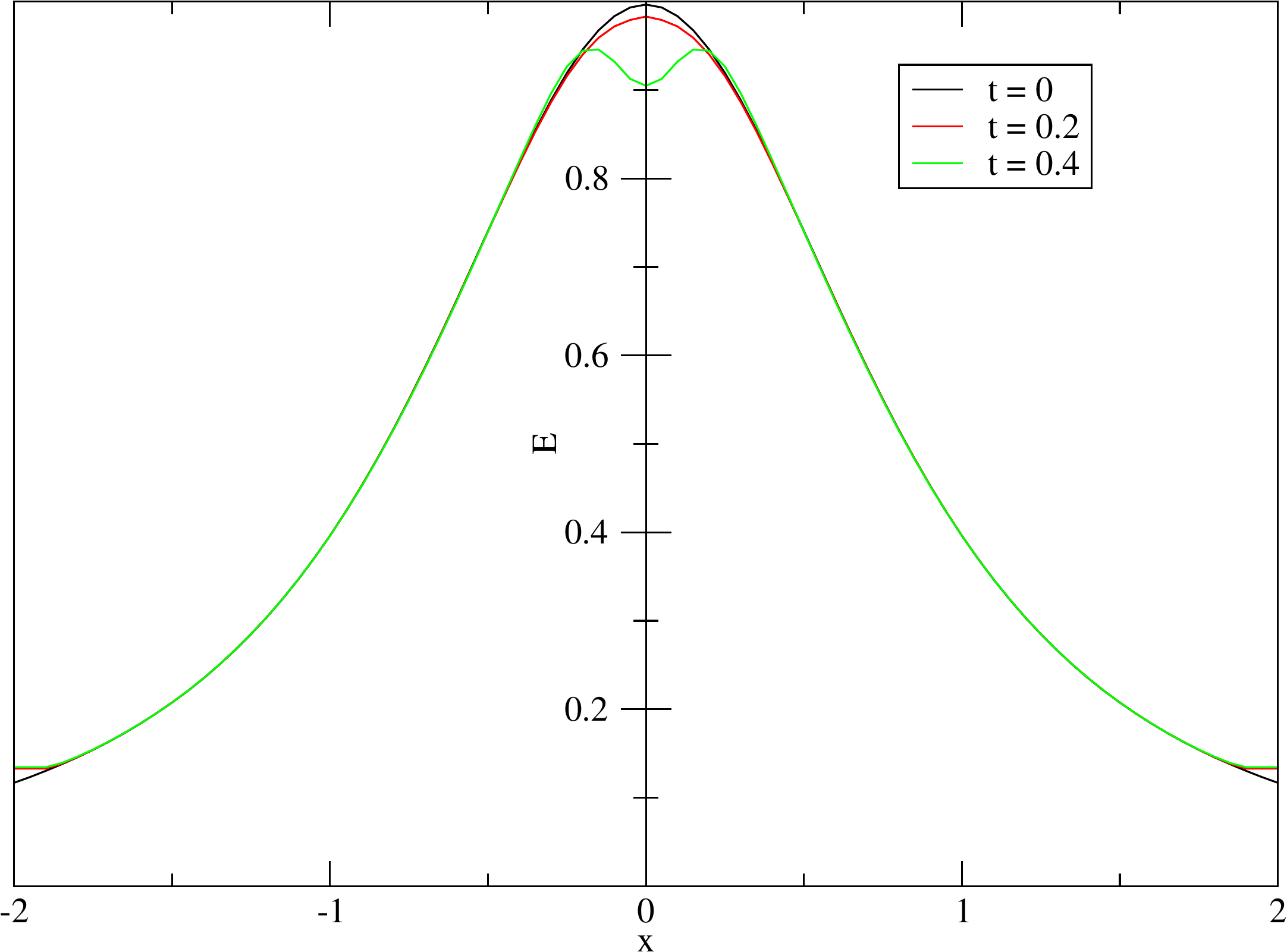}}
\subfloat[][Momentum]{\includegraphics[scale=0.4]{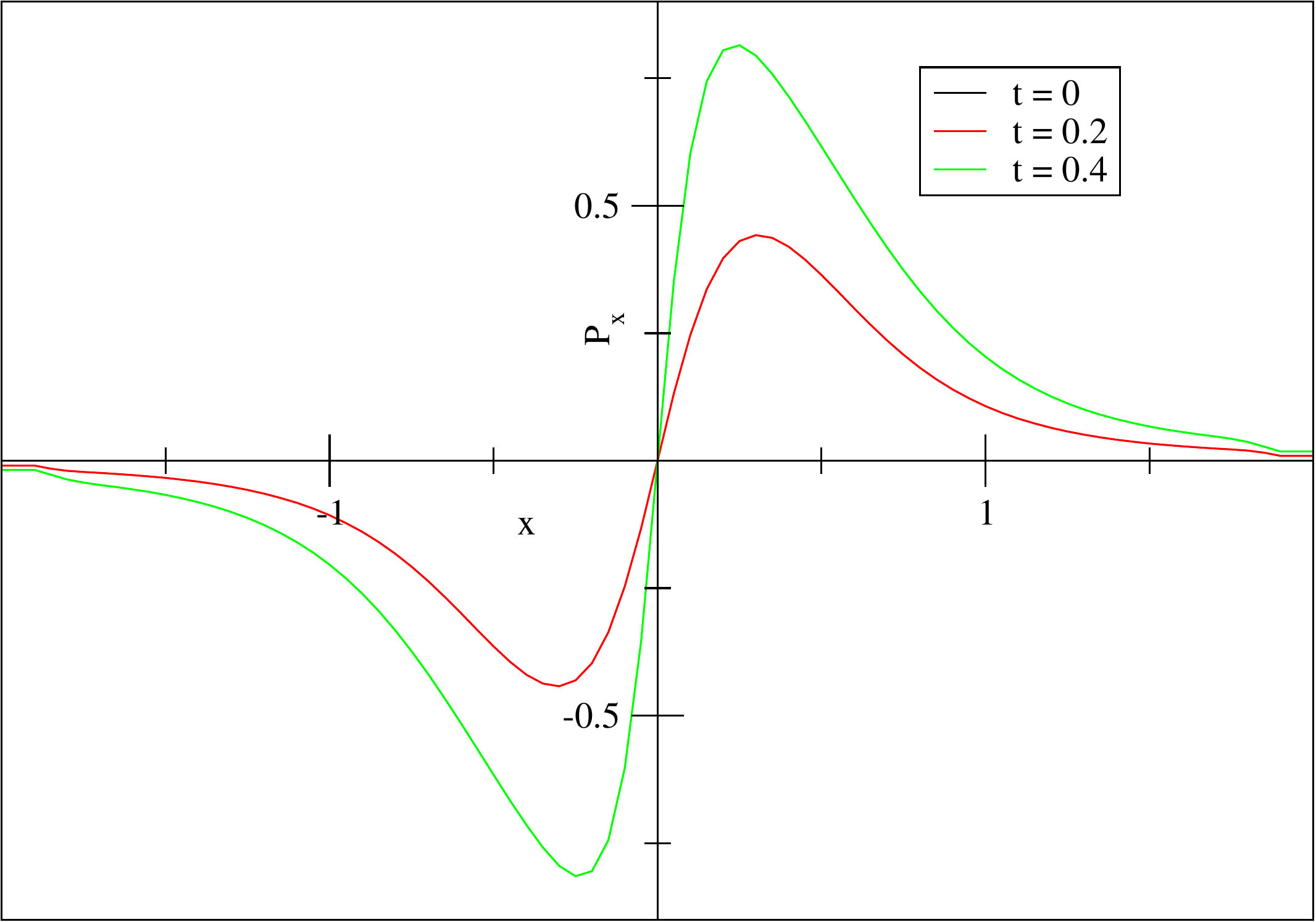}}
\caption{On the left are slices of the $E$ distributions along the $x$-axis. These are spherically symmetric. On
the right are $P_x$ distributions along the $x$-axis. These smoothly approach a $y$-axis value of zero as one rotates
the direction along which they are plotted. The $P_y$ distributions behave in an equal and opposite way.}
\label{momentum-simulation-2d}
\end{figure}
It is clear that these nonlinear equations do not satisfy the maximum principle. The centre of the distribution splits
into left and right moving waves instead. The $d = 2$ simulation has also been done with no extra small dimensions
leading to the $d = 2$ formula (\ref{rdos-free}). Even though the momentum generation time is shorter in this case,
Figure \ref{momentum-simulation-2d} shows that it still has the same order of magnitude.

If more accurate expressions for the restricted density of states were to be found, it would be interesting to return
to the study of decay times for (\ref{fourth-order-ep}) using a better numerical method. In addition to being more
trustworthy than the estimates (\ref{timescale-1}) and (\ref{timescale-2}), these decay times are also likely to be
shorter and therefore easier to find numerically.

\section{Conclusion}
Motivated by the universality of black hole physics in a variety of holographic models, we studied general properties
of field theories where excitations move throughout a lattice purely due to statistical noise. This led to two
interesting models: one in which only energy is dynamical and one in which momentum is as well. Many shortcomings of
the first model appeared to be resolved by the second when the most basic calculations were carried out. A more
quantitative comparison between them is likely to be difficult because both models require one to deal with nonlinear
partial differential equations.

To elabourate on our first model (\ref{main-pde}), a so-called filtration equation, we found that it made sense for
high energies and predicted frozen dynamics for systems that are governed by a Hagedorn density of states \cite{mvr}.
As input to the PDE, we chose a filtration function based on the thermodynamics of $\mathcal{N} = 4$ Super Yang-Mills,
which have been famously explored using the AdS / CFT correspondence \cite{maldacena, magoo}. Additional references
to the correspondence have been made throughout the thesis, since we likened the solutions of (\ref{main-pde}) to
the plasma balls in confining gauge theories that were discovered numerically in 2005 \cite{aharony}. To simplify the
analysis of our PDE, we removed the ``small black hole'' phase of SYM which is responsible for a first order phase
transition in the original background for holography. Because we still saw solutions that were similar to plasma balls,
our results suggest one of two things. Either the assumptions about the phase transitions in \cite{aharony} can be
relaxed or it is true that first order phase transitions in large field theories emerge whenever small field theories
with a Hagedorn phase are assembled on a lattice.

A detailed analysis of the model (\ref{main-pde}) revealed several problems. One is the aforementioned difficulty
of simulating a phase that has a convex microcanonical entropy. The resulting equation is unstable for the same
reason as the reverse heat equation. Even though the reverse heat equation cannot be simulated starting with only
Cauchy data, the situation can be improved if further constraints on the solutions are imposed. A very physical
one is positivity \cite{john}. Reverse heat equation methods developed recently \cite{fu, wang} have the potential to
fill in the missing phase in our numerics but this effort might not be justified in view of the other problems. To see
this, the decay times (\ref{timescale-1}) and (\ref{timescale-2}) derived for our model have upper and lower bounds
proportional to $E_{\mathrm{F}}^2$ where $E_{\mathrm{F}}$ is the critical energy of the plasma ball. In a large $N$
gauge theory, such an energy is proportional to $N^2$ leading to a predicted decay time of $O(N^4)$. This conflicts
with the $O(N^2)$ prediction of \cite{aharony}. Another problem with these time scales is that they contain extreme
prefactors: $E_{\mathrm{min}}$ which is very small and $E_{\mathrm{min}}^{\alpha - 1}$ which is very large. It is not
possible to fix these values due to a vicious cycle that plagues any realistic simulation. The requirement of a small
Hagedorn energy $E_{\mathrm{H}} \ll E_{\mathrm{F}}$ forces the domain kept in the simulation to be very large ---
large enough to accommodate the entire mass of the initial condition below $E_{\mathrm{H}}$. The large domain
forces any distribution with power law tails to have a very small $E_{\mathrm{min}}$. In one dimension, our inability
to find more restrictive prefactors can be viewed as a purely mathematical problem.

In dimension two and higher, this is not the case. Our numerics and the theory of Barenblatt profiles agreed on the
behaviour of the plasma ball decay time. It hugs the lower bound in (\ref{timescale-1}) and (\ref{timescale-2})
leading to a decay time of zero in the infinite volume limit. The source of this pathology (a diverging diffusion
constant) is clear but again not easy to fix. The high energy parts of energy distributions, for which our model is
valid, have their dynamics contaminated by the low energy parts which necessarily appear in the same distributions.
This serious problem was the main motivation for our second model (\ref{fourth-order-ep}).

The second model consisted of PDEs for energy and momentum that resulted in faster nonlinear diffusion, possibly
consistent with the $O(N^2)$ plasma ball prediction. This system also appeared not to suffer from instantaneous
extinction but the only check that was possible with out current understanding was limited to very short times.
Exploring this nonlinear model further requires one to derive a momentum restricted density of states that is valid for
small differences between $E$ and $| \textbf{P} |$. In the linear regime, we attempted to strengthen the model's
connection with holography by comparing the limit (\ref{small-linearized-ep}) to the equations for hydrodynamics. The
equations were found to have only superficial similarities and not offer meaningul predictions for transport
coefficients. As it makes no explicit reference to a Lagrangian, our model seems to apply equally well to strongly
coupled and weakly coupled field theories. Conversely, the procedure for calculating transport coefficients at weak
coupling \cite{amy1, amy2, york} is more difficult to carry out than the method revealed by the fluid / gravity
correspondence \cite{hubeny}. Any future method claiming to fix this aspect of our model would be worthwhile to
pursue, but it may be that the assumptions of (\ref{fourth-order-ep}) are fundamentally incompatible with those of
hydrodynamics.

Using two widely applicable stochastic models and some basic input about the Super Yang-Mills theory, were were able to
show that long lifetimes of black holes are associated with the Hagedorn density of states of a string worldsheet.
Despite our calculation being insufficient in practice, we showed that it is possible to study nonlinear diffusion,
both analytically and numerically as a means of predicting the time required for these black holes to evaporate through
Hawking radiation. Our second model in particular showed promise as the number of large spatial dimensions was not
restricted to one. Future studies can focus on at least three broad areas: improving the restricted density of states,
making the model more consistent with hydrodynamics and finding high energy phenomena other than black holes that are
common to the many known examples of holography. Even though our model is by no means safe from being replaced by other
effective theories that may describe holography more accurately, it is refreshing to see existence and uniqueness play
important roles in a situation where they are so often taken for granted.

\newpage
\bibliographystyle{unsrt}
\bibliography{references}
\newpage

\appendix
\section{Cumbersome derivatives}
In a few parts of the thesis we have dealt with discrete amounts of energy and momentum denoted by $\epsilon$ and
$\epsilon \textbf{e}$ respectively. In order to convert energy and momentum to continuous quantities we have had to
differentiate expressions where $\epsilon$ appears inside the argument of the density of states $\rho$. These
calculations are collected here for reference.

In the calculation of (\ref{e-same-derivative}), we had the quantity
\begin{eqnarray}
\tilde{X}(a, \epsilon) &=& \rho(E(\textbf{x}) + \epsilon, \textbf{P}(\textbf{x}) - \epsilon \textbf{e}) \rho(E(\textbf{x} + a\textbf{e}) - \epsilon, \textbf{P}(\textbf{x} + a\textbf{e}) + \epsilon \textbf{e}) \nonumber \\
&& - \rho(E(\textbf{x}) - \epsilon, \textbf{P}(\textbf{x}) - \epsilon \textbf{e}) \rho(E(\textbf{x} + a\textbf{e}) + \epsilon, \textbf{P}(\textbf{x} + a\textbf{e}) + \epsilon \textbf{e}) \label{first-appendix-derivative}
\end{eqnarray}
(actually there was a $C$ multiplying this expression and a surface integral around the whole thing)
which we had to differentiate once. The steps involved are:
\begin{eqnarray}
\left. \frac{\partial \tilde{X}}{\partial \epsilon} \right |_{\epsilon = 0} &=& \rho_{+} \left ( \frac{\partial \rho}{\partial E} - e_j \frac{\partial \rho}{\partial P_j} \right ) + \rho \left ( -\frac{\partial \rho_{+}}{\partial E} + e_j \frac{\partial \rho_{+}}{\partial P_j} \right ) \nonumber \\
&& - \rho_{+} \left ( -\frac{\partial \rho}{\partial E} - e_j \frac{\partial \rho}{\partial P_j} \right ) - \rho \left ( \frac{\partial \rho_{+}}{\partial E} + e_j \frac{\partial \rho_{+}}{\partial P_j} \right ) \nonumber \\
&=& 2 \rho_{+} \frac{\partial \rho}{\partial E} - 2 \rho \frac{\partial \rho_{+}}{\partial E} \; . \nonumber
\end{eqnarray}
The quantity that we had to differentiate in the calculation of (\ref{e-derivative}) was the same as
(\ref{first-appendix-derivative}) except without the $\textbf{P}$ dependence. Because the terms with an explicit
$\textbf{P}$ cancelled above, the answer for this case is the same.

A similar quantity to this one came up in the calculation of (\ref{p-4-derivatives}):
\begin{eqnarray}
\tilde{Y}(a, \epsilon) &=& -\rho(E(\textbf{x}) + \epsilon, \textbf{P}(\textbf{x}) - \epsilon \textbf{e}) \rho(E(\textbf{x} + a\textbf{e}) - \epsilon, \textbf{P}(\textbf{x} + a\textbf{e}) + \epsilon \textbf{e}) \nonumber \\
&& - \rho(E(\textbf{x}) - \epsilon, \textbf{P}(\textbf{x}) - \epsilon \textbf{e}) \rho(E(\textbf{x} + a\textbf{e}) + \epsilon, \textbf{P}(\textbf{x} + a\textbf{e}) + \epsilon \textbf{e}) \; . \label{second-appendix-derivative}
\end{eqnarray}
Since the middle sign is all that distinguishes (\ref{second-appendix-derivative}) from
(\ref{first-appendix-derivative}), it is easy to see that the energy pieces will cancel this time instead of the
momentum pieces:
\begin{eqnarray}
\left. \frac{\partial \tilde{Y}}{\partial \epsilon} \right |_{\epsilon = 0} &=& -\rho_{+} \left ( \frac{\partial \rho}{\partial E} - e_j \frac{\partial \rho}{\partial P_j} \right ) - \rho \left ( -\frac{\partial \rho_{+}}{\partial E} + e_j \frac{\partial \rho_{+}}{\partial P_j} \right ) \nonumber \\
&& - \rho_{+} \left ( -\frac{\partial \rho}{\partial E} - e_j \frac{\partial \rho}{\partial P_j} \right ) - \rho \left ( \frac{\partial \rho_{+}}{\partial E} + e_j \frac{\partial \rho_{+}}{\partial P_j} \right ) \nonumber \\
&=& 2 e_j \rho_{+} \frac{\partial \rho}{\partial P_j} - 2 e_j \rho \frac{\partial \rho_{+}}{\partial P_j} \; . \nonumber
\end{eqnarray}
However, there is one more calculation we have to do: differentiating (\ref{second-appendix-derivative}) twice
before plugging in $\epsilon = 0$.
\begin{eqnarray}
\left. \frac{\partial^2 \tilde{Y}}{\partial \epsilon^2} \right |_{\epsilon = 0} &=& -\left. \frac{\partial}{\partial \epsilon} \left [ \left ( \frac{\partial \rho}{\partial E}(E + \epsilon, \textbf{P} - \epsilon\textbf{e}) - e_j \frac{\partial \rho}{\partial P_j}(E + \epsilon, \textbf{P} - \epsilon\textbf{e}) \right ) \rho(E_{+} - \epsilon, \textbf{P}_{+} + \epsilon\textbf{e}) \right. \right. \nonumber \\
&& \left. \left. + \rho(E + \epsilon, \textbf{P} - \epsilon\textbf{e}) \left ( -\frac{\partial \rho}{\partial E}(E_{+} - \epsilon, \textbf{P}_{+} + \epsilon\textbf{e}) + e_j \frac{\partial \rho}{\partial P_j}(E_{+} - \epsilon, \textbf{P}_{+} + \epsilon\textbf{e}) \right ) \right. \right. \nonumber \\
&& \left. \left. + \left ( -\frac{\partial \rho}{\partial E}(E - \epsilon, \textbf{P} - \epsilon\textbf{e}) - e_j \frac{\partial \rho}{\partial P_j}(E - \epsilon, \textbf{P} - \epsilon\textbf{e}) \right ) \rho(E_{+} + \epsilon, \textbf{P}_{+} + \epsilon\textbf{e}) \right. \right. \nonumber \\
&& \left. \left. + \rho(E - \epsilon, \textbf{P} - \epsilon\textbf{e}) \left ( \frac{\partial \rho}{\partial E}(E_{+} + \epsilon, \textbf{P}_{+} + \epsilon\textbf{e}) + e_j \frac{\partial \rho}{\partial P_j}(E_{+} + \epsilon, \textbf{P}_{+} + \epsilon\textbf{e}) \right ) \right ] \right |_{\epsilon = 0} \nonumber \\
&=& -\rho_{+} \left ( \frac{\partial^2 \rho}{\partial E^2} - 2 e_j \frac{\partial^2 \rho}{\partial E \partial P_j} + e_j e_k \frac{\partial^2 \rho}{\partial P_j \partial P_k} \right ) - \rho \left ( \frac{\partial^2 \rho_{+}}{\partial E^2} - 2 e_j \frac{\partial^2 \rho_{+}}{\partial E \partial P_j} + e_j e_k \frac{\partial^2 \rho_{+}}{\partial P_j \partial P_k} \right ) \nonumber \\
&& - 2 \left ( \frac{\partial \rho}{\partial E} - e_j \frac{\partial \rho}{\partial P_j} \right ) \left ( -\frac{\partial \rho_{+}}{\partial E} + e_k \frac{\partial \rho_{+}}{\partial P_k} \right ) - \rho_{+} \left ( \frac{\partial^2 \rho}{\partial E^2} + 2 e_j \frac{\partial^2 \rho}{\partial E \partial P_j} + e_j e_k \frac{\partial^2 \rho}{\partial P_j \partial P_k} \right ) \nonumber \\
&& - \rho \left ( \frac{\partial^2 \rho_{+}}{\partial E^2} + 2 e_j \frac{\partial^2 \rho_{+}}{\partial E \partial P_j} + e_j e_k \frac{\partial^2 \rho_{+}}{\partial P_j \partial P_k} \right ) - 2 \left ( -\frac{\partial \rho}{\partial E} - e_j \frac{\partial \rho}{\partial P_j} \right ) \left ( \frac{\partial \rho_{+}}{\partial E} + e_k \frac{\partial \rho_{+}}{\partial P_k} \right ) \nonumber \\
&=& 2 \left ( \frac{\partial \rho}{\partial E} - e_j \frac{\partial \rho}{\partial P_j} \right ) \left ( \frac{\partial \rho_{+}}{\partial E} - e_k \frac{\partial \rho_{+}}{\partial P_k} \right ) + 2 \left ( \frac{\partial \rho}{\partial E} + e_j \frac{\partial \rho}{\partial P_j} \right ) \left ( \frac{\partial \rho_{+}}{\partial E} + e_k \frac{\partial \rho_{+}}{\partial P_k} \right ) \nonumber \\
&& - 2 \rho_{+} \left ( \frac{\partial^2 \rho}{\partial E^2} + e_j e_k \frac{\partial^2 \rho}{\partial P_j \partial P_k} \right ) - 2 \rho \left ( \frac{\partial^2 \rho_{+}}{\partial E^2} + e_j e_k \frac{\partial^2 \rho_{+}}{\partial P_j \partial P_k} \right ) \nonumber
\end{eqnarray}

\section{Discontinuous PDE solutions}
When deriving our main bounds on black hole decay times, a key step was finding a solution to
$\frac{\partial E}{\partial t} = \frac{\partial}{\partial x} \left ( \theta(E - E_{\mathrm{H}}) \frac{\partial E}{\partial x} \right )$.
We argued that such a solution is given by
\begin{equation}
E(x, t) =
\begin{cases}
E_{\mathrm{F}} & |x| < x^*(t) \\
F(x, t) & |x| > x^*(t)
\end{cases}
\nonumber
\end{equation}
where $F$ solves the heat equation and satisfies the mass conservation condition
\begin{equation}
\left ( E_{\mathrm{F}} - E_{\mathrm{H}} \right ) \frac{\textup{d}x^*(t)}{\textup{d}t} = \frac{\partial F}{\partial x} (x^*(t), t) \; . \nonumber
\end{equation}
Here, $x^*(t)$ is defined by
\begin{equation}
F(x^*(t), t) = E_{\mathrm{H}} \nonumber
\end{equation}
A function constructed this way retains the intuitive properties that we expect a solution to have but in some sense it
is not a solution; it is discontinuous and derivatives acting on a discontinuous function have no meaning. We need to
show that it is a solution in a precise generalized sense. Consider the solution to the heat equation $F$. Since
\begin{equation}
\frac{\partial F}{\partial t} - \frac{\partial^2 F}{\partial x^2} = 0 \; , \label{appendix-strong}
\end{equation}
it is clearly true that
\begin{equation}
\int_0^T \int_{-\infty}^{\infty} \left [ \frac{\partial F}{\partial t} - \frac{\partial^2 F}{\partial x^2} \right ] \varphi \textup{d}x\textup{d}t = 0 \label{appendix-weak}
\end{equation}
where $\varphi \in C_0^{\infty}(\mathbb{R} \times (0, T))$ is a test function with compact support. The important part
is that (\ref{appendix-strong}) and (\ref{appendix-weak}) are not equivalent. If $F$ were not differentiable,
(\ref{appendix-weak}) would still make sense because $\varphi$ is smooth and the derivatives can be shifted onto
$\varphi$ through integration by parts. A function solving a differential equation but only inside an integral with a
test function like this is called a \textit{weak solution}. The following two expressions
\begin{eqnarray}
\frac{\textup{d}}{\textup{d}t} \int_{-x^*}^{x^*} \varphi \textup{d}x &=& \frac{\textup{d}x^*}{\textup{d}t} \biggl. \varphi \biggl|_{-x^*}^{x^*} + \int_{-x^*}^{x^*} \frac{\partial \phi}{\partial t} \textup{d}x \nonumber \\
\frac{\textup{d}}{\textup{d}t} \int_{-x^*}^{x^*} F \varphi \textup{d}x &=& F \frac{\textup{d}x^*}{\textup{d}t} \biggl. \varphi \biggl|_{-x^*}^{x^*} + \int_{-x^*}^{x^*} \frac{\partial}{\partial t}(F \varphi) \textup{d}x \nonumber \\
&=& E_{\mathrm{H}} \frac{\textup{d}x^*}{\textup{d}t} \biggl. \varphi \biggl|_{-x^*}^{x^*} + \int_{-x^*}^{x^*} F \frac{\partial \varphi}{\partial t} + \varphi \frac{\partial F}{\partial t} \textup{d}x \nonumber
\end{eqnarray}
which use Feynman's trick of differentiating under the integral sign, will be useful in showing that $E$ is a weak
solution.
\begin{eqnarray}
\int_0^T \int_{-\infty}^{\infty} \left [ \frac{\partial E}{\partial t} - \frac{\partial^2 \Phi(E)}{\partial x^2} \right ] \varphi \textup{d}x\textup{d}t &=& -\int_0^T \int_{-\infty}^{\infty} E \frac{\partial \varphi}{\partial t} \textup{d}x\textup{d}t + \int_0^T \int_{-\infty}^{\infty} \frac{\partial \Phi(E)}{\partial x} \frac{\partial \varphi}{\partial x} \textup{d}x\textup{d}t \nonumber \\
&=& -\int_0^T \int_{\mathbb{R} \setminus (-x^*, x^*)} E \frac{\partial \varphi}{\partial t} \textup{d}x\textup{d}t - \int_0^T \int_{-x^*}^{x^*} E \frac{\partial \varphi}{\partial t} \textup{d}x\textup{d}t \nonumber \\
&& + \int_0^T \int_{\mathbb{R} \setminus (-x^*, x^*)} \frac{\partial \Phi(E)}{\partial x} \frac{\partial \varphi}{\partial x} \textup{d}x\textup{d}t + \int_0^T \int_{-x^*}^{x^*} \frac{\partial \Phi(E)}{\partial x} \frac{\partial \varphi}{\partial x} \textup{d}x\textup{d}t \nonumber \\
&=& -\int_0^T \int_{\mathbb{R} \setminus (-x^*, x^*)} F \frac{\partial \varphi}{\partial t} \textup{d}x\textup{d}t - E_{\mathrm{F}} \int_0^T \int_{-x^*}^{x^*} \frac{\partial \varphi}{\partial t} \textup{d}x\textup{d}t \nonumber \\
&& + \int_0^T \int_{\mathbb{R} \setminus (-x^*, x^*)} \frac{\partial F}{\partial x} \frac{\partial \varphi}{\partial x} \textup{d}x\textup{d}t \nonumber \\
&=& \int_0^T \int_{-x^*}^{x^*} F \frac{\partial \varphi}{\partial t} \textup{d}x\textup{d}t - \int_0^T \int_{-\infty}^{\infty} F \frac{\partial \varphi}{\partial t} \textup{d}x\textup{d}t - E_{\mathrm{F}} \int_0^T \int_{-x^*}^{x^*} \frac{\partial \varphi}{\partial t} \textup{d}x\textup{d}t \nonumber \\
&& + \int_0^T \int_{-\infty}^{\infty} \frac{\partial F}{\partial x} \frac{\partial \varphi}{\partial x} \textup{d}x\textup{d}t - \int_0^T \int_{-x^*}^{x^*} \frac{\partial F}{\partial x} \frac{\partial \varphi}{\partial x} \textup{d}x\textup{d}t \nonumber \\
&=& \int_0^T \int_{-x^*}^{x^*} \left ( F - E_{\mathrm{F}} \right ) \frac{\partial \varphi}{\partial t} - \frac{\partial F}{\partial x} \frac{\partial \varphi}{\partial x} \textup{d}x\textup{d}t \nonumber \\
&=& \int_0^T \int_{-x^*}^{x^*} \left ( F - E_{\mathrm{F}} \right ) \frac{\partial \varphi}{\partial t} \textup{d}x - \frac{\partial F}{\partial x} \biggl. \varphi \biggl|_{-x^*}^{x^*} + \int_{-x^*}^{x^*} \frac{\partial^2 F}{\partial x^2} \varphi \textup{d}x\textup{d}t \nonumber \\
&=& \int_0^T \int_{-x^*}^{x^*} \left ( F - E_{\mathrm{F}} \right ) \frac{\partial \varphi}{\partial t} \textup{d}x - \left ( E_{\mathrm{F}} - E_{\mathrm{H}} \right ) \frac{\textup{d}x^*}{\textup{d}t} \biggl. \varphi \biggl|_{-x^*}^{x^*} + \int_{-x^*}^{x^*} \frac{\partial^2 F}{\partial x^2} \varphi \textup{d}x\textup{d}t \nonumber \\
&=& \int_0^T \int_{-x^*}^{x^*} F \frac{\partial \varphi}{\partial t} + \frac{\partial^2 F}{\partial x^2} \varphi \textup{d}x\textup{d}t + E_{\mathrm{F}} \int_0^T \frac{\textup{d}x^*}{\textup{d}t} \biggl. \varphi \biggl|_{-x^*}^{x^*} \textup{d}t \nonumber \\
&& - E_{\mathrm{F}} \int_0^T \frac{\textup{d}x^*}{\textup{d}t} \biggl. \varphi \biggl|_{-x^*}^{x^*} + \int_{-x^*}^{x^*} \frac{\partial \varphi}{\partial t} \textup{d}x\textup{d}t \nonumber \\
&=& \int_0^T \int_{-x^*}^{x^*} F \frac{\partial \varphi}{\partial t} + \frac{\partial^2 F}{\partial x^2} \varphi \textup{d}x\textup{d}t + E_{\mathrm{F}} \int_0^T \frac{\textup{d}x^*}{\textup{d}t} \biggl. \varphi \biggl|_{-x^*}^{x^*} \textup{d}t \nonumber \\
&& - E_{\mathrm{F}} \int_0^T \frac{\textup{d}}{\textup{d}t} \int_{-x^*}^{x^*} \varphi \textup{d}x \textup{d}t \nonumber \\
&=& \int_0^T \int_{-x^*}^{x^*} \frac{\partial^2 F}{\partial x^2} \varphi \textup{d}x\textup{d}t + \int_0^T \int_{-x^*}^{x^*} F \frac{\partial \varphi}{\partial t} \textup{d}x + E_{\mathrm{F}} \frac{\textup{d}x^*}{\textup{d}t} \biggl. \varphi \biggl|_{-x^*}^{x^*} \textup{d}t \nonumber \\
&=& \int_0^T \int_{-x^*}^{x^*} \frac{\partial^2 F}{\partial x^2} \varphi \textup{d}x\textup{d}t + \int_0^T \frac{\textup{d}}{\textup{d}t} \int_{-x^*}^{x^*} F \varphi \textup{d}x - \int_{-x^*}^{x^*} \varphi \frac{\partial F}{\partial t} \textup{d}x\textup{d}t \nonumber \\
&=& \int_0^T \int_{-x^*}^{x^*} \frac{\partial^2 F}{\partial x^2} \varphi \textup{d}x\textup{d}t - \int_0^T \int_{-x^*}^{x^*} \varphi \frac{\partial F}{\partial t} \textup{d}x\textup{d}t \nonumber \\
&=& 0 \nonumber
\end{eqnarray}

\section{Crank-Nicolson code}

\begin{verbatim}
/* Changing float to double might help to avoid crashes. */
#include <stdio.h>
#include <stdlib.h>
#include <stdbool.h>
#include <math.h>

#define SMALL_ERROR 5e-7
#define TOO_MANY_LOOPS 30

/* Parameters for the initial condition. */
#define PMAX 200.0
#define EXP 1.3333

/* Ten magic constants handle the interpolation. */
#define EF 10.0
float c0, c1, c2, c3, c4, c5, c6, c7, c8, c9;

/* The maximum distance, the maximum time and the individual steps. */
float length, dx_min, dx_max, dt, max_time;

/* The number of sites in our discretized space. */
int n_s;

/* Arrays of this size that will be needed for the solution. */
float *energy;
float *new_energy;
float *jac_sub_diag;
float *jac_diag;
float *jac_sup_diag;
float *constant_part;
float *diff;
float *xs;

float beta(float x) {
  if (x < 0.5) {
    return pow(x, -0.1);
  } else if ((x > 0.5) && (x < 1.0)) {
    return c0 * x * x + c1 * x + c2;
  } else if ((x > 1.0) && (x < 0.9 * EF)) {
    return c3 * x + c4;
  } else if ((x > 0.9 * EF) && (x < EF)) {
    return c5 * x * x + c6 * x + c7;
  } else {
    return c9 * pow(x - c8, -0.25);
  }
}

float beta_pr(float x) {
  if (x < 0.5) {
    return -0.1 * pow(x, -1.1);
  } else if ((x > 0.5) && (x < 1.0)) {
    return 2.0 * c0 * x + c1;
  } else if ((x > 1.0) && (x < 0.9 * EF)) {
    return c3;
  } else if ((x > 0.9 * EF) && (x < EF)) {
    return 2.0 * c5 * x + c6;
  } else {
    return -0.25 * c9 * pow(x - c8, -1.25);
  }
}

/* The tridiagonal matrix algorithm which fills result with the Ax = b solution
 * from Wikipedia.
 */
void tridiag(int num, float *sub_diag, float *diag, float *sup_diag, float *res) {
  int i;
  
  sup_diag[0] /=  diag[0];
  res[0] /= diag[0];
  
  for (i = 1; i < num; i++) {
    float factor = 1.0 / (diag[i] - sub_diag[i - 1] * sup_diag[i - 1]);
    sup_diag[i] *= factor;
    res[i] = (res[i] - sub_diag[i - 1] * res[i - 1]) * factor;
  }
  
  for (i = num - 2; i >= 0; i--) {
    res[i] -= sup_diag[i] * res[i + 1];
  }
}

void step_crank_nicolson() {
  float *temp;
  float error;
  
  bool converged = false;
  int i, j = 0;
  
  /* Initial guess for Newton's method. */
  for (i = 0; i < n_s; i++) {
    new_energy[i] = energy[i];
  }
  
  /* Do not recalculate this in every Newton iteration. */
  constant_part[0] = energy[0] - 0.5 * (dt / ((xs[1] - xs[0]) * (xs[1] - xs[0]))) *
    (beta(energy[1]) - beta(energy[0]));
		      
  for (i = 1; i < (n_s - 1); i++) {
    constant_part[i] = energy[i] - (dt / (xs[i + 1] - xs[i - 1])) *
      (((beta(energy[i + 1]) - beta(energy[i])) / (xs[i + 1] - xs[i])) -
      ((beta(energy[i]) - beta(energy[i - 1])) / (xs[i] - xs[i - 1])));
  }
  
  constant_part[n_s - 1] = energy[n_s - 1] - 0.5 *
    (dt / ((xs[n_s - 1] - xs[n_s - 2]) * (xs[n_s - 1] - xs[n_s - 2]))) *
    (beta(energy[n_s - 2]) - beta(energy[n_s - 1]));
  
  while (!converged) {
    diff[0] = constant_part[0] - 0.5 * (dt / ((xs[1] - xs[0]) * (xs[1] - xs[0]))) *
      (beta(new_energy[1]) - beta(new_energy[0])) - new_energy[0];
      
    jac_diag[0] = 1.0 - 0.5 * dt * beta_pr(new_energy[0]) /
      ((xs[1] - xs[0]) * (xs[1] - xs[0]));
    jac_sup_diag[0] = 0.5 * dt * beta_pr(new_energy[1]) /
      ((xs[1] - xs[0]) * (xs[1] - xs[0]));
    
    for (i = 1; i < (n_s - 1); i++) {
      diff[i] = constant_part[i] - (dt / (xs[i + 1] - xs[i - 1])) *
        (((beta(new_energy[i + 1]) - beta(new_energy[i])) / (xs[i + 1] - xs[i])) -
	((beta(new_energy[i]) - beta(new_energy[i - 1])) / (xs[i] - xs[i - 1]))) -
	new_energy[i];
	
      jac_sub_diag[i - 1] = -1.0 * dt * beta_pr(new_energy[i - 1]) *
        (-1.0 / ((xs[i + 1] - xs[i - 1]) * (xs[i] - xs[i - 1])));
      jac_diag[i] = 1.0 - (dt * beta_pr(new_energy[i]) / (xs[i + 1] - xs[i - 1])) *
        ((1.0 / (xs[i + 1] - xs[i])) + (1.0 / (xs[i] - xs[i - 1])));
      jac_sup_diag[i] = -1.0 * dt * beta_pr(new_energy[i + 1]) *
        (-1.0 / ((xs[i + 1] - xs[i - 1]) * (xs[i + 1] - xs[i])));
    }
    
    diff[n_s - 1] = constant_part[n_s - 1] - 0.5 *
      (dt / ((xs[n_s - 1] - xs[n_s - 2]) * (xs[n_s - 1] - xs[n_s - 2]))) *
      (beta(new_energy[n_s - 2]) - beta(new_energy[n_s - 1])) - new_energy[n_s - 1];
      
    jac_sub_diag[n_s - 2] = 0.5 * dt * beta_pr(new_energy[n_s - 2]) /
      ((xs[n_s - 1] - xs[n_s - 2]) * (xs[n_s - 1] - xs[n_s - 2]));
    jac_diag[n_s - 1] = 1.0 - 0.5 * dt * beta_pr(new_energy[n_s - 1]) /
      ((xs[n_s - 1] - xs[n_s - 2]) * (xs[n_s - 1] - xs[n_s - 2]));
    
    tridiag(n_s, jac_sub_diag, jac_diag, jac_sup_diag, diff);
    error = 0.0;
    j++;
    
    for (i = 0; i < n_s; i++) {
      error += diff[i] * diff[i];
      new_energy[i] += diff[i];
    }

    if (((error / n_s) < SMALL_ERROR) || (j == TOO_MANY_LOOPS)) converged = true;
  }
  
  /* Enforce Neumann boundary conditions. */
  new_energy[0] = new_energy[1];
  new_energy[n_s - 1] = new_energy[n_s - 2];
  
  temp = new_energy;
  new_energy = energy;
  energy = temp;
}

void simulate_crank_nicolson() {
  char *filename = malloc(10 * sizeof(char));
  FILE *fp;
  
  bool exit_loop = false;
  int i, j = 0;
  
  float t = 0.0;
  float time_since_write = max_time;
  
  /* If a certain simulation is proving very difficult, we might want to perform
   * heuristics like increasing the timestep as time goes on and falling back to
   * a backup copy of the energy if we accidentally make it too big.
   */
  while (!exit_loop) {
    step_crank_nicolson();
    
    if (time_since_write > 10000.0 * dt) {
      sprintf(filename, "f%d.dat", j);
      fp = fopen(filename, "w");
      fprintf(fp, "# t = %f\n", t);
      
      for (i = 0; i < n_s; i++) {
        fprintf(fp, "%f\t%f\n", log(1.0 + xs[i]), log(1.0 + energy[i]));
      }
      
      j++;
      time_since_write = 0.0;
      fclose(fp);
    }
    
    /* We quit if time is up or the peak has reached the Hagedorn energy. */
    if ((energy[0] < 1.0) || (t > max_time)) exit_loop = true;
  
    t += dt;
    time_since_write += dt;
  }
  
  printf("Final time: %f\n", t);
  free(filename);
}

int main(int argc, char **argv) {
  int i;
  float x = 0.0;
  
  float read_number;
  bool reading = false;
  FILE *fp;
  
  /* Normally, the initial condition is hard coded. However, if a file
   * is specified on the command line, the file will be read in order
   * to determine the initial condition. This is ideal if the program
   * runs for awhile and then crashes. The last successfully generated
   * file can be passed to make the program pick up where it left off.
   */
  if (argc > 1) {
    reading = true;
    fp = fopen(argv[1], "r");
    fscanf(fp, "# t = %f", &read_number);
  }
  
  /* Choose a, b, c such that ax^2 + bx + c agrees with the value and
   * slope of x^(-1/10) when x = 0.5. It should also have a slope of
   * -0.01 at x = 1.0.
   */
  c0 = -0.01 + 0.1 * pow(2.0, -1.1);
  c1 = -0.01 - 2.0 * c0;
  c2 = pow(2.0, 0.1) - 0.5 * c1 - 0.25 * c0;
  /* Now choose a, b such that ax + b agrees with the value and slope
   * of the above at x = 1.0.
   */
  c3 = -0.01;
  c4 = c0 + c1 + c2 - c3;
  /* Choose a, b, c such that ax^2 + bx + c agrees with the value and
   * slope of the above at x = 0.9 * EF. It should also have half that
   * value at x = EF.
   */
  c5 = (-0.5 * (c3 * (0.9 * EF) + c4) - c3 * 0.1 * EF) / (0.1 * EF * 0.1 * EF);
  c6 = c3 - 2.0 * c5 * (0.9 * EF);
  c7 = 0.5 * (c3 * (0.9 * EF) + c4) - c6 * EF - c5 * EF * EF;
  /* Finally choose a, b such that a(x - b)^(-1/4) agrees with the value
   * and slope of the above at x = EF.
   */
  c8 = EF + 0.25 * (c5 * EF * EF + c6 * EF + c7) / (2.0 * c5 * EF + c6);
  c9 = (c5 * EF * EF + c6 * EF + c7) * pow(EF - c8, 0.25);
  
  /* Makes the total mass enough to fit twice below the Hagedorn interface. */
  length = PMAX * exp(gamma(0.5) + gamma(EXP - 0.5) - gamma (EXP));
  
  /* Two things we pick are a "dt" that is "reasonably small" and a "dx" that
   * fits 10 times between the inflection points. If we want the "dx" to vary
   * with position, the above is really the minimum "dx" value. This example
   * code is for performing a convergence test after a short amount of time so
   * we want the grid spacing to be uniform. The code will be rerun with "dx"
   * multiplied by (1/2), (1/4), (1/8), etc. When we run this code to generate
   * data, we will make the grid non-uniform and simulate for MUCH longer.
   */
  dx_min = 0.1 * sqrt(1.0 / (1.0 + 2.0 * EXP));
  dx_max = dx_min;
  //dx_max = 0.01 * length;
  dt = 0.001;
  max_time = 100000.0 * dt;
  //max_time = 999999.9;
  
  n_s = 0;
  
  /* The space step should vary linearly. */
  while (x < length) {
    n_s++;
    x += dx_min + ((dx_max - dx_min) / length) * x;
  }
  
  /* Now that we know the number of sites, we can allocate the arrays that will
   * be needed in the numerical solution. The numerical solution at a given time
   * is "energy" and we have a copy to advance forward.
   */
  energy = malloc(n_s * sizeof(float));
  new_energy = malloc(n_s * sizeof(float));
  
  /* The numbers in the Jacobian matrix used by Newton's method. */
  jac_sub_diag = malloc((n_s - 1) * sizeof(float));
  jac_diag = malloc(n_s * sizeof(float));
  jac_sup_diag = malloc((n_s - 1) * sizeof(float));
  
  /* Other things that should be stored for Newton's method. */
  constant_part = malloc(n_s * sizeof(float));
  diff = malloc(n_s * sizeof(float));
  
  /* This stores where our grid points are. */
  xs = malloc(n_s * sizeof(float));
  
  x = 0.0;
  
  /* This sets up the initial condition and assigns the grid points. */
  for (i = 0; i < n_s; i++) {
    if (reading) {
      fscanf(fp, "%f", &read_number);
      fscanf(fp, "%f", &read_number);
      energy[i] = exp(read_number) - 1.0;
    } else {
      energy[i] = PMAX * pow(1.0 / (1.0 + x * x), EXP);
    }
    
    x += dx_min + ((dx_max - dx_min) / length) * x;
    xs[i] = x;
  }
  
  if (reading) fclose(fp);
  
  /* Everything is set up so we can call the main loop. */
  printf("L: %f, dx: %f-%f, dt: %f, sites: %d\n", length, dx_min, dx_max, dt, n_s);
  simulate_crank_nicolson();
  
  free(energy);
  free(new_energy);
  free(jac_sup_diag);
  free(jac_diag);
  free(jac_sub_diag);
  free(constant_part);
  free(diff);
  free(xs);
}
\end{verbatim}

\end{document}